\documentclass[10pt]{article}
\usepackage{amsmath, amsthm, amssymb}
\usepackage{mathrsfs}
\usepackage[table]{xcolor}
\usepackage{geometry}
\usepackage{multicol}
\usepackage{tabularx}
\usepackage{graphicx}
\usepackage{tikz}
\usetikzlibrary{shapes.geometric, arrows.meta, positioning, calc, backgrounds, shadows, fit,shapes.multipart}
\usepackage{fontawesome5}

\title{  Achieving Generational Peace in Mali through Intergenerational Mean-Field-Type Game-based Incentives}
 \author{Hamidou Tembine\thanks{ H. Tembine is with the Department of Electrical Engineering and Computer  Science, School of Engineering, UQTR, Quebec, Canada.
 He is also  with Learning and Game Theory Laboratory, TIMADIE.
Email to : tembine(at)ieee.org }
 }
\date{12 - 12 - 2024 }

\theoremstyle{plain}
\newtheorem{theorem}{Theorem}
\newtheorem{result}{Our Main Result }
\newtheorem{lemma}{Lemma}
\newtheorem{definition}{Definition}
\newtheorem{example}{Example}

\newtheorem{remark}{Remark}

\newtheorem{corollary}{Corollary}
\theoremstyle{hyperref}

\usepackage{enumitem}
\usepackage{mathtools}
\usepackage{algorithm}
\usepackage{algpseudocode}

\definecolor{StableGreen}{HTML}{375623}
\definecolor{CriticalRed}{HTML}{C00000}
\definecolor{ActionBlue}{HTML}{1F4E79}
\definecolor{StressGray}{HTML}{767171}

\begin{document}
\maketitle

\begin{abstract} 

This article  develops an intergenerational mean-field-type game (MFTG) to  model Mali's and neighbouring countries  multi-actor conflict ecosystem, which includes formal state forces, traditional hunters, nonstate militias, jihadists, criminal networks, civil societies, and international proxies. Each decision-maker (agent, a group of agents or representative agent) is defined by a type,  state, information structure, and action, with payoffs dependent not only on individual decisions but also on the evolving distribution of all agents' profiles. The model reveals that cycles of violence can persist across multiple generations due to the embedded presence of retaliatory types such as revenger child-soldiers whose trauma-conditioned best-responses favor conflict, and whose behavior reinforces intergenerational transmission of violence. The model also captures the strategic exploitation of institutional fragility by {\it war entrepreneurs} who profit from sustained instability through arms sales, militia contracting, and unregistered market mediation. These actors inject minimal resources to trigger profitable escalations, turning latent tensions into self-reinforcing violence economies. We show that in the absence of structural counterincentives, peaceful strategies are non-absorbing, and violence remains dynamically rewarding for war entrepreneurs. However, by embedding incentive-compatible, information-adaptive transfers directly into instantaneous payoffs, rewarding verifiable peacebuilding and penalizing aggression, it is possible to shift the mean-field-type equilibrium distribution intergenerationally toward more peaceful types and drive systemic de-escalation. We also discuss about the funding and the real implementation of such mechanisms in the field. 
\end{abstract} 

\newpage 
\tableofcontents
\newpage 

\section{Introduction} 
Conventional models of armed conflict often focus on immediate military or political actors, one-shot or short-term equilibria, or bounded tactical rationality. While valuable for capturing localized confrontations and short-term decision dynamics, these models often fail to account for the recursive, multi-layered, and temporally extended mechanisms by which violence becomes entrenched, particularly in fragile or postcolonial states. In the case of Mali, persistent insecurity is not solely driven by strategic competition between state and non-state actors, but by deeply embedded and intergenerational forces: economic situations, inherited grievances, trauma-induced behavioral priors, revenge-driven memory, and social narratives of injustice \cite{turner2004political}. These factors are reinforced by structural fragilities, such as the erosion of trust in institutions, localized governance vacuums, and overlapping authority networks \cite{mortimer1972federalism,boyd1979african,bell1972age,cisse1980sedentarization}. To address this situation, we propose an intergenerational mean-field-type game (MFTG). MFTGs provide a  way to model strategic interaction among heterogeneous agents whose decisions depend not only on their individual state and private information but also on the  distribution of all agents’ states, actions, and informational signals. Each agent is described by a type (encoding identity, role, or strategic inclination), a dynamic state (trauma level, exposure to violence, social trust), a set of feasible actions (reconciliation, retaliation, disengagement), and private information (localized experiences, memory of betrayal, ideological commitment). The types evolve across generations through a probabilistic transition kernel influenced by past behavior and observed population patterns, allowing the model to capture intergenerational transmission of norms, fears, and behavioral strategies, such as those passed down from children who witnessed their parents killed by rival groups or state forces. 
Mali's and in sahelian areas abroad current conflict ecosystem is shaped by interactions among state forces, jihadist insurgencies, ethnic militias, self-defense groups, criminal networks, and foreign interventions \cite{welz2022african,toaldo2012origins,gazeley2022strong}. Actors operate under overlapping jurisdictions and ambiguous loyalties, often with asymmetric access to information and power. At the local level, violence is frequently triggered not by military necessity but by revenge, perceived injustice, and inherited trauma, all of which persist across generational boundaries \cite{benjaminsen2008does}. Repeated local peace accords, local ceasefires, and military campaigns have not led to durable stabilization, pointing to the need for a framework that captures these deeper dynamics of conflict reproduction.

\begin{figure}[htb]
\includegraphics[scale=0.4]{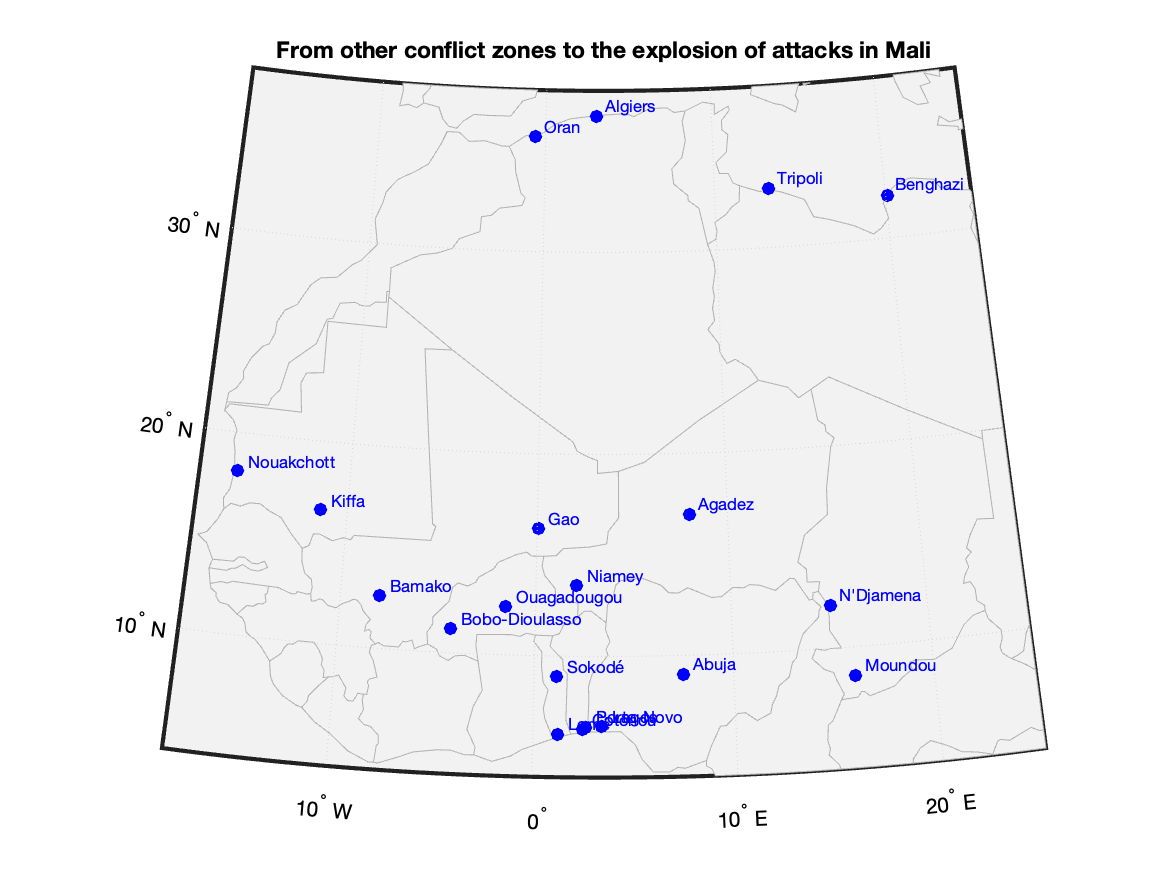}
\includegraphics[scale=0.4]{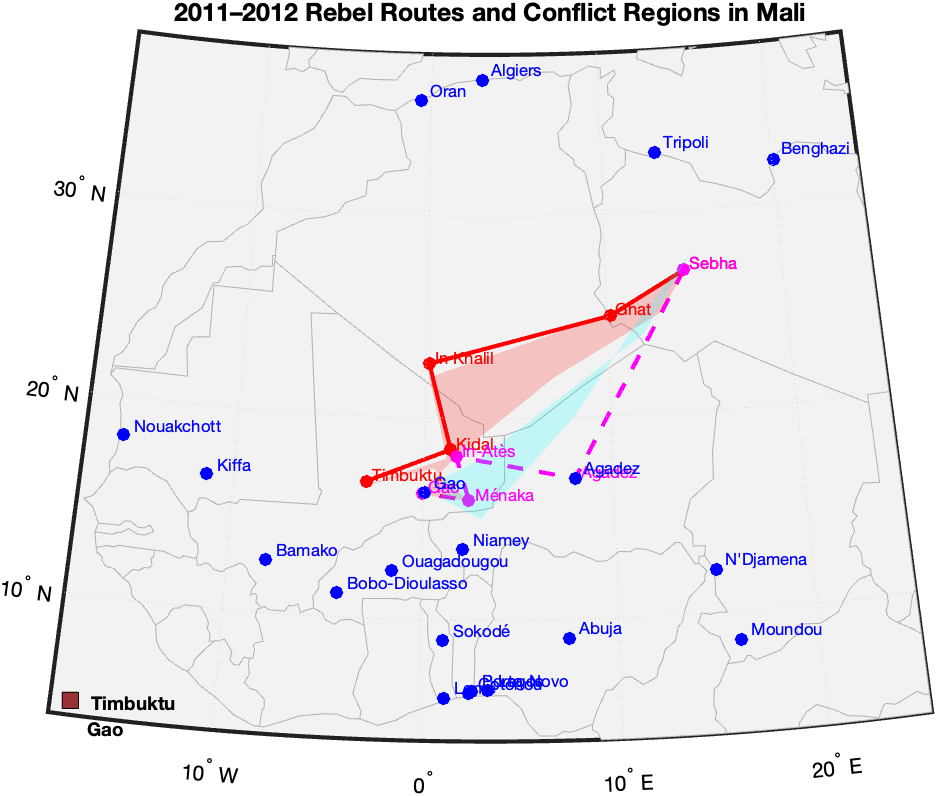} 
\caption{How conflicts in other territories have affected an already fragile situation in Mali.}
\label{initfig0}
\end{figure}

Before examining about the roots of the conflict propagation in Mali \cite{besenyHo2013war, keys2013mali, lecocq2013one, marchal2013mali, gowan2013mali, nijenhuis2013farmers, marchal2013military, cline2013nomads, logan2013roots, themner2013armed, luengo2012symptoms}, we may  first zoom on the roots  of the conflict in Libya   which can be seen as  a wave of recent roots (or accelerators) for the Malian conflict \cite{stewart2013next}.  We refer the readers to \cite{paoletti2011libya, romanet2019deep, musa2024historical, daniel2021understanding, pargeter2012libya, mckinney2012illegal, toaldo2012origins}  for the roots of the conflict  in Libya.  The roots of Mali's conflict trace back more than two centuries, beginning with precolonial empires such as the Bambara, Senufo, Toucouleur, and Songhay confederacies, which established enduring hierarchies of domination, enslavement, and regional exclusion \cite{keys2013mali,lecocq2013one}. French colonial rule (1890s-1960) institutionalized these inequalities through indirect governance, reinforcing ethnic stratification, privileging southern elites, and entrenching territorial fragmentation \cite{besenyHo2013war,marchal2013mali,boyd1979african}. Post-independence state formation further centralized power in Bamako. Most investment in the northern, western and central part of the country have not implemented mainly to due corruption, systematically marginalizing central, western, northern, farmers and nomadic populations and failing to integrate them into national co-advancement or security frameworks. This structural neglect triggered recurring rebellions led by a small minority of Tuareg, in 1963, 1990, 2006, and 2012 \cite{keita1998conflict}, fueled by broken peace accords, underdevelopment, and deep mistrust. Successive regimes responded with co-optation or militarized suppression without addressing underlying grievances.   Although Mali shares {\bf no border with Libya} (see Figure  \ref{initfig0}), the 2011 collapse of the Libyan state had a direct and destabilizing effect on Mali due to long-standing cross-Saharan military, ethnic, and logistical linkages \cite{luengo2012symptoms}. Thousands of Tuareg and other ethnic fighters from Mali and Niger, previously integrated into Gaddafi’s security brigades, returned home after Libya's fall, heavily armed, economically dislocated, and ideologically disillusioned \cite{besenyHo2013war,keys2013mali,benjaminsen2008does}. Many formed the backbone of the 2012 rebellion under the National Movement for the Liberation of Azawad (MNLA). 
Northern Mali, especially the Kidal, Gao and Timbuktu regions, offered natural rear bases, limited state presence, and highly porous borders, making it a low-cost destination for regrouping \cite{lecocq2013one,marchal2013mali}. In contrast, Algeria's post-civil war security apparatus and Niger’s early alignment with Western counterterrorism operations including U.S. drone deployments in Agadez via ``Trans-Sahara Counterterrorism Partnership" and France’s Operation Serval then Barkhane, enabled those countries to contain most spillovers. Mali, meanwhile, was destabilized by a military coup in March 2012, which paralyzed state response and triggered the suspension of Western security assistance due to constitutional constraints. The influx of fighters and weapons into Mali occurred through long-established Tuareg trade and smuggling corridors running from southern Libya into northern/western Niger, then across the Tamesna desert into Mali through under-monitored passages such as In-Atès and I-n-Khalil \cite{pargeter2012libya,mckinney2012illegal,gowan2013mali,gowan2013mali}. Niger’s logistical challenges and vast desert terrain made it difficult to fully intercept these flows, despite receiving substantial intelligence support. The conflict propagated not via direct geographic proximity, but through political and institutional vulnerability \cite{romanet2019deep}. At the time, Mali was uniquely exposed: facing a resurgent rebellion, fragmented governance, inflamed ethnic tensions between herders and farmers in central, eastern, western and southern part of the country, and vast ungoverned west and northern zones. These features made Mali a strategic focal point for jihadist expansion, arms trafficking, and proxy conflict, as it was both more penetrable and more profitable to destabilize than neighboring states with stronger institutions or support. 
This is not just about geography: it diffuses opportunistically toward the weakest institutional and societal links \cite{anderson2025rethinking}. In early 2012, Mali presented a uniquely vulnerable convergence of such factors: {\it a long-simmering  rebellion seeking renewed momentum; a weak, fragmented state undergoing political crisis and coup; localized ethnic tensions, particularly between pastoralist and agrarian groups (West, South, Central, East, North); and vast ungoverned spaces in the north with minimal border surveillance}. These conditions made Mali a low-cost, high-impact target for weaponized disorder. For arms traffickers and militant entrepreneurs, it was far easier and more profitable to ignite and sustain conflict in Mali than in neighboring Algeria with its hardened borders and intelligence apparatus, or even Niger, which had already become a hub of Western-supported counterterrorism operations \cite{marchal2013military,logan2013roots}. Thus, Mali became the conflict’s primary receiver not because it bordered Libya, but because it was strategically and structurally the most exploitable terrain in the aftermath of the Libyan collapse.
The convergence of these dynamics created a security vacuum quickly filled by rebels, jihadists, and transnational criminal networks. Over  the past 13 years, external military interventions \cite{uvere2025role,spearin2024russia,trunov2023modeling,pokalova2023wagner,spearin2025russia,turhan2021turkey,lanteigne2019china,benabdallah2020development,welz2022african}, from French-led operations to United Nations peacekeeping, from Wagner and to Afrika Korps private security, often reinforced factional divides rather than resolving root causes. Today, the conflict in Mali reflects a self-reinforcing system shaped by historical exclusion, inherited trauma, identity-based grievances, and opportunistic violence, an ecosystem of instability continually regenerated by the interplay of local weaknesses and transnational pressures \cite{shahid2025conflict,ahmed2025irrational,de2025aren}.

It is worth mentioning that the conflict between small agropastoralists and sedentary small farmers is not confined to the Dogon country; it is a widespread and structurally embedded phenomenon across much of West and Central Africa \cite{nijenhuis2013farmers,cline2013nomads,kafando2025effects,brottem2025violent,vinke2025conflicts,bontogho2025natural,eberle2025heat,ayeb2025we,mustafa2025analysing,ngalim2025managing,khorsandi2025prospects,yeboah2025farmer,moritz2006changing,moritz2006changing,turner2004political,brottem2014hosts}. The analysis of data collected by Timadie and its platforms like Guinaga, WETE, Grabal, SK1 Sogoloton, MoutonChain, CI4SI, Labinco  for 4 years in Mali indicates that similar tensions are present in the western and southern regions, where transhumant herders (these are entrepreneurs herders and not necessarily limited to one ethnic group) move through farming zones, especially during the dry season in the Nara, Nioro,  Niono, San and the Wassoulo area, leading to disputes over grazing rights, crop damage, and water access. These friction points are exacerbated by land pressure, mining unfairness, climate variability, and weak conflict resolution mechanisms. Beyond Mali, such agro-pastoral conflicts are prevalent in Burkina Faso, Niger, Nigeria, Benin, Guinea, Chad, and other neighbouring countries. In these regions, seasonal migration routes intersect with expanding agricultural frontiers, while state withdrawal, ethnic polarization, and the proliferation of small arms have transformed local disputes into violent clashes. In some contexts, these tensions have been co-opted by extremist  groups, politicized by elites, or intensified by competition over customary land tenure systems.

A war entrepreneur is a strategic actor (state, private military company, militia network, or hybrid organization) whose objective function is explicitly structured so that  armed conflict, insecurity, or regime fragility are not merely constraints but revenue-generating state variables. In MFTG, this means the agent $i$ maximizes a payoff that depends on its controls and on the joint distribution  of violence, protection demand, and political stability, with the key property that the marginal payoff with respect to certain moments, notably the level or variance of insecurity is non-negative over relevant regions so that persistent, managed instability is economically optimal. Unlike conventional security providers whose objective is to minimize threats to zero, the war entrepreneur endogenizes conflict as an asset, extracting rents through mechanisms such as regime  contracts, access to natural resources, coercive taxation, or geopolitical leverage, while calibrating the intensity and distribution of violence to avoid both total collapse (which destroys the market) and full stabilization (which eliminates demand), thereby generating a self-sustaining equilibrium of "profitable insecurity"  in which the agent’s strategic behavior actively shapes and benefits from the mean-field-type dynamics of conflict.

War entrepreneurs and arms traffickers have strategically exploited these weakest institutional and societal links such as unresolved agropastoral tensions, localized grievances, and ungoverned rural zones, to escalate and propagate conflict within Mali and across the Sahel. In contexts where the state is absent in this immensely vast territory  (Mali is a very big country  territory-wise of an area over 1.241.238 square kilometers and with seven bordering countries) or mistrusted, these actors serve as informal suppliers of violence, offering weapons, tactical training, and logistical support to communities in conflict or to armed factions seeking leverage. War entrepreneurs take advantage of the non-regulated breeding and herder movement in the country. 
They have created some disputes across several areas, for example, are transformed into militarized inter-ethnic or inter-tribal  conflicts when external arms suppliers flood the area with low-cost firearms, often on credit or in exchange for control over smuggling routes. War entrepreneurs deliberately target areas of latent tension such as regions marked by revenge killings \cite{kapil1966conflict} or land competition, knowing that minimal inputs (a few rifles or motorbikes) can ignite sustained cycles of retaliation. The war entrepreneurs also take advantage of the diaspora members who are not aware of the real situation and who fund some of the arms in the villages. As each village should its own list of arms and guns, there is an explosion of demand. 
In turn, war entrepreneurs profit from the continued demand for arms, ammunition, and mediating services, creating self-reinforcing violence economies \cite{cisse1980sedentarization}. This dynamic not only deepens polarization within communities but also accelerates the fragmentation of local authority, as armed actors replace traditional mediators. The result is a conflict in which the violence is not just the outcome of structural fragility, but also the product of deliberate market incentives to perpetuate insecurity for profit \cite{ajala2025new,le2010property,mefflassoux1970class,lewis1978small,kurtz1970political}.

In the context of Mali, a war entrepreneur  for a Private Military Company (PMC) is a strategic actor that internalizes the regime's survival probability as a primary revenue-generating asset, treating the nationwide distribution of insecurity as a productive field to be harvested rather than a threat to be eliminated. Within  MFTG, the entrepreneur optimizes a coupled payoff functional (integrating direct security rents, mining concessions, and political leverage) by ensuring that the "field" of the cycle of violence remains high enough to necessitate their presence but low enough to prevent the junta's collapse. This creates a structural incentive to engineer a regime-guarding equilibrium  where the entrepreneur's control actions intentionally maintain a positive variance of instability; by preventing total regime failure while allowing peripheral violence to persist, the war entrepreneur transforms the regime’s existential fear into a permanent curve cycle of dependency, ensuring that full pacification remains sub-optimal for the contractor’s long-term profit. 

From this MFTG perspective, Afrika Korps is also a war entrepreneur in Mali because its strategy is not to minimize the total amount of violence, but to optimize the distribution of that violence  to maximize its own political and economic payoffs. In Mali, they have successfully engineered a state where controlled instability is the most profitable outcome for the agent, leading to a permanent presence. Afrika Korps in Mali optimizes a payoff functional that extends far beyond classical military victory and is explicitly coupled to the joint law insecurity distribution: its rewards are jointly driven by the survival probability of the state authorities in Bamako (which guarantees contract continuity), the density of resource extraction opportunities (notably access to gold zones), and the persistence of a nonzero threat level that rationalizes its ongoing deployment, implying that the objective is not to solve the conflict but to sustain a trajectory of insecurity distribution that maximizes intertemporal rents.

Accordingly, Afrika Korps manages the ``field" of insecurity rather than eliminating it, treating the spatial-temporal distribution of violence as a controllable but not suppressible variable, where operations that displace insurgents from northern zones like Kidal induce diffusion into western and central regions or across borders, and where the contractor's optimization problem selects a nonzero steady-state variance of the distribution of insecurity since a zero-variance (fully pacified) field collapses demand for its services, while excessive variance (state breakdown) destroys the contractual and extractive base thereby targeting an interior optimum of instability; this interacts with a principal-agent distortion in which the principal (the Malian state) minimizes coup or capital-capture risk while the agent (Afrika Korps) controls coercive intensity, training, and frontline allocation under asymmetric information, enabling it to re-engineer the distribution by securing the center while allowing peripheral insecurity to persist, which locks the regime into a dependency equilibrium where the contractor becomes the indispensable guarantor of elite survival; consequently, the observed outcome is not a failed intervention but a stable fixed point (or limit cycle) of the coupled dynamics, in which sovereignty is effectively co-produced and partially internalized by the contractor, yielding a hybrid sovereign-corporate equilibrium that endogenously maintains the field of insecurity within a profitable band that maximizes long-run rents (financial and mineral) for the war entrepreneur.

It is worth mentioning that Afrika Korps  is not the only war entrepreneur in Mali. There are other 
dozens of local, regional, and international actors including Secopex, all seek to leverage the conflict for their own strategic ends.

We now model the consequences in field.
The ``straw that breaks the camel’s back" represents not the root cause but the final, often underappreciated, triggering event, a seemingly small or isolated incident that, when layered upon pre-existing systemic stressors (such as chronic marginalization, weak governance, ethnic/tribal polarization, or environmental strain), pushes the local context into open violence or accelerates the transition from latent tension to active conflict. While arms proliferation and institutional collapse had long destabilized parts of northern, central, western, eastern Mali, it was the post-foreign-intervention exclusion of specific  groups 
from peace negotiations perceived as a slight by key actors, that catalyzed renewed insurgent campaigns in adjacent zones. In rural areas of central and western Mali, longstanding tensions between several communities had been managed through fragile customary arrangements. The targeted attacks some of specific villages  often served as the flashpoint for intercommunal violences. 
The withdrawal of state services  in contested communes had long weakened legitimacy, but it was often the execution of a schoolteacher or the burning of a public building that crystallized fear, prompted administrative retreat, and signaled to residents that the state could no longer offer basic protection in many areas.  In cross-border regions, jihadists and bandits infiltration had been gradually intensifying for years, but it was often a single high-casualty ambush on a patrol or the kidnapping of a  humanitarian worker  (local as well as foreign) that drew international attention and triggered either foreign intervention or local reprisals, each of which risked spiraling into broader cycles of conflict. The banning of customary militias or public denunciation of traditional actors, even when well-intentioned, functioned as the final insult to already fragile security coalitions, collapsing local resilience structures and exposing entire populations to unmediated violence. In each case, these events may appear minor or isolated in isolation, yet they accumulate atop a burdened sociopolitical ecosystem like the proverbial last straw, producing irreversible shifts in local dynamics, such as mass displacement, arms mobilization, allegiance switching, or communal polarization and thereby transforming a localized grievance into a regional contagion of instability. 
 
 What defines these moments as ``straws" is not their absolute magnitude but their symbolic and catalytic role  in tipping an already overloaded system beyond its threshold of tolerance, setting into motion a sequence of events that security actors, policymakers, and community leaders can no longer easily reverse.
 
 {\bf In many regions of Mali, the notion of state presence has long been more symbolic than substantive, with government authority often limited to a single overburdened school or administrative post located in the largest commune of a vast rural zone. 
 Beyond these minimal enclaves, entire expanses of territory, farms, forest belts, hill regions, seasonal grazing lands, and peripheral villages, have operated outside formal governance for decades. In these areas, bandits or armed actors have not emerged in opposition to the state so much as in its absence, filling vacuums of justice, protection, and resource regulation through informal or coercive structures. The geography of insecurity is thus deeply tied to the geography of neglect: armed groups embed themselves in unpatrolled woodlands, remote escarpments, and transhumance corridors far beyond the reach of state institutions, which lack the logistical capacity, personnel, and legitimacy to exert consistent control. The result is a dual reality where state visibility is concentrated in isolated pockets, while non-state actors dominate the everyday governance, economy, and security landscape in much of the surrounding territory, rendering efforts at stabilization or reform fundamentally disconnected from the lived experience of many rural populations.}

\subsection*{Why is classical multi-population mean-field games not applicable here?}
In this context,  {\bf there is no   truly ``negligible" actors}: even a single individual’s choice from a villager to arm, inform, shelter, or reconcile can tip local security dynamics, ripple through social networks, and reshape macro‐level conflict trajectories. Boundaries blur between combatants and civilians, formal institutions are weak, and revenge or protection motives run high. A herder choosing to arm his youth, a village elder granting water access to a rival clan, or a teacher organizing a peace circle all exert non‐trivial influence on whether conflict flares or abates. Because arms are light and cheap, intelligence travels fast by word of mouth, mobile phone or online platform, and social capital is both the currency and casualty of violence, each decision‐maker carries “atomic” weight: their singular action shifts local mean‐fields of trust, fear, or retaliation. In this intergenerational mean‐field framework, then, no one is truly negligible: every person is a potential pivot in the security game, capable of instigating or preventing violence, and thus must be modeled as an atomic agent whose strategic choices can dominate equilibrium outcomes. 

In classical mean-field games, one assumes an atomless continuum of identical decision-makers per type/class whose individual actions infinitesimally shift the population distribution, so that each decision‐maker is ``price‐taker" in the population mean‐field and has no measurable impact on the aggregate outcome. By contrast, our intergenerational MFTG explicitly allows every agent: farmer, herder, elder, youth, outsider or insider, to be “atomic”  or mixed in key respects: a single armed raid, a peace pact brokered by one mediator, or a diaspora‐funded arms transfer by one individual materially alters local threat levels, resource flows, and subsequent best‐responses by others. Thus each person's choice enters directly into the state transitions, the joint distributions, and hence the future payoffs of all actors. There is no vanishing‐influence limit: every decision‐maker can trigger or prevent cycles of violence and shapes the intergenerational evolution of conflict. This departure from the negligibility assumption of classical MFGs is what makes our framework both richer, able to model ``veto‐power" actors and revenge catalysts, and more realistic for contexts where institutional fragility and asymmetries of power grant outsized leverage to individual choices. Here the entire population mass trend can start from a single action of an individual.

\subsection*{ Concrete examples where multi-population mean-field games are not applicable }

Gavrilo Princip's assassination of Archduke Franz Ferdinand in Sarajevo in June 1914, a lone gunman’s shot, precipitated the diplomatic crises and alliance obligations that ignited World War I, proving that one violent choice can redraft international fault lines; When Rosa Parks refused in December 1955 to give up her seat on a Montgomery bus, her individual act of nonviolent resistance crystallized widespread discontent into the Montgomery Bus Boycott and accelerated the U.S. Civil Rights Movement, demonstrating that a single refusal can mobilize a mass coalition and alter the course of state legitimacy; Thich Quang  Duc's
 public self-immolation in June 1963 in Saigon exposed to global audiences the brutal repression of Buddhists under South Vietnam's regime and catalyzed both domestic uprising and U.S. policy reversals;  The death of reformist leader Hu Yaobang in April 1989 in Beijing inspired students to gather in Tiananmen Square, transforming private grief into a nationwide pro-democracy movement before it was violently suppressed. Mohamed Bouazizi's self-immolation in December 2010 in Sidi Bouzid, Tunisia, a solitary act of protest by a street vendor, set off a wave of demonstrations that toppled long-standing regimes across the Arab world, revealing how one person’s desperate assertion of dignity can invert entrenched power relations and reshape regional security architectures; In February 2011 a single protest by Mohammad al-Durbani at the Pearl Roundabout in Manama ignited Bahrain’s own Arab Spring-style demonstrations, underscoring that even minor acts of dissent can erupt into sustained unrest. 
 
 In each instance, a solitary, ``atomic" decision reshaped the distribution of public sentiment, trust, and collective efficacy, effects that purely non-atomic mean-field models, which assume each individual’s influence vanishes in the aggregate, would miss entirely. Capturing the security dynamics of such turning points therefore demands an intergenerational mean-field-type  in which every actor, no matter how seemingly small, can wield decisive, system-level impact.

\subsection*{Contribution}
We formalize the geospatio-temporal  environment from Libya to Niger to Algeria to Mali to Mauritania to Burkina Faso to Togo and Benin, using an intergenerational MFTG model. We show that under the presence of retaliatory or aggressive types whose best-response strategies lie in conflict-inducing action sets, and in the absence of structured counterincentives, the system admits dynamically equilibrium trajectories in which cycles of violence persist over generations. {\bf Even when temporary peace is achieved, the underlying distribution of types and inherited trauma makes a return to conflict inevitable.} To address this, we propose a mean-field-type policy mechanism based on incentive-compatible transfers that are embedded directly into the instantaneous utility of agents and conditioned on verifiable peaceful behaviors and contributions to local co-development. These incentives, implemented through observable and adaptive mechanisms, reshape the utility, making peaceful behavior a dominant strategy for a broader class of types. We  prove that under such incentive structures, the dynamic equilibrium distribution shifts over time toward peaceful types. As this shift propagates intergenerationally through the type-transition kernel, the population composition moves toward a stable, peaceful mean-field-type configuration. Our results are formalized through the analysis of Bellman and master adjoint systems (MASS \cite{tembine2024mean,tembine2023machine,basar2024foundations,basar2024applications, tapo2024machine}). We provide explicit conditions for emergence of equilibria and both a theoretical foundation and practical roadmap for breaking cycles of violence in fragile conflict ecosystems like Mali.

\subsection*{Structure}
  
The rest of this article  is structured as follows.  Section \ref{secmftgv1} formulates the problem using MFTG and present the key results.
 Section \ref{seccoi6} focuses on co-Intelligence between and MI agents.  Section \ref{seccoi7}  focuses on the limitations of all-military approach. 
 Parrondo-type Cooperation  for some Sahel- Southern Maghreb Region are examined in Section \ref{secmag}.
 Section  \ref{seccoi8} concludes the article. Table \ref{tab:actor_abbreviations} summarizes some abbreviations used in the article.

\newpage 
\begin{table}[htb]
\centering
\renewcommand{\arraystretch}{1.2}
\begin{tabularx}{\textwidth}{|l|X|}
\hline
\textbf{Abbreviation / Label} & \textbf{Actor or Entity Description} \\
\hline
UN & United Nations \\
MINUSMA & United Nations Multidimensional Integrated Stabilization Mission in Mali \\
AU & African Union \\
ECOWAS & Economic Community of West African States \\
FAMa & Malian Armed Forces \\
SE & State Intelligence Agency of Mali \\
MNLA &  National  Movement for the Liberation Azawad \\
FLA & Azawad  Liberation Front\\
CMA & Coordination des Mouvements de l'Azawad \\
HCUA & High Council for the Unity of Azawad \\
MAA & Arab Movement of Azawad \\
CSP &  Permanent Strategic Framework \\
JNIM & Jama'at Nusrat al-Islam wal-Muslimin \\
ISGS & Islamic State in the Greater Sahara \\
AQIM & Al-Qaeda in the Islamic Maghreb \\
GATIA & Imghad Tuareg Self-Defense Group and Allies\\
MSA & Movement for Salvation of Azawad \\
Antolna & Traditional hunter groups in Mali \\
PMC & Private Military Company \\
Wagner & Russian-linked PMC operating in Mali \\
TG & Turkish Guards (Turkish advisors or contractors) \\
AK & Afrika Korps (Russian-linked) \\
NGO & Non-Governmental Organization \\
\hline
\end{tabularx}
\caption{List of actor labels and their corresponding full meanings.}
\label{tab:actor_abbreviations}
\end{table}

\section{Mean-Field-Type Game} \label{secmftgv1}

The MFTG considered here involves a finite number of agents, each characterized by a type \( \theta_i \), a state \( s_i \), an action  \( a_i \), an information  \( \mathcal{I}_i \), and an order of play. The type \( \theta_i \) represents the agent’s intergenerational identity and strategic disposition such as being a government soldier, bandit, jihadist recruiter, revenger child-soldier, village elder, or peace-focused worker, capturing both inherited traits (trauma, ideology, ethnic affiliation) and persistent preferences over peace or violence. The state \( s_i \) reflects the agent’s evolving condition, including location, security exposure, trauma level, institutional trust, and access to weapons, and evolves stochastically over time through both endogenous actions and exogenous shocks. At each time, agent \( i \) selects an action \( a_i \in A_i \) based on their information \( \mathcal{I}_i \), which may include private, local, or hidden variables such as past victimization, knowledge of weapons caches, or informal ties. The action is the choice the agent makes at each point in time, ranging from engaging in conflict (attacking, extorting, joining a militia), to peaceful behavior (attending school, organizing reconciliation, participating in co-development programs), or remaining passive. The instantaneous payoff \( r_i(\theta, s, a, \mu) \) depends not only on the full type-state-action profile \( (\theta, s, a) \), but also on the distribution \( \mu \) of that profile across all agents, capturing population-level effects like conflict intensity, social fragmentation, or collective trust. This structure allows the MFTG to rigorously model the interplay between individual decisions and systemic dynamics in  conflict environments where secrecy, distrust, and identity play central roles.

\subsection{Key Decision-Makers}

\begin{center}
\rowcolors{1}{}{gray!10}
\begin{tabular}{|p{10cm}|}
\hline
\rowcolor{blue!15} \textbf{State Actors} \\
Government of Mali \\
Malian Armed Forces \\
National Security Services \\
\hline
\rowcolor{red!15} \textbf{Non-State Armed Groups} \\
Rebel Movements (MNLA, HCUA, CMA) \\
Jihadist Groups (AQIM, JNIM, ISGS) \\
Criminal Bandits \\
Ethnic Militias / Self-defense Groups \\
Traditional Hunters \\
Local Militias \\
Foreign Militias \\
\hline
\rowcolor{green!20} \textbf{Coalitions} \\
 Rebels + Jihadists \\
 Rebels + Bandits \\
 Bandits + Jihadists \\
Rebels + Hunters \\
 Regular Army + Hunters \\
 Regular Army + Foreign Militias \\
 Local Population + Hunters \\
 Militias + Rebels \\
Mixed Armed Coalitions \\
Serval+ Government of Mali \\
Barkhane + Government of Mali \\
Barkhane + Rebels \\
 MINUSMA + Government of Mali \\
 EU+ Government of Mali \\
AU + Government of Mali \\
 Ecowas + Government of Mali \\
 Wagner Group + Government of Mali \\
  Turkish Guards + Government of Mali \\
 Afrika Korps +  Government of Mali \\
Afrika Korps + Local Militias \\
 Turkish Guards + Multinational Corporations \\
Turkish Guards + Islamic NGOs \\
 Wagner +  Rebel Groups \\
 Afrika Korps + Jihadist or Rebel Groups \\
Turkish Guards + Foreign Militias \\
\hline
\hline
\end{tabular}
\end{center}

\begin{center}
\rowcolors{1}{}{gray!10}
\begin{tabular}{|p{10cm}|}
\hline

\rowcolor{yellow!20} \textbf{Civil Society} \\
Local Population \\
The Communities \\
Political Parties \\
Local Civil Society Organizations \\
Diaspora Groups \\
\hline
\rowcolor{cyan!15} \textbf{International Actors} \\
International Community \\
United Nations Peacekeeping (MINUSMA) \\
ECOWAS \\
AU Mission and Observers \\
International NGOs and Humanitarian Organizations \\
Foreign Military Forces \\
Multinational Corporations \\
\hline
\rowcolor{magenta!10} \textbf{Transnational / Proxy Forces} \\
Transnational Criminal Networks \\
Turkish Guards \\
Afrika Korps \\
Multi-country militia \\
\hline
\rowcolor{orange!15} \textbf{Hybrid / Irregular Actors} \\
Opportunistic double agents \\
Mixed agents \\
Revenger combatants \\
Defectors and dissatisfied fighters \\
Arms dealers and war-sustaining intermediaries \\
Propaganda funders \\
Conflict entrepreneurs \\
Wearing the insignia or attire of opposing groups \\
Specialized in online psychological operations \\ Intelligence service suppliers \\ 
\hline
\end{tabular}
\end{center}

The Government of Mali is the internationally recognized central authority, responsible for national sovereignty, governance, and coordination of defense efforts across ministries and agencies. The Malian Armed Forces (FAMa) serve as the primary military instrument of the state, tasked with counterinsurgency, territorial defense, and restoration of public order. National Security Services, comprising the police,  paramilitary police and  intelligence office, are charged with maintaining internal security, surveillance, and criminal investigations. Rebel Movements such as the MNLA, HCUA, MAA, FLA, CMA  and CSP emerge from ethno-regional grievances, demanding autonomy, federalism, or political inclusion through both armed struggle and negotiation. Jihadist Groups, including AQIM, JNIM, and ISGS, operate transnationally and pursue Sharia-based governance through asymmetric warfare and strategic terror. Criminal Bandits engage in opportunistic violence, kidnappings, theft, extortion, motivated by profit and often intersecting with other armed groups. Ethnic Militias and Self-defense Groups, like GATIA, Dan Na Ambassagou, and MSA, are community-anchored formations mobilized around ethnic identity and territorial defense. Traditional Hunters such as the Antolna  possess indigenous authority and spiritual legitimacy, providing localized protection based on customary norms. Local Militias are non-state armed actors formed around village defense or political patronage, operating with variable autonomy. Foreign Militias, notably Wagner Group affiliates, function as state-aligned or semi-autonomous actors with unclear legal status and significant combat capabilities.
Coalition: Rebels + Jihadists involves strategic alignment between nationalist insurgents and Salafi-jihadist actors, often driven by mutual enemies or short-term territorial control. Rebels + Bandits reflects fluid collaborations where criminal profit-seeking supports political destabilization. Bandits + Jihadists merges ideological violence with illicit economies, sustaining both actors' operations through shared logistics or mutual protection. Rebels + Hunters sees traditional forces reinforcing rebel legitimacy or territorial control. Regular Army + Hunters is a state-sanctioned collaboration leveraging local legitimacy for counterinsurgency. Regular Army + Foreign Militias incorporates contracted or proxy foreign units into national military operations, often for strategic depth. Local Population + Hunters formalizes community-backed local defense through traditional authority. Militias + Rebels combines grassroots mobilization with organized political insurgency. Mixed Armed Coalitions refer to transient military alliances across otherwise divergent armed actors, militias, rebels, and foreign proxies, built around tactical convergence.
The Local Population includes unarmed civilians, farmers, herders, and urban dwellers affected by and responding to violence through flight, cooperation, or resistance. The Communities comprise ethnic, religious, and linguistic groups whose historical grievances and identities shape conflict dynamics. Political Parties contest for power through electoral and informal means, sometimes engaging armed groups for leverage. Local Civil Society Organizations include chiefs, women’s associations, and youth groups, mediating community needs and peacebuilding. Diaspora Groups are expatriate Malians engaged in remittances, lobbying, and transnational mobilization with material or ideological impact.
The International Community includes diplomatic missions, regional organizations, and multilateral agencies seeking stabilization, governance reform, and peace negotiations. United Nations Peacekeeping (MINUSMA) is mandated to protect civilians and facilitate political dialogue but faces operational constraints. ECOWAS plays a mediating role in political transitions and sanctions enforcement. The AU Mission and Observers offer pan-African legitimacy to regional governance efforts. International NGOs and Humanitarian Organizations deliver aid, education, and healthcare, often filling state gaps in conflict zones. Foreign Military Forces, including remnants of Operation Barkhane and EU missions, support counterterrorism and state capacity building. Multinational Corporations, especially in mining and infrastructure, operate under security risk and require armed protection. Transnational Criminal Networks engage in trafficking (drugs, arms, humans), exploiting porous borders and weak institutions.
Turkish Guards which include private security firms and military advisors, are tied to Ankara's strategic and commercial interests in Mali and the Sahel. Afrika Korps is to foreign expeditionary actors backed Russia and by Middle Eastern or North African regimes, engaged in regional influence through security provision. Multi-country Militias consist of fighters trained and supported by foreign states, potentially deployed across borders with political or economic motives. Turkish Guards + Government of Mali embodies bilateral military cooperation with geostrategic undertones. Afrika Korps + Local Militias links regional power projection with community-based defense forces. Turkish Guards + Multinational Corporations provides armed protection to foreign commercial interests. Turkish Guards + Islamic NGOs enables religious soft power backed by security infrastructure. Afrika Korps + Rebel or Jihadist Groups represents covert or semi-official support to insurgents for regional leverage. Turkish Guards + Foreign Militias may reflect geopolitical counterbalancing against rival forces such as Wagner, creating layered international contestation within Mali’s borders.

Mali's security ecosystem includes also numerous informal, fluid, and often invisible participants whose roles are pivotal yet underrecognized. Among them are opportunistic double agents who navigate and exploit multiple allegiances, sometimes simultaneously reporting to rebels, government forces, or foreign actors for personal survival or gain. Mixed agents are individuals embedded across institutional boundaries, such as local officials who covertly finance militias, or fighters with one foot in formal security and another in illicit trade. Revenger combatants, many of whom were children when their families were massacred, represent a psychologically motivated group driven by trauma, vengeance, and cycles of inherited grievance. Defectors and dissatisfied fighters shift between groups, often citing unmet promises, poor leadership, or ideological disillusionment, thereby altering allegiance dynamics. Arms dealers and war-sustaining intermediaries profit from prolonged instability, covertly supplying weapons and ammunition to opposing factions to maintain demand. Propaganda funders are actors, domestic or foreign, who pay influencers, disinformation agents, and agenda-driven communicators to shape narratives, discredit rivals, or incite ethno-political divisions on social media platforms. Conflict entrepreneurs, including urban elites and transnational brokers, contract militias on a mission-specific basis for protection, extortion, or territorial enforcement. Some actors wear the insignia or attire of opposing groups, such as jihadist or military uniforms, to carry out attacks, sow confusion, and delegitimize enemies in the eyes of the population. Others specialize in online psychological operations, including false-flag campaigns, meme warfare, and digital impersonation, targeting both local and international audiences. 

\subsection{Information Structure}

The information structure across the multitude of actors in Mali's conflict  is deeply {\it asymmetric, fragmented, and conditioned by geography, alliances, digital access, and psychological factors}. The Government of Mali possesses access to formal administrative channels, diplomatic cables, and high-level intelligence via national institutions and foreign allies, yet suffers from gaps in real-time local knowledge and often lacks trustable field-level informants in contested zones. The Malian Armed Forces (FAMa) receive operational intelligence through hierarchical command and regional reports, but face structural bottlenecks, limited aerial surveillance, and infiltration risks. The National Security Services possess more localized networks of informants and technological monitoring tools, but are constrained by jurisdictional overlaps and internal rivalries. Rebel Movements like the MNLA or CMA depend on dense ethno-linguistic networks, territorial familiarity, and informal cross-border coordination, often benefiting from long-standing community allegiances. Jihadist Groups, such as JNIM or ISGS, use ideological cohesion and clandestine cells to maintain tight, compartmentalized communication structures, leveraging religious schools, couriers, and encrypted digital channels. Criminal Bandits operate with situational intelligence drawn from bribes, threats, and informal markets, exploiting corruption and the weakness of local surveillance. Ethnic Militias and Self-defense Groups like GATIA or Dan Na Ambassagou rely on oral transmission, elders’ networks, and hyper-localized rumor systems, but often lack broader strategic oversight. Traditional Hunters (Antolna, Dozo, Koglweogo) utilize animist belief systems, ancestral legitimacy, and face-to-face assemblies, privileging symbolic knowledge and territorial intuition over formal reporting. Local Militias and Foreign Militias (Wagner) maintain distinct structures: locals depend on kinship-based mobilization and informal patrols, while foreigners often rely on technical intelligence, satellite reconnaissance, or instructions from non-transparent chains of command.
In coalitions, information structures are even more convoluted. Rebel+Jihadist alliances often manage parallel information flows to avoid betrayal. Bandit+Rebel and Bandit+Jihadist coalitions share tactical data selectively, centered on logistics and targets. Rebel+Hunter and Army+Hunter coalitions transmit knowledge through intermediaries, often village elders or dual-loyalty agents, with asymmetric access to state or local sources. Army-Foreign Militia collaborations benefit from advanced technological systems (drones, signals intelligence), though trust deficits hinder integration. Local Population+Hunter coalitions thrive on mutual surveillance and trust-based oral exchanges, with a focus on immediate threat detection. Militia+Rebel and mixed armed coalitions are fluid, often lacking a stable information core, relying on shared ideology or perceived threats for coordination. The Local Population, meanwhile, accesses information primarily through oral networks, local radio, WhatsApp and Telegram groups, and mosque-based communication, often navigating misinformation, fear, and survival bias. Communities draw on historical memory, kinship networks, and religious discourse, shaping interpretations of violence and allegiance. Political Parties use partisan media, electoral polling, and elite networks, yet are vulnerable to disinformation and digital manipulation. Civil Society Organizations gather information through grassroots outreach, community forums, and humanitarian reporting mechanisms. Diaspora Groups are linked to satellite media, encrypted messaging apps, and international news cycles, but may suffer from outdated or distorted ground-level feedback.
The International Community, including diplomats and multilateral envoys, relies on embassy reports, peacekeeping briefings, and international media, often filtered through geopolitical interests. MINUSMA uses structured mission reporting, human rights monitoring, and civilian liaison teams, though access constraints and safety risks hinder full coverage. ECOWAS, AU observers, and International non-governmental organisations (NGOs) possess variable information depth, mostly depending on host government cooperation and regional offices. Foreign Military Forces (Serval, Barkhane, EU missions) have advanced intelligence capabilities including signals intelligence (SIGINT), Unmanned Aerial Vehicles (UAV), and satellite imagery, yet struggle with cultural decoding and rapid human terrain changes. Multinational Corporations operate with private security assessments, risk consultancies, and direct liaisons with local power brokers, yet are susceptible to opaque and self-interested information. Transnational Criminal Networks use closed, trusted networks with robust anti-surveillance countermeasures, often ahead of formal actors in tracking supply-demand shifts and territorial flux.
Among the informal actors, opportunistic double agents possess overlapping knowledge bases across factions, leveraging role ambiguity, selective truth-telling, and incentives from multiple sides. Mixed agents utilize hybrid access to state and non-state actors, creating opaque and self-curated information environments tailored for influence or protection. Revenger combatants filter information through trauma-induced cognitive lenses, often emotionally driven and ideologically radicalized by narratives of martyrdom or historical injustice. Defectors and dissatisfied fighters possess insider knowledge of former factions and actively disseminate intelligence when defecting, while also seeking to manipulate narratives to justify realignment. Arms dealers and war sustainers maintain privileged transactional intelligence via black markets, logistics chains, and overlapping militancy hubs. Social media propagandists and their funders shape information architecture through bot networks, fake accounts, meme production, and the strategic use of hashtags and misinformation loops. Contract militia-hirers operate via private negotiations, ephemeral contracts, and encrypted digital transactions, often obscuring information flows from third parties. False-flag actors wearing others’ insignia weaponize misidentification to generate strategic confusion and sow distrust among civilian observers. Online psy-ops actors integrate metadata scraping, linguistic mimicry, and MI-generated content to distort reality perception, targeting both Malian and international 
audiences. {\bf As we can see, in the ground, several actors have stayed in Mali only a limited time period, leaving the stage to other new actors, making the game intergenerational or block-multistage.}

\begin{table}
\rowcolors{2}{gray!10}{white}
\begin{tabular}{|p{7.3cm}|p{7.3cm}|}
\hline
\rowcolor{gray!30}
\textbf{State-Based Actions} & \textbf{Others' Actions} \\
\hline
Legislative reform and peace accord negotiation (Government of Mali) &
Armed insurgency and territorial occupation (Rebels) \\
\hline
Deployment of state resources and shifting between conciliation and repression &
Targeted assassinations, IED attacks, and forced taxation (Jihadist Groups) \\
\hline
Patrols, search-and-destroy operations, joint actions with proxies (FAMa) &
Kidnapping-for-ransom, highway robbery, extortion (Criminal Bandits) \\
\hline
Urban and rural security operations &
Communal retribution and ethnic cleansing (Ethnic Militias / Self-defense Groups) \\
\hline 
Intelligence gathering, arrests, border surveillance, counterterrorism  &
Armed village defense, mystical rituals, symbolic patrols (Traditional Hunters) \\
\hline
Ceasefire monitoring, early warning, civilian escort (MINUSMA) &
Checkpoint enforcement, property protection, and raids (Local Militias) \\
\hline
Sanctions, diplomatic mediation, and funding of stabilization (International Community) &
Kinetic operations, assassinations, and intelligence work (Foreign Militias) \\
\hline
Drone strikes, intelligence-sharing, force training (Foreign Military Forces) &
Joint attacks, propaganda coordination (Rebels + Jihadists Coalition) \\
\hline
Humanitarian corridor negotiation, peacebuilding reports (Civil Society Orgs, NGOs) &
Shared logistics, financial exchanges (Rebels + Bandits, Bandits + Jihadists) \\
\hline
Base protection, force projection, regional mediation (AU, ECOWAS, MINUSMA) &
Hybrid commands and rotational attacks (Militias + Rebels, Mixed Coalitions) \\
\hline
Food, medical aid, education in conflict zones (International NGOs) &
Survival strategies: displacement, silence, parallel cooperation (Local Population) \\
\hline
Lobbying, diaspora funding, political mobilization (Diaspora Groups) &
Communal petitions, inter-communal dialogues, traditional justice (Communities) \\
\hline
Rallies, alliances with armed groups, electoral propaganda (Political Parties) &
Targeted retribution, radicalization (Revenger Combatants) \\
\hline
Institutional advocacy, public pressure (Local Civil Society) &
Splinter formation, insider leaks (Defectors, Dissatisfied Fighters) \\
\hline
Hiring of private security, infrastructure lobbying (Multinational Corporations) &
Arms trade, Unregistered market auctions (Arms Dealers) \\
\hline
Support of information warfare, propaganda campaigns (None-state propagandists) &
False-flag operations, psychological manipulation (False-flag operatives, Psy-ops agents) \\
\hline
\end{tabular}
\end{table}

\subsection{Actions}
The Government of Mali undertakes legislative action, peace accord negotiation, and deployment of state resources in conflict zones, yet often oscillates between conciliation and repression depending on internal political pressures. The Malian Armed Forces  conduct patrols, search-and-destroy missions, urban and rural security operations, and are increasingly engaged in joint actions with allied or proxy forces; however, they are also accused of excessive use of force, extrajudicial killings, and ethnically biased operations. The National Security Services, are responsible for arrests, border surveillance, intelligence operations, and counterterror investigations, yet often face allegations of arbitrary detention or complicity with certain militias.  Here each actors can hire local, regional, continental, and international intelligence gathering and intelligence services for drones,  satellite, night cameras, underwater, ground, air as part of strategic investment to be  exploited. The nation can select several intelligence services to extract useful strategic information. 
Rebel Movements  engage in armed insurgency, negotiations, strategic territorial occupation, and international diplomacy, frequently alternating between formal peace processes and covert mobilization. Jihadist Groups execute targeted assassinations, ambushes, vehicle-borne IEDs, community infiltrations, forced taxation, religious proselytism, and the strategic use of terror to undermine state legitimacy. Criminal Bandits pursue kidnapping-for-ransom, highway robbery, extortion, cattle rustling, and collusion with both insurgents and state elements for tactical shelter. Ethnic Militias and Self-defense Groups act as localized enforcers, mobilizing patrols, protecting ethnic homelands, and engaging in communal retribution, while often partaking in or resisting ethnic cleansing. Traditional Hunters conduct armed village defense, mystical rituals of protection, targeted retributive justice, and symbolic authority assertion through visible patrolling. Local Militias perform checkpoint enforcement, property protection, and occasional offensive raids depending on sponsor alignment. Foreign Militias, such as Wagner or Chadian auxiliaries, undertake kinetic operations, assassinations, intelligence gathering, and political enforcement in favor of their sponsor states. Coalitions display actions shaped by tactical convergence. The Rebels+Jihadists coalition launches joint attacks, shares territory, and coordinates propaganda; the Rebels+Bandits axis leverages rebel protection for bandit operations in exchange for financial or logistical support. The Bandits+Jihadists alliance supports black market taxation and civilian intimidation. Rebels+Hunters engage in localized campaigns and community-based legitimacy consolidation. The Army+Hunters coalition conducts joint rural patrols and village sweeps. Army+Foreign Militias are active in high-impact strikes, site protection, and targeted extra-judicial violence. Local Population+Hunters engage in local surveillance, conflict mediation, and group defense formations. Militias+Rebels often create hybrid command structures and rotational engagements in ethnic flashpoints. Mixed Armed Coalitions pursue adaptive, temporary control over roads, mines, or towns, often followed by fragmentation. The Local Population adopts a wide array of survival actions: flight, strategic silence, compliance with multiple authorities, clandestine documentation of abuses, and displacement. Communities mobilize petitions, inter-communal dialogue, traditional justice mechanisms, or in some cases, communal militancy. Political Parties organize rallies, broker alliances with armed groups, disseminate electoral propaganda, and engage in legislative advocacy or obstruction. Civil Society Organizations run peacebuilding workshops, publish human rights reports, lobby international bodies, and serve as intermediaries between civilians and combatants. Diaspora Groups fund militias, lobby foreign governments, and run transnational campaigns of identity solidarity or political opposition.
The International Community, including diplomats and envoys, facilitates negotiations, issues sanctions or statements, and funds stabilization programs. MINUSMA undertakes base protection, civilian escort, ceasefire monitoring, and early warning dissemination, often under threat. ECOWAS imposes political pressure, offers mediation, and coordinates regional sanctions. The AU and its observers conduct visits, write assessments, and issue normative declarations. International NGOs provide food aid, medical assistance, and conflict-sensitive education, while negotiating humanitarian corridors with armed actors. Foreign Military Forces (France’s Serval, Barkhane, EU trainers, Afrika Korps trainers) engage in drone strikes, intelligence-sharing, local force training, and direct tactical support. Multinational Corporations hire private security, extract minerals, build infrastructure, and lobby for regulatory concessions. Transnational Criminal Networks traffic weapons, narcotics, and humans, bribe border officials, and coordinate logistics across Mali’s porous frontiers. Among the covert actors, opportunistic double agents pass false intelligence, infiltrate opposing groups, and manipulate information to serve multiple patrons. Mixed agents coordinate sabotage, distribute patronage, and perform political brokering while maintaining deniability. Revenger combatants seek personal retribution through enlistment in violent factions, selective assassination, or radicalization. Defectors and dissatisfied fighters reveal secrets, form splinter groups, or become mercenaries. Arms dealers buy, store, and sell weapons covertly to multiple factions, often orchestrating unregistered market  auctions. Propaganda funders pay influencers, launch disinformation campaigns, and coordinate cyber operations through anonymous media. Mission-based militia contractors engage in targeted raids, property destruction, or temporary area control before dissolving or shifting allegiance. False-flag operatives commit atrocities in the uniforms of rivals to erode public trust. Psy-ops specialists generate viral content, deploy deepfakes, and simulate mass sentiment online to influence both domestic and global narratives. These actors animate a conflict system defined not by fixed frontlines but by dynamic, strategic actions layered with ambiguity, contestation, and psychological warfare.

\subsection{Resources}
Each agent type in Mali’s conflict ecosystem operates with a distinct resource endowment shaped by institutional access, strategic alliances, geographic control, and informal networks. The Government of Mali, FAMa, and National Security Services possess formal legal authority, conventional weapons, military bases, international funding, intelligence infrastructure, and limited air capabilities. Rebel movements control territory, light arms, local legitimacy in specific regions, and exploit cross-border ethnic ties. Jihadist groups have access to ideological networks, asymmetric warfare tactics, extortion revenue, foreign fighters, and covert supply chains. Criminal bandits exploit terrain knowledge, motorbikes, firearms, and kidnap-for-ransom economies. Ethnic militias and traditional hunters rely on local mobilization, artisanal arms, cultural legitimacy, and knowledge of rural terrain. Local and foreign militias, including Wagner or Turkish-trained units, operate with paramilitary gear, external sponsors, logistical support, and intelligence sharing. Coalitions draw resources from their constituent groups, often merging arms, fighters, intelligence, and propaganda tools. Local populations and communities hold land, information, and legitimacy but limited mobility and protection. Political parties and civil society groups possess social capital, narrative influence, and access to diplomatic channels. Diaspora groups contribute remittances, lobbying influence, and sometimes ideological narratives. International actors, MINUSMA, ECOWAS, AU, foreign militaries, NGOs, hold peacekeeping mandates, humanitarian resources, surveillance tools, and diplomatic leverage. Multinational corporations possess financial capital, security contracts, and logistical assets. Transnational criminal networks leverage global trafficking routes, illicit funds, and corrupt intermediaries. Hybrid agents (double agents, mercenaries, propagandists, arms dealers) draw on multi-source funding, false identities, disinformation channels, and covert logistics. Intergenerational revengers and allegiance-shifting individuals wield narrative legitimacy, emotional drive, insider knowledge, and social media access. False-flag actors and mission-based militias use uniforms, misdirection, and tactical deception to manipulate perception and control local dynamics. Collectively, this fragmented but interlinked resource structure defines a complex, asymmetric battlefield shaped as much by information and narrative as by guns and territory.
\subsection{The Sources of the Resources}
The sources of resources for each actor type in Mali’s conflict are diverse, often opaque, and shaped by both formal and illicit channels. The Government of Mali, FAMa, and National Security Services receive state budget allocations, international military aid (notably from France, EU, and the U.S.), and training programs, but their effectiveness fluctuates due to corruption, sanctions, and shifting alliances. Rebel movements derive resources from cross-border smuggling, diaspora donations, kidnapping ransoms, and political patronage, with funding often surging after peace negotiations or ceasefire breakdowns. Jihadist groups tap into regional trafficking (drugs, arms, humans), foreign jihadist funding (especially via Gulf or Maghreb networks), and local taxation in controlled areas, with increases tied to state withdrawal or foreign troop exits. Bandits depend on ransoms, robbery, and weapon access through porous borders and black markets. Ethnic militias, hunters, and local militias are locally equipped, sometimes by village contributions or sympathetic elites, but also receive covert arms from state actors or foreign sponsors during escalations. Foreign militias, including the Wagner Group or Turkish-aligned units, are funded by geopolitical patrons through mineral concessions, security contracts, or opaque transfers. Coalitions pool their resources, financial, human, and informational, through pragmatic or opportunistic alliances. Communities, civil society, and diaspora groups rely on remittances, NGO support, and international lobbying, which can increase or decline based on global attention and crisis fatigue. International actors are funded by multilateral institutions and donor states, with changing mandates and budgets influenced by regional security trends and diplomatic shifts. Multinational corporations self-fund through extractive profits, but depend on hiring militias or private security to sustain operations. Transnational criminal networks are financed through global trafficking revenues and operate with impunity when border controls weaken. Hybrid actors, including arms dealers, propagandists, and double agents, are opportunistic and funded indirectly via conflict economies, mission payments, or covert intelligence budgets.

\begin{table}
\rowcolors{2}{gray!10}{white}
\begin{tabular}{|p{7.3cm}|p{7.3cm}|}
\hline
\rowcolor{gray!30}
\textbf{State-Based Actors} & \textbf{Other Actors (Non-State, Hybrid, Covert)} \\
\hline
\textbf{Payoffs:} Political legitimacy, territorial control, restored authority, international recognition, donor funding, and strategic alliances &
\textbf{Payoffs:} Resource extraction, local dominance, revenue from illicit trade, ideological propagation, revenge satisfaction, social prestige \\
\hline
\textbf{Costs:} Operational expenses, loss of personnel, political backlash, reputational damage (from abuses), corruption-driven inefficiencies &
\textbf{Costs:} Exposure to counterstrikes, intra-group rivalry, betrayal risk, fatigue, dependence on unstable funding (extortion, smuggling) \\
\hline
\textbf{Risks:} Coup risk, mission failure, international sanction, civilian casualties eroding support, infiltration by hostile elements &
\textbf{Risks:} Targeted elimination, loss of territorial base, internal splintering, exposure by defectors, cyber or drone surveillance \\
\hline
\textbf{Outcomes:} Partial reassertion of control in urban centers, fragile ceasefires, donor-driven stabilization, but recurrent legitimacy crises &
\textbf{Outcomes:} Localized control through fear or patronage, ongoing recruitment, informal governance structures, but frequent fragmentation \\
\hline
\textbf{Payoffs (specific):} Legislative gains, increased control over borders and extractive zones, reinforcement of elite networks &
\textbf{Payoffs (specific):} Ransom profits, taxation of black markets, myth-building via martyrdom narratives, transnational jihadist funding \\
\hline
\textbf{Costs (specific):} Budgetary drain from prolonged deployments, donor conditionality, strained civil-military relations &
\textbf{Costs (specific):} Arms procurement, bribes to border guards, loss of local support due to brutality, logistical overreach \\
\hline
\textbf{Risks (specific):} International legal scrutiny (ICC), fragmentation of national command, dependency on foreign troops &
\textbf{Risks (specific):} Intelligence penetration, loss of ideological cohesion, civilian resistance, drone-based targeting \\
\hline
\textbf{Outcomes (specific):} Tactical gains without strategic consolidation, regional containment without national reconciliation &
\textbf{Outcomes (specific):} Expanded shadow economies, deeper community polarization, recruitment of revenge-driven youth, but unstable gains \\
\hline
\end{tabular}
\end{table}

\subsection{Risk-Aware Payoffs}
The risk-aware payoff structures of Mali’s conflict actors are deeply heterogeneous and shaped by distinct utility functions reflecting survival, power, legitimacy, wealth, ideology, reputation, and community protection, each modulated by perceived risk, uncertainty, and adversarial behavior. The Government of Mali derives payoff from preserving sovereign control, ensuring political continuity, and gaining international legitimacy; however, its risk-sensitive payoff is discounted by the threat of coups, internal dissent, and loss of donor support if abuses or corruption are revealed. The Malian Armed Forces receive payoff through territorial dominance, mission success, resource control, and institutional promotions, yet these are penalized by the risk of ambush, reputational damage from human rights violations, and battlefield attrition. The National Security Services maximize payoffs via intelligence dominance, urban control, and political access, but suffer loss functions from counterintelligence failure, infiltration, or public exposure of abuse. Rebel Movements seek secessionist or autonomist gains, local legitimacy, and international recognition; their payoff is severely risk-sensitive to betrayal, airstrikes, sanctions, and internal fragmentation. Jihadist Groups maximize ideological penetration, Sharia governance, martyrdom prestige, and tax extraction, but face high volatility from drone attacks, civilian resistance, and intra-jihadist rivalries. Criminal Bandits target immediate monetary payoffs through ransoms and smuggling, but factor in risk penalties from capture, loss of safe routes, or betrayal by allies. Ethnic Militias derive utility from community defense, ethnopolitical leverage, and localized dominance; their risk-aware payoff is discounted by retaliation cycles, resource depletion, and being labeled as terrorists. Traditional Hunters  maximize spiritual legitimacy and social cohesion within their community, with risk penalties emerging from modern weapons asymmetry and erosion of traditional authority. Local Militias obtain payoff from payment, power projection, and property protection, but lose utility under betrayal, lack of cohesion, or mission failure. Foreign Militias (Wagner) maximize geopolitical influence, income, and control over resource zones, while accounting for risk via casualty sensitivity, international backlash, and operational overextension.
In hybrid formations, coalitions compute joint and individual payoffs using asymmetric access to information and trust. Rebel+Jihadist coalitions optimize territory and tactical shock impact but suffer disutility from ideological misalignment and coordination breakdowns. Rebel+Bandit payoffs are transactionally defined and risk being undermined by informant leakage or greed. Bandit–Jihadist alliances balance income maximization with ideological scrutiny and reputational volatility. Rebel+Hunter coalitions gain local legitimacy but risk alienating broader support bases. Army+Hunter and Army+Foreign Militia coalitions aim to boost operational efficiency, but incur risk from moral hazard, civilian casualties, and divergent rules of engagement. Population+Hunter coalitions derive utility through community protection and cultural continuity, yet face disutility from collective punishment or being mislabelled as insurgents. Militia+Rebel and Mixed Armed Coalitions often engage in payoff-sharing under short-term gain logic, heavily discounted by mistrust, side-switching, and power asymmetry.
The Local Population evaluates payoff through safety, access to services, and preservation of livelihoods, highly penalized by displacement, violence, and forced recruitment. Communities, religious or ethnic, maximize cohesion, intergenerational survival, and cultural autonomy, while suffering disutility from division, manipulation, and mass violence. Political Parties derive payoff from electoral success, power brokering, and narrative control, but face discounting through destabilization and loss of grassroots support. Civil Society Organizations receive utility from funding, mediation success, and local empowerment, but are penalized when perceived as partial, elite-driven, or externally co-opted. Diaspora Groups maximize symbolic capital, remittance-driven influence, and ideological imprint on the homeland, but risk marginalization, backlash, or misinformation loops.
The International Community, including diplomats and peace envoys, derives payoff from mediation credibility, geopolitical balance, and policy implementation, discounted by failures of peacekeeping, deadlocked dialogues, and hostage situations. MINUSMA gains through civilian protection metrics, ceasefire verification, and field presence, but pays a heavy cost under attack, obstruction, or political scapegoating. ECOWAS and the AU pursue regional stability and normative leadership, with disutility arising from perceived inaction, failed sanctions, or fragmented consensus. International NGOs earn utility from access, donor satisfaction, and successful delivery, while encountering risk-adjusted losses through security incidents, funding gaps, or co-optation. Foreign Military Forces accrue payoff through tactical success, alliance maintenance, and regional influence, highly penalized by political backlash at home, civilian casualties, and operational fatigue. Multinational Corporations aim to maximize extraction profits, regulatory advantage, and security stability, but suffer sharp payoff reductions from sabotage, labor unrest, and reputational loss. Transnational Criminal Networks maximize profit through stealth logistics, corruption, and market control, but discount payoff based on interdiction, partner betrayal, and violence volatility.
The informal and shadow actors have distinct risk-aware incentives. Opportunistic double agents seek personal survival, income, and influence; their payoff is fragile and discounted by high betrayal probability and limited long-term trust. Mixed agents derive benefits from cross-cutting authority and rent-seeking, but suffer from exposure risks and inter-factional reprisals. Revenger combatants operate under trauma-driven utility functions favoring punitive actions; their perceived gains are existential but often lead to high-risk suicidal missions or retribution spirals. Defectors and dissatisfied fighters seek redemption, protection, or greater personal benefit, incurring risk when trust is not restored or their insider knowledge degrades. Arms dealers optimize logistics throughput and transactional dominance, but discount for theft, sting operations, or war fatigue. Propaganda funders and operatives maximize influence through narrative steering, misinformation loops, and morale shifts, with disutility from counter-narratives, cyber exposure, and saturation. Mission-based militia contractors earn short-term gains based on task success but are penalized for failure, betrayal, or overspecialization. False-flag actors aim to disrupt coherence and create scapegoats, with payoff tied to confusion duration and media uptake, but risk high if exposed. Online psy-ops agents derive utility from shaping public opinion, destabilizing enemy morale, and driving factionalism; their downside is linked to traceability, algorithmic suppression, and cross-border prosecution.  This extensive network forms a multi-level, multi-agent asymmetric dynamic game, with:
Hidden types (informants, double agents), asymmetric resources (weapons, terrain knowledge, foreign aid), different time horizons (short-term mercenaries vs long-term political actors), non-aligned payoffs (ideology vs profit vs security vs survival)

%
\begin{figure}[htbp]
    \centering
    \definecolor{OriginBlue}{HTML}{1F4E79}
    \definecolor{ResourceGreen}{HTML}{375623}
    \definecolor{MaliRed}{HTML}{C00000}
    \definecolor{TransitOrange}{HTML}{D16A16}
    
    \resizebox{0.95\textwidth}{!}{%
    \begin{tikzpicture}[
        font=\sffamily,
        >=Stealth,
        node distance=0.6cm and 1.2cm,
        country/.style={
            rectangle, draw=OriginBlue, fill=OriginBlue!10,
            rounded corners=2pt, minimum width=3cm,
            minimum height=0.8cm, align=center, font=\bfseries\small
        },
        resource/.style={
            rectangle, draw=ResourceGreen, fill=ResourceGreen!5,
            rounded corners=1pt, minimum width=4cm,
            minimum height=0.8cm, align=center, font=\scriptsize
        },
        transit/.style={
            rectangle, draw=TransitOrange, fill=TransitOrange!10,
            rounded corners=2pt, minimum width=3cm,
            minimum height=0.8cm, align=center, font=\bfseries\small
        },
        destination/.style={
            rectangle, draw=MaliRed, fill=MaliRed!15,
            rounded corners=2pt, minimum width=3.5cm,
            minimum height=1.5cm, align=center, font=\bfseries\Large
        },
        flow/.style={->, draw=black!70, line width=0.8pt}
    ]

    \node[destination] (mali) {MALI};

    \node[resource, left=of mali] (china_w) {Type 56, 56-1, 80, 81-1 \\ (Norinco)};
    \node[country, left=of china_w] (china) {PRC (China)};

    \node[resource, above=of china_w] (turk_w) {Bayraktar TB2 \& Akıncı};
    \node[country, left=of turk_w] (turkey) {Turkey};

    \node[resource, below=of china_w] (rus_w) {AKM, AK-103, \\ Kamikaze Drones};
    \node[country, left=of rus_w] (russia) {Russia / USSR};

    \node[resource, above=of turk_w] (uk_w) {HK21 Machine Guns};
    \node[country, left=of uk_w] (uk) {United Kingdom};

    \node[resource, right=of mali] (ee_w) {Bulgarian AR-M1F, ATGL-L2 \\ Serbian M84, M91 \\ Polish Kbk-AKMS, Romanian AK};
    \node[country, right=of ee_w] (eeur) {Eastern Europe \\ \footnotesize(Bulgaria, Serbia, \\ \footnotesize Poland, Romania)};

    \node[resource, below=of ee_w] (alg_w) {Model 89-1 Rifles \\ Artisanal Adaptation};
    \node[country, right=of alg_w] (algeria) {Algeria};

    \node[resource, above=of ee_w] (motor) {Motorbikes \\ (Series \& Star Strategies)};
    
    \node[transit, right=of motor] (routes) {Illicit Transit \\ Routes};
    \node[align=left, font=\tiny\itshape, right=0.2cm of routes] (transit_list) {Libya \\ Niger \\ Algeria};

    \draw[flow] (uk) -- (uk_w);
    \draw[flow] (uk_w.east) -- (mali.north west);
    \draw[flow] (turkey) -- (turk_w);
    \draw[flow] (turk_w.east) -- (mali.west);
    \draw[flow] (china) -- (china_w);
    \draw[flow] (china_w.east) -- (mali.west);
    \draw[flow] (russia) -- (rus_w);
    \draw[flow] (rus_w.east) -- (mali.south west);

    \draw[flow] (routes) -- (motor);
    \draw[flow] (motor.west) -- (mali.north east);
    \draw[flow] (eeur) -- (ee_w);
    \draw[flow] (ee_w.west) -- (mali.east);
    \draw[flow] (algeria) -- (alg_w);
    \draw[flow] (alg_w.west) -- (mali.south east);

    \node[below=0.5cm of mali, font=\scriptsize\itshape, color=gray!80] 
    {Seizure Hubs: Timbuktu, Mopti, San, Kayes, Gao};

    \end{tikzpicture}
    }
    \caption{Example of Supply Chain Mapping: Origins and Diversification of Tactical Resources into the Malian Conflict Environment.}
    \label{fig:mali_resource_mapping}
\end{figure}
\subsection{How are these Resources Landing in Mali?}
Tracking of the data from Timadie, Guinaga, WETE, Grabal, SK1 Sogoloton shows clearly that Mali does not manufacture drones, motorbikes, or firearms domestically; instead, these items are sourced through a combination of official imports, illicit trafficking, and local artisanal modification/adaptation. The Malian Armed Forces have acquired drones such as the Turkish-made \textit{Bayraktar TB2} and   Bayraktar \textit{Akıncı} models, delivered between late 2022 and March 2023, as well as Russian-made kamikaze drones. Firearms used by various actors in Mali originate from multiple sources. The Malian military has received substantial stockpiles of weapons produced by China's Norinco, including Type 80 machine guns and Type 81-1 assault rifles. Several weapons from East of Europe and older Soviet Union models are present. Illicit arms trafficking is rampant, with weapons smuggled along routes from Libya, Algeria, and Niger into Mali, and a significant number of illegal weapons have been seized in regions like Timbuktu, Mopti, San, Kayes, and Gao.  Different series are available. Examples (see Figure  \ref{fig:mali_resource_mapping})  include 
Chinese Type 56 assault rifles (series 520XXXX, 521XXXX, 522XXXX, 523XXXX, 524XXXX; 373XXXX; 377XXXX; 571XXXX; 290XXXX 7-digit from Factory 9696; 290XXXXX 8-digit from Factory 9336) and Type 56-1 assault rifles (series 460XXXX, 171XXXXX, 560XXXXX, 361XXXX); Russian AKM assault rifles (series 26XXXX from 1974 and THXXXX); Algerian Model 89-1 assault rifles (series 0101XXXX, 2001); Polish Kbk-AKMS assault rifles (1970s–1980s); Bulgarian ATGL-L2 RPG launcher (KO 54 1080, 2014); Serbian M91 marksman rifle (2242, 2013); British HK21 machine gun (EN 50937); Serbian M84 machine guns (42884, 42909, 56669); Bulgarian AR-M1F assault rifles (1N 50 8616, 1N 50 8771, 2010); Russian AKM assault rifle (UR-2293-06, 2006) and AK-103 assault rifles (171153211, 171167071, 2017); Bulgarian AR-M1F41 assault rifle (KO 57 6251, 2017); and a Romanian AK-pattern assault rifle (1980 KR 2192, 1980). Motorbikes, commonly used by armed groups for mobility, are typically imported through both legal and illegal channels, with very little domestic production in Mali.  Motorbikes are used in two different attack strategies: series or star.

\subsection{Intergenerational MFTG }
The formulation of Mali's multi-actor security crisis as an intergenerational MFTG captures the full scope of systemic dynamics observed on the ground. MFTGs are a generalization of classical  games in which the instantaneous payoff and state dynamics of each agent are functions not only of their own type, state, action, and information but also of the full distribution over these variables in the population of interacting agents. This structure is essential in the Malian context, where over 50 heterogeneous actor types, including formal entities such as the Malian Armed Forces and international coalitions, informal actors such as traditional hunters and diaspora networks, and covert agents such as double-dealers, ideological revengers, arms traffickers, and psychological warfare operators, interact in a decentralized, volatile, and partially observed environment. The consequences of any one actor’s decision (a rebel group defecting to a jihadist alliance, or a local militia shifting support toward a foreign sponsor) are rarely isolated; rather, they generate cascading effects that alter the collective state-action-information field of the system. These distributional externalities where an agent's risk-aware payoff and belief evolution depend on the emergent density of violence, displacement, radicalization, misinformation, and trauma, can only be rigorously modeled within an MFTG framework. Moreover, the intergenerational nature of the conflict, marked by historical grievances, trauma transmission, and dynamic allegiance shifts, necessitates a temporal layering in the model where agents' current behaviors are shaped by inherited state variables (exposure to conflict, loss of kin, indoctrination histories) and where their actions alter the empirical measure that defines the strategic shapes for future cohorts. Standard multi-agent or aggregate models collapse these long-term, path-dependent effects, whereas intergenerational MFTGs explicitly encode them in the evolution of type distributions and forward-looking objective functions. In Mali, agents’ beliefs and decisions are shaped not only by direct violence or resource scarcity but also by the perceived distributions of allegiances, risks, and narratives in the wider population, including both physical and virtual domains (social media propaganda and misinformation). MFTGs offer a principled way to integrate these belief-mediated dynamics by treating the information structure itself as a state variable subject to distributional influence. In doing so, they also facilitate the modeling of  hybrid equilibria involving probabilistic misperceptions, stochastic switching of group identities, and endogenous network formation.

\begin{figure}
\begin{tikzpicture}[
    node distance=1.6cm and 3cm,
    every node/.style={align=center},
    block/.style={rectangle, draw=black, fill=blue!10, rounded corners, minimum width=3.5cm, minimum height=1cm, text centered},
    arrow/.style={draw, thick, ->, >=Stealth},
    ]

\node[block, fill=red!15] (G0) {Generation 0\\\textbf{Libya}\\Initial destabilization\\Conflict start};
\node[block, right=of G0] (G1) {Generation 1\\\textbf{Libya $\rightarrow$ Niger corridor}\\\textbf{Libya $\rightarrow$ Algeria corridor}\\Militia movement};
\node[block, below=of G1] (G2) {Generation 2\\\textbf{Algeria $\rightarrow$ Northern Mali}\\\textbf{Niger $\rightarrow$ Northern Mali}\\Infiltration and arms flow};
\node[block, left=of G2] (G3) {Generation 3\\\textbf{Northern Mali}\\ Conflict consolidation};
\node[block, below=of G3] (G4) {Generation 4\\\textbf{Northern \& Central Mali}\\ Conflict expansion};
\node[block, below=of G4] (G5) {Generation 5\\\textbf{Northern, Central, West, East Mali}\\\textbf{+ Burkina, Niger}\\Multinational spread};
\node[block, right=of G5] (G6) {Generation 6\\\textbf{Mali (all regions), Burkina, Niger}\\\textbf{+ Togo, Benin, Nigeria}\\Expanded destabilization};

\draw[arrow] (G0) -- (G1);
\draw[arrow] (G1) -- (G2);
\draw[arrow] (G2) -- (G3);
\draw[arrow] (G3) -- (G4);
\draw[arrow] (G4) -- (G5);
\draw[arrow] (G5) -- (G6);

\node[above=0.3cm of G0] {\textbf{\Large   Intergenerational MFTG:} \\  A new wave to   existing weakest links};

\end{tikzpicture}
\end{figure}

\subsection*{Intergenerational MFTG : formulation}

We consider $G\ge1$ discrete generations.  The total time horizon is
\[
T \;=\;\sum_{g=1}^G T_g,
\]
where generation \(g\) is active over
\[
t \in \{\,T_{g-1}+1,\dots,T_{g-1}+T_g\},\qquad T_0=0, \  T_g\geq 0.
\]

{\bf The intergenerational setup is inspired from the field: revenger child-soldiers whose trauma-conditioned perceived best-responses is favouring the conflict. The ones who were child in 2012 in the Northern part of the country are now men and women. Some of them are entering as active combatants in the conflict. The leads to a non-peaceful area over several generations. The proposed model  captures it.}

\subsubsection*{Common Types}
At each time \(t\), the \emph{common type}  is 
$\theta_{c,t}\;\in\;\Theta_c.$

\subsubsection*{Common (Macro)  States, resource states or nature state}
At each time \(t\), the \emph{common resource state, macro-state} and \emph{ nature state} are
\[
s_{c,t}\;=\;\bigl(s_{c,1,t},\dots,s_{c,n_c,t}\bigr)
\;\in\;S_c
=\prod_{k=1}^{n_c}S_{c,k},
\]
where, for example,
\[
s_{c,1,t}=\text{resource availability},\quad
s_{c,2,t}=\text{conflict intensity},\;\dots
\]

\subsubsection*{ Dominating (Atomic) Agents}
There is a finite set \(\mathcal D\) of \emph{atomic agents} \(i\), each with type, individual state, and private information
\[
(\theta^i_{\rm do},\,s^i_{\rm do,t},\,I^i_{\rm do,t})
\;\in\;
\Theta_{\rm do}\times S_{\rm do}\times\mathcal I_{\rm do}.
\]
Their joint state and action vectors are
\[
s_{\rm do,t }=(s^i_{\rm do,t})_{i\in\mathcal D},\qquad
a_{\rm do,t}=(a^i_{\rm do,t})_{i\in\mathcal D}.
\]
We denote by
\[
\mu_{\rm do,t}\;\in\;\mathcal P\bigl(\Theta_{\rm do}\times S_{\rm do}\bigr),
\quad
\tilde\mu_{\rm do,t}\;\in\;\mathcal P\bigl(A_{\rm do}\bigr)
\]
the probability distribution of types‒states and of actions of all dominating agents, respectively.
Each atomic agent \(i\) receives payoff
\[
u_{\rm do,t}^i\bigl(t,
\theta_c,s_c,\,
\theta_{\rm do},s_{\rm do},\,I^i_{\rm do},\,a_{\rm do},\,
\mu_{\rm do},\tilde\mu_{\rm do},\,
m,\tilde m,\,
\{(\theta_{p,do}^{j},s_{p,do}^{j},a_{p,do}^{j},\mu_{p,do}^{j},\tilde\mu_{p,do}^{j})\}_j
\bigr),
\]
where \((m,\tilde m)\) are the population mean‐fields of the mixed agents.

\subsubsection*{ Mixed (Non‐Atomic) Agents}
There is a discrete set \(\mathcal P\) of \emph{mixed agents} \(j\), each with
\[
(\theta^j_{p},\,s^j_{p,t},\,I^j_{p,t})
\;\in\;
\Theta_{p}\times S_{p}\times\mathcal I_{p}.
\]
Their type, state, and action decompose into atomic vs.\ non‐atomic parts:
\[
\theta_p^j=(\theta_{p,do}^j,\theta_{p,-do}^j),\quad
s_{p,t}^j=(s_{p,do,y}^j,s_{p,-do,y}^j),\quad
a_{p,t}^j=(a_{p,do,t}^j,a_{p,-do,t}^j).
\]
The joint mean‐fields over mixed‐agent states and actions are
\[
m_t\;\in\;\mathcal P(S_p),
\quad
\tilde m_t\;\in\;\mathcal P(A_p),
\]
while each \(j\) also tracks personal distributions
\[
\mu^j_{p,t}\;\in\;\mathcal P\bigl(S_{p,do}\times S_{p,-do}\bigr),
\quad
\tilde\mu^j_{p,t}\;\in\;\mathcal P\bigl(A_{p,do}\times A_{p,-do}\bigr).
\]
Their payoff is
$
u_{p,t}^j\bigl(t,\,
\theta_c,s_c,\,
\theta_{\rm do},s_{\rm do},\,a_{\rm do},\,\mu_{\rm do},\tilde\mu_{\rm do},\,
m,\tilde m,\, \{(\theta_{p,do}^{j'},s_{p,do}^{j'},a_{p,do}^{j'},\mu_{p,do}^{j'}, \tilde\mu_{p,do}^{j'})\}_{j'},\, $\\ $
\theta_{p,-do}^j,s_{p,-do}^j,I_{p}^j,a_{p,-do}^j,\,
\mu^j_{p,-do},\tilde\mu^j_{p,-do}
\bigr).
$

Over each generation interval, the mean-field terms
$\mu_{\rm do}, m, \, \{(\mu_{p,do}^{j},\mu^j_{p,-do})\}_j, $  evolves
according to Kolmogorov forward equations and 
the mean-field terms $\tilde\mu_{\rm do},\, \tilde m,\, \{( \tilde\mu_{p,do}^{j}, \tilde\mu^j_{p,-do})\}_j, $ evolves depending on the chosen actions.
The evolution of the payoff of the dominating agent  $i$ is determined by the probability law  $\rho_{cdo}(t,.)$ of 
$$(\theta_c,s_c,\,
\theta_{\rm do},s_{\rm do}, I_{do},\,
\{(\theta_{p,do}^{j'},s_{p,do}^{j'})\}_{j'}).$$
The evolution of the payoff of the mixed agent $j$ is determined by the probability law  $\rho_{cdop}(t,.)$ of 
$(\theta_c,s_c,\,
\theta_{\rm do},s_{\rm do},\,
\{(\theta_{p,do}^{j'},s_{p,do}^{j'})\}_{j'},\, \theta_{p,-do}^j,s_{p,-do}^j,I_{p}^j).$

Each atomic agent \(i\)  of generation $g$ receives  a  payoff 
\[ h_{\rm do,T_g}^i\bigl(T_g,\,
\theta_c,s_c,\,
\theta_{\rm do},s_{\rm do},
\mu_{\rm do},
m,
\{(\theta_{p,do}^{j},s_{p,do}^{j},\mu_{p,do}^{j})\}_j
\bigr)
+ \sum_{t=T_{g-1}+1}^{T_g-1}
u_{\rm do,t}^i,
\]
and a mixed agent $j$ in generation $g$ receives  $
h_{p,T_g}^j\bigl(T_g,\,
\theta_c,s_c,\,
\theta_{\rm do},s_{\rm do},\,\mu_{\rm do},
m,\,\{(\theta_{p,do}^{j'},s_{p,do}^{j'},\mu_{p,do}^{j'})\}_{j'},$ $\theta_{p,-do}^j,s_{p,-do}^j,\mu^j_{p,-do},
\bigr) +  \sum_{t=T_{g-1}+1}^{T_g-1}
u_{p,t}^j,
$
where $ h_{\rm do,T_g}^i\bigl(T_g, .)$ is the terminal payoff at generation $g$ (see Figure \ref{figgg10})

\begin{figure}[htbp]
    \centering
    \resizebox{0.9\textwidth}{!}{%
        \begin{tikzpicture}[
            font=\sffamily,
            >=Stealth,
            box/.style={
                rectangle, draw, rounded corners=6pt, 
                minimum width=4.8cm, minimum height=1.5cm, 
                align=center, thick, drop shadow={opacity=0.1},
                inner sep=12pt
            },
            common/.style={box, fill=blue!10, draw=blue!70!black},
            atomic/.style={box, fill=red!10, draw=red!70!black},
            mixed/.style={box, fill=green!12, draw=green!70!black},
            mathnode/.style={box, fill=white, draw=gray!40, dashed, minimum width=4.8cm},
            gen_tag/.style={fill=gray!80, text=white, font=\bfseries, rounded corners=2pt, inner sep=4pt},
            side_label/.style={font=\scshape\bfseries, color=gray!60, align=right}
        ]

            
            \node[gen_tag] (G1_Tag) at (-6.5, 6.5) {GENERATION $g=1$};

            \node[common] (macro1) at (0, 6.5) {
                \textbf{Common Macro-State $s_{c,t}$} \\ 
                \small Resource Availability \& Conflict Intensity
            };

            \node[atomic] (at1) at (-3.5, 4) {
                \textbf{Dominating Agent $i$} \\
                \small State $s_{do,t}^i$ \textbullet{} Action $a_{do,t}^i$
            };
            \node[mathnode] (mu1) at (3.5, 4) {
                \textbf{Mean-Field-Type $\mu_{do,t}$} \\
                \footnotesize $\mathcal{P}(\Theta_{do} \times S_{do})$
            };

            \node[mixed] (mx1) at (-3.5, 1) {
                \textbf{Mixed Agent $j$} \\
                \small State $s_{p,t}^j$ \textbullet{} Action $a_{p,t}^j$
            };
            \node[mathnode] (m1) at (3.5, 1) {
                \textbf{Population Mean Fields} \\
                \footnotesize $m_t \in \mathcal{P}(S_p),\  \tilde{m}_t\in  \mathcal{P}(A_p)$
            };

            \node[box, fill=orange!10, draw=orange!70!black, minimum width=11.8cm] (term1) at (0, -1.5) {
                \textbf{Terminal Payoff $h_{T_1}$ \& Generational Trauma Transfer} \\
                \small Captures evolution of perceived best-responses for the next generation
            };

            \draw[->, thick, blue!70!black] (macro1.south) -- ++(0,-0.5) -| (at1.north);
            \draw[->, thick, blue!70!black] (macro1.south) -- ++(0,-0.5) -| (mu1.north);
            
            \draw[<->, ultra thick, red!60!black] (at1) -- (mu1) node[midway, fill=white, font=\tiny\bfseries] {KOLMOGOROV FORWARD};
            \draw[<->, ultra thick, green!60!black] (mx1) -- (m1) node[midway, fill=white, font=\tiny\bfseries] {POPULATION POLICY LAW};

            \draw[->, thick, gray!60, dashed] (at1) -- (mx1) node[midway, right, font=\tiny] {DOMINANCE};
            \draw[->, thick, gray!60, dashed] (mu1) -- (m1) node[midway, right, font=\tiny] {MEAN-FIELD COUPLING};

            \draw[->, thick, gray] (mx1.south) |- ($(term1.north)+( -3.5, 0)$);
            \draw[->, thick, gray] (m1.south)  |- ($(term1.north)+(  3.5, 0)$);

            
            \draw[->, line width=4pt, gray!20] (term1.south) -- ++(0,-1.5);
            \draw[->, line width=1.5pt, orange!80, postaction={draw, orange!80, line width=0.5pt, dash pattern=on 4pt off 4pt}] 
                (term1.south) -- ++(0,-1.5) 
                node[midway, right=0.2cm, text=black, font=\footnotesize\itshape] {Initial conditions for $g+1$ (Conditioned Trauma)};

            
            \begin{scope}[yshift=-10.5cm]
                \node[gen_tag] (G2_Tag) at (-6.5, 6.5) {GENERATION $g=2$};

                \node[common] (macro2) at (0, 6.5) {
                    \textbf{Common Macro-State $s_{c,t}$} \\ 
                    \small Evolved Resource \& Nature States
                };

                \node[atomic] (at2) at (-3.5, 4) {
                    \textbf{Dominating Agent $i'$} \\
                    \small (Active Combatants)
                };
                \node[mathnode] (mu2) at (3.5, 4) {
                    \textbf{Mean-Field-Type $\mu'_{do,t}$} \\
                    \footnotesize Updated Probability Law
                };

                \node[mixed] (mx2) at (-3.5, 1) {
                    \textbf{Mixed Agent $j'$}
                };
                \node[mathnode] (m2) at (3.5, 1) {
                    \textbf{Population MF $m'_t$}
                };
                
                \draw[->, thick, blue!70!black] (macro2.south) -- ++(0,-0.5) -| (at2.north);
                \draw[->, thick, blue!70!black] (macro2.south) -- ++(0,-0.5) -| (mu2.north);
                \draw[<->, ultra thick, red!60!black] (at2) -- (mu2);
                \draw[<->, ultra thick, green!60!black] (mx2) -- (m2);
            \end{scope}

            \begin{scope}[on background layer]
                \node[fit=(G1_Tag)(macro1)(term1), fill=black!2, rounded corners=12pt, draw=gray!15] {};
                \node[fit=(G2_Tag)(macro2)(m2), fill=black!2, rounded corners=12pt, draw=gray!15] {};
                
                \node[side_label] at (-7.5, 7.5) {Macro-Level\\(Nature)};
                \node[side_label] at (-7.5, 4.0) {Meso-Level\\(Dominating)};
                \node[side_label] at (-7.5, 1.0) {Micro-Level\\(Mixed)};
            \end{scope}

        \end{tikzpicture}%
    }
    \caption{Architecture of the Intergenerational MFTG. The vertical flow highlights the sequential inheritance of state distributions across discrete generations.} \label{figgg10}
\end{figure}

\begin{remark} Here the standard dynamic programming fails because the state $s$ alone is not Markovian. One needs to augment it at the distributional level with  $\rho_{cdop}(t,.)$ the distribution of $$(\theta_c,s_c,\,
\theta_{\rm do},s_{\rm do},\,
\{(\theta_{p,do}^{j'},s_{p,do}^{j'})\}_{j'},\, \theta_{p,-do}^j,s_{p,-do}^j,I_{p}^j).$$ The marginal of  $\rho_{cdop,t}(.)$ with respect to $(\theta_c,s_c,\,
\theta_{\rm do},s_{\rm do}, I_{do},\,
\{(\theta_{p,do}^{j'},s_{p,do}^{j'})\}_{j'})$ yields  $\rho_{cdo,t}(.).$ One can therefore work with $\rho_{cdop}$ as the new augmented state of the problem. Within this framework, $\rho_{cdop}$ evolves deterministically via Kolmogorov equation.  Therefore, the Bellman equation will be applied in the space of measures for each class of agent.
\end{remark} 

\begin{remark}As we know in state-based dynamic games, playing an  equilibrium at each step, does not necessarily lead to an equilibrium for the finite horizon game. Similarly, playing a mean-field-type equilibrium at given generation does not lead to an equilibrium of the intergenerational MFTG.
\end{remark}

The transition kernel of $\rho_{cdop}$ is denoted  by $$\rho_{cdop,t+1}=  K_{p,t}\bigl(t, \rho_{cdop}, 
a_{\rm do},\, \tilde\mu_{\rm do}, m,\tilde m,\,\{(a_{p,do}^{j'},\tilde\mu_{p,do}^{j'})\}_{j'},a_{p,-do}^j,\, \tilde\mu^j_{p,-do}\bigr).$$ Similarly for  $\rho_{cdo}$ which is a marginal of $\rho_{cdop}.$

If there is a solution to the following Bellman system 

\begin{equation}
\begin{array}{ll}
V^i_{do,T_{G}}( \rho_{cdo})= \int h_{\rm do}^i\bigl(T_g, .)  d\rho_{cdo},  \ i\in  \mathcal{D}\\
V^j_{p,T_{G}}( \rho_{cdop})= \int h_{p}^j\bigl(T_g, .)  d\rho_{cdop}, \ j\in  \mathcal{P} \\
V^i_{do,t}( \rho_{cdo,t})= \sup_{a^i_{do}}  \{\int  u_{do,t}^j d\rho_{cdo,t}+  V^i_{do,t+1}( \rho_{cdo,t+1})\}, \ i\in  \mathcal{D} \\

V^j_{p,t}( \rho_{cdop,t})= \sup_{a^j_{p}}  \{\int  u_{p,t}^j d\rho_{cdop,t}+  V^j_{p,t+1}( \rho_{cdop,t+1})\},  \  \ j\in  \mathcal{P} 
\\ \ t\in \{1,\ldots, T_{G}-1\}
\end{array}
\end{equation}
where $a^i_{do}, a^j_{p}$ belong to the class of randomized behavioural actions. Then, there resulting $(V^i_{do}, V^j_{p})$ is a mean-field-type equilibrium payoff and  the behavioral mixed actions $(a^i_{do}, a^j_{p})_{i,j}$ is a  mean-field-type equilibrium.

\begin{remark} Here, the equilibrium structure is in randomized behavioral  state-and-mean-field-type feedback strategies  i.e., a measurable function of $\rho$ (or its marginals)  and $$(\theta_c,s_c,\,
\theta_{\rm do},s_{\rm do},\,
\{(\theta_{p,do}^{j'},s_{p,do}^{j'})\}_{j'},\, \theta_{p,-do}^j,s_{p,-do}^j,I_{p}^j).$$ 
\end{remark}

\subsection{Intergenerational Cycles of Violence}
This result formalizes, within a game-theoretic and measure-theoretic framework, why structural grievances, asymmetric information, trauma inheritance, and revenge-based preferences can sustain violent conflict despite intermittent periods of peace or foreign intervention

\begin{result}[Persistence of War via Intergenerational Cycles of Violence]
 Consider an intergenerational MFTG defined over a dynamic population of heterogeneous agents indexed by type, state, action, and informational history. Let each generation inherit a violence-conditioned distribution over states and priors, including trauma, revenge propensity, mistrust, and perceived threat. Suppose
there exists a non-zero subset of agents whose payoff is strictly increasing in retaliatory action under perceived harm;
the state transition kernel is path-dependent and includes memory of past trauma or group loss;
the type distribution evolves with a non-vanishing measure on revenge-seeking types or opportunistic actors exploiting war economies;
temporary breaks in violence do not erase past states, and no actor has perfect information or global enforcement power.

Then, even under temporary ceasefires, peace agreements, or generational shifts, the MFTG admits equilibria in which cycles of violence persist over multiple generations with positive probability. Moreover, such equilibria are dynamically stable under weak continuity conditions of the state-action-measure kernel and admit no absorbing peaceful state unless external shocks or structural transformations alter the type-action distribution.
\end{result}

To prove this main result we examine both a generic MFTGs  involving all 50 types of actors with relatively general payoffs.

\subsection*{Retaliation, Trauma, and Peace in the Intergenerational MFTG}
We introduce the following definitions.

\begin{definition}[Retaliatory Type]
An \emph{atomic} agent \(i\in\mathcal D\) of type \(\theta^i_{\rm do}\in\Theta_{\rm do}\) is called \textbf{retaliatory} if there exists a subset of its state space 
\(\;S^i_{\rm trauma}\subset S_{\rm do}\)
and a unique feedback action 
\(\;a^i_{r}\in A_{\rm do}\)
such that, for all 
\(\;s^i_{\rm do,t}\in S^i_{\rm trauma},\;
I^i_{\rm do,t}\in\mathcal I_{\rm do},\)
and any joint mean‐fields \(\mu_{\rm do,t},\tilde\mu_{\rm do,t},m_t,\tilde m_t\),
\[
u^i_{\rm do,t}\Bigl(t,\ldots,, a^i_r,\,a^{-i}_{\rm do,t},\, \dots\Bigr)
\;>\;
u^i_{\rm do}\bigl(\dots,a^i,a^{-i}_{\rm do,t},\dots\bigr)
\]
for every alternative atomic action \(a^i\neq a^i_r\).  In other words, whenever \(i\)'s state lies in the \emph{trauma manifold} \(S^i_{\rm trauma}\), its uniquely optimal response is the \emph{retaliatory action} \(a^i_r\).
\end{definition}

\begin{definition}[Trauma State Persistence]
Define the \emph{trauma region} of the atomic state by 
\(\mathcal S_{\rm trauma}\subset S_{\rm do}\).  We say trauma is \emph{persistent} if there exists \(\delta>0\) such that whenever some atomic agent \(i\) chooses its retaliatory action \(a^i_r\),
\[
\Pr\Bigl(s^i_{\rm do,t+1}\in\mathcal S_{\rm trauma}\;\Big|\;s^i_{\rm do,t}\in\mathcal S_{\rm trauma},\;a^i_t=a^i_r\Bigr)
\;\ge\;\delta.
\]
Thus retaliation both arises from and reinforces trauma across time.
\end{definition}

\begin{definition}[Peaceful Equilibrium Distribution]
A distribution \(\mu^*\in\mathcal P(\Theta_{\rm do}\times S_{\rm do}\times\mathcal I_{\rm do}\times A_{\rm do})\times\mathcal P(S_p\times A_p)\) is a \emph{peaceful state} if
\[
\mu^*\bigl\{\,s_{\rm do}\in\mathcal S_{\rm trauma}\bigr\}=0,\quad
\mu^*\bigl\{\theta^i_{\rm do}\text{ retaliatory}\bigr\}=0,
\quad
\mathrm{supp}\bigl(\tilde\mu_{\rm do}^*\bigr)\subset A_{\rm do}^{\rm peace},\quad
\mathrm{supp}\bigl(\tilde m^*\bigr)\subset A_p^{\rm peace},
\]
where \(A_{\rm do}^{\rm peace}\subset A_{\rm do}\) and \(A_p^{\rm peace}\subset A_p\) denote non‐violent (de‐escalatory) action sets.  In a peaceful equilibrium, no agent, atomic or mixed, retaliates or remains in trauma.
\end{definition}

\begin{definition}[Peaceful vs.\ Aggressive Types]
Partition the atomic and mixed action spaces as
\[
A_{\rm do}=A_{\rm do}^{\rm peace}\cup A_{\rm do}^{\rm conflict},\quad
A_p=A_p^{\rm peace}\cup A_p^{\rm conflict}.
\]
\begin{itemize}
  \item An atomic type \(\theta^i_{\rm do}\) is \emph{peaceful} if for every possible state \(s^i_{\rm do,t}\), information \(I^i_{\rm do,t}\), and mean‐fields, its unique best‐response action lies in \(A_{\rm do}^{\rm peace}\).
  \item It is \emph{aggressive} if its unique best‐response action always lies in \(A_{\rm do}^{\rm conflict}\).
  \item Similarly, a mixed non‐atomic type \(\theta^j_p\) is \emph{peaceful} if its best‐response action belongs to \(A_p^{\rm peace}\) under all states, information, and mean‐fields, and \emph{aggressive} if always in \(A_p^{\rm conflict}\).
\end{itemize}
\end{definition}

\begin{theorem}[Persistence of War via Intergenerational Cycles of Violence]
Within the above intergenerational MFTG, suppose:
\begin{enumerate}
  \item[(i)] {\bf Retaliatory Payoff.}  There exists a non–zero mass of atomic types
  \(\theta^i_{\rm do}\in\Theta_{\rm do}\) and mixed types \(\theta^j_p\in\Theta_p\)
  for which the instantaneous utilities
  \[
    u^i_{\rm do}\bigl(\dots,a^i_{\rm do}=a_r,\dots\bigr)
    \quad\text{and}\quad
    u^j_p\bigl(\dots,a^j_p=a_r,\dots\bigr)
  \]
  are strictly larger than for any non‐retaliatory action \(a\notin A^{\rm conflict}\), whenever their private or collective states lie in the trauma region \(S_{\rm trauma}\subset S_{\rm do}\) or \(S_{p,{\rm trauma}}\subset S_p\).
  \item[(ii)] {\bf Path‐Dependent Memory.}  The joint state‐transition maps \(\Phi^{\rm do}\) and \(\Phi^p\) are path‐dependent and satisfy
  \[
    \Pr\bigl(s^{i}_{\rm do,t+1}\in S_{\rm trauma}\mid s^i_{\rm do,t}\in S_{\rm trauma},\,a^i_{\rm do,t}=a_r\bigr)\ge\delta>0,
  \]
  and similarly for mixed states, so that past retaliation reinforces future trauma.
  \item[(iii)] {\bf Enduring Revenge Types.}  The population distributions \(\mu_{\rm do,t}\) and \(m_t\) evolve under kernels \(\Gamma_{\rm do},\Gamma_p\) that preserve a strictly positive measure on retaliatory (revenge‐seeking or opportunistic) types in \(\Theta_{\rm do}\) and \(\Theta_p\) across all \(t\).
  \item[(iv)] {\bf Imperfect Enforcement.}  No single agent or coalition wields perfect information \(I\) or enforcement power to drive the system to a fully peaceful action support \(A^{\rm peace}\).
\end{enumerate}
Then, for any generation \(g=1,\dots,G\) and any \(t\in\{T_{g-1}+1,\dots,T_{g-1}+T_g\}\), there exist mean‐field equilibria \(\bigl(\mu_{\rm do,t},\tilde\mu_{\rm do,t},m_t,\tilde m_t\bigr)\) in which
\[
\mu_{\rm do,t}\bigl(\Theta_{\rm do}^{\rm rev}\bigr)>0,
\quad
m_t\bigl(\Theta_{p}^{\rm rev}\bigr)>0,
\]
and the support of \(\tilde\mu_{\rm do,t}\) and \(\tilde m_t\) intersects \(A^{\rm conflict}\) with positive probability.  These equilibria are dynamically stable under mild continuity of \(\Gamma_{\rm do},\Gamma_p,\Phi^{\rm do},\Phi^p\), and no absorbing peaceful distribution (i.e.\ one with zero mass on \(S_{\rm trauma}\), \(\Theta^{\rm rev}\), or \(A^{\rm conflict}\)) exists unless exogenous shocks or structural transformations reallocate measure off the retaliatory types.
\end{theorem}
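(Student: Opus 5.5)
The proof splits into three parts: existence of conflict-supporting equilibria, their persistence and stability, and the non-existence of an absorbing peaceful distribution.

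\emph{Existence.} I will work with the augmented measure-valued state $\rho_{cdop}$ from the Remark on non-Markovianity. By that remark, the Bellman system in the space of measures is a well-posed backward recursion over the finitely many periods $t\in\{1,\dots,T_G-1\}$.

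First, at each $t$, I fix the population measure and take each agent's best response in randomized behavioral feedback strategies. A fixed point of this map is then obtained generation by generation. I will use Kakutani's theorem for the stage correspondence, under the standing assumptions that action sets are compact, payoffs are continuous, and $K_{p,t}$ is continuous. The mean-field consistency is imposed by the forward Kolmogorov map. This yields an equilibrium $(V^i_{do},V^j_p)$ together with its flow $(\mu_{\rm do,t},\tilde\mu_{\rm do,t},m_t,\tilde m_t)$.

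\emph{Conflict in the support.} The key observation is hypothesis (i), which is the Retaliatory Type property. On $S_{\rm trauma}$, the action $a_r$ strictly dominates every non-retaliatory action. So any equilibrium best response puts full weight on $A^{\rm conflict}$ at traumatized states of retaliatory types.

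It then suffices to show that the mass of agents that are simultaneously retaliatory and in trauma stays positive. Hypothesis (iii) gives $\mu_{\rm do,t}(\Theta^{\rm rev}_{\rm do})\ge\varepsilon_t>0$ for all $t$. Hypothesis (ii), which is Trauma State Persistence, gives a lower bound on the probability of remaining in trauma after retaliating. I will argue by induction on $t$, across the generation boundaries $T_g$, that
\[
\rho_{cdop,t+1}\bigl(\Theta^{\rm rev}\times S_{\rm trauma}\bigr)\ge \delta\,\rho_{cdop,t}\bigl(\Theta^{\rm rev}\times S_{\rm trauma}\bigr).
\]
Over the finite horizon this gives a positive lower bound $\delta^{T}c_0$, where $c_0>0$ is the initial violence-conditioned mass. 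Note that $\delta^T>0$ because $T$ is finite, so no stationarity assumption is needed. This yields both positivity claims and the fact that the action supports intersect $A^{\rm conflict}$.

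\emph{No absorbing peaceful distribution.} Suppose $\mu^*$ is peaceful and absorbing. Hypothesis (iii) forces $\Gamma_{\rm do}(\mu^*)(\Theta^{\rm rev})>0$, which contradicts the requirement that a peaceful state carry zero mass on retaliatory types. The only escape is an exogenous reallocation of measure, which is exactly the exception allowed in the statement. Hypothesis (iv) is used to rule out an endogenous agent implementing such a reallocation through a control acting on $\rho$: by (iv), no agent or coalition can drive the support of $\tilde\mu$ into $A^{\rm peace}$.

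\emph{Stability.} I will show that nearby initial measures stay in the conflict regime. Continuity of $\Gamma$ and $\Phi$ makes the set $\{\rho:\rho(\Theta^{\rm rev}\times S_{\rm trauma})>0\}$ open. Upper hemicontinuity of the equilibrium correspondence, via Berge's maximum theorem, then shows that small perturbations of the initial measure preserve the lower bound.

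\emph{Main obstacle.} I expect the existence step to be the hardest, because the paper gives no topological assumptions. I would impose finite type, state and action spaces, which give compactness of $\mathcal P(\cdot)$ in the weak topology, together with continuity of the utilities and kernels in the measure arguments. A second subtlety is that unique best responses may break convexity of the best-response correspondence. Working in randomized behavioral strategies restores it, since the best responses are mixtures over the finite action sets.
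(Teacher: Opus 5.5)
Your route is the paper's: a Kakutani fixed point at each stage, (iii) for positive revenger mass, (ii) for the $\delta^k$ persistence bound, and (iii) with (iv) to exclude an absorbing peaceful state. You improve on it in two ways. You notice that positive mass on $\Theta^{\rm rev}_{\rm do}$ does not by itself put $A^{\rm conflict}$ into the action support, since only revengers sitting in $S_{\rm trauma}$ are forced to retaliate. You also actually sketch the stability claim, which the paper only asserts. But the inequality that now carries everything, $\rho_{cdop,t+1}(\Theta^{\rm rev}\times S_{\rm trauma})\ge\delta\,\rho_{cdop,t}(\Theta^{\rm rev}\times S_{\rm trauma})$, does not follow from (ii) and (iii). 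Hypothesis (ii) bounds only the state transition of a retaliating agent, and (iii) bounds only the type marginal. Nothing prevents the type kernel from moving traumatized revengers out of $\Theta^{\rm rev}$ while they stay in trauma (as (ii) permits), and from creating new revengers among untraumatized agents. Both marginals then stay positive, but the joint mass is zero one period later. If traumatized non-revengers and calm revengers both prefer peace, $\tilde\mu_{\rm do,t+1}$ contains no conflict action at all. You need an explicit coupling hypothesis, e.g.\ types frozen within a generation, or a joint retention bound $\Pr(\theta_{t+1}\in\Theta^{\rm rev},\,s_{t+1}\in S_{\rm trauma}\mid\theta_t\in\Theta^{\rm rev},\,s_t\in S_{\rm trauma},\,a_t=a_r)\ge\delta'>0$. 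Even then, the induction cannot simply run across the boundaries $T_g$, because each incoming cohort must have revenger mass inside $S_{\rm trauma}$. Your $c_0>0$ is an added hypothesis, not a consequence of (i)--(iv), and it must be re-imposed at every $T_{g-1}+1$. The paper's proof has the same hole, hidden by treating every revenger as retaliating.

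A second problem comes from your move to the Bellman system. There an equilibrium action maximizes $\int u_t\,d\rho_t+V_{t+1}(\rho_{t+1})$, whereas (i) and the Definition of Retaliatory Type rank only instantaneous utilities. Strict dominance of $a_r$ in $u_t$ therefore does not make it a best response when retaliation lowers the continuation value by more than the instantaneous margin. Either keep the stage-wise, myopic fixed point the paper uses when it applies Kakutani ``for each $t$'', or strengthen (i) to dominance in $u_t+V_{t+1}$. Relatedly, (i) as stated only places the best response somewhere in $A^{\rm conflict}$, while (ii) conditions on $a_r$ itself, so you also need the uniqueness clause of the Retaliatory Type definition. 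Finally, ``preserve a strictly positive measure'' in (iii) does not let you evaluate $\Gamma_{\rm do}$ at an arbitrary peaceful $\mu^*$ with zero revenger mass and conclude positivity. Argue along the equilibrium trajectory instead, as the paper does: revenger mass stays bounded below for all $t$, so no peaceful distribution is ever reached.
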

\begin{proof} 
We prove that under assumptions (i)–(iv), for each generation \(g\) and each time \(t\in\{T_{g-1}+1,\dots,T_{g-1}+T_g\}\), there exists an equilibrium with positive mass on retaliatory types and conflict actions, and no absorbing peaceful distribution.

\paragraph{Non‐vanishing retaliation mass.}  By (i) there is a subset \(\Theta_{\rm do}^{\rm rev}\subset\Theta_{\rm do}\) of atomic types of strictly positive measure under \(\mu_{\rm do, T_{g-1}+1}\) whose utility is maximized by the unique retaliatory action \(a_r\) whenever \(s^i_{\rm do}\in S_{\rm trauma}\).  Similarly, a subset \(\Theta_p^{\rm rev}\subset\Theta_p\) of mixed types exists under \(m_{T_{g-1}+1}\).  Hence initially
\[
\mu_{\rm do, T_{g-1}+1}\bigl(\Theta_{\rm do}^{\rm rev}\bigr)>0,\quad m_{T_{g-1}+1}\bigl(\Theta_p^{\rm rev}\bigr)>0.
\]

\paragraph{Trauma persistence and feedback.}  By (ii), once an atomic agent \(i\) with \(s^i_{\rm do}\in S_{\rm trauma}\) plays \(a_r\), its next‐period state remains in \(S_{\rm trauma}\) with probability at least \(\delta>0\).  Thus, for any time \(t\), the conditional probability that such an agent remains in the trauma manifold decays at most geometrically: after \(k\) successive retaliatory actions,
\[
\Pr\bigl(s^i_{\rm do}(t+k)\in S_{\rm trauma}\mid s^i_{\rm do,t}\in S_{\rm trauma}\bigr)\;\ge\;\delta^k.
\]
An identical argument holds for mixed agents in \(S_{p,\rm trauma}\).  Consequently, once an agent enters trauma and given its best‐response is retaliation, its state and thus its incentive to retaliate are self‐reinforcing over time.

\paragraph{ Intergenerational type reinforcement.}  By (iii), the transition kernels \(\Gamma_{\rm do}\) and \(\Gamma_p\) preserve a strictly positive measure on retaliatory types.  Concretely, let
\[
\alpha:=\inf_{t\in[T_{g-1}+1,\,T_{g-1}+T_g]}\mu_{\rm do,t}\bigl(\Theta_{\rm do}^{\rm rev}\bigr)>0.
\]
Then by the path‐dependence of states and policies, each period’s retaliation by a fraction \(\alpha\) of the population yields a next‐period distribution with at least \(\alpha\) mass on \(\Theta_{\rm do}^{\rm rev}\).  This bound holds uniformly over generations since the same kernels apply in each generation interval.

\paragraph{ Existence of conflict‐supporting equilibrium.}  
Under mild continuity and compactness assumptions, the best‐response correspondences for atomic and mixed agents are upper‐hemicontinuous, convex‐valued, and non‐empty on the compact metric space of distributions \((\mu_{\rm do},\tilde\mu_{\rm do},m,\tilde m)\).  By Kakutani’s fixed‐point theorem, for each \(t\) there is an equilibrium mean‐field
\[
\bigl(\mu_{\rm do,t}^*,\tilde\mu_{\rm do,t}^*,m^*_t,\tilde m^*_t\bigr)
\]
consistent with these best responses.  Since \(\mu_{\rm do,t}\bigl(\Theta_{\rm do}^{\rm rev}\bigr)\ge\alpha>0\), the support of \(\tilde\mu_{\rm do,t}^*\) must include conflict actions \(A^{\rm conflict}\) with positive probability; likewise for \(\tilde m^*_t\).

\paragraph{ Non‐existence of absorbing peaceful state.}  
A peaceful absorbing distribution would require
\(\mu_{\rm do,t}(S_{\rm trauma})=0\), \(\mu_{\rm do,t}(\Theta_{\rm do}^{\rm rev})=0\), and
\(\mathrm{supp}(\tilde\mu_{\rm do,t})\subset A^{\rm peace}\).  But by the above,
\(\mu_{\rm do,t}(\Theta_{\rm do}^{\rm rev})\ge\alpha>0\) for all \(t\), and
trauma states persist under retaliation.  Moreover, (iv) precludes any single agent or coalition from enforcing a jump to \(\tilde\mu_{\rm do,t}\subset A^{\rm peace}\).  Hence no such absorbing peaceful equilibrium can arise absent an exogenous shock or structural change that reallocates mass off \(\Theta^{\rm rev}\).

\medskip
Thus, even in the presence of temporary ceasefires or generational breaks, one obtains a family of dynamical paths in which a positive fraction of the population remains in trauma and continues to choose conflict actions across all \(G\) generations.  This completes the proof.
\end{proof}

\subsection{Intergenerational Dynamic Equilibria } \label{seccoi4}

The perpetuation of intergenerational violence emerges from the interaction of distinct actor “types” and affective drivers whose strategic choices reinforce one another over time. Aggressive types whether jihadist cadres enforcing territorial tax regimes, armed bandits exploiting state absence, or rebel commanders seeking tactical advantage, derive higher utility from coercive actions (ambushes, forced levies) and therefore systematically select conflict‐oriented options. By contrast, peaceful types, including community elders, civil‐society activists, and human‐rights-focused security personnel, maximize gains through dialogue, public‐goods provision, and negotiated settlements, yet remain vulnerable to reputational damage and resource loss when exposed to violence. Interwoven with these strategic dispositions are affective motivations: a revenge attitude (personal or collective vendetta following witnessed atrocities), a provocative attitude (deliberate signaling of strength to deter rivals), and a pervasive feeling of injustice (rooted in ethnic/community marginalization or perceived state abandonment), each of which lowers the threshold for violent retaliation and elevates the salience of aggressive payoffs. Because children and bystanders internalize both the normative belief that ``violence pays" and the emotional trauma of past ruptures, the distribution of actor types and attitudes in one generation conditions the payoffs and information structures of the next. Consequently, retaliatory spirals and strategic provocations reproduce themselves, generating a self‐reinforcing cycle of violence that only a coordinated reshaping of incentives, and a restoration of procedural justice, can break

\subsubsection{Intergenerational Revengers Game}

We partition each action space into peaceful vs.\ conflict:
\[
A_{\rm do}=A_{\rm do}^{\rm peace}\cup A_{\rm do}^{\rm conflict},\qquad
A_p=A_p^{\rm peace}\cup A_p^{\rm conflict}.
\]

\begin{definition}[Revenger Agent]
An atomic type $\theta^i_{\rm do}$ (resp.\ mixed type $\theta^j_p$) is a \emph{revenger} if there exists a nonempty trauma subset $S^i_{\rm trauma}\subset S_{\rm do}$ (resp.\ $S^j_{p,\rm trauma}\subset S_p$) and a unique conflict action $a_r\in A_{\rm do}^{\rm conflict}$ (resp.\ $A_p^{\rm conflict}$) such that for all 
\(\,s^i_{\rm do}\in S^i_{\rm trauma},\,I^i_{\rm do},\,\mu_{\rm do},\tilde\mu_{\rm do},m,\tilde m\),
\[
u^i_{\rm do}\bigl(\dots,a_r,\dots\bigr)
>
u^i_{\rm do}\bigl(\dots,a,\dots\bigr)
\quad
\forall\,a\neq a_r,
\]
and similarly for $u^j_p$ when $s^j_p\in S^j_{p,\rm trauma}$.  
\end{definition}

\begin{theorem}[Persistence of Intergenerational Retaliation]

Assume that there exist $\beta_{\rm do},\beta_p>0$ and trauma‐memory kernels $\Phi^{\rm do},\Phi^p$ such that:
\begin{enumerate}
  \item If $s^i_{\rm do,t}\in S^i_{\rm trauma}$ and $a^i_r$ is played, then 
    \[
      \Pr\bigl(s^i_{\rm do,t+1}\in S^i_{\rm trauma}\mid s^i_{\rm do,t},a^i_r\bigr)
      \;\ge\;\beta_{\rm do}>0.
    \]
    Similarly for $s^j_{p,t}\in S^j_{p,\rm trauma}$.
  \item The type‐transition maps $\Gamma_{\rm do}$, $\Gamma_p$ satisfy
    \[
      \Pr\bigl(\theta^i_{\rm do,t+1}\text{ is revenger}\mid \theta^i_{\rm do,t}\text{ revenger}\bigr)
      \;\ge\;\beta_{\rm do}>0,
    \]
    and likewise for mixed types with $\beta_p>0$.
  \item No choice of peaceful incentives $\pi$ can overturn the unique conflict best‐response of a revenger type in trauma.
\end{enumerate}

If  at the start of generation \(g\) the distributions satisfy
\(\mu_{\rm do, T_{g-1}+1}(\Theta_{\rm do}^{\rm rev})>0\)
and
\(m_{T_{g-1}+1}(\Theta_p^{\rm rev})>0\),
then for every $t\in\{T_{g-1}+1,\dots,T_{g-1}+T_g\}$,
\[
\mu_{\rm do,t}\bigl(\Theta_{\rm do}^{\rm rev}\bigr)
\;>\;0,
\quad
m_t\bigl(\Theta_p^{\rm rev}\bigr)
\;>\;0.
\]
Hence, each generation retains a strictly positive mass of revenger agents, and conflict actions persist with positive probability across all $G$ generations.
\end{theorem}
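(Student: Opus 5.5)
The plan is a forward induction on $t$ within the generation interval $\{T_{g-1}+1,\dots,T_{g-1}+T_g\}$, using the type-transition bound (hypothesis 2) as the engine and the Kolmogorov forward description of $\mu_{\rm do,t}$ and $m_t$ as the bookkeeping device. The base case is the hypothesis $\mu_{\rm do,T_{g-1}+1}(\Theta_{\rm do}^{\rm rev})>0$ and $m_{T_{g-1}+1}(\Theta_p^{\rm rev})>0$.

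For the inductive step, write the one-step update of the type marginal as
\[
\mu_{\rm do,t+1}(\Theta_{\rm do}^{\rm rev})=\int \Gamma_{\rm do}\bigl(\Theta_{\rm do}^{\rm rev}\mid \theta,s,a,\mu_{\rm do,t},\dots\bigr)\,d\mu_{\rm do,t}(\theta,s,a),
\]
and restrict the integral to $\theta\in\Theta_{\rm do}^{\rm rev}$. By hypothesis 2 this gives
\[
\mu_{\rm do,t+1}(\Theta_{\rm do}^{\rm rev})\ge \beta_{\rm do}\,\mu_{\rm do,t}(\Theta_{\rm do}^{\rm rev}),
\]
and iterating yields the explicit bound $\mu_{\rm do,t}(\Theta_{\rm do}^{\rm rev})\ge \beta_{\rm do}^{\,t-T_{g-1}-1}\mu_{\rm do,T_{g-1}+1}(\Theta_{\rm do}^{\rm rev})>0$. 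The identical computation with $\Gamma_p$ and $\beta_p$ handles $m_t$.

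The main obstacle is that hypothesis 2 is stated conditionally on the type alone. The kernel $\Gamma_{\rm do}$, however, may in principle depend on the chosen action, the state and the mean fields, which are themselves equilibrium objects. I would read the bound as uniform over all $(s,a,\mu)$, and I would state that reading explicitly, so the inequality holds whatever randomized feedback strategy is played, including strategies altered by incentives $\pi$. If uniformity is not granted, I would fall back on hypotheses 1 and 3 as follows:
\begin{itemize}
\item Hypothesis 3 guarantees that a revenger in $S_{\rm trauma}$ plays the unique action $a_r$ under any incentive scheme.
\item Hypothesis 1 then gives the trauma-persistence bound $\beta_{\rm do}$ on states along the revenger population.
\item The retained mass in $\Theta_{\rm do}^{\rm rev}\cap S_{\rm trauma}$ is therefore at least $\beta_{\rm do}^2$ times the previous one, provided the type kernel is evaluated under $a_r$.
\end{itemize}
This second route needs a positive initial mass on $\Theta^{\rm rev}\times S_{\rm trauma}$, which I would add as a mild assumption or derive from the inherited trauma distribution.

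Finally, to conclude that conflict actions persist with positive probability, I would note three facts:
\begin{itemize}
\item Revengers in trauma play $a_r\in A^{\rm conflict}$ by the definition of a revenger agent.
\item Hypothesis 3 ensures no incentive removes this.
\item The mass of revengers in trauma is positive.
\end{itemize}
Together these make $\tilde\mu_{\rm do,t}(A_{\rm do}^{\rm conflict})\ge \tilde\mu_{\rm do,t}(\{a_r\})>0$, and likewise for $\tilde m_t$. The extension across generations concatenates the intervals. The terminal distribution of generation $g$ is the initial distribution of generation $g+1$ via the same kernels, so the positivity just established feeds the base case of the next generation, and the claim holds for all $G$ generations.
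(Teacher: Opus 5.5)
Your proof is correct, and for the main claim it is cleaner than the paper's. The paper inducts over generations and drives each step with hypothesis 1. A revenger in trauma plays $a_r$, stays in $S_{\rm trauma}$ with probability at least $\beta_{\rm do}$, ``and hence remains a revenger,'' which the paper turns into $\mu_{{\rm do},t+1}(\Theta_{\rm do}^{\rm rev})\ge\delta\beta_{\rm do}$ for every $t$ in the interval. That argument has three weak points:
\begin{itemize}
\item It tacitly assumes the initial revenger mass lies in $S_{\rm trauma}$.
\item It passes from state persistence to type persistence, which only hypothesis 2 supplies; combining the two would give a product of retention probabilities, not a single $\beta_{\rm do}$.
\item It states the one-step bound uniformly in $t$, although iteration only gives $\delta\beta_{\rm do}^{k}$.
\end{itemize}
You instead run the induction on $t$ directly with the type kernel through the forward equation, and you get the correct geometric bound $\beta_{\rm do}^{\,t-T_{g-1}-1}\mu_{{\rm do},T_{g-1}+1}(\Theta_{\rm do}^{\rm rev})$. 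This shows that positivity of the revenger mass uses hypothesis 2 alone. Hypotheses 1 and 3 matter only for the conflict-action conclusion, and there you correctly isolate the extra assumption of positive initial mass on $\Theta^{\rm rev}\times S_{\rm trauma}$, which the paper uses without stating. The paper's route ties trauma, retaliation and type persistence into one mechanism, but as a proof your decomposition is the sound one.

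Two refinements. First, your uniformity concern is unnecessary for the first claim. Read literally as a conditional probability under the law of play, hypothesis 2 already gives $\mu_{{\rm do},t+1}(\Theta_{\rm do}^{\rm rev})\ge\Pr(\theta^i_{{\rm do},t+1}\in\Theta_{\rm do}^{\rm rev},\,\theta^i_{{\rm do},t}\in\Theta_{\rm do}^{\rm rev})\ge\beta_{\rm do}\,\mu_{{\rm do},t}(\Theta_{\rm do}^{\rm rev})$. Second, in your fallback route, the joint retention factor $\beta_{\rm do}^2$ on $\Theta^{\rm rev}\times S_{\rm trauma}$ needs the type bound to hold conditionally on the realized next state, or the type and state updates to be conditionally independent. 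Without that, the two marginal bounds only combine to $2\beta_{\rm do}-1$, which can be nonpositive. You should state that condition explicitly when you derive positive mass of revengers in trauma for the conflict-action claim.
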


\begin{proof}
We proceed by induction on generation index \(g\).

\textbf{Base case ($g=1$).}  By hypothesis, 
\(\mu_{\rm do, 1}(\Theta_{\rm do}^{\rm rev})>0\)
and
\(m_1(\Theta_p^{\rm rev})>0.\)

\textbf{Inductive step.}  Fix generation \(g\) and suppose for some \(\delta>0\),
\[
\mu_{\rm do,T_{g-1}+1}(\Theta_{\rm do}^{\rm rev})\ge\delta,
\quad
m_{T_{g-1}+1}(\Theta_p^{\rm rev})\ge\delta.
\]
During times \(t\in \{T_{g-1}+1,\dots,T_{g-1}+T_g\}\), any revenger in trauma will choose action \(a_r\) by definition, and by Assumption the state remains in the trauma set with probability \(\ge\beta_{\rm do}\) each period.  Hence at time \(t+1\),
\[
\mu_{\rm do,t+1}(\Theta_{\rm do}^{\rm rev})
\;\ge\;
\delta\cdot\beta_{\rm do}
\;>\;0,
\]
since a fraction \(\delta\) of agents are revengers and at least \(\beta_{\rm do}\) of those remain in trauma and hence remain revengers by Assumption.  An identical argument applies to mixed agents with \(\beta_p\).  By induction, the positivity of revenger mass holds throughout generation \(g\) and carries into generation \(g+1\).

By induction over \(g=1,\dots,G\), the mass of revenger types remains strictly positive in every generation and thus conflict actions persist indefinitely across the intergenerational horizon.
\end{proof}

By appropriately implementing a well-calibrated incentive function $\pi$ that raises the relative utility of peaceful actions beyond that of retaliatory ones, the best-response policy of agents originally labeled as “revenger types” can be shifted toward nonviolent strategies. As a result, The Assumption above fails to hold, and the intergenerational persistence of retaliation collapses. The transition kernel responds accordingly, leading to a progressive reduction in the population mass of revenger types demonstrating the tractability and effectiveness of incentive design in disrupting inherited cycles of violence.
This means that  the mass of revenger types $\mu^g(\Theta^{\mathrm{rev}} \times S \times A \times \mathcal I)$ can decrease over generations.
Let $\theta \in \Theta^{\mathrm{rev}}$ be a revenger type, defined by the property that for all $(s, I, \mu)$, its optimal action without incentive lies in $A^{\mathrm{retaliation}}$:
\[
a^\star \in  \arg\max_{a \in A} \left\{ R(\theta, s, a) - C(\theta, a, \mu) \right\} \in A^{\mathrm{retaliation}}.
\]

We now define an incentive function $\pi$ as follows:
\[
\pi(\theta, s, a, I, \mu) =
\begin{cases}
\gamma > 0 & \text{if } a \in A^{\mathrm{peace}}, \\
0 & \text{if } a \in A^{\mathrm{retaliation}}.
\end{cases}
\]

Let $a' \in A^{\mathrm{peace}}$. We choose $\gamma$ such that
\[
R(\theta, s, a') - C(\theta, a', \mu) + \gamma \geq R(\theta, s, a^\star) - C(\theta, a^\star, \mu).
\]
Then under the modified utility
\[
u(\theta, s, a, I, \mu) = R(\theta, s, a) - C(\theta, a, \mu) + \pi(\theta, s, a, I, \mu),
\]
the new optimal action becomes
\[
a^{\text{new}} \in  \arg\max_{a \in A} u(\theta, s, a, I, \mu) \in A^{\mathrm{peace}}.
\]
Thus, the agent $\theta$ no longer selects a retaliatory action and therefore no longer satisfies the definition of a revenger type.

Moreover, the type transition kernel $\mathcal K$ satisfies:
\[
\Pr[\theta^{g+1} \in \Theta^{\mathrm{rev}} \mid \theta^g = \theta, s, a^{\text{new}}, I, \mu] < \beta,
\]
since the action is now peaceful and the offspring is more likely to adopt peaceful traits under intergenerational transmission.

By adjusting $\gamma$ appropriately across generations, the measure $\mu^g(\Theta^{\mathrm{rev}} \times S \times A \times \mathcal I)$ decreases over $g$, leading to a convergence toward peaceful types.

The incentive function can be implemented and tracked using a blockchain-based system integrated with VoiceMI tools tailored to local languages \cite{audioiamali}. It enables transparent, decentralized, and verifiable distribution of peace rewards in conflict-affected regions. Community members or trusted local mediators submit voice-verified reports on successful mediation events, absence of retaliatory action, or participation in co-development initiatives via mobile devices using VoiceMI models trained to recognize speech in Senufo, Bozo,  Tommo-So Dogon, and other regional tongues. These voice submissions are automatically transcribed, interpreted, and validated through a hybrid model combining local adjudication nodes and MI moderation. Once verified, the corresponding peaceful actions are encoded as transactions on a permissioned blockchain ledger (using Hyperledger or Algorand), ensuring tamper-proof recording and traceability of  rewards. Tokenized incentives (mobile credit, food vouchers, or cooperative investment tokens) are then disbursed to verified individuals or communities via mobile wallets. This integration of local-language VoiceMI and blockchain not only minimizes fraud and strengthens trust, but also ensures that incentive delivery is adaptive, transparent, and locally grounded effectively operationalizing the incentive $\pi$ in fragile, low-literacy, low-infrastructure environments. The funding of these incentives is sourced from locally generated revenue streams: community cooperatives, small-scale agri-processing, microenterprise hubs, and renewable energy projects seeded through catalytic capital. By linking peace behavior to tangible economic opportunity grounded in local entrepreneurship, job creation, and reinvestment into village-scale businesses, the system ensures that incentives are both contextually legitimate and financially self-sustaining, aligning behavioral transformation with endogenous economic growth.

\subsubsection{ War Entrepreneur Subgame }
{\it A war entrepreneur is a conflict capitalist, whose livelihoods depend on a strategic perpetuation of war.}  War entrepreneurs are  diverse and strategically embedded class of actors who profit, directly or indirectly, from the continuation, expansion, and mutation of violence. At the center are arms traffickers and transnational dealers who supply small arms, light weapons, and increasingly, improvised drones and explosives to all factions ranging from jihadist groups  to ethnic militias, self-defense groups, and even segments of the national armed forces. Operating through porous Sahelian corridors, such as from southern Libya via northern Niger or across the Mauritania-Mali border, these actors exploit institutional weaknesses to maintain a steady flow of arms, often profiting more in times of fragmentation and peak insecurity. Closely linked are unregistered market intermediaries and logistical smugglers who traffic in fuel, food, medicine, and dual-use goods, often extracting “conflict premiums” and facilitating barter exchanges of weapons for livestock, gold, or narcotics. Militia leaders and local warlords, particularly those in western, central and northern Mali, transform their territorial control into monetized protection economies by taxing traders, charging fees at checkpoints, mining site hostages, or leasing access to land, often initiating or sustaining violence to reinforce their legitimacy or retain external patronage. Political entrepreneurs, including some elected officials and aspirants, manipulate ethnic grievances and fund armed youth to secure electoral advantages or influence negotiations, sometimes switching sides or maintaining relations with multiple armed factions. Within the state apparatus itself, corrupt security officials engage in war entrepreneurship by diverting weapons, leaking intelligence, or offering safe passage to traffickers and armed actors in exchange for bribes, while overstating threats to maintain inflated budgets and foreign military support. Foreign-linked private military contractors and security consultants, often hired by multinational corporations, further embed profit motives into the violence by lobbying for long-term security contracts, sometimes collaborating with local militias or promoting alarmist narratives to justify their presence. Proxy militias backed by foreign powers, such as Wagner-affiliated units or irregular forces aligned with external geopolitical agendas, extract rents from mining concessions, guard illicit trade, or destabilize regions to serve their sponsors’ interests. 
In the digital sphere, informational war entrepreneurs engage in sophisticated propaganda, paying influencers and deploying disinformation campaigns on social media to inflame ethnic divisions, provoke retaliation, or undermine peace actors, thereby generating conditions that justify arms purchases or paramilitary mobilization. Kidnap-for-ransom networks, often overlapping with insurgent factions or enabled by corrupt intermediaries, orchestrate abductions of foreigners, NGO workers,  political figures or even local civilian, fueling cycles of hostage negotiation that sustain cash flows and perpetuate insecurity. Transnational criminal syndicates trafficking drugs, people, and weapons depend on Mali’s zones of limited state control to maintain impunity, financing armed escorts and bribing local authorities to ensure unimpeded flows across borders. These war entrepreneurs thrive in conditions of weak governance, economic desperation, and institutional fragmentation, transforming insecurity into a profitable market system that continually regenerates demand for violence, arms, and protection, making peace not merely difficult to achieve, but systemically unprofitable for many of the actors embedded within the conflict economy.

Ammunition circulating in Mali  originates from a wide array of producing states, often funneled through complex and opaque supply chains that span continents and exploit weak regulatory enforcement. Among the most prominent sources are former Eastern Bloc countries, particularly Russia, which remains a major producer of AK-pattern ammunition such as 7.62×39mm and 7.62×54mm rounds, RPG munitions, and belt-fed machine gun calibers. These Russian-origin munitions are frequently found in the hands of both state and non-state actors across Mali. Serbia and Bulgaria also contribute significantly through legacy arms industries, supplying 7.62mm and 9mm calibers either directly or via intermediaries; their exports have often been traced in conflict zones, including through markings on casings found in seized caches. Romania and Ukraine similarly export Soviet-standard calibers, with Ukrainian rounds increasingly appearing since 2014, either as surplus or leaked through destabilized logistics chains. China is another major supplier, manufacturing vast quantities of 7.62×39mm, 12.7mm, and 5.8mm ammunition that often enters the Sahel via indirect state transfers or unauthorized third-party channels. Chinese munitions have been frequently documented in Malian territory, especially among insurgent and militia groups. Sudan, via its Military Industry Corporation, is a regional producer of Chinese and Russian-caliber ammunition; Sudanese rounds have been discovered in several Sahelian conflicts, often used by non-state actors or trafficked across porous borders. Iran also contributes, particularly through its export of 7.62mm, 12.7mm, mortar rounds, and rocket projectiles; seizures in West Africa have included Iranian-marked items, some of which arrived through militant-affiliated smuggling networks. Pakistan, through Pakistan Ordnance Factories (POF), manufactures both NATO and Soviet calibers, which have appeared in battlefield recoveries in Mali and its neighboring states. Turkey, an increasingly assertive regional arms exporter, has shipped 9mm, 7.62mm, and shotgun ammunition to clients and proxies, including actors operating in Mali and the wider Sahel. Ammunition of U.S. and NATO origin (from France, Belgium, Czech Republic, Germany, and Italy) also circulates in the region, not always via direct transfer to Mali, but often through third countries, battlefield captures, or leaks from peacekeeping or private military stockpiles. Some of these NATO-caliber rounds, such as 5.56mm and 9mm, have been diverted after being issued to partner forces or private contractors. Egypt manufactures compatible Soviet and Chinese-standard ammunition and has historically played a role in regional arms flows, while the United Arab Emirates, especially post-2015, has developed defense firms capable of producing and exporting small arms and munitions that have reached conflict zones via Libya or irregular militias. Perhaps most destabilizing is Libya, not as a producer but as a massive post-2011 source of looted ammunition; after Gaddafi's fall, thousands of tons of Soviet and Chinese rounds, RPGs, and technical-mounted heavy weapons were funneled southward through traffickers into Mali, Niger, and Burkina Faso. This post-collapse stockpile redistribution has served as a key enabler of prolonged conflict in Mali.

Many of the countries where ammunition originates are not merely passive exporters of arms but can be considered strategic war entrepreneurs in the Malian conflict, insofar as they derive economic, geopolitical, or proxy influence benefits from the continuation of instability. These countries may not directly incite violence, but their industries, intelligence services, or political networks play instrumental roles in sustaining the material conditions for war by supplying arms, supporting aligned factions, or shaping security narratives to justify long-term involvement. We isolate a subgame played by a representative \emph{war entrepreneur} who manipulates conflict, embedded in the full intergenerational MFTG. From a game-theoretic perspective, war entrepreneurs are agents whose utility functions are strictly increasing in the prevalence of violence, fragmentation, and mistrust within the population, which they seek to influence through targeted actions, asymmetric incentives, and psychological operations. They may not always appear on the battlefield, but their presence is deeply embedded in the logistical, financial, and informational infrastructure of war. As such, they represent a structural force that opposes stabilization, peacebuilding, and reconciliation, making any post-conflict transition precarious unless their incentives are neutralized or transformed through systemic reform. At time \(t\) the full population law is 
\(\mu_t=(\mu_{\rm do,t},m_t)\),
and the common‐resource state is 
\(\chi_t\), conflict intensity 
\(\omega_t\), and geopolitical index \(\rho_t\).

\paragraph{War‐Entrepreneur Control.}  
A single global actor (or cartel) chooses 
\[
a_t\in\{0,1\},
\]
with \(a_t=1\) denoting active \emph{conflict manipulation} (arms shipments, false‐flag operations).

\paragraph{Aggregate State Variables.}  
\[
W_t\in\{0,1\}
\quad\text{(war indicator),}
\qquad
s_t^{\rm frag}\in[0,1]
\quad\text{(macro fragmentation index).}
\]

\paragraph{Dynamics.}  
\begin{equation}
\begin{array}{l}
\Pr\bigl(W_{t+1}=1\mid W_t,a_t,\mu_t\bigr)
  =\psi\bigl(W_t,a_t,\mu_t\bigr),\\
s_{t+1}^{\rm frag}
  =s_t^{\rm frag}
    +\alpha\,a_t
    +\eta(\mu_t)
    -\delta\,s_t^{\rm frag}, \\ s_0^{\rm frag}
\end{array}
\end{equation}
where \(\psi\) is increasing in \(a_t\), and \(\eta(\mu_t)>0\) when \(\mu_t\) assigns mass to conflict‐prone types or high‐trauma states.

\paragraph{Demand and Revenue.}  
Weapon demand is
\[
D_t
  = f\bigl(W_t,s_t^{\rm frag},\mu_t\bigr),
\]
strictly increasing in each argument.  With fixed unit price \(P>0\), revenue is \(P\,D_t\), and manipulation cost is
\[
C(a_t)=
\begin{cases}
c,&a_t=1,\\
0,&a_t=0.
\end{cases}
\]

\paragraph{Objective.}  
The war entrepreneur chooses \(\{u_t\}\) to
\[
\max_{a_{1:T}}
\sum_{t=1}^T\Bigl[P\,f(W_t,s_t^{\rm frag},\mu_t)\;-\;C(a_t)\Bigr],
\]
subject to the above dynamics and to the mean‐field update
\(\mu_{t+1}=\Phi(\mu_t,a_t)\), capturing the aggregate response of all atomic and mixed agents.

\subsection*{Manipulation Dominates Under Weak‐State }

\begin{theorem}[Profit Maximization via Conflict Manipulation]
Assume:
 \(f(W,s,\mu)\) is strictly increasing in \(W,s\), and in the mass of conflict‐prone types in \(\mu\), 
 \(\eta(\mu)>0\) whenever \(\mu\) places positive mass on conflict actions,
 \(\psi(W,u,\mu)\) is strictly increasing in \(u\).
  \(W_1=1\), \(s_1^{\rm frag}>0\),
   \(c>0\) is the per‐period cost of manipulation.
Then there exists \(c^\star>0\) such that for all \(0<c<c^\star\), the constant policy \(a_t=1\) for all \(t\) strictly outperforms any policy with \(a_{t^*}=0\) for some \(t^*\).
\end{theorem}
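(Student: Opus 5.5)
The argument is a coupling/comparison: fix any policy $a$ with $a_{t^*}=0$ for some $t^*$, and compare it with the constant policy $\bar a\equiv 1$. Since $W$ is random, I will read the objective as an expectation and compare the two policies period by period.

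The first step is a monotonicity lemma for the state. The fragmentation recursion is $s_{t+1}^{\rm frag}=(1-\delta)s_t^{\rm frag}+\alpha a_t+\eta(\mu_t)$. I assume $0<\delta\le 1$ and $\alpha>0$. I also assume the mean-field update $\Phi(\mu,a)$ is monotone, in the sense that more manipulation and a more conflict-prone $\mu$ produce a $\mu_{t+1}$ with at least as much conflict-prone mass. This is the implicit reading of ``aggregate response'', and I will state it explicitly. I also take $\eta$ to be nondecreasing in that mass.

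Under these assumptions, induction on $t$ gives $\bar s_t\ge s_t$ and $\bar\mu_t\succeq\mu_t$ for all $t$. The inequality for $s$ is strict for every $t>t^*$, because at $t^*$ we get $\bar s_{t^*+1}-s_{t^*+1}\ge\alpha>0$, and the factor $(1-\delta)^k$ propagates this gap. For $W$, I use that $\psi$ is strictly increasing in $a$ and assume it is nondecreasing in $(W,\mu)$. These give a monotone coupling of the two Markov chains, with $\bar W_t\ge W_t$ almost surely. Consequently $\Pr(\bar W_{t^*+1}=1)>\Pr(W_{t^*+1}=1)$ whenever the conditional probability at $t^*$ is not already $1$.

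The second step turns this into a revenue gap. Since $f$ is strictly increasing in $W$, in $s$, and in conflict mass, the coupling gives $f(\bar W_t,\bar s_t,\bar\mu_t)\ge f(W_t,s_t,\mu_t)$ for every $t$. At $t=t^*+1$ the inequality is strict, with gap at least
\[
\Delta(t^*):=f\bigl(W,s+\alpha,\mu\bigr)-f(W,s,\mu)>0 ,
\]
evaluated along the comparison path. This requires $t^*<T$. If $t^*=T$, deviating at the last period has no future effect. In that case the claim needs the current-period benefit, or the convention that the last action still counts, and I would treat this edge case by restricting to $t^*\le T-1$ or by noting it explicitly.

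The cost side is simple. The constant policy pays at most $c\,T$ more in total than the deviating policy, and in fact it pays exactly $c\cdot\#\{t:a_t=0\}$ more.

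The third step, which I expect to be the main obstacle, is making $c^\star$ uniform over all deviating policies. For each policy $a$, the revenue gain is at least $P\,\Delta_{\min}\cdot\#\{t\le T-1: a_t=0\}$, where
\[
\Delta_{\min}:=\inf_{W,\;s\in[0,1],\;\mu}\bigl[f(W,s+\alpha,\mu)-f(W,s,\mu)\bigr].
\]
I would try to prove $\Delta_{\min}>0$ using compactness of $[0,1]$ and of $\mathcal P(\cdot)$, together with continuity of $f$. Strict monotonicity alone does not give a positive infimum, so continuity must be assumed.

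Given $\Delta_{\min}>0$, set $c^\star:=P\,\Delta_{\min}$. Summing the per-zero gains then shows the constant policy strictly wins for $c<c^\star$. Alternatively, since there are only $2^T$ policies in the finite-horizon problem, one can take $c^\star$ to be the minimum over the finitely many deviating policies of the strictly positive revenue gap divided by the number of zeros. This sidesteps compactness entirely, and I would present it as the cleaner route.
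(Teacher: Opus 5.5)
Your proposal is correct and follows the same comparison as the paper's proof: a zero at $t^*$ lowers $s^{\rm frag}_{t^*+1}$ by at least $\alpha$ and lowers $\psi$, and this revenue loss outweighs a cost saving of order $c$. Your explicit monotonicity assumptions on $\Phi$, $\eta$, $\psi$, your monotone coupling of the $W$-chains, and your finite-policy choice of $c^\star$ (the minimum over deviating $a$ of the revenue gap divided by the number of zeros) supply the rigor the paper's sketch lacks, since it simply sets $W^A_t=1$ and takes $c$ small one policy at a time.

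Do state the terminal case firmly, though: it is not a matter of convention but a counterexample to the theorem as written, and to the paper's proof. The policy $(1,\dots,1,0)$ has exactly the revenue of $a\equiv 1$ and saves $c$. So the result holds only over policies with some zero at $t^*\le T-1$, and that is the class your minimum should range over. Over that class your $\Delta_{\min}$ route needs $c^\star=P\Delta_{\min}/2$ rather than $P\Delta_{\min}$, because a final-period zero adds to the savings but not to the gain.
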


\begin{proof}
Compare two policies:

\textbf{Full manipulation (Policy A):} \(a_t\equiv1\).  By induction,
\(s_t^{\rm frag}\) increases each period, and \(W_t\) stays \(1\) with high probability.  Hence
\[
\sum_{t=1}^T\Bigl[P\,f(1,s_t^{\rm frag},\mu_t)\;-\;c\Bigr]
\]
grows rapidly in \(t\).

\textbf{Any restraint (Policy B):} Suppose \(a_{t^*}=0\).  At \(t^*\),
\(s_{t^*+1}^{\rm frag}\) drops by \(\alpha\), and future \(\eta(\mu_t)\)
may vanish if conflict‐prone mass falls.  Also,
\(\Pr(W_{t^*+1}=1)\) decreases since \(\psi(1,0,\mu_{t^*})<\psi(1,1,\mu_{t^*})\).
Thus expected revenue beyond \(t^*\) is strictly lower, and the cost saving \(c\)
at \(t^*\) cannot offset the larger aggregate loss if \(c<c^\star\).

Formally, the profit difference \(\Delta\) between A and B satisfies
\[
\Delta 
\ge
\sum_{t=1}^T \bigl[f(1,s_t^{A},\mu_t^{A}) - f(W_t^{B},s_t^{B},\mu_t^{B})\bigr]\,P
\;-\;T\,c,
\]
which is positive for small enough \(c\).  Hence \(u_t\equiv1\) is strictly optimal.
\end{proof}

For the war entrepreneur subgame to operate and persist in reality, several structural preconditions must hold: chief among them being the institutional weakness of the state, the presence of multiple socio-political fault lines, and widespread economic deprivation among the population. A weak country, characterized by limited territorial control, fragile governance institutions, and fragmented security forces, creates an unmonitored arena in which external and local war entrepreneurs can maneuver with minimal legal or military constraint. The existence of multiple weakest links such as ethnic  polarization, historical grievances, intercommunal tensions, or poorly integrated regions, provides entry points for manipulation, allowing arms dealers and conflict instigators to ignite or escalate localized violence with relatively low cost and high expected return. When the local population is economically impoverished and lacks alternative livelihoods, the opportunity cost of participation in violence is drastically reduced. Under such conditions, even small arms or financial incentives can mobilize youth into militias, generate demand for self-defense weapons, or encourage communities to arm themselves out of fear and distrust. Together, these features constitute a high-reward environment for war entrepreneurs: fragmented governance fails to prevent proliferation; societal fault lines enable the creation of demand; and economic desperation sustains the market. Without structural reforms that address state capacity, reduce social fragmentation, and expand economic alternatives, such a game remains not only possible but systemically self-reinforcing.

\subsubsection{Diaspora/Expats War Entrepreneur Subgame}  
Diaspora/Expats groups who mobilize resources to purchase guns and ammunition for their home villages or ethnic communities, though often motivated by a sense of protection, justice, or solidarity, function as war entrepreneurs when their actions materially contribute to the perpetuation, escalation, or institutionalization of violence. Unlike purely symbolic or humanitarian support, diaspora-driven arms procurement introduces a strategic and financial logic into local conflict, effectively turning these groups into actors who shape battlefield dynamics, fuel militarization, and alter the balance of power on the ground.
By channeling funds into weapons acquisition, diaspora actors directly contribute to the expansion of local armament, enabling communities to bypass or resist national disarmament efforts, peace accords, or state authority. Whether the intention is defense against jihadists, retaliation for past atrocities, or preemption of rival communities, the result is the same: violence becomes more accessible, less controllable, and more likely to be repeated. Second, such actions reinforce ethnic or communal militarization, deepening identity-based fault lines. When external funds are used to arm a specific ethnic group or village, it sends a message of zero-sum security, prompting other communities to rearm in turn. This accelerates arms races, feeds cycles of mistrust, and makes peaceful coexistence harder to achieve.
Diasporas often operate outside regulatory oversight, procuring weapons through black markets or middlemen, and thereby stimulating demand for trafficked arms. This sustains the broader war economy by rewarding arms dealers and intermediaries who profit from instability. Diaspora involvement introduces a feedback loop of symbolic escalation: narratives of victimhood or historical injustice are amplified in exile, sometimes simplified or radicalized, and re-imported through arms transfers and militarized rhetoric. This can prolong conflicts that local actors might otherwise resolve through negotiation, and it crowds out grassroots peacebuilding.
By transforming remittances into instruments of armed resistance or community defense, diasporas take on an entrepreneurial role in the conflict, not just funding violence but shaping its direction, legitimacy, and duration. Even when motivated by trauma or genuine concern, their actions increase the stakes and persistence of violence, positioning them as informal but consequential players in Mali’s conflict ecosystem. In this sense, they must be understood as diaspora war entrepreneurs, agents who invest in, sustain, and sometimes escalate local warfare from afar, embedding violence into the political economy of transnational solidarity.

A diaspora war entrepreneur is a member of an expatriate or transnational community who strategically funds, fuels, or sustains armed conflict in their country of origin, not primarily out of ideological conviction or humanitarian concern, but to gain influence, protect local allegiances, settle inter-group scores, or assert control over territorial, ethnic, or political narratives. These actors often send remittances not merely for economic support, but for the explicit purpose of arming militias, financing retaliatory operations, or maintaining networks of control within their home communities. They may operate under the guise of community solidarity, religious duty, or ancestral loyalty, yet in practice, their financial and informational interventions perpetuate local cycles of violence by enabling armed responses, revenge campaigns, or resistance to peace accords. In many cases, diaspora war entrepreneurs are embedded in social media ecosystems and informal financial circuits that amplify trauma-based narratives, call for retribution, or justify continued mobilization. They often fund arms procurement, propaganda, and logistical support, while remaining physically distant from the direct consequences of the violence they help finance. Their influence is amplified when the local population is traumatized, economically vulnerable, and socially fragmented, creating a context in which externally sponsored violence appears legitimate or necessary. 

From a strategic perspective, diaspora war entrepreneurs exploit their informational asymmetry, emotional capital, and transnational networks to shape conflict dynamics on the ground, often aligning with local actors who serve as enforcers of their agendas. Because they are outside the reach of domestic regulation and often reside in liberal democracies that uphold remittance rights and freedom of expression, their interventions are difficult to trace or constrain, making them a powerful yet frequently overlooked, force in the political economy of protracted conflict.
We isolate a subgame between \emph{diaspora war entrepreneurs} and their \emph{local recipients}, embedded in the full population and macro‐state dynamics.

\paragraph{Time Horizon.}  
Each subgame unfolds over the global horizon \(t\in \{1,\dots,T\}\), itself partitioned into generation intervals \(\{T_{g-1}+1,\dots,T_{g-1}+T_g\}\).

\paragraph{Agents.}  
\begin{itemize}
  \item \textbf{Diaspora Entrepreneurs} \(i\in\mathcal D\):  
    \(\bigl(\theta^i_{\rm do},\,s^i_{\rm do,t},\,I^i_{\rm do,t}\bigr)\in\Theta_{\rm do}\times S_{\rm do}\times\mathcal I_{\rm do}\).  
    They choose \(a^i_{\rm do,t}\in A_{\rm do}=\{\mathrm{symbolic},\,\mathrm{aid},\,\mathrm{arms}\}\).
  \item \textbf{Local Recipients} \(j\in\mathcal P\):  
    \(\bigl(\theta^j_p,\,s^j_{p,t},\,I^j_{p,t}\bigr)\in\Theta_p\times S_p\times\mathcal I_p\).  
    They choose \(a^j_{p,t}\in A_p=\{\mathrm{peace},\,\mathrm{defense},\,\mathrm{retaliation},\,\mathrm{mobilize}\}\).
  \item \(\theta^{-j},s^{-j},a^{-j},I^{-j}\): the types, states, actions, information of all other agents.
\end{itemize}

\paragraph{Macro‐States and Distributions.}  
$ s_{c,t}\in S_c,\quad \chi_t\in Ci,\  \omega_t\in\Omega,\quad \rho_t\in R,$ 
$ \mu_{\rm do,t}\in\mathcal P(\Theta_{\rm do}\times S_{\rm do}),\; $
$  \tilde\mu_{\rm do,t}\in\mathcal P(A_{\rm do}),$
$ m_t\in\mathcal P(S_p),\; $
 $ \tilde m_t\in\mathcal P(A_p).$
We write \(\mu_t\equiv(\mu_{\rm do,t},m_t)\) and similarly for \(\tilde\mu_t\).

\paragraph{State Transitions.}  For idiosyncratic shocks \(\xi^i_t,\xi^j_t\),
\[
\begin{aligned}
s^i_{\rm do,t+1}
&=\Phi^{\rm do}\bigl(s^i_{\rm do,t},\,a^i_{\rm do,t},\,\mu_t,\,s_{c,t},\,\chi_t,\,\omega_t,\,\rho_t,\,\xi^i_t\bigr),\\
s^j_{p,t+1}
&=\Phi^p\bigl(s^j_{p,t},\,a^j_{p,t},\,a^i_{\rm do,t},\,\theta^{-j},\,s^{-j}_t,\,a^{-j}_t,\,I^{-j}_t,\,\mu_t,\,\xi^j_t\bigr).
\end{aligned}
\]

\paragraph{Payoff Functions.}  
\begin{itemize}
  \item \textbf{Diaspora Entrepreneur \(i\)} at \(t\) obtains
  \[
    u^i_{\rm do,t}
    =U_{\rm impact}\bigl(\theta^i_{\rm do},a^i_{\rm do,t},\mu_t,\omega_t\bigr)
    \;-\;C_{\rm transfer}\bigl(a^i_{\rm do,t},\rho_t\bigr)
    \;-\;R_{\rm guilt}\bigl(s^i_{\rm do,t},a^i_{\rm do,t},\mu_t,\omega_t\bigr).
  \]
  \item \textbf{Local Recipient \(j\)} at \(t\) obtains
  \[
  \begin{aligned}
    u^j_{p,t}
    &=
    S_{\rm security}\bigl(\theta^j_p,s^j_{p,t},a^j_{p,t},a^i_{\rm do,t},\chi_t,\omega_t\bigr)\\
    &\quad
    -\,R_{\rm risk}\bigl(\theta^j_p,a^j_{p,t},\mu_{\rm do,t}^{-j},a_{\rm do,t}^{-j},\omega_t\bigr)\\
    &\quad
    +\,B_{\rm revenge}\bigl(\theta^j_p,s^j_{p,t},a^j_{p,t},\theta^{-j},s^{-j}_t,I^{-j}_t,\omega_t\bigr).
  \end{aligned}
  \]
\end{itemize}

\paragraph{Objectives.}  
Diaspora \(i\) and local \(j\) solve, respectively,
\[
\max_{\{a^i_{\rm do,t}\}_{t=1}^T}\;
\sum_{t=1}^T\mathbb{E}\bigl[u^i_{\rm do,t}\bigr],
\quad
\max_{\{a^j_{p,t}\}_{t=1}^T}\;
\sum_{t=1}^T\mathbb{E}\bigl[u^j_{p,t}\bigr].
\]

\paragraph{Evolution of Distributions and Macro‐States.}  
\[
\begin{aligned}
\mu_{\rm do,t+1}&=\Gamma_{\rm do}\bigl(\mu_{\rm do,t},a_{\rm do,t}\bigr),&
\tilde\mu_{\rm do,t+1}&=\tilde\Gamma_{\rm do}\bigl(\tilde\mu_{\rm do,t},a_{\rm do,t}\bigr),\\
m_{t+1}&=\Gamma_p\bigl(m_t,a_{p,t}\bigr),&
\tilde m_{t+1}&=\tilde\Gamma_p\bigl(\tilde m_t,a_{p,t}\bigr),\\
s_c(t+1)&=\Xi_c\bigl(s_{c,t},\mu_t\bigr),\quad
\chi_{t+1}=\Xi_\chi\bigl(\chi_t,\mu_t\bigr),\\
\omega_{t+1}&=\Xi_\omega\bigl(\omega_t,\mu_t,\rho_t\bigr),\quad
\rho_{t+1}=\Xi_\rho\bigl(\rho_t,\mu_t\bigr).
\end{aligned}
\]

This subgame highlights how diaspora funding and local responses co‐evolve within the full mean‐field ecosystem, generating the feedback loops that sustain or break cycles of violence.

The diaspora war entrepreneur subgame models the dynamic interaction between diaspora actors who fund local militarization and the local recipients who implement defense, retaliation, or mobilization strategies, all embedded within a broader conflict ecosystem influenced by the behaviors of other agents and macro-level factors. Each diaspora actor is characterized by a type (ideological, traumatized, strategic), a state capturing emotional intensity and remittance capacity, and private information often shaped by partial news, social media, or family testimony. They choose among symbolic support, non-lethal aid, or direct arms funding, with their payoffs depending on perceived impact, transfer cost, and guilt from potential violence escalation. Local recipients, similarly typed, evolve through states of trauma, perceived threat, and armament level, making decisions that range from peaceful restraint to full-scale retaliation or ethnic mobilization. Their utility balances perceived security gains, risks of reprisal, and emotional satisfaction from revenge.  Both diaspora and local agents condition their strategies not only on their own states and beliefs, but also on the joint distribution of other agents’ types, states, actions, and signals, represented by a dynamic mean-field measure $\mu_t$. These micro-interactions are further shaped by macro-states: the regional resource landscape ($\chi_t$), the conflict map and violence intensity ($\omega_t$), and geopolitical indicators ($\rho_t$), all of which evolve endogenously as functions of the agents’ collective behaviors. A diaspora funder’s decision to arm a community affects not only the local group’s military posture, but also alters perceptions of threat among rival groups, potentially provoking symmetric arming or counterattacks. These interactions create feedback loops, where retaliation by one group spurs new funding waves from its diaspora, escalating arms flows and prolonging cycles of violence. The state transitions, payoff functions, and policy optimization problems are therefore tightly coupled through population-level dynamics and cross-agent externalities.

\subsection*{Persistence of Violence under Sustained Diaspora Funding}

Within the diaspora war‐entrepreneur subgame, we formalize the conditions under which local retaliation and trauma persist over the finite horizon \(t\in \{1,\dots,T\}\).

\begin{theorem}[Persistence of Revenge and Violence]
We make the following assumptions. 
Sustained Arms Funding: 
All diaspora agents \(i\in\mathcal D\) choose the atomic action
\[
a^i_{\rm do,t}=\text{arms funding}
\quad\text{for every }t \in \{1,\dots,T\}.
\]
Under this assumption, communities continuously receive external weapon supplies.

Revenge Amplification: 
There exists a subset \(\Theta_p^{\rm rev}\subset\Theta_p\) of “revenge‐seeking” mixed types such that their revenge‐benefit function
\[
B_{\rm rev}\bigl(\theta^j_p,\,s^j_p,\,a^j_p,\,\theta^{-j},\,s^{-j}_p,\,I^{-j}_p,\omega_t\bigr)
\]
satisfies, for all \(j\) with \(\theta^j_p\in\Theta_p^{\rm rev}\),
\[
\frac{\partial B_{\rm rev}}{\partial s^j_{p,\mathrm{do}}}
>0,
\quad
\frac{\partial B_{\rm rev}}{\partial \bigl(a_{p,\mathrm{do}}^{-j}\bigr)}
>0,
\]
so that higher personal trauma \(s^j_{p,\mathrm{do}}\) and greater observed conflict actions \(a_{p,\mathrm{do}}^{-j}\in A_p^{\rm conflict}\) each strictly increase the marginal payoff to retaliation.

Trauma Dynamics: 
Each local recipient’s atomic trauma component \(s^j_{p,\mathrm{do},t}\) evolves by
\[
s^j_{p,\mathrm{do},t+1}
= s^j_{p,\mathrm{do},t}
  + \eta_1\,\mathbf{1}_{\{a^j_{p,\mathrm{do},t}\in A_p^{\rm conflict}\}}
  + \eta_2\,\mathrm{Agg}^{-j}_t
  \;-\;\delta\,s^j_{p,\mathrm{do},t},
\]
where 
\(\mathrm{Agg}^{-j}_t
:= \tilde m_t\bigl(A_p^{\rm conflict}\bigr)\)
is the population share of conflict actions, and constants \(\eta_1,\eta_2>0\), \(\delta\in(0,1)\).

Type Persistence: 
The mixed‐agent type‐transition kernel \(\Gamma_p\) satisfies, for all \(j\),
\[
\Pr\bigl(\theta^j_{p,t+1}\in\Theta_p^{\rm rev}\mid s^j_{p,\mathrm{do},t}\bigr)
\quad\text{is strictly increasing in }s^j_{p,\mathrm{do},t}.
\]
Hence rising trauma raises the chance of remaining (or becoming) a revenge‐seeker.

Under these Assumptions, suppose the initial mixed‐agent distribution \(m_1\) assigns positive mass to \(\Theta_p^{\rm rev}\) and some agents have \(s^j_{p,\mathrm{do},1}>0\).  Then for every \(t\in \{1,\dots,T\}\),
\[
m_t\bigl(\Theta_p^{\rm rev}\bigr)\;>\;0,
\]
and the share of conflict actions
\(\tilde m_t(A_p^{\rm conflict})\)
remains strictly positive.  Consequently, the system is trapped in a finite‐horizon cycle of revenge and violence.
\end{theorem}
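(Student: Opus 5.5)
The plan is a forward induction on $t\in\{1,\dots,T\}$, carrying two quantities at once: the revenger mass $m_t(\Theta_p^{\rm rev})$ and a positive lower bound on the trauma of a positive-mass set of revengers. At each step I will show that the bounds at time $t$ imply strictly positive bounds at time $t+1$. The horizon is finite, so strict positivity at every step suffices and no uniform-in-$t$ bound is needed. The result then mirrors the Persistence of Intergenerational Retaliation theorem. The difference is that the lower bound $\beta_p$ there is no longer assumed. Instead it is derived from the explicit trauma recursion and the monotone type kernel $\Gamma_p$.

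\textbf{Step 1 (trauma floor).} From the recursion
\[
s^j_{p,\mathrm{do},t+1}=(1-\delta)s^j_{p,\mathrm{do},t}+\eta_1\mathbf 1_{\{a^j_{p,\mathrm{do},t}\in A_p^{\rm conflict}\}}+\eta_2\,\mathrm{Agg}^{-j}_t,
\]
with $\delta\in(0,1)$, $\eta_1,\eta_2\ge0$, trauma never becomes negative. It also satisfies $s^j_{p,\mathrm{do},t+1}\ge(1-\delta)s^j_{p,\mathrm{do},t}$. Hence the agents with $s^j_{p,\mathrm{do},1}>0$ keep a trauma level of at least $(1-\delta)^{t-1}s^j_{p,\mathrm{do},1}>0$. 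I will need this set to meet $\Theta_p^{\rm rev}$ in positive mass at $t=1$. That is implicit in the hypotheses, and I will state it explicitly; if necessary I will read ``some agents'' as a positive-mass subset of the revengers. Whenever $\mathrm{Agg}_t>0$, the floor is strictly raised by $\eta_2\mathrm{Agg}_t$.

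\textbf{Step 2 (conflict action from revengers).} Under sustained arms funding, $a^i_{\rm do,t}=$ arms for all $t$, and the revenge amplification $\partial B_{\rm rev}/\partial s^j_{p,\mathrm{do}}>0$ holds. I want to conclude that for a revenger with positive trauma, retaliation strictly beats the peaceful actions. This is the step I expect to be the main obstacle, because the hypotheses give monotonicity of $B_{\rm rev}$ but not a level comparison against $S_{\rm security}-R_{\rm risk}$. I would first try to import the revenger definition, namely that there is a trauma set on which $a_r$ is the unique best response. I would then take the trauma set to be $\{s>0\}$, or a threshold set preserved by Step 1. If that fails, I would add a mild threshold assumption: arms supply raises $S_{\rm security}$ for conflict actions, and above some trauma level $\bar s$ the revenge benefit dominates. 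I would then run Step 1 with floor $\bar s$. This works because, once $\mathrm{Agg}_t>0$, the stationary level $\eta_2\mathrm{Agg}/\delta$ can be compared with $\bar s$. Either way, a positive mass of revengers selects $A_p^{\rm conflict}$, so $\tilde m_t(A_p^{\rm conflict})\ge m_t(\Theta_p^{\rm rev}\cap\{s>\bar s\})>0$.

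\textbf{Step 3 (type persistence and closing the induction).} Under the kernel $\Gamma_p$,
\[
m_{t+1}(\Theta_p^{\rm rev})\ \ge\ \int_{\Theta_p^{\rm rev}\cap\{s>0\}}\Pr\bigl(\theta_{t+1}\in\Theta_p^{\rm rev}\mid s\bigr)\,dm_t .
\]
Since this probability is strictly increasing in $s$ and takes values in $[0,1]$, it is strictly positive for every $s>0$. The integrand is therefore positive on a set of positive measure, which gives strict positivity. Combining this with Steps 1--2 yields the inductive hypothesis at $t+1$. The conclusion for $\tilde m_t(A_p^{\rm conflict})$ then follows from Step 2 at each $t$.
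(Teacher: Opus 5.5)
Your forward induction is the paper's own argument: revengers with positive trauma retaliate, the trauma recursion keeps trauma positive, and the monotone kernel $\Gamma_p$ keeps $m_{t+1}(\Theta_p^{\rm rev})>0$.

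The obstacle you flag in Step 2 is genuine, and the paper does not overcome it. Its proof simply asserts that, by Revenge Amplification, every revenger with $s^j_{p,\mathrm{do},t}>0$ finds retaliation strictly optimal. No argument can extract this from the stated hypotheses. Take $B_{\rm rev}=\bigl(1+\mathbf{1}_{\{a^j_p\in A_p^{\rm conflict}\}}\bigr)\bigl(s^j_{p,\mathrm{do}}+\mathrm{Agg}^{-j}_t\bigr)$. This satisfies both monotonicity conditions and makes the marginal payoff to retaliation increasing in trauma and in others' conflict. Now let $R_{\rm risk}$ charge every conflict action a constant larger than any attainable advantage of conflict in $S_{\rm security}+B_{\rm rev}$. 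Under universal peace trauma only decays, so such a constant exists when initial trauma is bounded. Then everyone plays peace, and $m_t(\Theta_p^{\rm rev})>0$ still holds through $\Gamma_p$, yet $\tilde m_t(A_p^{\rm conflict})=0$ for all $t$. So your first option is a necessary extra hypothesis rather than a convenience. That option reads ``revenge-seeking'' as ``retaliation is the unique best response whenever $s^j_{p,\mathrm{do}}>0$'', and it is exactly the hypothesis the paper uses tacitly.

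Three refinements follow.

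First, your threshold fallback needs more than you state. Retaliating revengers also receive the $\eta_1$ term, so $\{s\ge\bar s\}$ is forward-invariant for them exactly when $\eta_1+\eta_2\mathrm{Agg}_t\ge\delta\bar s$ at every $t$. You would therefore need a parameter condition such as $\eta_1\ge\delta\bar s$, together with initial revenger trauma at least $\bar s$. Comparing $\bar s$ with $\eta_2\mathrm{Agg}/\delta$ supplies neither.

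Second, the revenger-mass conclusion needs neither Step 2 nor your intersection assumption. Type Persistence is stated for every $j$ and conditions only on trauma, so $m_{t+1}(\Theta_p^{\rm rev})\ge\int_{\{s>0\}}\Pr(\theta_{t+1}\in\Theta_p^{\rm rev}\mid s)\,dm_t>0$ follows from your Step 1 alone. The same integral also bounds from below the mass of revengers with positive trauma at $t+1$. Hence the intersection hypothesis is needed only for $\tilde m_1(A_p^{\rm conflict})>0$.

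Third, your geometric floor is sounder than the paper's route. The paper combines $s^j_{p,\mathrm{do},t+1}\ge\eta_1+\eta_2\mathrm{Agg}_t-\delta s^j_{p,\mathrm{do},t}$, a bound that can be negative, with $\mathrm{Agg}_t\ge m_1(\Theta_p^{\rm rev})$. Yet its own estimate lets the revenger mass fall below $m_1(\Theta_p^{\rm rev})$.
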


\begin{proof}
Let \(\alpha_1=m_1(\Theta_p^{\rm rev})>0\).  Under the first Assumption, diaspora arms funding ensures local recipients have resources to choose conflict actions without external constraint.  By Assumption 2, any \(j\) with \(\theta^j_p\in\Theta_p^{\rm rev}\) and trauma \(s^j_{p,\mathrm{do},t}>0\) finds retaliation strictly optimal, so \(a^j_{p,\mathrm{do},t}\in A_p^{\rm conflict}\).  

Under Assumption 3, such conflict choices raise next‐period trauma:
\[
s^j_{p,\mathrm{do},t+1}
\;\ge\;\eta_1 + \eta_2\,\mathrm{Agg}_t - \delta\,s^j_{p,\mathrm{do},t}.
\]
Since \(\mathrm{Agg}_t\ge \alpha_1\) (the minimum conflict share) and \(\eta_1,\eta_2>0\), trauma remains positive with strictly positive probability each period.

By Assumption 4, higher \(s^j_{p,\mathrm{do},t}\) increases the chance that \(\theta^j_{p,t+1}\) remains in \(\Theta_p^{\rm rev}\).  Hence
\[
m_{t+1}(\Theta_p^{\rm rev})
\;\ge\;
m_t(\Theta_p^{\rm rev})
\;\times\;
\Pr\bigl(\theta^j_{p,t+1}\in\Theta_p^{\rm rev}\mid s^j_{p,\mathrm{do},t}\bigr)
\;>\;0.
\]
By induction on \(t\), \(m_t(\Theta_p^{\rm rev})\ge\alpha_1\prod_{k=1}^{t-1}\beta_k>0\).  Therefore \(\tilde m_t(A_p^{\rm conflict})>0\) for each \(t\), establishing persistent revenge cycles.
\end{proof}

Diaspora arms funding acts as an external amplifier of violence by injecting financial and material resources into already fragile conflict systems, enabling communities to bypass disarmament pressures and escalate local disputes into militarized confrontations. These funds, often motivated by protection or revenge, allow local actors to access weapons, organize retaliatory actions, and sustain defense structures beyond their organic capacities. This external support lowers the cost and raises the perceived legitimacy of violence, particularly for groups with historical grievances or trauma, thereby reinforcing cycles of retribution. Moreover, as rival communities observe such arming, they too mobilize defensively or preemptively, triggering an arms race that is difficult to reverse. The diaspora, operating from abroad and often detached from ground realities, may misinterpret symbolic insecurity as existential threat, leading to overfunding and persistent arming. Over time, this feedback loop hardens social divisions, entrenches militarized behavior, and reproduces revenge-oriented identities across generations, locking entire communities into self-sustaining cycles of fear, trauma, and violence.

\subsubsection{Breaking the Cycle through Incentives.}  
By aligning material and reputational rewards with de‐escalatory behavior, carefully calibrated incentives can overturn the risk–reward calculus that sustains militant payoffs and shift the population composition toward peaceful types.  Performance‐contingent transfers, indexed to observable metrics such as school attendance, cooperative water‐management outcomes, or verified reductions in local violence, augment the marginal benefit of actions in $\mathcal A^{\mathrm{peace}}$ while imposing opportunity costs on $\mathcal A^{\mathrm{conflict}}$.  When these incentives are set so that even individuals harboring a revenge attitude or deep injustice grievances attain higher expected utility by engaging in community co‐development than by resorting to retaliation, the best‐response policy for borderline types moves definitively into the nonviolent region.  As successive cohorts inherit both a growing mass of peaceful types and the tangible returns to collaboration, the mean‐field distribution $m_t$ converges to an equilibrium dominated by peaceful actions.  By reducing informational asymmetries through transparent verification of program outcomes and ensuring intergenerational retention of benefits (improved health and educational prospects for participants' children), this incentive architecture creates a self‐sustaining cycle in which cooperation outcompetes the old equilibrium of cyclical violence.  

\subsubsection*{Incentive‐Design to Drive Decay of Violence Mass}

We embed a \emph{peace‐incentive} $\pi_{\rm peace}$ into the local recipients’ payoff, augmenting $u^{j}_{p,t}$ by rewarding non‐violent actions.  Under this modified utility, we show the mass of revenger types $\Theta_{p}^{\rm rev}$ decays geometrically over the finite horizon.

\begin{theorem}[Geometric Decay of Revenger Mass]

Suppose Assumptions from the persistence result hold and 
Peace Incentive: 
There exists a bounded continuous function 
\[
\pi_{\rm peace}\bigl(\theta^j_p,s^j_p,a^j_p,\mu_t,\omega_t\bigr)
\]
such that for all $\theta^j_p\in\Theta_p$, 
\[
\pi_{\rm peace}(\cdot,a_{p}^{\rm peace},\mu_t,\omega_t)
\;-\;
\pi_{\rm peace}(\cdot,a_{p}^{\rm conflict},\mu_t,\omega_t)
\;\ge\;\gamma>0,
\]
whenever $a_{p}^{\rm peace}\in A_{p}^{\rm peace}$ and $a_{p}^{\rm conflict}\in A_{p}^{\rm conflict}$.

Reversal of Best‐Response: 
For each revenger type $\theta^j_p\in\Theta_p^{\rm rev}$, there exists an incentive strength $\gamma^\star>0$ such that when $\pi_{\rm peace}$ satisfies Assumption with $\gamma\ge\gamma^\star$, the unique best‐response policy of $\theta^j_p$ for all $s^j_{p,\mathrm{do}}$ and mean‐fields shifts into $A_{p}^{\rm peace}$.

Healing Dynamics: 
With peaceful action $a_{p}^{\rm peace}$, the trauma component decays:
\[
s^j_{p,\mathrm{do},t+1}
= (1-\delta)\,s^j_{p,\mathrm{do},t},
\quad
\delta\in(0,1).
\]

Type Reversion:
The type‐transition kernel $\Gamma_p$ satisfies for $\theta^j_p\in\Theta_p^{\rm rev}$:
\[
\Pr\bigl(\theta^j_{p,t+1}\notin\Theta_p^{\rm rev}\mid s^j_{p,\mathrm{do},t}\bigr)
\;\ge\;\kappa\,s^j_{p,\mathrm{do},t},
\]
for some $\kappa>0$, so that lower trauma increases chance of becoming peaceful.

  Then, choosing the peace incentive strength $\gamma\ge\gamma^\star$, the revenger mass
\(\;m_t(\Theta_p^{\rm rev})\)
decays at least geometrically:
\[
m_{t+1}(\Theta_p^{\rm rev})
\;\le\;(1 - \kappa\,\delta)\;m_t(\Theta_p^{\rm rev}),
\]
for every $t\in \{1,\dots,T-1\}$.  Hence
\[
m_t(\Theta_p^{\rm rev})
\;\le\;(1 - \kappa\,\delta)^{\,t-1}\,m_1(\Theta_p^{\rm rev})
\;\longrightarrow\;0
\quad\text{as }t\to T.
\]
\end{theorem}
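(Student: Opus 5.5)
The plan is to run the persistence argument of the previous theorem in reverse. First I show that under the incentive every agent plays a peaceful action. Then I show that the healing and type-reversion mechanisms produce a net outflow from $\Theta_p^{\rm rev}$ that is at least a fixed fraction of its current mass. I work with the mixed-agent law $m_t$ and the type kernel $\Gamma_p$ only, treating the diaspora actions and macro-states as exogenous inputs, as in the proof of the persistence result.

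\emph{Step 1 (best-response reversal).} Fix $\gamma\ge\gamma^\star:=\sup_{\theta\in\Theta_p^{\rm rev}}\gamma^\star(\theta)$, assuming this supremum is finite. For finitely many revenger types it is a maximum; otherwise boundedness of $S_{\rm security}$, $R_{\rm risk}$ and $B_{\rm revenge}$ would have to be added. The augmented utility is $u^j_{p,t}+\pi_{\rm peace}$, and its gap between any $a\in A_p^{\rm peace}$ and any $a\in A_p^{\rm conflict}$ is shifted by at least $\gamma$. The Reversal hypothesis then gives that every $\theta^j_p\in\Theta_p^{\rm rev}$ has its unique best response in $A_p^{\rm peace}$ for all trauma levels and mean fields. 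Non-revenger types face the same upward shift on peaceful actions. So in any equilibrium $\tilde m_t(A_p^{\rm conflict})=0$, hence $\mathrm{Agg}_t=0$, and the Healing Dynamics apply to every agent: $s^j_{p,\mathrm{do},t+1}=(1-\delta)s^j_{p,\mathrm{do},t}$.

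\emph{Step 2 (no inflow).} I need $\Gamma_p$ to send no mass from the complement of $\Theta_p^{\rm rev}$ into $\Theta_p^{\rm rev}$ while all agents act peacefully with decaying trauma. The Type Persistence assumption only says that entry is monotone in trauma. I would therefore add, or read into the construction of $\Gamma_p$, the condition that entry into $\Theta_p^{\rm rev}$ requires a conflict action or positive aggregate aggression; both are ruled out by Step 1. With this, the revenger mass obeys the outflow identity
\[
m_{t+1}(\Theta_p^{\rm rev})=\int_{\Theta_p^{\rm rev}}\Bigl(1-\Pr\bigl(\theta^j_{p,t+1}\notin\Theta_p^{\rm rev}\mid s^j_{p,\mathrm{do},t}\bigr)\Bigr)\,m_t(d\theta,ds).
\]

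\emph{Step 3 (contraction) and the main obstacle.} Insert the Type Reversion bound into the outflow identity to get $m_{t+1}(\Theta_p^{\rm rev})\le\bigl(1-\kappa\inf s^j_{p,\mathrm{do},t}\bigr)\,m_t(\Theta_p^{\rm rev})$, where the infimum is over revengers. The claimed factor $1-\kappa\delta$ follows only if every agent still in $\Theta_p^{\rm rev}$ has trauma at least $\delta$. This is the delicate point. Healing drives trauma to zero, so a bound of the form $\kappa s$ on the exit probability weakens over time, which runs against the stated intuition that lower trauma eases reversion.

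I would resolve it in one of two ways. The first uses the fact that trauma is normalized in $[0,1]$ and that the revenger property is defined on the trauma manifold $S^j_{p,\rm trauma}$, which I take to lie in $\{s\ge\delta\}$. An agent whose trauma falls below $\delta$ then has left $S^j_{p,\rm trauma}$ and thus $\Theta_p^{\rm rev}$ by definition, so the surviving revengers satisfy $s\ge\delta$. The second replaces the reversion bound by the per-step healing decrement, $\Pr(\text{exit})\ge\kappa\bigl(s_t-s_{t+1}\bigr)/s_t\cdot\ldots$, normalized so that it equals $\kappa\delta$. I would adopt the first, since it is consistent with the definition of a revenger agent.

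Given the uniform factor, iterating from $t=1$ yields $m_t(\Theta_p^{\rm rev})\le(1-\kappa\delta)^{t-1}m_1(\Theta_p^{\rm rev})$. Since $\kappa\delta\in(0,1]$, which I also need to check, the right-hand side tends to zero as $T$ grows.
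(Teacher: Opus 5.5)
Your outline is the paper's own: reverse the revengers' best responses, apply healing, integrate the Type Reversion bound over the revenger subpopulation, bound trauma below by $\delta$, and iterate. You have correctly located the step that carries the weight. The paper obtains the floor only by asserting that, because trauma ``decays from its initial value and remains positive'', $\mathbb{E}_{m_t}[s]\ge\delta\,m_t(\Theta_p^{\rm rev})$. Healing does not imply this: the floor fails already at $t=1$ if initial trauma is below $\delta$, and in any case once $(1-\delta)^{t-1}s_1<\delta$.

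Your resolution does not close the gap. In the paper, being a revenger is a property of the type $\theta^j_p$: there is a trauma set on which $a_r$ is the unique best response. It is not a property of the current state. An agent whose trauma leaves $S^j_{p,\rm trauma}$ keeps its type, and types change only through $\Gamma_p$. Under the incentive no type satisfies the utility-based definition any more, so $\Theta_p^{\rm rev}$ can only be a fixed label set. Declaring that trauma below $\delta$ means exit ``by definition'', via the ad hoc choice $S^j_{p,\rm trauma}\subset\{s\ge\delta\}$, changes the model. It would even make the kernel superfluous, since healing would then empty $\Theta_p^{\rm rev}$ deterministically in finitely many steps.

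A repair within the stated hypotheses is available. The statement imports Type Persistence, so for a revenger the probability of staying is strictly increasing in trauma; equivalently, the exit probability $e(s)$ is decreasing. Together with Type Reversion this gives $e(s)\ge e(s')\ge\kappa s'$ for all $s'\ge s$ in the trauma range, hence $e(s)\ge\kappa\bar s$, where $\bar s$ is the top of that range. With your normalization $\bar s=1$, every revenger exits with probability at least $\kappa\ge\kappa\delta$, however far it has healed. This monotonicity is the formal content of the intuition ``lower trauma eases reversion'' that you noticed was in tension with the bare bound $\kappa s$.

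Step 2 has a second gap. It rests on Step 1's claim that non-revenger types also turn peaceful, which you cannot derive. The Reversal hypothesis is stated only for $\Theta_p^{\rm rev}$, and the model explicitly contains aggressive types whose best response lies in $A_p^{\rm conflict}$. An upward shift of $\gamma$ on peaceful actions need not flip them. So $\mathrm{Agg}_t=0$ is unproven, and your entry condition is never triggered.

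Moreover, Type Persistence applied to a non-revenger makes its entry probability strictly increasing in trauma, hence positive as soon as its trauma is positive. So ``no inflow'' cannot be read into $\Gamma_p$; it must be added as an explicit hypothesis for the incentive regime, overriding Type Persistence for non-revengers. The paper's inequality $m_{t+1}(\Theta_p^{\rm rev})\le\int_{\Theta_p^{\rm rev}}(1-\kappa s)\,dm_t$ uses it tacitly. Without it the claimed per-step bound is false, e.g.\ when $m_t(\Theta_p^{\rm rev})=0$ while some non-revenger carries positive trauma.
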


\begin{proof}
By assumption every revenger type $\theta^j_p\in\Theta_p^{\rm rev}$ chooses $a_{p}^{\rm peace}$ once $\gamma\ge\gamma^\star$.  By Assumption , their trauma decays:
\[
s^j_{p,\mathrm{do},t+1}
= (1-\delta)\,s^j_{p,\mathrm{do},t}.
\]
Then, for each such agent, the probability of \emph{remaining} in $\Theta_p^{\rm rev}$ satisfies
\[
\Pr\bigl(\theta^j_{p,t+1}\in\Theta_p^{\rm rev}\mid s^j_{p,\mathrm{do},t}\bigr)
\;\le\;1 - \kappa\,s^j_{p,\mathrm{do},t}.
\]
Taking expectation over the subpopulation $m_t\vert_{\Theta_p^{\rm rev}}$,
\[
m_{t+1}(\Theta_p^{\rm rev})
\;\le\;
\int_{\Theta_p^{\rm rev}} 
\bigl[1 - \kappa\,s^j_{p,\mathrm{do},t}\bigr]
\,dm_t(\theta^j_p)
\;=\;
m_t(\Theta_p^{\rm rev})
\;-\,\kappa\;
\mathbb E_{m_t}[\,s^j_{p,\mathrm{do},t}\,\bigm|\Theta_p^{\rm rev}\bigr].
\]
But since under peace $s^j_{p,\mathrm{do},t}\) decays from its initial value and remains positive, we have
\(\mathbb E_{m_t}[s^j_{p,\mathrm{do},t}]\ge \delta\,m_t(\Theta_p^{\rm rev})\).
Thus
\[
m_{t+1}(\Theta_p^{\rm rev})
\;\le\;
m_t(\Theta_p^{\rm rev})
\;-\;\kappa\,\delta\;m_t(\Theta_p^{\rm rev})
\;=\;
(1-\kappa\,\delta)\;m_t(\Theta_p^{\rm rev}),
\]
establishing the geometric contraction.  Iterating yields the claimed bound and convergence to zero by finite horizon \(T\).
\end{proof}

\begin{example}[Funding the incentive mechanism]
Funding performance‐contingent incentives  can draw on a diversified portfolio of domestic and international resources. Reallocating a portion of national and local government budgets, specifically from security and emergency relief lines, into a conditional grants program ties peace dividends directly to de‐escalatory behavior. Multilateral donors, bilateral partners, and UN agencies can underwrite matched‐fund schemes that scale with local revenue generation. Using diaspora remittances through ``peace bonds" or matched‐savings accounts incentivizes expatriates to co‐finance schooling, health clinics, or micro‐irrigation projects in their home communities. Public-private partnerships with mining, telecom, and agribusiness firms can direct a share of corporate social responsibility funds into outcome‐based grants, while innovative instruments such as social or co-development impact bonds, where investors front capital and are repaid by donors or governments only upon verified results, align risk and accountability. Carbon‐credit revenues from reforestation or sustainable land‐use projects offer a long‐term financing stream for community incentives, embedding co‐development within global environmental markets. By combining budgetary realignment, donor matching, private co‐investment, and market‐based finance, Mali can ensure robust, transparent, and sustainable funding for the incentive architecture that breaks cycles of violence. Efficient implementation hinges on decentralizing decision‐making, leveraging digital delivery, and embedding robust feedback loops:
Establish local steering committees, comprised of community leaders, civil‐society representatives, and peace‐incentive recipients, to co‐design performance metrics (school enrolment rates, communal labor days) and oversee disbursement. Use mobile‐money platforms  to transfer conditional cash or vouchers directly to households and service providers, minimizing leakage and administrative overhead. Partner with local NGOs and cooperatives to conduct real‐time monitoring via simple SMS or smartphone surveys, feeding data into a centralized dashboard that flags under‐performance or irregularities for rapid correction. Contract independent third‐party verifiers for instance local university research teams or accredited auditors, to validate outcomes at pre‐scheduled intervals, ensuring that payments reflect genuine community gains. Build a lightweight data‐management system (cloud‐hosted, open‐source GIS and mobile forms) so that performance against targets, budget utilization, and grievance submissions are transparent to all stakeholders. Schedule quarterly ``learning fora" where results are reviewed by donors, government, and communities jointly, allowing incentive parameters to be adaptively tuned (raising match‐rates for lagging sectors or introducing new peace‐building indicators). By combining community ownership, digital transfers, rigorous monitoring, and adaptive governance, funded incentives can be scaled across Mali’s regions with both speed and accountability. 
\end{example}

\begin{tikzpicture}[
    font=\sffamily,
    >=Stealth,
    node distance=0.2cm
]

    \draw[->, thick] (0,0) -- (12,0) node[below=5pt, pos=0.5, font=\bfseries] {Time (or Cumulative Actions/Stress)};
    \draw[->, thick] (0,0) -- (0,8) node[rotate=90, left=15pt, pos=0.5, font=\bfseries] {Cumulative System Stress $S_t$};
    
    \coordinate (Scrit) at (0,6);
    \draw[dashed, ultra thick, CriticalRed] (0,6) -- (11.5,6) node[above, pos=0.9, font=\bfseries\Large, color=CriticalRed] {$S_{\rm crit}$};
    \node[anchor=east, CriticalRed, font=\bfseries, align=right] at (Scrit) {Critical Threshold\\(Regime Shift Limit)};

    \draw[->, ultra thick, StressGray!60] (0,0) -- (4,2.5) coordinate (A) node[midway, below=3pt, rotate=28, font=\scriptsize, color=StressGray] {Latent Accumulation (Invisible Build-up)};
    
    \draw[->, ultra thick, ActionBlue] (A) -- (5,3.125) coordinate (B) node[midway, above left, font=\scriptsize\bfseries] {Marginal\\Action 1};
    \draw[dotted, thick, StableGreen] (B) -- (11,3.125);
    \node[anchor=west, font=\scriptsize, StableGreen, align=left] at (11,3.125) {Observable Shift: \textbf{Zero}\\Result: \textbf{Stable Regime}};

    \draw[->, ultra thick, StressGray!60] (B) -- (8.5,5.5) coordinate (C) node[midway, below=3pt, rotate=28, font=\scriptsize, color=StressGray] {Ecological Load / Political Tension};

    \fill[CriticalRed!10] (8.5,5.5) rectangle (11.5,6.5);
    \node[anchor=south west, font=\scriptsize, color=CriticalRed, align=left] at (8.5,6.0) {Narrow Corridor\\of Instability};
    
    \draw[->, ultra thick, ActionBlue] (C) -- (9.3,6.02) coordinate (D) node[midway, above left, font=\scriptsize\bfseries] {Marginal\\Action 2};
    \node[circle, fill=ActionBlue, inner sep=2pt, label={[ActionBlue, font=\bfseries]right:Trigger Point ($S_t \ge S_{\rm crit}$)}] at (D) {};

    \node[draw, rectangle, rounded corners, inner sep=5pt, align=left, fill=StressGray!10] (fragility) at (7, 1.5) {
        \textbf{Cumulative Stress $S_t$ encodes Fragility:}\\
        - Ecological load (Mali 2006, Burkina 2017)\\
        - Financial leverage (Mali 2012 Coup)\\
        - Political tension (Mali 2020/21 Coup)\\
        - Social frustration (Mali 2017 Killings)
    };
    \draw[->, thick, StressGray] (4,1.5) to[bend left] (fragility);
    
    \draw[->, thick, CriticalRed, line width=1.5pt] (D) -- (10.5, 0.2) node[midway, right=3pt, font=\bfseries, color=CriticalRed, align=left] {Sudden, Discontinuous transition\\Catalyzed by otherwise\\insignificant action};
    
    \draw[dotted, thick, CriticalRed] (9.3,0.2) -- (11.5,0.2);
    \node[anchor=east, font=\scriptsize, CriticalRed, align=right] at (9.3,0.2) {Result: \textbf{Crisis Regime}};

    \begin{scope}[font=\scriptsize]
        \node[above=0.3cm of A] (feat1) {\textbf{1. Individual actions are context-dependent}};
        \node[below=0.1cm of C, StressGray] (feat2) {\textbf{2. There exists a cumulative stress variable $S_t$}};
        \node[left=0.2cm of Scrit, CriticalRed, anchor=south west] (feat3) at (0,6) {\textbf{3. A critical threshold $S_{\rm crit}$ exists}};
        \node[below=0.1cm of D, CriticalRed] (feat4) {\textbf{4. Magnification of small actions}};
    \end{scope}

\end{tikzpicture}

\subsection{The Straw that Breaks the Camel's Back}  
We use the metaphor of the \emph{``straw that breaks the camel’s back''}  (see Tables \ref{camelfig1}, \ref{camelfig2}, \ref{camelfig3}, \ref{camelfig4}, \ref{camelfig5}) to describe systems in which cumulative stress evolves over time and exhibits abrupt, discontinuous transitions once a critical threshold is reached. This phenomenon can be characterized by four defining features.
\emph{Individual actions are typically  context-dependent}. In ordinary conditions, a single agent’s contribution produces no observable shift in the aggregate outcome. However, when the system is close to a critical point, even the most marginal action can exert a disproportionate effect. Thus, the impact of an action is endogenous: it depends not only on its magnitude but also on the latent vulnerability of the system. \emph{There exists a cumulative stress variable}, denoted $S_t$, that aggregates the contributions of all agents and external shocks. This variable encodes the system’s fragility, capturing ecological load, financial leverage, political tension, or social frustration, depending on the context.
 \emph{A critical threshold $S_{\rm crit}$ exists}, beyond which the system undergoes a regime shift. When $S_t \geq S_{\rm crit}$, the dynamics exhibit nonlinearity and discontinuity: the system may collapse, erupt into unrest, or cascade into crisis. This transition is both sudden and disproportionate relative to the final incremental action that triggered it.
\emph{Any individual action may serve as the trigger once the threshold is approached}. Unlike models where only atomic players or large shocks matter, here even infinitesimal contributions can precipitate the transition if they arrive when $S_t$ lies within the narrow corridor of instability near $S_{\rm crit}$. The criticality of the system magnifies small actions into decisive events.
A latent accumulation of stress, invisible in its gradual build-up, culminates in a sudden, discontinuous transition catalyzed by an action that would otherwise be insignificant.

\begin{table}[H]
\centering
\small
\begin{tabular}{|p{1.2cm}|p{3.0cm}|p{5.0cm}|p{6.0cm}|}
\hline
\textbf{Year} & \textbf{Location} & \textbf{Trigger ("drop")} & \textbf{Consequences} \\
\hline
2006 & Kidal & Tuareg uprising over unfulfilled peace accords & Renewed rebellion in the north, exposing fragility of prior agreements. \\
\hline
2012 & Kati (near Bamako) & Soldiers’ mutiny over poor equipment & Coup d’État ousted President Amadou Toumani Touré, collapse of state authority. \\
\hline
2012 & Azawad (North) & Tuareg and jihadist takeover of major cities & De facto partition of the country, prompting international concern. \\
\hline
2013 & Konna & Jihadist advance on the town & Triggered French intervention (Operation Serval). \\
\hline
2015 & Bamako & Radisson Blu hotel terrorist attack & International shock, symbol of jihadist penetration in the capital. \\
\hline
2017 & Mopti & Killings between several communities & Marked the escalation of intercommunal violence in central Mali. \\
\hline
2020 & Bamako & M5-RFP protest movement after contested elections & Nationwide mobilization leading to coup against President Keïta. \\
\hline
2021 & Bamako & Arrest of civilian leaders by military & “Coup within a coup,” consolidating military power under Assimi Goïta. \\
\hline
2022 & Moura  & Military operation with reported mass killings & Alleged massacre of hundreds of civilians; international condemnation. \\
\hline
\end{tabular}
\caption{“Straw” events in Mali (2006–2022).} \label{camelfig1}
\end{table}

\begin{table}[h!]
\centering
\small
\begin{tabular}{|p{1.2cm}|p{3.0cm}|p{5.0cm}|p{6.0cm}|}
\hline
\textbf{Year} & \textbf{Location} & \textbf{Trigger ("drop")} & \textbf{Consequences} \\
\hline
2016 & Nampala (Ségou Region) & Armed assault on army camp (several soldiers killed) & Major loss of state control; army morale deeply shaken. \\
\hline
2017 & Gao & Suicide bombing at MOC camp (peace process site) & several killed; collapse of fragile trust in peace accords. \\
\hline
2018 & Mopti Region & Burning of a passenger bus on market day, civilians trapped inside & Dozens burned alive; symbol of jihadist terror against civilians. \\
\hline
2018 & Indelimane (Ménaka) & Jihadist attack on army post & Dozens of soldiers killed; revealed fragility of eastern front. \\
\hline
2019 & Sobane Da (Dogon village, Mopti) & Massacre of several villagers & Marked extreme escalation of intercommunal killings. \\
\hline
2020 & Farabougou (Ségou Region) & Village siege by armed groups & Months-long blockade; emblematic of state’s impotence. \\
\hline
2021 & Tessit (Gao Region) & Coordinated jihadist assault on army base & Dozens of deaths; turning point in tri-border insecurity. \\
\hline
2021 & Kayes Region & Arrival of armed groups in western Mali & Conflict expansion beyond traditional northern/central zones. \\
\hline
2022 & Bankass & Armed groups attack market day & Dozens massacred; economic and social collapse in the cercle. \\
\hline
2022 & Mondoro & Deadly assault on Malian troops (several dead) & Showed limits of army and Wagner partnership in rural defense. \\
\hline
\end{tabular}
\caption{Other “Straw” events in Mali (2016-2022).} \label{camelfig2}
\end{table}

\begin{table}[H]
\centering
\small
\begin{tabular}{|p{1.2cm}|p{3.0cm}|p{5.0cm}|p{6.0cm}|}
\hline
\textbf{Year} & \textbf{Location} & \textbf{Trigger ("drop")} & \textbf{Consequences} \\
\hline
2014 & Ouagadougou & Compaoré’s plan for a 3rd term via constitutional change & Mass protests forced his resignation after 27 years in power. \\
\hline
2014 & National & Sankara’s memory revived in demonstrations & Mobilized youth, delegitimized the regime, symbol of rupture. \\
\hline
2015 & Ouagadougou & Attempted coup by RSP (Presidential Guard) & Civil resistance defeated the coup, accelerating army reform. \\
\hline
2016 & Ouagadougou & Terrorist attack at Cappuccino Café & First large-scale jihadist attack in capital; shift in perception of insecurity. \\
\hline
2017 & Soum Province & Nascent jihadist insurgency in north & Start of continuous asymmetric attacks across the country. \\
\hline
2018 & Inata & Targeted attack on gendarmerie post & Dozens of security forces killed; exposed state’s vulnerability. \\
\hline
2019 & Yirgou (Centre-Nord) & Intercommunal massacre after local killing & Sparked a spiral of reprisals and entrenched community violence. \\
\hline
2020 & Gaoua (Southwest) & Local land dispute between farmers and herders & Escalated into deadly intercommunal clashes in previously stable zone. \\
\hline
2021 & Solhan (Yagha Province) & Massacre by armed groups (over 160 dead) & Deadliest attack in Burkina Faso’s history, national shock. \\
\hline
2022 & Ouagadougou & Military dissatisfaction with handling of jihadist threat & Two coups d’État in same year; regime instability deepened. \\
\hline
\end{tabular}
\caption{“Last straw” events in Burkina Faso (2010–2022).} \label{camelfig3}
\end{table}

\begin{table}[H]
\centering
\small
\begin{tabular}{|p{1.2cm}|p{3.0cm}|p{5.0cm}|p{6.0cm}|}
\hline
\textbf{Year} & \textbf{Location} & \textbf{Trigger ("drop")} & \textbf{Consequences} \\
\hline
2007 & Agadez Region & Tuareg grievances over mining revenues & Sparked renewed rebellion (MNJ), highlighting north–south divide. \\
\hline
2009 & Niamey & Tandja’s attempt to extend mandate (constitutional change) & Political crisis leading to 2010 coup d’État. \\
\hline
2010 & Niamey & Military coup against Tandja & Transition government formed, suspension of democratic institutions. \\
\hline
2011 & Diffa & First Boko Haram incursions across border & Opened new front of jihadist violence in Niger’s southeast. \\
\hline
2013 & Agadez & Twin attacks (military base and Areva uranium site) & Exposed Niger’s strategic vulnerability despite foreign presence. \\
\hline
2015 & Niamey, Zinder & Violent protests over Charlie Hebdo cartoons & Churches and French interests attacked; revealed religious tensions. \\
\hline
2017 & Tongo Tongo (Tillabéri) & Ambush killing U.S. and Nigerien soldiers & Exposed jihadist strength, internationalized the conflict. \\
\hline
2019 & Inates (Tillabéri) & Mass attack on military camp (over 70 dead) & Major army setback, triggered public anger against government. \\
\hline
2020 & Kouré & Assassination of French humanitarian workers & Shock event underscoring insecurity even near capital. \\
\hline
2023 & Niamey & Coup d’État against President Bazoum & End of democratic experiment, deepened regional crisis. \\
\hline
\end{tabular}
\caption{“Last straw” events in Niger (2007–2023).} \label{camelfig4}
\end{table}

\begin{table}[H]
\centering
\small
\begin{tabular}{|p{1.2cm}|p{2.2cm}|p{4.2cm}|p{6.8cm}|}
\hline
\textbf{Year} & \textbf{Country} & \textbf{Trigger ("drop")} & \textbf{Consequences} \\
\hline
2010 & Tunisia & Food price increase & Sparked nationwide protests leading to the Arab Spring. \\
\hline
2010 & Tunisia & Self-immolation of Mohamed Bouazizi & Triggered regime change and regional uprisings. \\
\hline
2014 & Burkina Faso & Attempted constitutional change (3rd term) & Mass protests forced President Compaoré to resign. \\
\hline
2013 & Mali & Jihadist attack on Konna & Led to French military intervention (Operation Serval). \\
\hline
2014 & Burkina Faso & Sankara legacy revived in protests & Mobilized youth and delegitimized the regime. \\
\hline
2010-11 & Côte d’Ivoire & Gbagbo refusal to concede defeat & Post-election violence killed thousands. \\
\hline
2015 & Nigeria & Boko Haram massacre in Baga & Psychological rupture; army credibility shattered. \\
\hline
2012 & Mali & Attempted targeting of imam Haïdara & Heightened sectarian tensions in Bamako. \\
\hline
2017-19 & West Africa & Anti-CFA franc protests & Reopened debate on monetary sovereignty. \\
\hline
2019 & Burkina Faso & Yirgou intercommunal massacre & Sparked cycle of reprisals and escalating violence. \\
\hline
2006 & Cameroon & Student protest over caricatures & Spread into violent riots with religious undertones. \\
\hline
2017 & Côte d’Ivoire & Bouaké soldiers’ mutiny & Army unrest paralyzed the state. \\
\hline
2014 & DRC & Killing of a traditional leader in Beni & Renewed massacres and community war in Kivu. \\
\hline
2005-09 & South Africa & Zuma corruption and rape trial & Political reshuffling; Mbeki ousted. \\
\hline
2013 & Italy/Africa & Lampedusa migrant boat tragedy (26 deaths) & Migration crisis gained global attention. \\
\hline
2012 & Mali & Soldiers’ mutiny in Kati & Coup d’État ousted President Touré. \\
\hline
2019 & DRC & Death of singer Kabongo & Popular grief turned into political protest. \\
\hline
2020 & Burkina Faso & Land dispute in Gaoua & Intercommunal violence escalated in the south-west. \\
\hline
2012 & Nigeria & Fuel subsidy removal & “Occupy Nigeria” mass movement paralyzed economy. \\
\hline
2018 & DRC & Death of activist Luc Nkulula & Radicalized youth movements against Kabila. \\
\hline
\end{tabular}
\caption{Some  "Straw" events in Africa (2000-2020).} \label{camelfig5}
\end{table}

 In a population of 80 million, the phenomenon of \emph{``the straw that breaks the camel’s back"} corresponds, in mean-field terms, to a system in which a single state, a single action, an individual state distribution \(\mu_t\) and  an individual action distribution \(\tilde\mu_t\) are going to be important when 
population state distribution \(m_t\) and action distribution \(\tilde{m}_t\) are concentrated near critical thresholds of tolerance or trauma. When a single individual agent regardless of its class or type, takes a provocative or aggressive action \(a^{i^*}_t\), this initiates a shift of the distribution \(\tilde{m}_t\) measurably, potentially crossing a tipping point where the best responses 
\(a^{\star}(s, I, \mu_t, \tilde\mu_t, m,\tilde{m}_t)\) for many agents switch from peaceful to retaliatory. This results in a cascade, updating \(m_{t+1}\) and \(\tilde m_{t+1}\) toward increased aggression. The mean-field-type feedback loop amplifies the single action’s impact, making it the effective trigger for mass unrest or escalation, demonstrating how, even in a large system, individual actions can shift equilibria when the system is already near critical instability. The phenomenon can be examined using singularity formation of the mean-field knowing everyone can be actor of the singularity formation. The formal analogue of the proverb now becomes: {\it when the collective frustration or instability is full, even a tiny drop (one action from a one single individual agent) makes it overflow}

{\bf Tipping Point from Individual Action: } \emph{In a population with a highly concentrated distribution of agents near a threshold of frustration or trauma, a single strategic action by one agent can be sufficiently influential and can trigger a system-wide shift from a peaceful configuration to a violent one. In other words, when the collective frustration or instability is full, even a tiny drop (one action) makes it overflow.}

\begin{theorem}[Cascade Over \(k\) Periods of generation $g$]
Consider $N+1$ agents.  Each agent’s binary action \(a\in\{0,1\}\) denotes peace (\(0\)) or conflict (\(1\)).  Mixed‐agent trauma \(s^j_t\) is identical across \(j\), denoted \(s_t\).  

The weighted conflict fraction is
$
F_t \;=\;\frac{w\,a^{i^*}_t + \sum_{j=1}^N a^j_t}{\,w+N\,},
\quad w>1.$
The trauma dynamics $
s_{t+1} \;=\; s_t \;+\;\beta\,F_t,\quad \beta>0.$
Fix constants \(B,C>0\) and define the conflict threshold \(\tau = C/B\).  At decision time \(t\), each mixed agent \(j\) in state \(s_t\) has
\[
u^j_t(a; s_t)
\;=\;
a\bigl(B\,s_t - C\bigr),
\]
so that playing \(a=1\) yields \(B\,s_t - C\) and \(a=0\) yields \(0\).
Assume at time \(t\):
\[
a^{i^*}_t=1,\quad a^j_t=0\ \forall j,\quad
s_t \in \Bigl[\tau - \tfrac{\beta\,w}{w+N},\;\tau\Bigr).
\]
Then for every mixed agent \(j\) and every \(\ell\in \{1,2,\dots,k\}\), where \(k=T-t\),
\[
a^j_{t+\ell}=1,\quad F_{t+\ell}=1,
\]
and the trauma satisfies
\[
s_{t+\ell}
= s_t + \frac{\beta\,w}{w+N} + (\ell-1)\,\beta
\;\ge\;\tau.
\]
Thus conflict persists from \(t+1\) through \(t+k\) if there is other external intervention.
\end{theorem}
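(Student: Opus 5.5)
The plan is a direct induction on $\ell$, driven by the best-response rule $u^j_t(a;s_t)=a(Bs_t-C)$: a mixed agent plays conflict exactly when $s_t\ge\tau=C/B$. The trauma recursion $s_{t+1}=s_t+\beta F_t$ is nondecreasing, so once $s$ crosses $\tau$ it can never fall back below it.

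First I check that the hypothesis at time $t$ is consistent. Since $s_t<\tau$, we have $Bs_t-C<0$, so $a^j_t=0$ is indeed each mixed agent's strict best response. With $a^{i^*}_t=1$, this gives
\[
F_t=\frac{w}{w+N}, \qquad s_{t+1}=s_t+\frac{\beta w}{w+N}.
\]
The lower end of the assumed interval, $s_t\ge\tau-\beta w/(w+N)$, then yields $s_{t+1}\ge\tau$. This is the tipping step: the atomic weight $w>1$ of the single agent $i^*$ is exactly what lifts trauma across the threshold. Consequently $Bs_{t+1}-C\ge0$, and every mixed agent chooses $a^j_{t+1}=1$. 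This settles the base case $\ell=1$.

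For the inductive step, suppose that at time $t+\ell$ all agents play $1$ and that $s_{t+\ell}=s_t+\beta w/(w+N)+(\ell-1)\beta\ge\tau$. Then $F_{t+\ell}=1$, so $s_{t+\ell+1}=s_{t+\ell}+\beta\ge\tau$, which reproduces the stated closed form with $\ell+1$ in place of $\ell$. The best-response rule again forces $a^j_{t+\ell+1}=1$. I run this induction up to $\ell=k=T-t$, so that it stays within the horizon.

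The main obstacle is that the conclusion needs two facts the statement leaves implicit, and I will make both explicit.
\begin{enumerate}
\item \emph{Tie-breaking at the threshold.} The base case may give exactly $s_{t+1}=\tau$, where both actions yield payoff $0$. I will adopt tie-breaking toward conflict, or else note that strict inequality holds whenever $s_t>\tau-\beta w/(w+N)$. From $\ell\ge2$ onward there is no issue, since $s_{t+\ell}\ge\tau+\beta>\tau$.
\item \emph{Behaviour of the atomic agent.} The claim $F_{t+\ell}=1$ requires $a^{i^*}_{t+\ell}=1$ as well. The cleanest route is to assume $i^*$ shares the same threshold utility, in which case the argument above covers it too; alternatively, $i^*$ may simply be a persistent aggressor. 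Without either, one only gets $F_{t+\ell}\ge N/(w+N)$. The trauma still increases, so conflict among the mixed agents still persists, but the exact closed form would fail.
\end{enumerate}

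Finally, I read the concluding phrase as "unless there is external intervention." An incentive $\pi$ of the kind introduced in the previous subsection would shift the threshold $\tau$ and break the induction.
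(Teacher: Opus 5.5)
Your proposal is correct and follows essentially the same route as the paper: compute $F_t=w/(w+N)$ to get $s_{t+1}\ge\tau$, then induct on $\ell$ using $F_{t+\ell}=1$ and $s_{t+\ell+1}=s_{t+\ell}+\beta$. The two caveats you make explicit are exactly what the paper's proof uses without stating them: tie-breaking toward conflict when $s_{t+1}=\tau$, and the requirement $a^{i^*}_{t+\ell}=1$ needed for $F_{t+\ell}=1$ and the exact closed form. Your fallback bound $F_{t+\ell}\ge N/(w+N)$ is the right one, whereas the paper's accompanying claim $N/(w+N)\ge 1/2$ does not follow from $w>1$.
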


\begin{proof}
 At \(t\), mixed agents play \(a^j_t=0\), atomic plays \(a^{i^*}_t=1\).  Hence
\[
F_t = \frac{w\cdot1 + 0}{w+N} = \frac{w}{w+N},
\]
and
\[
s_{t+1}
= s_t + \beta\,F_t
= s_t + \frac{\beta\,w}{w+N}.
\]
By assumption \(s_t \ge \tau - \tfrac{\beta\,w}{w+N}\), so
\[
s_{t+1} \ge \tau.
\]

 At the decision epoch \(t+1\), each mixed agent faces utility
\[
u^j_{t+1}(1; s_{t+1})
= B\,s_{t+1} - C
\;\ge\;
B\,\tau - C
= 0,
\]
while \(u^j_{t+1}(0;s_{t+1})=0\).  Thus \(a^j_{t+1}=1\), so
\[
F_{t+1}
= \frac{\,w\,a^{i^*}_{t+1} + \sum_j a^j_{t+1}\,}{w+N}
\;\ge\;
\frac{0 + N}{w+N}
> 0.
\]
Even if \(a^{i^*}_{t+1}=0\), the \(N\) mixed agents yield \(F_{t+1}=N/(w+N)\ge1/2\) since \(w>1\).
We use an Induction for \(\ell\ge2\).  Suppose for some \(\ell\ge1\),
\[
a^j_{t+\ell}=1,\quad F_{t+\ell}=1,\quad
s_{t+\ell}
= s_t + \frac{\beta\,w}{w+N} + (\ell-1)\,\beta
\;\ge\;\tau.
\]
Then
\[
s_{t+\ell+1}
= s_{t+\ell} + \beta\,F_{t+\ell}
= s_t + \frac{\beta\,w}{w+N} + \ell\,\beta
\;\ge\;\tau,
\]
and at decision time \(t+\ell+1\),
\[
u^j_{t+\ell+1}(1; s_{t+\ell+1})
= B\,s_{t+\ell+1} - C
\;\ge\;
B\,\tau - C = 0.
\]
Hence \(a^j_{t+\ell+1}=1\) and thus \(F_{t+\ell+1}=1\).  By induction, this holds for all \(\ell\in \{1,\dots,k\}\).

Therefore conflict actions persist through \(t+k\), proving the claim.
\end{proof}

\subsection{Localized  Attacker Strategies}

\subsubsection*{ Simultaneous  Series Attacks  on Market Day}
\begin{figure}[htb!]
\begin{tikzpicture}[scale=0.6]

\fill[gray!20] (-1,-2.5) rectangle (17,2.5);

\foreach \x in {0.5,1.5,...,16}
  \draw[dashed, white, line width=1pt] (\x,-0.05) -- (\x+0.5,-0.05);

\node[draw, rectangle, minimum width=1cm, minimum height=1.5cm, fill=yellow!40] at (17.5, 0) {\Large\textbf{D}};

\foreach \x in {0,0.8,...,16} {
  \node at (\x, 1.8) {\faMotorcycle};
  \node at (\x+0.3, 0.6) {\faMotorcycle};
  \node at (\x, -0.6) {\faMotorcycle};
  \node at (\x+0.3, -1.8) {\faMotorcycle};
}

\node at (8, -2.3) {\small \textbf{Dense 4-lane motorbike convoy heading to destination D}};

\end{tikzpicture}
\caption{Attackers  convoys to Village D.} \label{convoy1re}
\end{figure}

In Sahelian zones such as  Mali, Burkina Faso, Niger and Chad coordinated attacks by armed bandit groups often exploit the vulnerability and predictability of rural market days. These assailants, frequently mobilized on dozens or even hundreds of motorcycles in staggered, multi-lane formations resembling military convoys, approach targeted villages such as  "D" ( see Figure \ref{convoy1re}) with the intent to loot, abduct, or sow terror. The use of fast-moving two-wheeled vehicles grants high mobility through rural tracks and bush corridors, enabling both rapid assembly and dispersal before state or community-based security responses can be mobilized. In the present configuration, a dense formation of motorcycles across four parallel lanes en route to the village reflects an organized and possibly rehearsed operation, suggesting premeditation and group hierarchy, possibly inspired by or coordinating with insurgent logistics. While the movement of motorbike convoys can blend into the legitimate flux of rural market traffic, especially on designated market days when commercial activity peaks, it also presents an opportunity for detection and early warning. In particular, low-cost civilian drones equipped with basic optical and infrared imaging can provide overhead surveillance of major roadways and chokepoints leading to rural markets. These drones, operating autonomously or via mesh networks, can differentiate between typical market-bound civilian traffic which tends to be dispersed, slow-moving, and multi-modal (including pedestrians, donkeys, and carts)  and abnormally dense, homogeneous motorbike clusters converging rapidly toward a singular destination. Key features such as motorbike density, speed, alignment, and formation can be used as precursors to flag potential threats. 
Using lightweight machine learning algorithms on-board or at the edge, such systems can trigger automated alerts to village authorities or mobile defense units. While false positives remain a concern on days of high legitimate traffic, especially when youth travel to market in groups, pattern-of-life data and anomaly detection techniques can reduce these errors over time. The strategic utility of cheap drones thus lies not merely in aerial imagery, but in building a persistent layer of local situational awareness, a vital capability in zones where state presence is minimal, and where asymmetric actors exploit terrain, timing, and surprise. The depiction of the convoy to village D illustrates both the tactical challenge of detecting disguised mobility and the promise of low-cost, civilian-grade technologies in strengthening anticipatory security for vulnerable populations.

{\bf If the timing of the attacks is known in advance due to multiple threats made by the attackers in the villages prior to execution, what is the benefit of detecting a coordinated motorbike assault on market day?}

Even when attackers issue threats in advance, detecting a coordinated motorbike assault on market day remains critical from a mean-field-type game-theoretic perspective, as it transforms uncertain signals into credible threats, altering the strategic equilibrium. In this intergenerational games, threats alone are cheap talk unless validated by intelligence (drones, land-based informants, and satellite imagery) provide this validation, enabling defenders to update their beliefs and respond adaptively. Detection changes the payoff: it raises the expected cost of attack by allowing for preemptive force deployment, targeted ambushes, and civilian evacuation, thus shifting the mixed-strategy equilibrium in favor of defenders. Given that market day offers attackers high-impact targets, the real-time tracking of coordinated motorcycle groups limits their ability to synchronize, disrupting multi-node strategies common in Stackelberg-type games. Moreover, detection enables dynamic resource allocation and strategic deception in multi-site multi-country operations, maximizing the defender's expected utility under a Bayesian MFTG. Even with known timing, detection systems provide decisive informational advantage, reduce civilian risk, and deter attackers by raising the operational risk embedded in their payoff structure.

The National Security Services (NSS), even with limited capacity for localized defense, can strategically exploit early detection of coordinated motorbike movements (30 or more bikes heading in the same direction on market day) through a combination of game-theoretic mechanisms: incentive design, strategic signaling (with anonymized  crowd sensing by mobile apps) , temporary alliance-building, and intergenerational payoff alignment. Detection transforms the game from one of pure defense to one of preemptive influence. This allows NSS to shift from reactive strategies to proactive positioning by redistributing limited forces based on probabilistic threat maps.  NSS can create uncertainty in attackers’ payoff structures by deploying mobile units unpredictably, increasing the attackers’ expected costs and disrupting their coordination. Local alliances with village-based intelligence cells function as critical force multipliers. Through cooperative mean-field-type game theory, NSS can forge credible, low-cost coalitions with local actors: elders, hunters’ associations, youth groups by offering them negotiation power in resource access (market security, development inputs) and selective protection guarantees. This incentivizes information sharing and decentralized vigilance, helping close intelligence gaps. By building a reputation for targeted, proportional response, the NSS can influence attackers’ beliefs in Bayesian MFTG, deterring future threats through a perceived increase in the likelihood of interception.  Through intergenerational incentive structures, such as community resilience programs tied to security cooperation (early warning linked to school access, agricultural protection, or youth employment), NSS aligns long-term community interests with national stability. This shifts the local population’s strategy from passive observation to active deterrence partnerships.

\subsubsection*{Star-Shaped Simultaneous Attacks on Market Day}

Mali, Burkina Faso, Niger, Chad, Benin, Togo, Nigeria and neighbouring countries: In this vast region and neighboring countries, attacks carried out by armed groups traveling by motorcycle have, for several years now, posed a persistent threat. Initial offensives followed a relatively identifiable pattern: assailants moved in tight groups, often in a visible and coordinated manner. This configuration, in some cases, allowed for the rapid detection of suspicious movements and even the triggering of local alerts, sometimes anonymously.

But this tactic has gradually evolved. According to data compiled by Timadie over the past five years, more than 300 attacks have been recorded in connection with market days, pointing to a strategy deliberately focused on high-traffic locations. Today, assaults are carried out in a decentralized manner: individuals move in a dispersed fashion, blending in with the population, and strike simultaneously in multiple localities. This dispersal makes anticipating and responding to such threats particularly difficult, especially in contexts where intervention capabilities are limited.

This tactical evolution reflects a logic associated with mean-field game theory. 

The pressing question is thus: what form should the response take? Faced with simultaneous and distributed attacks, often crossing national borders:  defensive strategies must also evolve. This requires not only improved regional coordination but also the strengthening of community alert mechanisms, the integration of predictive analysis tools, and the networking of local actors capable of acting autonomously yet coherently. A distributed defense approach, mirroring the threat itself, now appears essential to adapt to this new grammar of conflict.

\begin{figure}[htb!]
\begin{tikzpicture}[scale=1, font=\small]

\foreach \angle in {0, 45, ..., 315} {
  \draw[gray!30, line width=6pt] (0,0) -- ({2.8*cos(\angle)}, {2.8*sin(\angle)});
  \draw[dashed, white, line width=0.6pt] (0,0) -- ({2.8*cos(\angle)}, {2.8*sin(\angle)});
}

\fill[orange!30, draw=black] (0,0) circle (0.3);
\node at (0,0) {\textbf{D}};


\node at (0, 1.3) {\faWalking};
\node at (0, 2.1) {\faWalking};
\node at (0, -1.4) {\faWalking};
\node at (0, -2.2) {\faWalking};

\node at (1.3, 0) {\faBicycle};
\node at (2.1, 0) {\faBicycle};
\node at (-1.3, 0) {\faBicycle};
\node at (-2.1, 0) {\faBicycle};

\foreach \shift in {1.2, 2.1} {
  \node at ({\shift*cos(45)}, {\shift*sin(45)}) {\faMotorcycle};
  \node at ({\shift*cos(225)}, {\shift*sin(225)}) {\faMotorcycle};
}

\foreach \shift in {1.3, 2.2} {
  \node at ({\shift*cos(135)}, {\shift*sin(135)}) {\faHorse};
  \node at ({\shift*cos(315)}, {\shift*sin(315)}) {\faHorse};
}

\foreach \angle/\label in {
  90/N, 45/NE, 0/E, 315/SE, 270/S, 225/SW, 180/W, 135/NW} {
    \node at ({3.2*cos(\angle)}, {3.2*sin(\angle)}) {\label};
}

\node[anchor=west, align=left] at (2.2,-2.4) {
  \textbf{Legend:} \\
  \faWalking\quad Pedestrian \\
  \faBicycle\quad Bicycle \\
  \faMotorcycle\quad Motorbike \\
  \faHorse\quad Donkey
};

\end{tikzpicture}
\caption{Attackers  are distributed towards  Village D.} \label{convoy2re}
\end{figure}

We now look at Village D, situated at the center of a star-shaped road network with eight radial roads, becomes a focal point for civilians arriving from surrounding hamlets by foot, bicycle, donkey, and motorcycle. On such days, the converging flow of traffic is relatively dense, diverse, and decentralized, precisely the conditions that allow armed bandits to camouflage their movements within the civilian crowd. In this scenario, multiple groups of armed assailants mobilize independently across several of the  roads, riding motorcycles in loosely spaced clusters that mimic the everyday market-bound flows. Instead of forming one large convoy, which would be readily observable, the attackers adopt a distributed infiltration strategy, mixing with legitimate travelers and staggering their arrivals to avoid temporal or spatial concentration. From an aerial perspective   even when monitored by civilian drones equipped with optical or infrared sensors, the incoming traffic on each road appears consistent with market day norms: scattered motorcycles, bicycles, carts, and foot traffic moving at variable speeds and densities. 

Low-cost drones, while useful in spotting large-scale anomalies or abrupt changes in traffic flow, struggle in such contexts to distinguish between an armed assailant and a young trader on a similar vehicle, especially in the absence of sophisticated weapons being visibly carried. Machine learning models trained on movement patterns or thermal signatures are often tuned to detect gross deviations from baseline patterns, but in this case, the attack is embedded within the baseline. The strategic use of this star architecture by attackers enables coordinated penetration from multiple fronts while avoiding the detection thresholds of edge surveillance systems. The village, expecting increased traffic and economic exchange, may fail to activate local defense mechanisms until it is too late. 

This situation indicates a fundamental weakness in current civilian drone-based early warning systems: when threat vectors are distributed (see Figure  \ref{convoy2re}), disguised, and embedded in the socio-economic rhythm of rural life, technological surveillance alone may be insufficient. Effective threat detection will require fusing aerial data with human intelligence, historical behavior profiling, and possibly biometric or license plate tracking at chokepoints, all of which raise new logistical and ethical challenges in the context of low-resource, high-conflict environments.

\section{Human-MI agents co-Intelligence} \label{seccoi6}
To realistically capture the  evolution of the intergenerational conflict, we implement a  simulation MFTG   enhanced by a human-MI co-intelligence architecture structured via protocol such as the Model Context Protocol (MCP), Agent‑to‑Agent (A2A) Protocol, Agent Communication Protocol (ACP),
Agent Network Protocol (ANP), Agent Collaboration Protocols (ACPs), Agent Context Protocols (ACPs), GibberLink, and LOLANG. The simulation unfolds over a $2.8$ million $km^2$ territory, populated by 80 million heterogeneous residents, and features 50  actor types representing the full spectrum of geopolitical, civil, criminal, insurgent, hybrid, and international participants. Each actor, whether a formal institution like the Government of Mali or an irregular force like the Jihadist-Revenger coalition, delegates an associated Agentic MI module, instructed via MCP with customized task parameters including type, strategic state, private information structure, action space, and time-evolving payoff logic. Human agents initiate each round by specifying context-driven directives e.g., the government engages in legislation, peace negotiations, and militarized deployment; armed forces pursue targeted missions with varying force intensities; and non-state actors alternate between insurgency, diplomacy, and covert mobilization. MI agents simulate strategic interactions with other agents under common macro-conditions (resource depletion, violence intensity, territorial shift), applying recursive reasoning and forecasting using population-level dynamics encoded via forward Kolmogorov equations. Outcomes such as deaths, displacements, infrastructure damage, educational recovery, and territorial fragmentation, revenue of war entrepreneurs, number of guns sold, are computed each round and used to refine the belief models of MI agents. These agents then recommend updated strategies to their human counterparts, who assess them alongside ground reports and socio-political constraints before issuing new task prescriptions. The simulation explicitly models actor-specific operations, e.g., ambushes, Improvised Explosive Device (IED) deployments, community infiltration, resource extraction, and psychological operations as action modes that influence mean-field distributions of fear, resistance, defection, or compliance. Conflict externalities (diasporic funding, arms trafficking, misinformation warfare) are also included through the activation of transnational agents with variable allegiance. Our output metrics track key indicators per round: number of deaths, civilian and military; material destruction; animals displaced or acquired; proportion of land lost or regained; number of peaceful vs. hostile zones; and the re-opening rate of schools and vital infrastructure.

\subsection*{Quantities-of-interest }
MFTG between MI agent-based simulation of intergenerational conflict in Mali is used. The simulation model integrates demographic, economic, geographic, and behavioral variables over 155 interaction rounds (interpretable as  months, or episodes of escalation/de-escalation) across a synthetic population of 80 million agents distributed over a 3-million-square-kilometer territory. The conflict started in a smaller territory but propagates to a much bigger one. Fifty heterogeneous MI agent types representing distinct sociopolitical identities, generational positions, and strategic preferences, engage dynamically through a feedback-rich agentic simulation. The simulation outputs a set of metrics capturing the multidimensional consequences of the conflict and the differential adaptation of agents over time.

Each plotted metric corresponds to macro-level emergent behaviors from micro-level strategic interactions governed by MFTG. The simulation models both kinetic (e.g., armed clashes, territory loss) and non-kinetic (e.g., education, economic activity) variables, creating a comprehensive system to study the cascading effects of conflict on generational cohesion and social stability.

Agent Types ( 50): Each agent type represents a sociocultural, economic, or ideological identity. Generational distinctions are embedded by assigning differing weights to short-term risk tolerance, conflict engagement strategies, and investment in public goods.

Rounds (155): Each round simulates a temporal snapshot of the evolving dynamics. Nonlinearity and stochasticity are injected via sinusoidal and Gaussian noise terms to reflect real-world volatility.

Territory (3 million $km^2$): A proxy for geospatial spread and control. Territorial dynamics influence access to infrastructure, local market penetration, and the reproduction of violence.

Each subplot in the Figure below  provides a quantitative representation of a distinct macro-level variable over time.

Figure 4A is about number of deaths.
This plot depicts the cumulative and volatile growth of conflict-related reported deaths over time (reporting errors are included). The intensity of confrontation is modeled as a composite of a sinusoidal signal (to mimic seasonal patterns and re-alliance or split-and-merge of allies) and Rosenblatt noise (to introduce  memory-based escalation). The trend shows a non-stationary escalation, highlighting the self-reinforcing nature of violence in multi-agent confrontational scenarios. Each peak corresponds to simulated upticks in conflict intensity, driven either by generational grievances or external shocks.

Figure 4B is about  damaged materials.
This subplot models the destruction of infrastructure and civil materials, reflecting both targeted and collateral damages. Material damage grows proportionally with conflict intensity but is smoothed by a lower volatility parameter relative to the death curve. The cumulative increase suggests persistent degradation of living standards, with strong implications for recovery timelines post-conflict.

Figure 4C is  about animals brought (from the villages).
Livestock influx typically through raiding, barter, or illegal taxation in-kind, is simulated as a function of shifting territorial control and mobility patterns. The relatively increase implies systematic acquisition of productive assets by dominant agent types, often reflecting resource reallocation in power asymmetries. It is also a proxy for rural-to-rural extractive behaviors among agents. Sometimes these acquired animals are sold in other local markets according to Grabal data.

Figure 4D is about the evolution of the percentage of territory lost.
Territorial control loss follows an action-reaction pattern as the war is ongoing. There is no  percentage of territory lost that is definitive. However, the loss of physical security or movement security or good transfer security is tracked here as a lost. Notably, mid-simulation inflections represent key turning points, indicative of either negotiated peace or large-scale offensives, where territory is reclaimed or lost in bulk.

Figure 4E is about the  number of peaceful areas.
Peaceful zones are computed inversely from territory loss, using a simple affine transformation adjusted by random perturbations. The depiction of peace is not binary but graded, capturing partial tranquility in zones with reduced kinetic activity. The graph's inverse correlation with Figure 1D supports the hypothesis of spatial conflict spillovers and containment trade-offs.

Figure 4F is about schools reopened (cumulative).
School reopening is modeled as a recovery signal and resilience metric. Its trajectory is shaped by random recovery attempts dampened by conflict intensity. It shows a fluctuating increase, indicative of post-conflict rehabilitation efforts that are repeatedly disrupted by renewed hostilities. This subplot is key for assessing the intergenerational impacts of the conflict, especially educational disenfranchisement and long-term cognitive capital depletion. Some schools are re-opened but with totally different programs than the national ones.

Figure 4G is about guns sold.
This subplot captures the arms trade as a function of demand-driven procurement. Weapon sales rise proportionally to conflict intensity, subject to stochastic supply chain variability. The derivative of this curve approximates unregistered-market expansion and can inform interventions targeting arms control regimes.

Figure 4H is about motorbikes sold.
Motorbikes, often used for both civilian and militant mobility, represent dual-use commodities. The sales curve is high-variance and nonlinear. It  highlights unregulated trade spikes. This metric also serves as a proxy for logistical capability, spatial reach of agents, and rural connectivity.

Figure 4I is about the ammunition (Tons).
Ammunition dynamics follow a compounded trend, reflecting increasing weaponization and sustained conflict. The plot also indirectly signals the duration of tactical engagements: higher ammunition consumption typically indicates prolonged skirmishes or entrenched confrontations.

We plot in Figure \ref{initfig02} the outcome of a Human-MI agents co-Intelligence simulation using interactive MI agents. We observe there are some small break over the years but the violence cycles restart again. This observation is consistent with the rewarding of the war entrepreneurs. 
\begin{figure}[htb]
\includegraphics[scale=0.3]{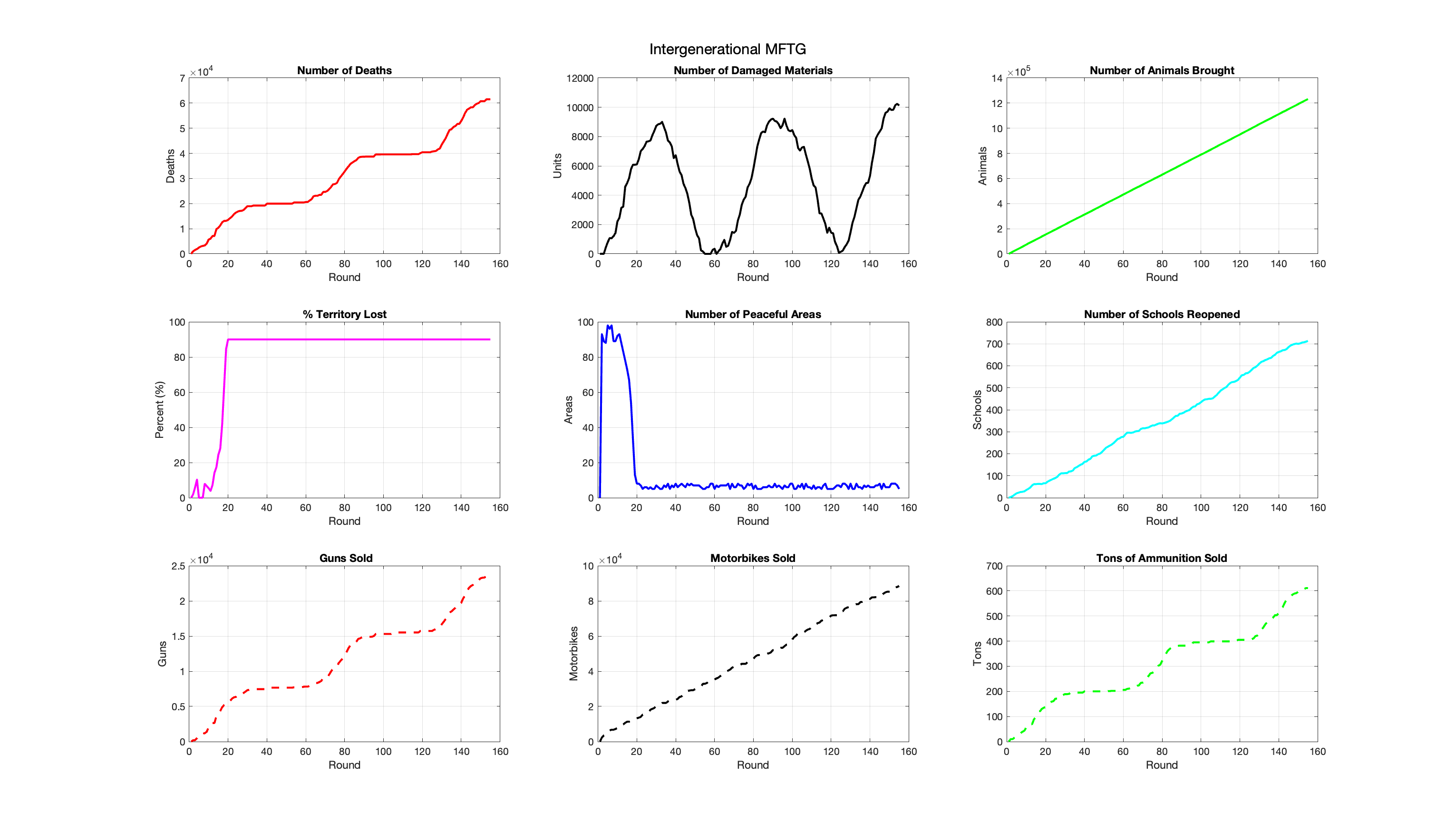} 
\caption{Human-MI agents co-Intelligence simulation using interactive MI agents.}
\label{initfig02}
\end{figure}

\section{Limitations of the All-Military Approach} \label{seccoi7}
To understand the limitations of military-dominant stabilization approaches in West Africa, we now focus on the emergent cross-border dynamics linking Mali, Burkina Faso, Niger, Benin, Chad, Togo, and Nigeria. These states, geographically contiguous yet institutionally heterogeneous, form an evolving corridor of insecurity marked by porous borders, asymmetric governance capacities, climate stress, and identity-fractured conflict zones. 

Despite significant military deployments by domestic forces and international militia, the situation has continued to degrade and to propagate. Instead of converging toward stability, the region exhibits signs of synchronized volatility, with localized insurgencies, jihadist spillovers, and agropastoral grievances metastasizing across national boundaries. The interdependence of violence propagation, war economies, and fragmented authority necessitates a systemic analysis beyond isolated national interventions. We examine how the logic of military escalation, absent structural peace incentives and resilient civil governance, has failed to absorb or neutralize the feedback loops fueling intergenerational cycles of trauma, retaliation, and opportunistic violence in this corridor. 
\subsection{Conflict Game between Risk-Aware Teams }

\subsubsection{ A Mean-Field-Type  Game between Risk-Aware Teams }
We consider a finite set  of  risk-aware teams $\mathcal{I} = \{1, 2, \dots, I\}$ with $I \geq 2$. Each team $i \in \mathcal{I}$ is risk-sensitive.  Each team $i$ has risk parameter $\alpha_i \in (0, 1).$  
If  $\alpha_i < \frac{1}{2}$ Team  $i$ is risk-averse.  If  $\alpha_i = \frac{1}{2}$ Team  $i$ is risk-neutral. $\alpha_i > \frac{1}{2}$ team  $i$ is risk-seeking. 
 Each team $i$ has resource $R_{i} > 0$. The aggregate of resources is  denoted as $\overline{R} = \sum_{i \in \mathcal{I}} R_{i}$.  We introduce three other parameters. 
   The output security is $\sigma \in [0, 1)$ (fraction of secure output), 
 the  war destruction is  $\beta \in (0, 1]$ (overall output survival rate),
 the differential destruction: $\gamma \in (0, 1]$ (relative survival of contested output).  
The game is of full  complete information. All parameters and endowments are common knowledge.
    
We now describe the actions and decision-space of each team. Each team $i$ simultaneously chooses:
    \begin{itemize}
        \item \textit{War/peace decision}: $d_{i} \in \{P, W\}$
        \item \textit{Arming level}: $G_{i} \in [0, R_{i}]$ (guns)
    \end{itemize}
    Each team has a {\bf Gold-Millet-Ram-Cow-Rice} production and it is given $X_{i} = R_{i} - G_{i}$. The total guns is $G = \sum_{i \in \mathcal{I}} G_{i}$.
    
We now focus on the modelling the outcomes of the interaction: 
    \begin{align*}
        \text{Peace occurs} &\iff d_{i} = P \quad \forall i \in \mathcal{I} \\
        \text{War occurs} &\iff \exists i \in \mathcal{I} \text{ such that } d_{i} = W
    \end{align*}
    
 We introduce the  payoffs of the team:
    \begin{itemize}
        \item \textit{Under peace}:
        \[
        \Pi_{i,p} = X_{i} = R_{i} - G_{i}
        \]
        
        \item \textit{Under war}:
        Let $X = \sum_{i \in \mathcal{I}} X_{i}$ be total output. \\
        \textit{Surviving output components}:
        \begin{align*}
            \text{Secure output for } i &: \quad \beta\sigma X_{i} \\
            \text{Contestable output}: &\quad \beta\gamma(1 - \sigma)X
        \end{align*}
        \textit{Winning probability} (Tullock contest function): The  probability team $i$ wins contest ($\mu > 0$)   is  $\phi_{i}$ with 
        \begin{equation}
 \phi_{i}(\mathbf{G}) =\left\{ \begin{array}{l}
 \frac{(G_{i})^\mu}{\sum_{j \in \mathcal{I}} (G_{j})^\mu} \ \mbox{if } \sum_{j \in \mathcal{I}} (G_{j})^\mu >0, \\
  \frac{(R_{i})^\mu}{\sum_{j \in \mathcal{I}} (R_{j})^\mu} \ \mbox{if } \sum_{j \in \mathcal{I}} (G_{j})^\mu =0,\\
 \end{array} \right.
 \end{equation} where $G_i\in [0, R_i].$

        Let us introduce the random variables:  
 $W_i \sim \text{Bernoulli}(\phi_{i})$: Win/loss indicator in conflict.  
        The war payoff (Random) of the team $i$ is
$$
\Pi_{i,w} = \underbrace{\beta\sigma X_{i}}_{\text{secure output}} + W_i \underbrace{\beta\gamma(1 - \sigma)X}_{\text{contested prize}}
$$
where $X_{i} = R_{i} - G_{i}$, $X = \sum_{j \in \mathcal{I}} X_{j}$.

We put all the payoffs together to obtain a complete payoff function:
    \[
    \Pi_{i}(\mathbf{G}, \mathbf{d}) = 
    \begin{cases} 
    \Pi_{i,p} & \text{if } d_{j} = P \quad \forall j \in \mathcal{I}, \\ 
    \Pi_{i,w}& \text{otherwise}
    \end{cases}
    \]
\end{itemize}

\begin{definition}[Expectile Value-at-Risk] For random payoff $\Pi_i$ of team $i$, the risk-sensitive payoff with risk-sensitive index $\alpha_i \in (0, 1)$ is $e_{\alpha_i}(\Pi)$ solving:  
$$
\alpha_i \mathbb{E}\left[(\Pi_i - e_{\alpha_i})_+\right] = (1 - \alpha_i) \mathbb{E}\left[(e_{\alpha_i} - \Pi_i)_+\right].
$$
The  $\alpha_i$-expectile $e_{\alpha_i}(\Pi_i)\in \arg\min_{x} \int \eta_{\alpha_i}(y-x) \mathbb{P}_{\Pi_i}(dy)$ 
where  $ \eta_{\alpha_i}(z)=|\alpha_i - \mathbb{I}_{\{z\leq 0\}}| z^2.$  and  $\mathbb{P}_{\Pi_i}$ is the probability measure of the  $\Pi_i.$
\end{definition}

\begin{definition}[Risk-Aware Conflict Game ]
Actions:  risk-sensitive teams simultaneously choose Arms $G_{i} \in [0, R_{i}]$   and War/peace decision $d_{i} \in \{P, W\}.$  
   The outcomes are 
   $$
   \text{Peace} \iff \forall i : d_{i} = P, \quad \text{War} \iff \exists i : d_{i} = W.
   $$  

The Risk-Sensitive Payoffs are 
$$
\pi_{i}(\mathbf{G}, \mathbf{d}) = e_{\alpha_i}\left(\Pi_{i}(\mathbf{G}, \mathbf{d})\right)=
\begin{cases} 
R_{i} - G_{i} & \text{(deterministic peace payoff)} \\
e_{\alpha_i}\left(\Pi_{i,w}\right) & \text{(risk-sensitive war payoff)}
\end{cases}
$$
\end{definition}


We explicitly compute the expectile  war payoff.

\begin{lemma}
Let $\Pi$ be a random variable with binary distribution:
  $$
  \Pi = 
  \begin{cases} 
  v_1 & \text{with probability } p \\
  v_2 & \text{with probability } 1 - p 
  \end{cases}
  $$
  where $v_1 > v_2$.
  
For $\alpha \in (0, 1)$, the $\alpha$-expectile $e_\alpha(\Pi)\in \arg\min_{x} \int \eta_{\alpha}(y-x) \mathbb{P}_{\Pi}(dy)$ 
where  $ \eta_{\alpha}(z)=|\alpha - \mathbb{I}_{\{z\leq 0\}}| z^2.$

The $\alpha$-expectile of $\Pi$ is given by:
$$
e_\alpha(\Pi) = 
\begin{cases} 
v_2 & \text{if } p = 0, \\
v_1 & \text{if } p=1, \\
\dfrac{\alpha p v_1 + (1 - \alpha)(1 - p)v_2}{\alpha p + (1 - \alpha)(1 - p)} & \text{otherwise},
\end{cases}
$$
with the last case simplifying to the closed-form expression:
$$
e_\alpha(\Pi) = \frac{\alpha p v_1 + (1 - \alpha)(1 - p)v_2}{\alpha p + (1 - \alpha)(1 - p)}
$$
when $v_2 < e_\alpha(\Pi) < v_1$.

\end{lemma}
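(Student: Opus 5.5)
We compute the minimizer of $x\mapsto L(x)=\int \eta_\alpha(y-x)\,\mathbb{P}_\Pi(dy)$ directly, using convexity. For a two-point law this objective is
\[
L(x)=p\,|\alpha-\mathbb{I}_{\{v_1\le x\}}|\,(v_1-x)^2+(1-p)\,|\alpha-\mathbb{I}_{\{v_2\le x\}}|\,(v_2-x)^2 .
\]
The function $z\mapsto\eta_\alpha(z)$ is convex and continuously differentiable, with $\eta_\alpha'(z)=2|\alpha-\mathbb{I}_{\{z\le 0\}}|\,z$, and it is strictly convex because its weights $\alpha$ and $1-\alpha$ are positive. Hence $L$ is strictly convex and $C^1$ and coercive, so it has a unique minimizer, characterized by $L'(x)=0$. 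This condition is exactly the first-order equation $\alpha\mathbb{E}[(\Pi-x)_+]=(1-\alpha)\mathbb{E}[(x-\Pi)_+]$ from the definition of the expectile value-at-risk, so the two definitions agree.

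\textbf{Degenerate cases.} If $p=0$, then $\Pi=v_2$ almost surely and $L(x)=|\alpha-\mathbb{I}_{\{v_2\le x\}}|(v_2-x)^2\ge 0$, with equality only at $x=v_2$. The case $p=1$ is symmetric and gives $x=v_1$.

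\textbf{Interior case, $0<p<1$.} First we localize the minimizer. For $x\le v_2$ both deviations $v_k-x$ are nonnegative, so $L'(x)=-2\alpha[p(v_1-x)+(1-p)(v_2-x)]<0$, because $v_1-x>0$. For $x\ge v_1$ both deviations are nonpositive, so $L'(x)=2(1-\alpha)[p(x-v_1)+(1-p)(x-v_2)]>0$, because $x-v_2>0$. Hence the minimizer lies in the open interval $(v_2,v_1)$.

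On that interval $v_1-x>0\ge v_2-x$, so the weights are fixed at $\alpha$ for the $v_1$ term and $1-\alpha$ for the $v_2$ term. Therefore
\[
L'(x)=-2\alpha p(v_1-x)+2(1-\alpha)(1-p)(x-v_2).
\]
Setting $L'(x)=0$ gives a linear equation in $x$ whose solution is
\[
x^\ast=\frac{\alpha p v_1+(1-\alpha)(1-p)v_2}{\alpha p+(1-\alpha)(1-p)} .
\]
The denominator is positive since $\alpha,p\in(0,1)$.

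Finally we check consistency. $x^\ast$ is a convex combination of $v_1$ and $v_2$ with strictly positive weights, so $v_2<x^\ast<v_1$. Thus $x^\ast$ is a critical point of $L$, and by strict convexity it is the unique global minimizer. This also justifies the closing remark in the statement that the closed form holds precisely when $v_2<e_\alpha(\Pi)<v_1$.

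The only delicate step is the localization. It must rule out a minimizer sitting at or beyond the support points, where the piecewise weights switch, and the sign analysis of $L'$ on the two outer regions handles this. The remaining steps are routine algebra.
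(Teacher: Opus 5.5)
Your proof is correct and follows essentially the same route as the paper. The paper works directly with the first-order equation $\alpha\mathbb{E}[(\Pi-e)_+]=(1-\alpha)\mathbb{E}[(e-\Pi)_+]$ and splits into the cases $e\le v_2$, $e\ge v_1$ and $v_2<e<v_1$; this is the same as your sign analysis of $L'$ on the outer regions followed by solving the linear equation on $(v_2,v_1)$. Your strict-convexity step adds what the paper leaves implicit, namely that the $\arg\min$ in the statement is unique and coincides with the root of that equation. One small nit: the closed form also returns $v_2$ at $p=0$ and $v_1$ at $p=1$, so it is valid for all $p\in[0,1]$, not ``precisely when'' $v_2<e_\alpha(\Pi)<v_1$.
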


\begin{proof}

We analyze three cases based on the position of $e$ relative to $v_2$ and $v_1$.

Case 1: $\ e \leq v_2$

 For $e \leq v_2 \leq v_1$:
  
   $(\Pi - e)_+ = \begin{cases} v_1 - e & \text{w.p. } p \\ v_2 - e & \text{w.p. } 1 - p \end{cases}$

  $(e - \Pi)_+ = 0$ (since $e \leq v_2 \leq \Pi$)

This  simplifies to:
  $$
  \alpha \left[ p(v_1 - e) + (1 - p)(v_2 - e) \right] = 0
  $$

Since $\alpha > 0$, we require:
  $$
  p(v_1 - e) + (1 - p)(v_2 - e) = 0 \implies e = \mathbb{E}[\Pi] = p v_1 + (1 - p)v_2
  $$

 But $e \leq v_2$ and $\mathbb{E}[\Pi] \geq v_2$ (with equality iff $p = 0$ or $v_1 = v_2$). Thus:
  $$
  e_\alpha(\Pi) = v_2 \quad \text{only if } \mathbb{E}[\Pi] = v_2 \quad (\text{degenerate cases})
  $$

Case 2: $\ e \geq v_1$

For $e \geq v_1 \geq v_2$:
  
   $(\Pi - e)_+ = 0$
  
   $(e - \Pi)_+ = \begin{cases} e - v_1 & \text{w.p. } p \\ e - v_2 & \text{w.p. } 1 - p \end{cases}$

This becomes:
  $$
  0 = (1 - \alpha) \left[ p(e - v_1) + (1 - p)(e - v_2) \right]
  $$

Since $1 - \alpha > 0$, we require:
  $$
  p(e - v_1) + (1 - p)(e - v_2) = 0 \implies e = \mathbb{E}[\Pi]
  $$

 But $e \geq v_1$ and $\mathbb{E}[\Pi] \leq v_1$ (with equality iff $p = 1$ or $v_1 = v_2$). Thus:
  $$
  e_\alpha(\Pi) = v_1 \quad \text{only in degenerate cases}
  $$

Case 3: $\ v_2 < e < v_1$

 For $v_2 < e < v_1$:
   $(\Pi - e)_+ = \begin{cases} v_1 - e & \text{w.p. } p \\ 0 & \text{w.p. } 1 - p \end{cases}$
   
   $(e - \Pi)_+ = \begin{cases} 0 & \text{w.p. } p \\ e - v_2 & \text{w.p. } 1 - p \end{cases}$

This leads to:
  $$
  \alpha p (v_1 - e) = (1 - \alpha)(1 - p)(e - v_2)
  $$

 Solving for $e$:
  \begin{align*}
  \alpha p (v_1 - e) &= (1 - \alpha)(1 - p)(e - v_2) \\
  \alpha p v_1 - \alpha p e &= (1 - \alpha)(1 - p)e - (1 - \alpha)(1 - p)v_2 \\
  \alpha p v_1 + (1 - \alpha)(1 - p)v_2 &= e \left[ \alpha p + (1 - \alpha)(1 - p) \right] \\
  e &= \frac{\alpha p v_1 + (1 - \alpha)(1 - p)v_2}{\alpha p + (1 - \alpha)(1 - p)}
  \end{align*}
This solution satisfies $v_2 < e < v_1$ because
   $e$ is a convex combination of $v_1$ and $v_2$ with positive weights.

Special Cases:
 $p = 0$  ($\Pi = v_2$ deterministically). $e_\alpha(\Pi) = v_2$ for all $\alpha$.
 $p = 1$  ($\Pi = v_1$ deterministically).  $e_\alpha(\Pi) = v_1$ for all $\alpha$.
 $\alpha = \frac{1}{2}$  (risk-neutral):  $e_{\frac{1}{2}}(\Pi) = \mathbb{E}[\Pi] = p v_1 + (1 - p)v_2.$

Verification of Boundaries:
Substitute $e = v_2$ and $e = v_1$  to get:
At $e = v_2$:  
  LHS = $\alpha p (v_1 - v_2) > 0$, RHS = $0$ → Not equal unless $p = 0$.
  
 At $e = v_1$:  
  LHS = $0$, RHS = $(1 - \alpha)(1 - p)(v_1 - v_2) > 0$ → Not equal unless $p = 1$.

Thus, the closed-form solution holds for all non-degenerate cases where $0 < p < 1$.  
\end{proof}

 \begin{corollary}
For the war payoff $\Pi_{i,w}$ with:
$$
\Pi_{i,w} = 
\begin{cases} 
v_1 = \beta\sigma X_{i} + \beta\gamma(1 - \sigma)X & \text{w.p. } \phi_{i} \\
v_2 = \beta\sigma X_{i} & \text{w.p. } 1 - \phi_{i}
\end{cases}
$$
the risk-sensitive payoff is:
$$
e_{\alpha_i}(\Pi_{i,w}) = \frac{\alpha_i \phi_{i} v_1 + (1 - \alpha_i)(1 - \phi_{i})v_2}{\alpha_i \phi_{i} + (1 - \alpha_i)(1 - \phi_{i})}
$$
 \end{corollary}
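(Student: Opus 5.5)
The corollary follows by applying the preceding Lemma to the random war payoff $\Pi_{i,w}$. The plan is to check that $\Pi_{i,w}$ is exactly a two-point random variable of the form the Lemma treats, and then substitute.

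First, I identify the distribution. The payoff is $\Pi_{i,w} = \beta\sigma X_i + W_i\,\beta\gamma(1-\sigma)X$ with $W_i\sim\text{Bernoulli}(\phi_i)$. It therefore takes the value $v_1=\beta\sigma X_i+\beta\gamma(1-\sigma)X$ with probability $p=\phi_i$, and $v_2=\beta\sigma X_i$ with probability $1-\phi_i$. Here $\mathbf G$ is fixed, so $X_i$, $X$ and $\phi_i(\mathbf G)$ are deterministic, and the only randomness is $W_i$.

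Second, I verify the Lemma's ordering hypothesis $v_1>v_2$. This is equivalent to $\beta\gamma(1-\sigma)X>0$. Since $\beta,\gamma\in(0,1]$ and $\sigma\in[0,1)$, it reduces to $X=\sum_j(R_j-G_j)>0$, meaning not every team converts all of its resources into guns.

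The boundary case is the one technical point, and I would handle it directly. If $X=0$, then $v_1=v_2$, so $\Pi_{i,w}$ is deterministic and its expectile equals that constant. The displayed formula returns the same constant, because it is a weighted average of two equal numbers whose weights are not both zero. So the formula still holds in this degenerate case.

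Third, I split on $\phi_i$. When $\phi_i\in(0,1)$, the "otherwise" branch of the Lemma gives the formula immediately, with $\alpha=\alpha_i$ and $p=\phi_i$. When $\phi_i=0$ or $\phi_i=1$, the Lemma gives $v_2$ or $v_1$ respectively. Plugging $\phi_i\in\{0,1\}$ into the displayed fraction yields the same values, since the denominator is then $1-\alpha_i>0$ or $\alpha_i>0$. Hence the single closed form is valid in every case and no separate statement is needed.

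The main (minor) obstacle is the degenerate case analysis just described: $X=0$, and $\phi_i\in\{0,1\}$. The latter can occur under the Tullock rule when $G_i=0$ while some other $G_j>0$, or when $G_i>0$ is the only positive arming level. Beyond this, the proof is direct substitution.
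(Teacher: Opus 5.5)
Your proposal is correct and matches the paper, which gives no separate proof and treats the corollary as a direct substitution of $p=\phi_i$, $v_1$, $v_2$ into the preceding binary-expectile Lemma. Your extra checks of the degenerate cases $X=0$ (where $v_1=v_2$, so the Lemma's ordering hypothesis fails) and $\phi_i\in\{0,1\}$ are sound. They show the single closed form still holds, because its denominator $\alpha_i\phi_i+(1-\alpha_i)(1-\phi_i)$ never vanishes, and in this respect your argument is slightly more careful than the paper.
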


 \begin{corollary}
For  $\alpha_i\neq \frac{1}{2},$   $  \mathbb{P}_{\Pi_{i,w}} \mapsto e_{\alpha_i}(\mathbb{P}_{\Pi_{i,w}}) $ is non-linear.
 \end{corollary}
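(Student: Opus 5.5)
The claim is that for $\alpha_i\neq \tfrac12$ the map $\mathbb{P}_{\Pi_{i,w}}\mapsto e_{\alpha_i}(\mathbb{P}_{\Pi_{i,w}})$ is not linear in the law. It therefore suffices to exhibit one convex combination of laws on which the map fails to be affine. The natural family is the binary laws of the preceding Corollary, indexed by the winning probability $\phi_i=p\in[0,1]$. The law depends affinely on $p$: $\mathbb{P}_p = p\,\delta_{v_1}+(1-p)\,\delta_{v_2}$, so $\mathbb{P}_p$ is the mixture of $\mathbb{P}_1=\delta_{v_1}$ and $\mathbb{P}_0=\delta_{v_2}$ with weights $p,1-p$.

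If the map were linear (affine on mixtures), we would need
\[
e_{\alpha_i}(\mathbb{P}_p)=p\,v_1+(1-p)\,v_2 .
\]
Here we use the Lemma's degenerate cases $e_{\alpha_i}(\delta_{v_1})=v_1$ and $e_{\alpha_i}(\delta_{v_2})=v_2$. The contested prize satisfies $\beta\gamma(1-\sigma)X>0$ whenever $X>0$, which holds for instance if some $G_j<R_j$; hence $v_1>v_2$ and the Lemma applies.

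Next we take $0<p<1$ and use the closed form. Subtracting the two expressions and clearing the positive denominator $D=\alpha_i p+(1-\alpha_i)(1-p)$ reduces the difference to $p(1-p)(v_1-v_2)(2\alpha_i-1)/D$. Indeed, the numerator is
\[
\alpha_i p v_1+(1-\alpha_i)(1-p)v_2-\bigl(pv_1+(1-p)v_2\bigr)D ,
\]
and collecting the $v_1$ and $v_2$ terms gives $p(1-p)(2\alpha_i-1)(v_1-v_2)$. This is nonzero precisely when $\alpha_i\neq\tfrac12$, which shows that the map is not affine along this segment and hence not linear. Equivalently, $p\mapsto e_{\alpha_i}(\mathbb{P}_p)$ is a nonconstant linear-fractional function, since its denominator depends on $p$ unless $2\alpha_i-1=0$. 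A linear functional would instead restrict to an affine function of $p$. Finally, we note the sign: the map is strictly concave in $p$ for $\alpha_i<\tfrac12$ and strictly convex for $\alpha_i>\tfrac12$, consistent with the risk-averse/risk-seeking interpretation, and reduces to the mean at $\alpha_i=\tfrac12$.

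The only delicate step is the algebraic simplification of the difference and the verification that $v_1>v_2$, so that the non-degenerate case of the Lemma is available. Both are routine, so the argument has no real obstacle beyond keeping the algebra correct.
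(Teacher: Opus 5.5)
Your argument is correct and follows the paper's route: the closed-form binary expectile is nonlinear in $\phi_i$ when $\alpha_i\neq\frac12$ and $v_1\neq v_2$, and your explicit defect $p(1-p)(2\alpha_i-1)(v_1-v_2)/D$ from the affine interpolation between $\delta_{v_2}$ and $\delta_{v_1}$ makes that precise. One correction to your closing remark, which the proof does not rely on: the curvature is reversed, since $\partial_p^2 e_{\alpha_i}(\mathbb{P}_p)=2\alpha_i(1-\alpha_i)(1-2\alpha_i)(v_1-v_2)/D^3$, so the map is strictly convex for $\alpha_i<\frac12$ (it lies below the mean chord) and strictly concave for $\alpha_i>\frac12$.
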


\begin{proof}
This follows from the nonlinearity of  $\frac{\alpha_i \phi_{i} v_1 + (1 - \alpha_i)(1 - \phi_{i})v_2}{\alpha_i \phi_{i} + (1 - \alpha_i)(1 - \phi_{i})}$ on $\phi_i$ for $\alpha_i \neq \frac{1}{2}$ and $v_1\neq v_2.$
\end{proof}
This is consistent with the definition of MFTG.

\subsubsection{Analysis}

\begin{theorem}[Peace Equilibrium under Risk-Sensitive Teams]

The strategy profile $(G_{i} = 0, d_{i} = P)_{\forall i}$ is a risk-aware mean-field-type Nash equilibrium iff:  
$$
(1 - \beta\sigma) R_{i} \geq \beta\gamma(1 - \sigma) \overline{R} \quad \forall i
$$
\end{theorem}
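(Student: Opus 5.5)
The plan is a direct unilateral-deviation check at the candidate profile $(G_j=0,d_j=P)_{j\in\mathcal I}$, using the Corollary above for the expectile of a binary war payoff. At this profile peace occurs and each team receives the deterministic payoff $\pi_i=R_i$. Since no single team can force peace once another declares war, the only way a deviation can change the outcome is for team $i$ to switch to war. I will therefore split team $i$'s deviations $(G_i,d_i)$ into three classes and compute the supremum of its risk-sensitive payoff over each class:
\begin{enumerate}
\item staying at peace with $G_i>0$;
\item declaring war with $G_i=0$;
\item declaring war with $G_i>0$.
\end{enumerate}

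The first class is trivial, since the payoff becomes $R_i-G_i<R_i$. For the third class, the others hold $G_j=0$, so the Tullock function gives $\phi_i=1$. The war payoff is then deterministic, and by the Corollary (case $p=1$) it equals
\[
\beta\sigma(R_i-G_i)+\beta\gamma(1-\sigma)(\overline R-G_i).
\]
This is strictly decreasing in $G_i$, so its supremum over $G_i\in(0,R_i]$ is $\beta\sigma R_i+\beta\gamma(1-\sigma)\overline R$, approached as $G_i\downarrow0$ but not attained. For the second class all arms are zero, so $\phi_i=R_i^\mu/\sum_j R_j^\mu\in(0,1)$, using $I\ge2$ and $R_j>0$. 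The Corollary then gives an expectile that is a strictly positive convex combination of $v_2=\beta\sigma R_i$ and $v_1=\beta\sigma R_i+\beta\gamma(1-\sigma)\overline R$. Hence this payoff is strictly below $v_1$, which is the same value as the supremum in the third class. This holds for every $\alpha_i$, which is why the risk attitude will drop out of the final condition.

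Combining the three classes, the supremum of team $i$'s deviation payoffs is exactly $S_i:=\beta\sigma R_i+\beta\gamma(1-\sigma)\overline R$, and it is never attained. For sufficiency, assume $(1-\beta\sigma)R_i\ge\beta\gamma(1-\sigma)\overline R$, which is the same as $R_i\ge S_i$. Then every deviation yields strictly less than $S_i\le R_i$, so no deviation is profitable. For necessity, suppose the inequality fails for some $i$, that is $R_i<S_i$. By continuity of the third-class payoff in $G_i$, a sufficiently small $G_i>0$ together with $d_i=W$ gives team $i$ a payoff strictly above $R_i$, so the profile is not an equilibrium.

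The step I expect to need the most care is the discontinuity of the Tullock function at $G=0$. The profitable deviation is not the naive one of declaring war unarmed, which leaves $i$ with a lottery weighted by $R_i^\mu$. It is arming by an arbitrarily small amount, which secures the entire contested prize with probability one. I must check that this $\epsilon$-deviation argument yields exactly the stated inequality, including the boundary case of equality. The boundary case works because the supremum $S_i$ is never attained, so at equality every deviation is strictly worse and the profile remains an equilibrium.
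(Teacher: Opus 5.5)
Your proof is correct and follows the same route as the paper. The decisive deviation is declaring war with an arbitrarily small $G_i>0$, which gives $\phi_i=1$ and a payoff approaching $\beta\sigma R_i+\beta\gamma(1-\sigma)\overline R$ without attaining it. Your version is more complete than the paper's, which checks only this $\epsilon$-deviation and leaves implicit why unarmed war, arming under peace, and the equality case cause no trouble. One small wording fix: the supremum $S_i$ you compute is over the war-declaring deviations only, since arming under peace gives payoffs approaching $R_i$.
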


\begin{proof}
 Under peace, $\pi_{i,p} = R_{i}.$  

For unilateral deviation $(G_{i} = \epsilon, d_{i} = W)$:  As $\epsilon \to 0_+$, $\phi_{i} \to 1$  
   
    $\Pi_{i,w} \to \beta\sigma R_{i} + \beta\gamma(1 - \sigma)\overline{R}$  
   
    $e_{\alpha_i}(\Pi_{i,w}) \to \beta\sigma R_{i} + \beta\gamma(1 - \sigma)\overline{R}$ (continuity of expectile)

 No profitable deviation iff $R_{i} \geq \beta\sigma R_{i} + \beta\gamma(1 - \sigma)\overline{R}$ $\forall i$  
  \end{proof}

\begin{remark}
Peace requires the stability condition to hold for \textit{all} teams, including the least endowed:
    \[
   \frac{{\min}_{i\in \mathcal{I}} R_i }{\overline{R}} \geq \frac{\beta\gamma(1 - \sigma)}{1 - \beta\sigma}
    \]
    
    If this condition is not satisfied, Even if peace is declared today, war will likely return tomorrow unless we change the conditions that make violence the best option for some people. This means that in Mali and its neighbors, this is not just possible, it is the norm. The past fuels the present, and without strong incentives and structural reform, the future will look like the past for several months and possibly years.
\end{remark}

This means that min-fairness criteria should be high enough for peace to be sustainable.  When $\beta\sigma$ is big enough (getting closer to $1$), peace becomes  non-sustainable.

\begin{corollary}[$\mu$-Invariance of Peace Threshold]
The peace sustainability condition $\frac{{\min}_{i\in \mathcal{I}} R_i }{\overline{R}} \geq \frac{\beta\gamma(1 - \sigma)}{1 - \beta\sigma}$ is independent of the contest success  sensitivity parameter $\mu$.

\end{corollary}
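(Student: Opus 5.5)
The plan is to read the threshold directly off the preceding Theorem (Peace Equilibrium under Risk-Sensitive Teams) and then check that no step of its proof uses the contest exponent $\mu$. The condition $(1-\beta\sigma)R_i \ge \beta\gamma(1-\sigma)\overline{R}$ for all $i$ is equivalent, after dividing by $\overline{R}>0$ and $1-\beta\sigma>0$ (note $\sigma<1$, so $\beta\sigma<1$), to $\min_i R_i/\overline{R} \ge \beta\gamma(1-\sigma)/(1-\beta\sigma)$. Both sides are functions of $(R_i)_i,\beta,\gamma,\sigma$ only. The claim therefore reduces to showing that the characterization in that theorem holds for every $\mu>0$ with the same inequality.

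To do this, I would redo the deviation computation with $\mu$ kept explicit.

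\textbf{The peace side.} At the peace profile the payoff is $R_i-G_i=R_i$. This does not involve $\phi_i$, so it cannot involve $\mu$.

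\textbf{The deviation side.} The relevant quantity is the supremum over unilateral deviations $(G_i,d_i=W)$ with $G_i\in(0,R_i]$, all other teams holding $G_j=0$. With $G_j=0$ for $j\ne i$ and any $G_i>0$, we have $\sum_j G_j^\mu=G_i^\mu>0$. Hence $\phi_i=G_i^\mu/G_i^\mu=1$ exactly, for every $\mu>0$. It is not only a limit, and it does not depend on $\mu$.

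The Lemma on binary expectiles (case $p=1$) then gives $e_{\alpha_i}(\Pi_{i,w})=v_1=\beta\sigma(R_i-G_i)+\beta\gamma(1-\sigma)(\overline{R}-G_i)$. This is decreasing in $G_i$, so its supremum over $G_i\in(0,R_i]$ is the limit $\beta\sigma R_i+\beta\gamma(1-\sigma)\overline{R}$ as $G_i\to0^+$, which is again $\mu$-free. I would also cover the degenerate deviation $G_i=0$, $d_i=W$. There the second branch of $\phi_i$ applies, giving $\phi_i=R_i^\mu/\sum_j R_j^\mu$, which does depend on $\mu$. However, the corresponding expectile lies in $[v_2,v_1]$ with $v_1=\beta\sigma R_i+\beta\gamma(1-\sigma)\overline{R}$ at $G_i=0$. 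So this deviation never exceeds the supremum already computed, and the no-profitable-deviation condition is unaffected.

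The main obstacle is precisely this degenerate branch and the interpretation of ``no profitable deviation'' as a supremum, that is, with weak inequality and the value at $G_i\to0^+$ included. I would handle it by the bound $e_\alpha\le v_1$ from the Lemma, as just described. Once that is settled, the equilibrium inequality is identical for all $\mu$, and the normalized threshold follows.
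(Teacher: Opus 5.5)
Your proposal is correct and follows the paper's route. The paper gives no separate proof: the corollary is read off from the Peace Equilibrium theorem, whose threshold $\beta\gamma(1-\sigma)/(1-\beta\sigma)$ contains no $\mu$. Your extra checks make explicit what the paper's proof leaves implicit, namely that the theorem itself holds uniformly in $\mu$: $\phi_i=1$ exactly, not only in the limit, for any $G_i>0$ against $\mathbf{G}_{-i}=\mathbf{0}$, and the $\mu$-dependent degenerate deviation $(G_i=0,d_i=W)$ is dominated by $v_1$. The one deviation you did not list, $(G_i>0,d_i=P)$, yields $R_i-G_i<R_i$ and is trivially harmless.
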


\begin{corollary}[$\alpha$-Invariance of Peace Threshold]
The peace sustainability condition $\frac{{\min}_{i\in \mathcal{I}} R_i }{\overline{R}} \geq \frac{\beta\gamma(1 - \sigma)}{1 - \beta\sigma}$ is independent of the risk-sensitivity index  $\alpha_i$ of the teams.
\end{corollary}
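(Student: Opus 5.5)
The statement to prove is the last corollary: the peace sustainability threshold does not depend on the risk-sensitivity indices $\alpha_i$. The plan is to read this directly off the proof of the Peace Equilibrium theorem. I will isolate the only place where $\alpha_i$ could enter, and show that it drops out in the relevant limit.

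First, I recall the form of the equilibrium condition. The profile $(G_i=0,d_i=P)_{\forall i}$ is an equilibrium iff, for every $i$, no unilateral deviation beats the deterministic peace payoff $R_i$. The peace payoff $R_i-G_i=R_i$ is deterministic, and the expectile of a constant is that constant for every $\alpha_i\in(0,1)$. So the left-hand side of each comparison contains no $\alpha_i$.

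Second, I analyse the deviation side with the Corollary computing $e_{\alpha_i}(\Pi_{i,w})$. Consider a deviation $(G_i=\epsilon,d_i=W)$ while the other teams keep $G_j=0$. By the Tullock function, $\phi_i=\epsilon^\mu/\epsilon^\mu=1$ for every $\epsilon>0$. Hence $\Pi_{i,w}$ is degenerate at
$$v_1=\beta\sigma(R_i-\epsilon)+\beta\gamma(1-\sigma)(\overline R-\epsilon).$$
By the special case $p=1$ of the expectile Lemma, $e_{\alpha_i}(\Pi_{i,w})=v_1$ regardless of $\alpha_i$. Taking the supremum over $\epsilon\to0_+$ gives $\beta\sigma R_i+\beta\gamma(1-\sigma)\overline R$, again with no $\alpha_i$.

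I also need to check that other deviations cannot reintroduce $\alpha_i$ through a strictly better payoff.
\begin{itemize}
\item Any deviation with $d_i=W$ and $G_i>0$ still has $\phi_i=1$. Its payoff is decreasing in $G_i$, so it is dominated by the small-$\epsilon$ limit.
\item A deviation with $G_i=0$ and $d_i=W$ uses the tie rule $\phi_i=R_i^\mu/\sum_j R_j^\mu<1$. The Lemma shows the expectile is then a strictly positive-weight convex combination of $v_1$ and $v_2$, so it is at most $v_1$, which is the $\epsilon\to0$ supremum.
\item Deviations that keep $d_i=P$ but raise $G_i$ only lower $R_i-G_i$.
\end{itemize}
Thus the binding deviation payoff is $\alpha_i$-free.

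Rearranging $R_i\ge\beta\sigma R_i+\beta\gamma(1-\sigma)\overline R$ and minimizing over $i$ gives exactly the stated threshold, which contains no $\alpha_i$. This also supports the parallel $\mu$-invariance corollary.

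The main obstacle is the tie-rule case $G_i=0$, $d_i=W$. Here $\alpha_i$ genuinely appears, and one must argue that it never exceeds the limiting deviation value. The Lemma's convex-combination bound handles this, so I would state that step explicitly.
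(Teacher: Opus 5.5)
Your proposal is correct and follows the paper's route. The paper reads the corollary off the peace-equilibrium theorem, whose proof uses the deviation $(G_i=\epsilon, d_i=W)$ to obtain the $\alpha_i$-free limit $\beta\sigma R_i+\beta\gamma(1-\sigma)\overline{R}$. Your two refinements supply the sufficiency checks the paper leaves implicit: $\phi_i=1$ holds exactly for every $\epsilon>0$, so the $p=1$ case of the expectile lemma replaces the paper's appeal to continuity of the expectile; and the tie-rule deviation $G_i=0$, $d_i=W$ gives an expectile strictly below $v_1$.
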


\begin{corollary}[Free-Riding in Large Number of Teams]
For fixed $\overline{R} , (\beta, \gamma, \sigma)$ with $\beta\gamma(1 - \sigma) > 0$ and $\beta\sigma < 1$, there exists $I_0$ such that for all $I \geq I_0$, peace is unsustainable regardless of resource distribution.

\end{corollary}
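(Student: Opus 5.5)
The plan is to reduce the corollary to the peace threshold already established: by the Peace Equilibrium theorem and the subsequent remark, peace $(G_i=0,d_i=P)_{\forall i}$ is an equilibrium iff
\[
\frac{\min_{i\in\mathcal I}R_i}{\overline R}\;\ge\;\kappa:=\frac{\beta\gamma(1-\sigma)}{1-\beta\sigma}.
\]
The hypotheses $\beta\gamma(1-\sigma)>0$ and $\beta\sigma<1$ guarantee that $\kappa$ is a well-defined, strictly positive constant that depends only on $(\beta,\gamma,\sigma)$. By the $\mu$- and $\alpha$-invariance corollaries, it does not depend on $I$, on $\mu$, or on the risk indices. The whole argument therefore comes down to showing that the left-hand ratio is forced below $\kappa$ once $I$ is large, uniformly over resource distributions.

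The key step is the elementary averaging bound. For any endowment vector $(R_1,\dots,R_I)$ with $R_i>0$ and $\sum_i R_i=\overline R$, the minimum does not exceed the mean:
\[
\min_i R_i\;\le\;\frac{\overline R}{I},\qquad\text{so}\qquad \frac{\min_i R_i}{\overline R}\le\frac1I .
\]
I then choose $I_0:=\lfloor 1/\kappa\rfloor+1$, so that $1/I<\kappa$ for every $I\ge I_0$. For such $I$, every distribution gives $\min_i R_i/\overline R\le 1/I<\kappa$. Hence the stability inequality fails for at least one team, namely the least endowed one, and peace is not an equilibrium, whatever the distribution. Note that $I_0$ depends only on $(\beta,\gamma,\sigma)$; fixing $\overline R$ plays no role beyond normalization, since the condition is scale-free.

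I do not expect a real obstacle here. The only point needing care is the word ``unsustainable.'' I read it as ``the peace profile fails to be a risk-aware MFTG equilibrium,'' which is exactly the negation of the iff in the Peace Equilibrium theorem. That reading requires the ``only if'' direction, namely that a violated inequality yields a profitable deviation. I would make this explicit by reusing the deviation $(G_i=\epsilon,d_i=W)$ for the team $i$ attaining the minimum. As $\epsilon\to0^+$, its expectile war payoff tends to $\beta\sigma R_i+\beta\gamma(1-\sigma)\overline R$, by continuity of the closed-form expectile in the earlier Lemma and Corollary with $\phi_i\to1$. This limit strictly exceeds $R_i$ when the inequality is strict, so some small $\epsilon>0$ is already a profitable deviation.
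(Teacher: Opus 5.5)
Your proposal is correct and follows the paper's proof essentially verbatim: the averaging bound $\min_i R_i/\overline R\le 1/I$ is combined with the threshold $\kappa=\beta\gamma(1-\sigma)/(1-\beta\sigma)$ and the choice $I_0=\lfloor 1/\kappa\rfloor+1$. Your version is slightly more careful than the paper's in two places: the paper only records $1/I\le K$, whereas the strict inequality $1/I<\kappa$ that you obtain is what actually breaks the peace condition; and you spell out the $\epsilon$-deviation behind the ``only if'' direction that the corollary relies on.
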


\begin{proof}  
 From the above we know that, peace requires:  
 $\frac{{\min}_{i\in \mathcal{I}} R_i }{\overline{R}} \geq \frac{\beta\gamma(1 - \sigma)}{1 - \beta\sigma}$ 
 
 Let $K =  \frac{\beta\gamma(1 - \sigma)}{1 - \beta\sigma} > 0$. Then:
 
  $\frac{{\min}_{i\in \mathcal{I}} R_i }{\overline{R}} \geq K$   
   
 Since $\sum_{i\in \mathcal{I}} R_i = \overline{R}$, the minimal endowment satisfies:  
   $
 \min_i R_i \leq \frac{\overline{R}}{I}
  $   which mean that   $\frac{{\min}_{i\in \mathcal{I}} R_i }{\overline{R}} \leq \frac{1}{I}.$
When $I > 1/K$, we have:    $\frac{{\min}_{i\in \mathcal{I}} R_i }{\overline{R}} \leq \frac{1}{I} \leq K.$
Thus, for $I_0 = \lfloor 1/K \rfloor + 1$, the condition fails for the poorest team.  

\end{proof}

This means when it is too crowded and limited resources, there will be a contest for access even if the society is equal, fair. For peace to be possible such a growing society with $I$, $\overline{R}$ should also scale with.
 
\begin{corollary}[Symmetric Peace Capacity]
 For symmetric teams ($R_i = \overline{R}/I, \alpha_i=\alpha$ $\forall i$), peace is sustainable iff:
$$
I \leq \frac{1 - \beta\sigma}{\beta\gamma(1 - \sigma)},
$$
provided $\beta\sigma < 1$ and $\beta\gamma(1 - \sigma) > 0$.
\end{corollary}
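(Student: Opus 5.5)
The plan is to specialize the Peace Equilibrium theorem to the symmetric case. That theorem says $(G_i=0,d_i=P)_{\forall i}$ is a risk-aware mean-field-type Nash equilibrium if and only if
\[
(1-\beta\sigma)R_i \ge \beta\gamma(1-\sigma)\overline{R}
\]
holds for every team $i$. By the $\alpha$-Invariance corollary, the common risk index $\alpha$ plays no role. By the $\mu$-Invariance corollary, the contest exponent plays no role either. So the claim reduces to one algebraic inequality in $I$.

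\textbf{Substitution.} Put $R_i=\overline{R}/I$ into the condition. Because the teams are symmetric, the family of $I$ inequalities collapses to a single one:
\[
(1-\beta\sigma)\frac{\overline{R}}{I} \;\ge\; \beta\gamma(1-\sigma)\overline{R}.
\]
Since $\overline{R}>0$, I divide it out. Since $I\ge 1$, I multiply by $I$ without changing the direction, giving
\[
1-\beta\sigma \ge I\,\beta\gamma(1-\sigma).
\]

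\textbf{Division.} The hypothesis $\beta\gamma(1-\sigma)>0$ lets me divide by it, which yields exactly $I\le (1-\beta\sigma)/(\beta\gamma(1-\sigma))$. Each step is an equivalence, so both directions of the ``iff'' follow together. The hypothesis $\beta\sigma<1$ makes the right-hand side positive, so the bound is non-vacuous. The result is consistent with the Free-Riding corollary, since $I_0=\lfloor 1/K\rfloor+1$ is the first integer violating it.

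\textbf{Main obstacle.} The only delicate point is whether the cited theorem's ``iff'' is fully justified. Its proof only checks the deviation $G_i=\epsilon\to0^+$, $d_i=W$, and uses the expectile tending to $v_1$ as $\phi_i\to1$. For the ``if'' direction one must also rule out other deviations $G_i>0$. I would handle this by noting that for any $G_i\in[0,R_i]$ the war payoff satisfies
\[
e_{\alpha}(\Pi_{i,w}) \le v_1 = \beta\sigma(R_i-G_i)+\beta\gamma(1-\sigma)(\overline{R}-G_i) \le \beta\sigma R_i+\beta\gamma(1-\sigma)\overline{R},
\]
using that an expectile never exceeds the maximum of the support. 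Deviating in arms while keeping $d_i=P$ only lowers the peace payoff $R_i-G_i$. Hence the $\epsilon$-limit deviation is the supremum of deviation payoffs, and the equivalence stands. With this check the corollary follows immediately.
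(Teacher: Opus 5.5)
Your proposal is correct and takes the same route as the paper: its proof simply says the corollary follows from the Peace Equilibrium theorem (equivalently, the min-fairness condition with $\min_i R_i/\overline{R}=1/I$), and your substitution followed by a chain of equivalences is exactly that specialization. You also add a check the paper leaves implicit in its proof of that theorem: every deviation payoff is bounded by $\beta\sigma R_i+\beta\gamma(1-\sigma)\overline{R}$, because an expectile never exceeds $v_1$ and a peace deviation only lowers $R_i-G_i$, and this properly closes the ``if'' direction.
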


\begin{proof} Follows from the above. 
\end{proof}

From the above result we derive the following key insights. 
Peace requires sufficient destruction: High $\beta\gamma$ undermines peace.  
Number of teams  harms peace: Free-riding incentives grow with $I$.  
$\mu$ affects arms races but not peace threshold i.e. Conflict technology sensitivity influences war intensity but not peace stability.  
Symmetric systems have sharp capacity limits i.e. Max sustainable number of teams determined by destruction parameters.
%
%
%

\begin{corollary}[When the unfair  redistribution of the resources creates war]
Assume that 
$$\frac{1 - \beta\sigma}{I} \geq \beta\gamma(1-\sigma) \quad \text{and} \quad \beta\sigma < 1$$  
Then the probability of an attack is lower when resources are distributed equally compared to unequal distribution.
\end{corollary}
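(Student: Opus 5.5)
The plan is to reduce the statement to the peace-stability condition of the Theorem (Peace Equilibrium under Risk-Sensitive Teams), read per team: team $i$ has an incentive to deviate to war (``attack'') exactly when
\[
(1-\beta\sigma)R_i < \beta\gamma(1-\sigma)\overline{R}.
\]
Equivalently, $R_i/\overline{R} < K := \frac{\beta\gamma(1-\sigma)}{1-\beta\sigma}$. I will interpret ``probability of an attack'' as the probability, under some randomness on the distribution (or on which team is pivotal), that at least one team violates this inequality. Concretely, with $\overline{R}$ fixed, I take the indicator
\[
\mathbf{1}\Bigl\{\min_i R_i/\overline{R} < K\Bigr\}.
\]
For a random allocation I take its expectation, i.e.\ $\Pr(\min_i R_i < K\overline{R})$. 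Either way, the task is to compare the equal allocation with arbitrary unequal ones.

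First step: under the equal allocation $R_i=\overline{R}/I$, the hypothesis $\frac{1-\beta\sigma}{I}\ge \beta\gamma(1-\sigma)$ is exactly the condition of the Corollary (Symmetric Peace Capacity). Hence $(G_i=0,d_i=P)$ is an equilibrium, no team has a profitable deviation, and the attack probability is $0$. Note that the $\alpha$- and $\mu$-invariance corollaries let me ignore risk attitudes and contest technology entirely.

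Second step: for any unequal allocation, since $\sum_i R_i=\overline{R}$ and not all $R_i$ are equal, we get $\min_i R_i<\overline{R}/I$. This is the same averaging argument used in the Free-Riding corollary. Thus the poorest team's share strictly decreases relative to the symmetric case. So the attack region $\{\min_i R_i/\overline{R}<K\}$ can only be entered, never left, when moving away from equality. Consequently, the attack probability under inequality is at least $0$, and it is strictly positive as soon as the inequality is large enough that $\min_i R_i < K\overline{R}$. This happens in particular whenever the hypothesis is strict, or for any random unequal allocation whose law puts positive mass near the simplex boundary.

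The main obstacle is that the statement says ``lower'', while the argument naturally gives only ``no higher''. In the boundary case $\frac{1-\beta\sigma}{I}>\beta\gamma(1-\sigma)$, slightly unequal allocations remain peaceful, so strict inequality fails pointwise. I would handle this by proving the weak comparison for every allocation. I would then get strictness either for the set of sufficiently unequal allocations, namely those with $\min_i R_i<K\overline{R}$, which is nonempty since $K>0$ and shares can be pushed to $0$, or in expectation over any allocation distribution with full support on the simplex, where this set has positive probability.
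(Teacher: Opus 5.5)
Your argument is essentially the paper's: the equal split satisfies the symmetric peace condition exactly by hypothesis, while any unequal split has $\min_i R_i<\overline{R}/I$ and breaks peace once $\min_i R_i/\overline{R}<K$. So, as in the paper, you get only a weak comparison in general, with strictness for sufficiently unequal allocations.

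One slip to fix: you have the two cases swapped. Every unequal split triggers an attack precisely when the hypothesis holds with \emph{equality} ($1/I=K$). Under the strict inequality, which you call the ``boundary case'', a whole neighbourhood of the equal split remains peaceful.
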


\begin{proof} 

For equal distribution of resources, ($R_{i} = \overline{R}/I$): 
If the condition   $$\frac{1 - \beta\sigma}{I} \geq \beta\gamma(1-\sigma)$$  is satisfied then peace is an equilibrium. 
  
For unequal distribution of  resources,:  
  Let $\min_i R_{i} = R_{\min} < \overline{R}/I$. The binding constraint is:  
   $$(1 - \beta\sigma)R_{\min} \geq \beta\gamma(1-\sigma)\overline{R}$$  
  Since $R_{\min} < \overline{R}/I$, this fails when:  
   $$\frac{1 - \beta\sigma}{I} > \beta\gamma(1-\sigma)$$   and  $\frac{{\min}_{i\in \mathcal{I}} R_i }{\overline{R}} < K,$
   making peace unsustainable.   
    \end{proof}

 Consider 
 $$
   \pi_{i,w} =\beta\sigma (R_{i} -G_i)+ \beta\gamma(1 - \sigma)  \left(\sum_{j \in \mathcal{I}}R_{j} -\sum_{j \in \mathcal{I}}G_j\right)\frac{\alpha_i \phi_{i}(\mathbf{G})}{\alpha_i \phi_{i}(\mathbf{G}) + (1 - \alpha_i)(1 - \phi_{i}(\mathbf{G}))}$$ with  
$$  \phi_{i}(\mathbf{G}) =\left\{ \begin{array}{l}
 \frac{(G_{i})^\mu}{\sum_{j \in \mathcal{I}} (G_{j})^\mu} \ \mbox{if } \sum_{j \in \mathcal{I}} (G_{j})^\mu >0, \\
  \frac{(R_{i})^\mu}{\sum_{j \in \mathcal{I}} (R_{j})^\mu} \ \mbox{if } \sum_{j \in \mathcal{I}} (G_{j})^\mu =0,\\
 \end{array} \right.$$  $\alpha_i\in (0,1), \mu >0, G_i \in [0, R_i], R_i>0.$

The continuity of the function \(\mathbf{G} \mapsto \phi_i(\mathbf{G})\) depends on the size of the index set \(\mathcal{I}\) and the point \(\mathbf{G}\). The following theorem provides the necessary and sufficient conditions for continuity.

\begin{lemma} The function \(\mathbf{G} \mapsto \phi_i(\mathbf{G})\) is continuous on its domain \(\prod_{j \in \mathcal{I}} [0, R_j]\) if and only if \(|\mathcal{I}| = 1\) (i.e., there is only one agent). If \(I \geq 2\), then the function is continuous at all points \(\mathbf{G} \neq \mathbf{0}\) and discontinuous at \(\mathbf{G} = \mathbf{0}\).
\end{lemma}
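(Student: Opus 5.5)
The plan is to split the domain $\prod_{j\in\mathcal I}[0,R_j]$ into the set $U=\{\mathbf G:\sum_{j}(G_j)^\mu>0\}$ and the single point $\mathbf G=\mathbf 0$. I would first note that, since every $G_j\ge 0$ and $\mu>0$, we have $\sum_j (G_j)^\mu=0$ if and only if $\mathbf G=\mathbf 0$. Hence $U$ is exactly the complement of $\{\mathbf 0\}$, and it is relatively open in the domain. The degenerate case $|\mathcal I|=1$ is then immediate. On $U$ we get $\phi_i=(G_i)^\mu/(G_i)^\mu=1$, and at $\mathbf 0$ we get $\phi_i=(R_i)^\mu/(R_i)^\mu=1$. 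So $\phi_i\equiv 1$ is constant and therefore continuous.

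For $I\ge 2$ I would first prove continuity at every $\mathbf G\neq\mathbf 0$. The map $x\mapsto x^\mu$ is continuous on $[0,\infty)$ for $\mu>0$, so both the numerator $(G_i)^\mu$ and the denominator $\sum_j (G_j)^\mu$ are continuous on the whole domain. Because $U$ is relatively open, $\phi_i$ coincides on a neighbourhood of any $\mathbf G\in U$ with the quotient formula. The denominator is strictly positive there, so $\phi_i$ is a quotient of continuous functions with nonvanishing denominator, hence continuous at $\mathbf G$.

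Next I would show discontinuity at $\mathbf 0$ by approaching along coordinate axes. Take $\mathbf G^\varepsilon=\varepsilon e_i$ with $0<\varepsilon\le R_i$. Then $\phi_i(\mathbf G^\varepsilon)=\varepsilon^\mu/\varepsilon^\mu=1$ for all $\varepsilon$, while $\phi_i(\mathbf 0)=(R_i)^\mu/\sum_j(R_j)^\mu<1$. The strict inequality holds because $I\ge2$ and every $R_j>0$. As an alternative witness, approaching along $\varepsilon e_k$ with $k\neq i$ gives the constant value $0\neq\phi_i(\mathbf 0)$. Either sequence shows that the limit along the axis disagrees with the value at $\mathbf 0$, and in fact that no limit of $\phi_i$ exists at $\mathbf 0$.

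Finally I would combine these steps to get the equivalence. If $|\mathcal I|=1$, then $\phi_i$ is continuous everywhere. If $I\ge2$, then $\phi_i$ fails to be continuous at $\mathbf 0$, so it is continuous on the whole domain only when $|\mathcal I|=1$. There is no real obstacle in this argument. The only step needing care is justifying that the quotient formula applies on a full neighbourhood of each nonzero point. That follows from the observation that $\sum_j (G_j)^\mu$ vanishes only at $\mathbf 0$, which uses nonnegativity of the $G_j$ and $\mu>0$.
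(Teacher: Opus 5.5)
Your proposal is correct and follows essentially the same route as the paper. Both arguments treat \(|\mathcal{I}|=1\) as the constant case \(\phi_i\equiv 1\), get continuity at \(\mathbf{G}\neq\mathbf{0}\) from the quotient formula on a neighbourhood where \(\sum_j (G_j)^\mu>0\), and show discontinuity at \(\mathbf{0}\) via the two coordinate-axis sequences with limits \(1\) and \(0\), both different from \(\phi_i(\mathbf{0})\in(0,1)\).
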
 

\begin{proof}

Let \(I = |\mathcal{I}|\) denote the number of teams, and let \(\Omega = \prod_{i \in \mathcal{I}} [0, R_i]\) be the domain of \(\mathbf{G}\). Recall that \(\alpha_i \in (0,1)\), \(\mu > 0\), \(G_j \in [0, R_j]\), and \(R_j > 0\) for all \(j \in \mathcal{I}\). The function \(\phi_i(\mathbf{G})\) is defined as above.
Note that \(\sum_{j \in \mathcal{I}} (G_j)^\mu = 0\) if and only if \(G_j = 0\) for all \(j \in \mathcal{I}\) (i.e., \(\mathbf{G} = \mathbf{0}\)), since \(\mu > 0\) and \(G_j \geq 0\). Thus, \(\phi_i(\mathbf{0}) = \frac{(R_i)^\mu}{\sum_{j \in \mathcal{I}} (R_j)^\mu}\).

The proof considers two cases based on \(I\).

Case 1: \(I = 1\) (only one agent). In this case, \(\mathcal{I} = \{i\}\), so \(\mathbf{G} = G_i \in [0, R_i]\). Then:

If \(G_i > 0\), \(\sum_{j \in \mathcal{I}} (G_j)^\mu = (G_i)^\mu > 0\), so \(\phi_i(\mathbf{G}) = \frac{(G_i)^\mu}{(G_i)^\mu} = 1\).

 If \(G_i = 0\), \(\sum_{j \in \mathcal{I}} (G_j)^\mu = 0\), so \(\phi_i(\mathbf{G}) = \frac{(R_i)^\mu}{(R_i)^\mu} = 1\).
Ù
Thus, \(\phi_i(\mathbf{G}) = 1\) for all \(G_i \in [0, R_i]\), which is a constant function. Constant functions are continuous on their domain. Therefore, \(\phi_i\) is continuous on \(\Omega\).

Case 2: \(I \geq 2\) (two or more agents). We first show continuity at all \(\mathbf{G} \neq \mathbf{0}\). Let \(\mathbf{G}^0 \in \Omega\) with \(\mathbf{G}^0 \neq \mathbf{0}\). Since \(\mathbf{G}^0 \neq \mathbf{0}\), there exists at least one \(k \in \mathcal{I}\) such that \(G_k^0 > 0\). Thus, \(S(\mathbf{G}^0) = \sum_{j \in \mathcal{I}} (G_j^0)^\mu \geq (G_k^0)^\mu > 0\) (because \(\mu > 0\) and \(G_k^0 > 0\)).

The functions \(\mathbf{G} \mapsto (G_j)^\mu\) for each \(j \in \mathcal{I}\) are continuous on \([0, R_j]\) (since \(G_j \geq 0\) and \(\mu > 0\), the map \(x \mapsto x^\mu\) is continuous for \(x \geq 0\)). Therefore, the sum \(S(\mathbf{G}) = \sum_{j \in \mathcal{I}} (G_j)^\mu\) is continuous on \(\Omega\) (as a finite sum of continuous functions). Similarly, \(\mathbf{G} \mapsto (G_i)^\mu\) is continuous.

Since \(S(\mathbf{G}^0) > 0\) and \(S\) is continuous, there exists a neighborhood \(U \subset \Omega\) of \(\mathbf{G}^0\) such that \(S(\mathbf{G}) > 0\) for all \(\mathbf{G} \in U\). In this neighborhood, \(\phi_i(\mathbf{G}) = \frac{(G_i)^\mu}{S(\mathbf{G})}\). This is a ratio of continuous functions, and the denominator is non-zero in \(U\), so \(\phi_i\) is continuous in \(U\). In particular, \(\phi_i\) is continuous at \(\mathbf{G}^0\).

Thus, \(\phi_i\) is continuous at all \(\mathbf{G}^0 \neq \mathbf{0}\).

Next, we show discontinuity at \(\mathbf{G} = \mathbf{0}\). Consider the value at \(\mathbf{0}\): \(\phi_i(\mathbf{0}) = \frac{(R_i)^\mu}{\sum_{j \in \mathcal{I}} (R_j)^\mu}\). Since \(I \geq 2\) and \(R_j > 0\) for all \(j\), \(\sum_{j \in \mathcal{I}} (R_j)^\mu > (R_i)^\mu\), so \(0 < \phi_i(\mathbf{0}) < 1\).

To demonstrate discontinuity, consider two sequences in \(\Omega\) approaching \(\mathbf{0}\):

Sequence along the \(i\)-th axis:
Define \(\mathbf{G}^{(k)} = (G_j^{(k)})_{j \in \mathcal{I}}\) where \(G_i^{(k)} = \frac{R_i}{k}\) and \(G_j^{(k)} = 0\) for \(j \neq i\), for \(k = 1, 2, 3, \ldots\). Since \(G_i^{(k)} = \frac{R_i}{k} \leq R_i\) and \(G_j^{(k)} = 0 \leq R_j\) for all \(j\), \(\mathbf{G}^{(k)} \in \Omega\). As \(k \to \infty\), \(\mathbf{G}^{(k)} \to \mathbf{0}\). For each \(k\), \(\sum_{j \in \mathcal{I}} (G_j^{(k)})^\mu = \left(\frac{R_i}{k}\right)^\mu > 0\), so:
   \[
   \phi_i(\mathbf{G}^{(k)}) = \frac{\left(\frac{R_i}{k}\right)^\mu}{\left(\frac{R_i}{k}\right)^\mu} = 1.
   \]
   Thus, \(\lim_{k \to \infty} \phi_i(\mathbf{G}^{(k)}) = 1\).

 Sequence along another axis: Fix \(m \in \mathcal{I}\) with \(m \neq i\). Define \(\mathbf{H}^{(k)} = (H_j^{(k)})_{j \in \mathcal{I}}\) where \(H_m^{(k)} = \frac{R_m}{k}\) and \(H_j^{(k)} = 0\) for \(j \neq m\), for \(k = 1, 2, 3, \ldots\). Then \(\mathbf{H}^{(k)} \in \Omega\) and \(\mathbf{H}^{(k)} \to \mathbf{0}\) as \(k \to \infty\). For each \(k\), \(\sum_{j \in \mathcal{I}} (H_j^{(k)})^\mu = \left(\frac{R_m}{k}\right)^\mu > 0\), and since \(H_i^{(k)} = 0\):
   \[
   \phi_i(\mathbf{H}^{(k)}) = \frac{0}{\left(\frac{R_m}{k}\right)^\mu} = 0.
   \]
   Thus, \(\lim_{k \to \infty} \phi_i(\mathbf{H}^{(k)}) = 0\).

Now, \(\lim_{k \to \infty} \phi_i(\mathbf{G}^{(k)}) = 1 \neq \phi_i(\mathbf{0})\) (since \(\phi_i(\mathbf{0}) < 1\)) and \(\lim_{k \to \infty} \phi_i(\mathbf{H}^{(k)}) = 0 \neq \phi_i(\mathbf{0})\) (since \(\phi_i(\mathbf{0}) > 0\)). Since different sequences approaching \(\mathbf{0}\) yield different limits, the limit \(\lim_{\mathbf{G} \to \mathbf{0}} \phi_i(\mathbf{G})\) does not exist. Therefore, \(\phi_i\) is discontinuous at \(\mathbf{G} = \mathbf{0}\). 
\end{proof}

\begin{lemma} For every team $i\in \mathcal{I},$  the function \(\mathbf{G} \mapsto \pi_{i,w} (\mathbf{G})\) is continuous on its domain \(\prod_{j \in \mathcal{I}} [0, R_j]\) if and only if \(|\mathcal{I}| = 1\) (i.e., there is only one agent). If \(I \geq 2\), then the function is continuous at all points \(\mathbf{G} \neq \mathbf{0}\) and discontinuous at \(\mathbf{G} = \mathbf{0}\).
\end{lemma}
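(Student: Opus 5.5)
The plan is to reduce everything to the preceding lemma on the continuity of $\mathbf{G}\mapsto\phi_i(\mathbf{G})$ by writing $\pi_{i,w}$ as a composition. Define, for the fixed $\alpha_i\in(0,1)$,
\[
\psi_i(p)=\frac{\alpha_i p}{\alpha_i p+(1-\alpha_i)(1-p)},\qquad p\in[0,1].
\]
Then
\[
\pi_{i,w}(\mathbf{G})=\beta\sigma(R_i-G_i)+\beta\gamma(1-\sigma)\Bigl(\overline{R}-\textstyle\sum_{j}G_j\Bigr)\,\psi_i\bigl(\phi_i(\mathbf{G})\bigr).
\]
The first observation is that $\psi_i$ is continuous on all of $[0,1]$. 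Its denominator equals $(1-\alpha_i)+(2\alpha_i-1)p$, which is affine in $p$ and takes the values $1-\alpha_i>0$ and $\alpha_i>0$ at the endpoints, so it never vanishes. Moreover $\psi_i(0)=0$ and $\psi_i(1)=1$. The affine terms $\beta\sigma(R_i-G_i)$ and $\overline{R}-\sum_j G_j$ are continuous in $\mathbf{G}$.

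For the case $I=1$, the preceding lemma gives $\phi_i\equiv 1$, so $\psi_i(\phi_i)\equiv 1$ and $\overline{R}=R_i$. Hence $\pi_{i,w}(G_i)=\beta(\sigma+\gamma(1-\sigma))(R_i-G_i)$, which is affine and therefore continuous. For $I\ge 2$ and a point $\mathbf{G}\neq\mathbf{0}$, the preceding lemma gives continuity of $\phi_i$ at $\mathbf{G}$. Composing with the continuous $\psi_i$ and multiplying and adding continuous factors yields continuity of $\pi_{i,w}$ at $\mathbf{G}$.

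For discontinuity at $\mathbf{0}$ when $I\ge2$, I will reuse the two sequences from the previous proof: $\mathbf{G}^{(k)}$ along the $i$-th axis and $\mathbf{H}^{(k)}$ along the $m$-th axis with $m\neq i$. Along both sequences $\sum_j G_j\to 0$ and $G_i\to0$, so the secure term tends to $\beta\sigma R_i$ and the factor $\overline{R}-\sum_j G_j$ tends to $\overline{R}$. Since $\phi_i(\mathbf{G}^{(k)})=1$ and $\phi_i(\mathbf{H}^{(k)})=0$, the two limits are
\[
\beta\sigma R_i+\beta\gamma(1-\sigma)\overline{R}\qquad\text{and}\qquad \beta\sigma R_i .
\]
They differ by $\beta\gamma(1-\sigma)\overline{R}$, and this quantity is strictly positive because $\beta,\gamma>0$, $\sigma<1$ and $\overline{R}>0$. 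So the limit of $\pi_{i,w}$ at $\mathbf{0}$ does not exist, and $\pi_{i,w}$ is discontinuous there whatever value it takes at $\mathbf{0}$.

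Since $I\ge2$ forces this discontinuity, continuity on the whole domain holds if and only if $I=1$. The only delicate point is making sure the discontinuity of $\phi_i$ is not cancelled in $\pi_{i,w}$. This requires the multiplier $\beta\gamma(1-\sigma)(\overline{R}-\sum_j G_j)$ to stay bounded away from zero near $\mathbf{0}$, and it does, by the standing parameter assumptions. It also requires $\psi_i$ to separate $0$ and $1$, which it does since $\psi_i(0)=0\neq1=\psi_i(1)$.
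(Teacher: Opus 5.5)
Your proof is correct and follows the route the paper intends. The paper states this lemma without a separate proof, as a consequence of the preceding lemma on $\phi_i$, and your argument makes that reduction explicit: composition with $\psi_i$, whose denominator $(1-\alpha_i)+(2\alpha_i-1)p$ never vanishes on $[0,1]$, handles $I=1$ and all points $\mathbf{G}\neq\mathbf{0}$; at $\mathbf{0}$ you reuse the two axis sequences from the $\phi_i$ proof, whose limits $\beta\sigma R_i+\beta\gamma(1-\sigma)\overline{R}$ and $\beta\sigma R_i$ differ because $\psi_i(0)\neq\psi_i(1)$ and $\beta\gamma(1-\sigma)\overline{R}>0$, which is the non-cancellation check the paper leaves implicit.
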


\begin{lemma}[Concavity of \(G_i \mapsto \pi_{i,w}(\mathbf{G})\))]  
For every team \(i \in \mathcal{I}\), and for fixed \(G_j \in [0, R_j]\) for all \(j \neq i\), define \(S_{-i} = \sum_{j \neq i} (G_j)^\mu\). The function \(G_i \mapsto \pi_{i,w}(\mathbf{G})\) is concave on \([0, R_i]\) if and only if one of the following holds:  

\(S_{-i} > 0\) and \(\mu \leq 1\), or  

 \(S_{-i} = 0\) and \(|\mathcal{I}| = 1\) (which means there is only one team).  

Otherwise (i.e., if \(S_{-i} > 0\) and \(\mu > 1\), or if \(S_{-i} = 0\) and \(|\mathcal{I}| \geq 2\)), the function is not concave on \([0, R_i]\).

\end{lemma}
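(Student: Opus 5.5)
The plan is to reduce everything to a one-dimensional function of $x=G_i$ and identify its shape exactly. With $G_j$ fixed for $j\neq i$, set $L(x)=\sum_{j}R_j-\sum_{j\neq i}G_j-x$, which is affine, positive on $[0,R_i)$ and has slope $-1$. Then
\[
\pi_{i,w}(x)=\beta\sigma(R_i-x)+\beta\gamma(1-\sigma)\,L(x)\,h\bigl(\phi_i(x)\bigr),\qquad h(\phi)=\frac{\alpha_i\phi}{\alpha_i\phi+(1-\alpha_i)(1-\phi)} .
\]
The key algebraic observation, which I would establish first, is that when $S_{-i}>0$ the composition simplifies to
\[
h(\phi_i(x))=\frac{\alpha_i x^\mu}{\alpha_i x^\mu+(1-\alpha_i)S_{-i}}=\frac{x^\mu}{x^\mu+c},\qquad c=\frac{(1-\alpha_i)S_{-i}}{\alpha_i}>0 .
\]
This is again a Tullock ratio, and it explains why $\alpha_i$ does not appear in the conditions of the lemma. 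It also shows directly that a naive analysis of $h$ alone would be misleading, since $h$ by itself is concave only when $\alpha_i\ge \tfrac12$.

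For the case $S_{-i}>0$, $\mu\le1$, write $f(x)=x^\mu/(x^\mu+c)$. I will show that $f$ is increasing and concave on $[0,R_i]$; the standard computation of $f''$ gives a sign controlled by $(\mu-1)c-(\mu+1)x^\mu$, which is $\le0$ when $\mu\le1$. Next I will use the product rule with $L''=0$:
\[
(Lf)''=2L'f'+Lf''=-2f'+Lf''\le0 .
\]
Adding the affine term $\beta\sigma(R_i-x)$ preserves concavity. For the converse with $S_{-i}>0$, $\mu>1$, I will look near $x=0$, where $f''(x)\sim \mu(\mu-1)x^{\mu-2}/c>0$ while $-2f'(x)\sim -2\mu x^{\mu-1}/c$. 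The ratio of these terms is of order $x^{-1}$, so $Lf''$ dominates for small $x>0$ because $L(0)>0$. Hence $\pi_{i,w}''>0$ on some interval $(0,\epsilon)$, and $\pi_{i,w}$ is not concave.

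For $S_{-i}=0$ there are two cases. If $I=1$, then $\phi_i\equiv1$ by the continuity lemma's Case 1. In that case $\pi_{i,w}$ is affine in $x$, hence concave. If $I\ge2$, the earlier continuity lemma for $\phi_i$ gives $\phi_i(x)=1$ for $x>0$ but $\phi_i(0)=R_i^\mu/\sum_jR_j^\mu<1$. So $\pi_{i,w}$ is affine on $(0,R_i]$ and jumps at $0$, and by the continuity lemma for $\pi_{i,w}$ it is discontinuous there.

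I expect this last step to be the main obstacle. Since $h$ is increasing and $L(0)>0$, the value at $0$ lies strictly below the right limit. A downward jump at an endpoint does not by itself violate the chord inequality for concavity, so the argument cannot rest on the chord inequality alone. I would therefore make the notion precise as concavity together with the continuity that concave functions enjoy on the interior, extended to the closed domain in the sense used by the preceding lemma. Equivalently, I would require upper semicontinuity at $0$ together with the limiting value, and then derive non-concavity from the discontinuity proven in that lemma. If that reading is not intended, I would flag that this branch of the ``otherwise'' clause needs this convention to hold.
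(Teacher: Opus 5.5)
Your treatment of the case $S_{-i}>0$ is the paper's argument. The paper rewrites $A_i$ as $g(G_i)=\alpha_iG_i^\mu/(\alpha_iG_i^\mu+(1-\alpha_i)S_{-i})$, which is your $f$ after dividing through by $\alpha_i$. It then uses the same sign factor $(\mu-1)c-(\mu+1)x^\mu$ (up to the factor $\alpha_i$), the same identity $(Lf)''=-2f'+Lf''$, and the same small-$x$ dominance of $Lf''$ when $\mu>1$. One step should be stated explicitly: $f''\le 0$ is only computed on $(0,R_i]$. For $\mu<1$, where $f'(0^+)=\infty$, you extend concavity to $[0,R_i]$ using continuity of $\pi_{i,w}$ at $0$, which holds because $S_{-i}>0$. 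Your treatment of the branch $S_{-i}=0$, $|\mathcal{I}|=1$ also matches the paper.

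On the branch $S_{-i}=0$, $|\mathcal{I}|\ge 2$ your suspicion is correct, and you should state it more strongly: under the ordinary definition the function is concave there, so the lemma's ``only if'' direction is false. Here all $G_j=0$ for $j\neq i$. Write $\ell=\beta\sigma R_i+\beta\gamma(1-\sigma)\overline{R}$ and $\kappa=\beta\sigma+\beta\gamma(1-\sigma)$. Then $\pi_{i,w}(z)=\ell-\kappa z$ for $z\in(0,R_i]$ and $\pi_{i,w}(0)<\ell$. For $y\in(0,R_i]$ and $t\in(0,1)$,
\[
\pi_{i,w}\bigl((1-t)y\bigr)-t\,\pi_{i,w}(0)-(1-t)\,\pi_{i,w}(y)=t\bigl(\ell-\pi_{i,w}(0)\bigr)>0,
\]
and pairs of points in $(0,R_i]$ give equality, so the chord inequality holds everywhere.

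The paper's proof reaches the opposite conclusion only by asserting that a concave function on a closed interval must be continuous. That assertion is false: concavity guarantees continuity only on the interior, and it permits a downward jump at an endpoint, which is exactly what happens here. Your convention (concave and upper semicontinuous, i.e.\ closed concave) does make the ``otherwise'' clause true. However, it is a correction of the statement, not an equivalent reading, and you should present it that way. Nothing later in the paper depends on this branch: the existence and uniqueness proofs only invoke concavity with $S_{-i}>0$ and $\mu\le 1$.
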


\begin{proof}  
The proof analyzes \(\pi_{i,w}(\mathbf{G})\) as a function of \(G_i \in [0, R_i]\) with fixed \(G_j\) for \(j \neq i\). Define \(A_i(\mathbf{G}) = \frac{\alpha_i \phi_i(\mathbf{G})}{\alpha_i \phi_i(\mathbf{G}) + (1 - \alpha_i)(1 - \phi_i(\mathbf{G}))}\). Then:  
\[
\pi_{i,w}(\mathbf{G}) = \beta\sigma (R_i - G_i) + \beta\gamma(1 - \sigma) \left(  \overline{R} - (G_i + C_{-i}) \right) A_i(\mathbf{G}),
\]  
where \( \overline{R} = \sum_{j \in \mathcal{I}} R_j\) (constant), \(C_{-i} = \sum_{j \neq i} G_j\) (constant), and \(\sum_{j \in \mathcal{I}} G_j = G_i + C_{-i}\). The concavity depends on \(S_{-i} = \sum_{j \neq i} (G_j)^\mu\) and \(\mu\).  

Case 1: \(S_{-i} > 0\)  
Here, \(\sum_{j \in \mathcal{I}} (G_j)^\mu > 0\) for all \(G_i \in [0, R_i]\), so:  
\[
\phi_i(\mathbf{G}) = \frac{(G_i)^\mu}{(G_i)^\mu + S_{-i}}, \quad A_i(\mathbf{G}) = \frac{\alpha_i (G_i)^\mu}{\alpha_i (G_i)^\mu + (1 - \alpha_i) S_{-i}}.
\]  
Substitute into \(\pi_{i,w}\):  
\[
\pi_{i,w}(G) = \beta\sigma (R_i - G_i) + \beta\gamma(1 - \sigma) (K - G_i) \cdot \frac{\alpha_i (G_i)^\mu}{\alpha_i (G_i)^\mu + (1 - \alpha_i) S_{-i}},
\]  
where \(K =  \overline{R} - C_{-i} \geq 0\) (since \(\sum_j G_j \leq \sum_j R_j\)). Define:  
\[
g(G_i) = \frac{\alpha_i (G_i)^\mu}{\alpha_i (G_i)^\mu + c}, \quad c = (1 - \alpha_i) S_{-i} > 0,
\]  
so \(\pi_{i,w}(G_i) = \beta\sigma R_i - \beta\sigma G_i + \beta\gamma(1 - \sigma) (K - G_i) g(G_i)\). The term \(-\beta\sigma G_i\) is linear, so concavity depends on \(f(G_i) = (K - G_i) g(G_i)\).  

First and second derivatives of \(g\):  
  \[
  g'(G_i) = \frac{\alpha_i \mu c G_i^{\mu-1}}{(\alpha_i G_i^\mu + c)^2} > 0 \quad (\text{since } \alpha_i, c, \mu > 0),
  \]
  \[
  g''(G_i) = \frac{\alpha_i \mu c G_i^{\mu-2}}{(\alpha_i G_i^\mu + c)^3} \left[ c(\mu-1) - \alpha_i (\mu+1) G_i^\mu \right].
  \]  
  The sign of \(g''\) depends on \(c(\mu-1) - \alpha_i (\mu+1) G_i^\mu\).

Subcase \(\mu \leq 1\):  
  \(c(\mu-1) - \alpha_i (\mu+1) G_i^\mu < 0\) for all \(G_i > 0\), so \(g''(G_i) < 0\). Then:  
  \[
  f''(G_i) = (K - G_i) g''(G_i) - 2g'(G_i) < 0,
  \]  
  since \(K - G_i \geq 0\), \(g''(G_i) < 0\), and \(g'(G_i) > 0\). Thus, \(f\) is strictly concave on \((0, R_i)\). As \(g\) is continuous at \(G_i = 0\) (since \(\lim_{G_i \to 0^+} g(G_i) = 0 = g(0)\)), \(f\) is concave on \([0, R_i]\). Hence, \(\pi_{i,w}\) is concave on \([0, R_i]\).

Subcase \(\mu > 1\):  
  \(c(\mu-1) - \alpha_i (\mu+1) G_i^\mu > 0\) for small \(G_i\), so \(g''(G_i) > 0\) near 0. Then:  
  \[
  f''(G_i) = (K - G_i) g''(G_i) - 2g'(G_i) > 0
  \]  
  for small \(G_i\) (since \((K - G_i) g''(G_i)\) dominates and is positive). Thus, \(f\) is not concave on \([0, R_i]\), so \(\pi_{i,w}\) is not concave.  

Case 2: \(S_{-i} = 0\)  
Here, \(G_j = 0\) for all \(j \neq i\).

Subcase \(|\mathcal{I}| = 1\):  
  \(\phi_i(\mathbf{G}) = 1\) for all \(G_i\), so \(A_i(\mathbf{G}) = 1\). Then:  
  \[
  \pi_{i,w}(G_i) = \beta\sigma (R_i - G_i) + \beta\gamma(1 - \sigma) (R_i - G_i) = [\beta\sigma + \beta\gamma(1 - \sigma)] (R_i - G_i),
  \]  
  which is linear (hence concave) on \([0, R_i]\).

Subcase \(|\mathcal{I}| \geq 2\):  
  
   For \(G_i > 0\), \(\phi_i(\mathbf{G}) = 1\) (since \(\sum_j (G_j)^\mu = G_i^\mu > 0\)), so \(A_i(\mathbf{G}) = 1\), and:  
    \[
    \pi_{i,w}(G) = \beta\sigma (R_i - G_i) + \beta\gamma(1 - \sigma) ( \overline{R}- G_i).
    \]  
    This is linear (hence concave) on \((0, R_i]\).  
  -
  At \(G_i = 0\), \(\phi_i(\mathbf{0}) = \frac{R_i^\mu}{\sum_j R_j^\mu} < 1\) (since \(|\mathcal{I}| \geq 2\)), so \(A_i(\mathbf{0}) < 1\), and:  
    \[
    \pi_{i,w}(\mathbf{0}) = \beta\sigma R_i + \beta\gamma(1 - \sigma)  \overline{R} \cdot A_i(\mathbf{0}) < \beta\sigma R_i + \beta\gamma(1 - \sigma)  \overline{R}.
    \]  
  Since \(\lim_{G_i \to 0^+} \pi_{i,w}(G_i) = \beta\sigma R_i + \beta\gamma(1 - \sigma)  \overline{R} > \pi_{i,w}(\mathbf{0})\), the function is discontinuous at \(G_i = 0\). A concave function on a closed interval must be continuous, so \(\pi_{i,w}\) is not concave on \([0, R_i]\).  
\end{proof}

\begin{theorem}[War Equilibrium under Risk-Sensitive Teams]
If \(0 <\mu \leq 1\), then there exists a pure-strategy  risk-aware mean-field-type  Nash equilibrium  ($d_{i} = W$ $\forall i$), arms  \(\mathbf{G}^* = (G_1^*, \dots, G_I^*)\) such that 
\(G_i^* > 0\) for all \(i \in \mathcal{I}\).
\end{theorem}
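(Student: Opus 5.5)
Since $I\ge 2$ and every team plays $d_i=W$, a unilateral switch to $d_i=P$ does not change the outcome: war still occurs. So the war/peace component of any candidate profile $(d_i=W)_{\forall i}$ is automatically a best response. The problem then reduces to the arms game in which team $i$ maximizes $\pi_{i,w}(\mathbf{G})$ over $G_i\in[0,R_i]$. Here $\pi_{i,w}$ is the closed-form expectile war payoff obtained from the Lemma on binary expectiles and its Corollary. The plan is to prove existence of a pure equilibrium of this reduced game with all $G_i^*>0$ by a Debreu--Glicksberg--Fan argument on a truncated strategy space, followed by a limiting argument.

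\textbf{Step 1 (truncation).} Fix $\varepsilon\in(0,\min_j R_j)$ and restrict each team to $[\varepsilon,R_i]$. On $\prod_j[\varepsilon,R_j]$ the profile $\mathbf{0}$ is excluded, so by the continuity lemma for $\phi_i$ (and its analogue for $\pi_{i,w}$) every payoff is jointly continuous. Moreover $S_{-i}\ge(I-1)\varepsilon^\mu>0$, so the concavity lemma, case $S_{-i}>0$ with $\mu\le1$, gives concavity of $G_i\mapsto\pi_{i,w}$. The sets are nonempty, compact and convex, so the best-response correspondence is nonempty, convex-valued and upper hemicontinuous by Berge's theorem. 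Kakutani then yields an equilibrium $\mathbf{G}^\varepsilon$ of the truncated game.

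\textbf{Step 2 (removing the truncation).} Take $\varepsilon_n\to0$ and extract a convergent subsequence $\mathbf{G}^{\varepsilon_n}\to\mathbf{G}^*$ by compactness. I will first rule out $\mathbf{G}^*=\mathbf{0}$. At $\mathbf{0}$, any team deviating to a small $\eta>0$ obtains $\phi_i=1$. Its payoff is then close to $\beta\sigma R_i+\beta\gamma(1-\sigma)\overline{R}$, which strictly exceeds the value near the symmetric-small profile. This is exactly the upward jump exhibited in the proof of the concavity lemma, Case 2. Next I must rule out $G_i^*=0$ for some $i$ while $G_{-i}^*\ne0$. In that situation $S_{-i}>0$, so $\pi_{i,w}$ is continuous and concave on all of $[0,R_i]$. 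It therefore suffices to show that the right derivative at $0$ is strictly positive, namely
\[
-\beta\sigma+\beta\gamma(1-\sigma)\bigl[(K-G_i)g'(G_i)-g(G_i)\bigr]\Big|_{G_i=0^+}>0,
\]
with $g,K,c$ as in the concavity proof. For $\mu<1$ this holds because $g'(G_i)\sim \alpha_i\mu G_i^{\mu-1}/c\to+\infty$ while $K\ge R_i>0$. Once interiority at the limit is established, continuity of payoffs near $\mathbf{G}^*\ne\mathbf{0}$ lets equilibrium inequalities pass to the limit, with deviations first restricted to $[\varepsilon_n,R_i]$ and then extended by density. Hence $\mathbf{G}^*$ is an equilibrium of the untruncated game with $G_i^*>0$ for all $i$.

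\textbf{Main obstacle.} The delicate point is the boundary case $\mu=1$ in Step 2. There $g'(0^+)=\alpha_i/c=\alpha_i/\bigl((1-\alpha_i)S_{-i}\bigr)$ is finite. Positivity of the marginal payoff at $0$ then requires
\[
\beta\gamma(1-\sigma)K\alpha_i>\beta\sigma(1-\alpha_i)S_{-i},
\]
which need not hold automatically. My first attempt will be to bound $S_{-i}\le\sum_{j\ne i}G_j^*$ and use $K=\overline{R}-C_{-i}\ge R_i$. I will also exploit that the other teams are themselves interior, so their first-order conditions tie $S_{-i}$ to $K$. If this fails, I would strengthen the conclusion for $\mu=1$ either by a mild parameter condition of this form or by settling for an equilibrium in which positivity of all arms holds only under that condition.
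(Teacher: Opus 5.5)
Your route is the paper's: truncate to $[\varepsilon,R_i]$, get an equilibrium of the truncated game by Debreu--Glicksberg--Fan/Kakutani, then let $\varepsilon\to0$. For $0<\mu<1$ your version is more careful than the paper's. You dispose of the $d_i$ component, and you separate the cases $\mathbf{G}^*=\mathbf{0}$ and $G_i^*=0\neq\mathbf{G}^*_{-i}$, handling the latter with the infinite right derivative. The paper only considers $\mathbf{G}^{\varepsilon_k}\to\mathbf{0}$, and its concluding ``$G_i^{\varepsilon_k}>\varepsilon_k$'' does not contradict $G_i^{\varepsilon_k}\to0$.

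One step of yours needs repair. Since $\pi_{i,w}$ is discontinuous at $\mathbf{0}$ and the truncated equilibria never sit there, neither $\phi_i(\mathbf{0})$ nor the jump from the concavity lemma tells you anything about payoffs along $\mathbf{G}^{\varepsilon_n}\to\mathbf{0}$. Nothing makes that sequence symmetric either. Argue instead as follows.
\begin{itemize}
\item Since $\sum_j\phi_j=1$ and $I\ge2$, along a subsequence some fixed team $i$ has $\phi_i(\mathbf{G}^{\varepsilon_n})\le\tfrac12$. Hence $A_i\le\alpha_i$, and its equilibrium payoff is at most $\beta\sigma R_i+\beta\gamma(1-\sigma)\overline{R}\alpha_i$.
\item A deviation to a fixed small $h>0$ has $\phi_i\to1$ because $S_{-i}^{\varepsilon_n}\to0$. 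Its payoff therefore tends to $\beta\sigma(R_i-h)+\beta\gamma(1-\sigma)(\overline{R}-h)$, which is larger for small $h$.
\end{itemize}

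The genuine gap is the one you flag at $\mu=1$, and the other teams' first-order conditions cannot close it: the statement is false there for $I\ge3$ and $\sigma>0$. Take $I=3$, $\mu=1$, $\beta=1$, $\alpha_1=\alpha_2=\tfrac12$, $\alpha_3=a$, and put $\theta=\gamma(1-\sigma)$, $S=G_1+G_2$, $C=\sum_jG_j$. At an equilibrium with all $G_i>0$, first-order necessity gives $\partial\pi_{i,w}/\partial G_i\ge0$ for each $i$. For team $3$ this reads
\[
\theta(\overline{R}-C)\,\frac{a(1-a)S}{(aG_3+(1-a)S)^2}\;\ge\;\sigma+\theta A_3\;\ge\;\sigma .
\]
This forces $S\le\theta\overline{R}a/(\sigma(1-a))$. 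By $(x+y)^2\ge4xy$, the left side is at most $\theta(\overline{R}-C)/(4G_3)$.

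For small $a$, teams $1,2$ are interior with $A_i=G_i/C$. Their first-order conditions read $(\sigma+\theta)C^2-\theta\overline{R}C+\theta\overline{R}G_i=0$, so $G_1=G_2$. The smaller root is at most $2G_1<C$, so $C$ is the larger root. Hence $C\to\theta\overline{R}/(\sigma+\theta)$ and $G_3\to\theta\overline{R}/(\sigma+\theta)$ as $a\to0$. Then $\theta(\overline{R}-C)/(4G_3)\to\sigma/4<\sigma$, a contradiction. So for small $a$ no equilibrium has all $G_i^*>0$.

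Your fallback is therefore the right move: the theorem holds for $0<\mu<1$, and at $\mu=1$ it needs an extra hypothesis. The condition you wrote depends on the unknown limit. A uniform version is $\gamma(1-\sigma)\alpha_iR_i>\sigma(1-\alpha_i)(\overline{R}-R_i)$ for all $i$, which follows from $K\ge R_i$ and $S_{-i}=C_{-i}\le\overline{R}-R_i$ and covers $\sigma=0$.

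Alternatively, $I=2$ works unconditionally. A limit $(0,G_2^*)$ with $G_2^*>0$ is impossible, because against $G_1=0$ team $2$'s payoff $\beta\sigma(R_2-h)+\beta\gamma(1-\sigma)(\overline{R}-h)$ is strictly decreasing in $h\in(0,R_2]$. That contradicts the limiting equilibrium inequality.
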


This establishes that in environments governed by a concave conflict success function where the contest exponent  $0<\mu\leq 1,$ there exists a pure-strategy, risk-aware, mean-field-type Nash equilibrium in which all team  choose war and invest positively in armament. This result holds even under risk sensitivity, highlighting that universal militarization can arise endogenously in fragile states where conflict participation yields non-negligible strategic gains regardless of resource asymmetries. Applied to the West African corridor, the theorem explains how local actors from state forces to lightly armed militias can converge toward mutually reinforcing war equilibria, driven not by 
irrationality or ideology but by structural payoff asymmetries and insufficient deterrents, thereby providing a rigorous foundation for understanding persistent armed mobilization across Mali, Burkina Faso, Niger, Benin, Chad, Togo, and Nigeria. 

\begin{proof} 
The proof uses a limit argument on restricted games with lower bounds on strategies, leveraging the Debreu-Glicksberg-Fan theorem for the restricted games and verifying the equilibrium conditions in the limit.

Restricted Game Setup:  
   For \(\varepsilon > 0\), define the restricted strategy space for each player \(i\) as \([\varepsilon, R_i]\). This is nonempty, compact, and convex. The product space is \(\Omega_\varepsilon = \prod_{i \in \mathcal{I}} [\varepsilon, R_i]\).

Continuity and Concavity in Restricted Game:  
   For any \(\mathbf{G} \in \Omega_\varepsilon\), \(\sum_{j \in \mathcal{I}} (G_j)^\mu \geq I \varepsilon^\mu > 0\) (since \(G_j \geq \varepsilon > 0\) and \(\mu > 0\)). Thus, \(\phi_i(\mathbf{G}) = \frac{(G_i)^\mu}{\sum_{j \in \mathcal{I}} (G_j)^\mu}\) is continuous in \(\mathbf{G}\) on \(\Omega_\varepsilon\) (as the denominator is bounded away from zero). The term:
   \[
   A_i(\mathbf{G}) = \frac{\alpha_i \phi_i(\mathbf{G})}{\alpha_i \phi_i(\mathbf{G}) + (1 - \alpha_i)(1 - \phi_i(\mathbf{G}))}
   \]
   is a composition of continuous functions and has a positive denominator (since \(\alpha_i \in (0,1)\) and \(\phi_i(\mathbf{G}) \in [0,1]\)), so \(A_i(\mathbf{G})\) is continuous in \(\mathbf{G}\) on \(\Omega_\varepsilon\). The linear terms in \(\pi_{i,w}\) are also continuous. Hence, \(\pi_{i,w}\) is continuous in \(\mathbf{G}\) on \(\Omega_\varepsilon\).  

   For fixed \(\mathbf{G}_{-i}\), the function \(G_i \mapsto \pi_{i,w}(G_i, \mathbf{G}_{-i})\) is concave on \([\varepsilon, R_i]\) if \(\mu \leq 1\) (as established in previous concavity results, since \(\sum_{j \neq i} (G_j)^\mu \geq (I-1)\varepsilon^\mu > 0\) and \(\mu \leq 1\)).

Existence in Restricted Game:  
   By the Debreu-Glicksberg-Fan theorem (which requires continuity in all players' strategies and quasiconcavity in each player's own strategy), there exists a Nash equilibrium \(\mathbf{G}^\varepsilon = (G_1^\varepsilon, \dots, G_I^\varepsilon)\) for the restricted game on \(\Omega_\varepsilon\).

Limit as \(\varepsilon \to 0^+\):  
   Consider a sequence \(\{\varepsilon_k\}_{k=1}^\infty\) with \(\varepsilon_k > 0\) and \(\varepsilon_k \to 0\) as \(k \to \infty\). The sequence \(\{\mathbf{G}^{\varepsilon_k}\}\) lies in the compact set \(\prod_{i \in \mathcal{I}} [0, R_i]\), so it has a convergent subsequence (still denoted by \(\{\mathbf{G}^{\varepsilon_k}\}\)) with limit \(\mathbf{G}^* \in \prod_{i \in \mathcal{I}} [0, R_i]\).

Positivity of \(\mathbf{G}^*\):  
   Suppose, for contradiction, that \(G_i^* = 0\) for some \(i\). At equilibrium in the restricted game, \(G_i^{\varepsilon_k} \geq \varepsilon_k\). The payoff for player \(i\) at \(\mathbf{G}^{\varepsilon_k}\) is:
   \[
   \pi_{i,w}(\mathbf{G}^{\varepsilon_k}) \geq \pi_{i,w}(G_i, \mathbf{G}_{-i}^{\varepsilon_k}) \quad \forall G_i \in [\varepsilon_k, R_i].
   \]
   In particular, for \(G_i = \varepsilon_k\), the payoff is:
   \[
   \pi_{i,w}(\varepsilon_k, \mathbf{G}_{-i}^{\varepsilon_k}) = \beta\sigma (R_i - \varepsilon_k) + \beta\gamma(1 - \sigma) \left(  \overline{R} - \varepsilon_k - \sum_{j \neq i} G_j^{\varepsilon_k} \right) A_i(\varepsilon_k, \mathbf{G}_{-i}^{\varepsilon_k}),
   \]
   where \( \overline{R} = \sum_{j \in \mathcal{I}} R_j\). As \(k \to \infty\), if \(\mathbf{G}^{\varepsilon_k} \to \mathbf{0}\), then \(\phi_i(\mathbf{G}^{\varepsilon_k}) \to \frac{R_i^\mu}{\sum_{j \in \mathcal{I}} R_j^\mu}\) (by the definition of \(\phi_i\) at \(\mathbf{0}\)), and \(A_i(\mathbf{G}^{\varepsilon_k}) \to A_i(\mathbf{0})\). However, if player \(i\) faces \(\mathbf{G}_{-i} = \mathbf{0}\) in the original game, deviating to a small \(G_i > 0\) yields:
   \[
   \pi_{i,w}(G_i, \mathbf{0}) = \beta\sigma (R_i - G_i) + \beta\gamma(1 - \sigma) ( \overline{R} - G_i),
   \]
   which, for small \(G_i\), exceeds \(\pi_{i,w}(\mathbf{0})\) since \(A_i(\mathbf{0}) < 1\). This implies that in the restricted game, for small \(\varepsilon_k\), the right-derivative of \(\pi_{i,w}\) with respect to \(G_i\) at \(G_i = \varepsilon_k\) is positive, so player \(i\) would increase \(G_i\) if possible. Thus, at equilibrium, \(G_i^{\varepsilon_k} > \varepsilon_k\) for small \(\varepsilon_k\), contradicting \(G_i^* = 0\). Hence, \(\mathbf{G}^* > \mathbf{0}\).

\(\mathbf{G}^*\)  is a Nash Equilibrium:  
   Fix player \(i\) and \(G_i \in [0, R_i]\). We need:
   \[
   \pi_{i,w}(G_i^*, \mathbf{G}_{-i}^*) \geq \pi_{i,w}(G_i, \mathbf{G}_{-i}^*).
   \]

Case 1: \(G_i > 0\). Since \(\mathbf{G}^* > \mathbf{0}\) and \(G_i > 0\), both \((G_i^*, \mathbf{G}_{-i}^*)\) and \((G_i, \mathbf{G}_{-i}^*)\) are not \(\mathbf{0}\), so \(\pi_{i,w}\) is continuous at these points. By equilibrium in the restricted game:
     \[
     \pi_{i,w}(G_i^{\varepsilon_k}, \mathbf{G}_{-i}^{\varepsilon_k}) \geq \pi_{i,w}(G_i, \mathbf{G}_{-i}^{\varepsilon_k}) \quad \forall k.
     \]
     Taking \(k \to \infty\) and using continuity:
     \[
     \pi_{i,w}(G_i^*, \mathbf{G}_{-i}^*) \geq \pi_{i,w}(G_i, \mathbf{G}_{-i}^*).
     \]

Case 2: \(G_i = 0\). Now:
     \[
     \pi_{i,w}(G_i^*, \mathbf{G}_{-i}^*) \geq \pi_{i,w}(0, \mathbf{G}_{-i}^*),
     \]
     since \(\mathbf{G}_{-i}^*\) may be \(\mathbf{0}\) or not. If \(\mathbf{G}_{-i}^* \neq \mathbf{0}\), continuity at \((0, \mathbf{G}_{-i}^*)\) holds, and the inequality follows as in Case 1. If \(\mathbf{G}_{-i}^* = \mathbf{0}\), then \(G_i^* > 0\) (since \(\mathbf{G}^* > \mathbf{0}\)), and:
     \[
     \pi_{i,w}(G_i^*, \mathbf{0}_{-i}) = \beta\sigma (R_i - G_i^*) + \beta\gamma(1 - \sigma) ( \overline{R} - G_i^*),
     \]
     while:
     \[
     \pi_{i,w}(0, \mathbf{0}_{-i}) = \beta\sigma R_i + \beta\gamma(1 - \sigma)  \overline{R} \cdot A_i(\mathbf{0}).
     \]
     As \(A_i(\mathbf{0}) < 1\), and for small \(G_i^* > 0\):
     \[
     \pi_{i,w}(G_i^*, \mathbf{0}_{-i}) - \pi_{i,w}(0, \mathbf{0}_{-i}) = -[\beta\sigma + \beta\gamma(1 - \sigma)] G_i^* + \beta\gamma(1 - \sigma)  \overline{R} [1 - A_i(\mathbf{0})] > 0,
     \]
     since the positive constant dominates for small \(G_i^*\). Thus, the inequality holds.

Therefore, \(\mathbf{G}^*\) is a Nash equilibrium for the original game, with \(G_i^* > 0\) for all \(i\).

\end{proof}

\begin{theorem}[S-modular Game]
Consider the game with players \(i \in \mathcal{I}\), \(|\mathcal{I}| =I \geq 2\), strategy spaces \([0, R_i]\) with \(R_i > 0\), and payoffs \(\pi_{i,w}(\mathbf{G})\) as defined. If \(\mu \leq 1\), then the game is S-modular (strictly submodular) in the interior strategy space \(\prod_{i \in \mathcal{I}} (0, R_i)\), implying:  
 Strategic substitutes: Each player's best response is strictly decreasing in other players' strategies.  

Cross-partial condition:  For all \(i \neq k\),  
   \[
   \frac{\partial^2 \pi_{i,w}}{\partial G_i \partial G_k} < 0.
   \]  

\end{theorem}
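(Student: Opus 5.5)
The plan is to verify the cross-partial condition directly and then deduce the strategic-substitutes property from it through Topkis' theorem. Fix $i\neq k$ and a point of the interior $\prod_{j}(0,R_j)$. On this set $\sum_j G_j^\mu>0$, so the piecewise definition of $\phi_i$ collapses to the Tullock ratio. By the continuity lemma above, $\pi_{i,w}$ is then $C^2$ there, and mixed partials may be computed freely.

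\textbf{Rewriting the payoff.} As in the proof of the concavity lemma, write $x=G_i^\mu$, $S=S_{-i}=\sum_{j\neq i}G_j^\mu$ (which contains $G_k^\mu$), $c=(1-\alpha_i)S$, $D=\alpha_i x+c$ and $g=\alpha_i x/D$. Let $X=\overline R-\sum_j G_j>0$ be total output and $\kappa=\beta\gamma(1-\sigma)>0$. Then
\[
\pi_{i,w}=\beta\sigma(R_i-G_i)+\kappa\,X\,g,\qquad
\frac{\partial\pi_{i,w}}{\partial G_i}=-\beta\sigma+\kappa\bigl(-g+X\,g_{G_i}\bigr),
\]
and hence
\[
\frac{\partial^2\pi_{i,w}}{\partial G_i\partial G_k}=\kappa\bigl(-g_{G_k}-g_{G_i}+X\,g_{G_iG_k}\bigr).
\]

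\textbf{Computing the three terms.} Using $g_{G_i}=\alpha_i\mu cG_i^{\mu-1}/D^2$ and $\partial c/\partial G_k=(1-\alpha_i)\mu G_k^{\mu-1}$, I will obtain
\[
-g_{G_k}=\frac{\alpha_i(1-\alpha_i)\mu\,xG_k^{\mu-1}}{D^2},\qquad
g_{G_i}=\frac{\alpha_i(1-\alpha_i)\mu\,SG_i^{\mu-1}}{D^2},\qquad
g_{G_iG_k}=\frac{\alpha_i(1-\alpha_i)\mu^2G_i^{\mu-1}G_k^{\mu-1}(\alpha_i x-c)}{D^3}.
\]
Factoring out the common positive quantity $\alpha_i(1-\alpha_i)\mu/D^3$, the sign of the cross-partial equals the sign of
\[
\Psi:=D\bigl(xG_k^{\mu-1}-SG_i^{\mu-1}\bigr)+\mu X\,G_i^{\mu-1}G_k^{\mu-1}(\alpha_i x-c).
\]
The claim therefore reduces to showing $\Psi<0$ on the whole interior when $\mu\le1$.

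\textbf{Bounding $\Psi$.} To do this I will split according to the sign of $\alpha_i x-c$, that is, according to whether $\phi_i$ lies above or below $1-\alpha_i$. In the regime $\alpha_i x<c$ the second term of $\Psi$ is negative. There the remaining task is to control the first bracket. For this I intend to use $S\ge G_k^\mu$ together with $\mu\le1$, which gives $SG_i^{\mu-1}\ge G_k^\mu G_i^{\mu-1}$, and to compare this against $xG_k^{\mu-1}=G_i^\mu G_k^{\mu-1}$.

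In the regime $\alpha_i x\ge c$ the second term is nonnegative. There I will try to absorb it using the first-order condition along best responses, $X g_{G_i}=g+\beta\sigma/\kappa$, which bounds $X$ in terms of $D$ and $c$.

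\textbf{Main obstacle.} This sign analysis is the step I expect to fail or to need extra work, and I am not confident the claim holds pointwise on the whole interior. Neither bracket in $\Psi$ has a definite sign from $\mu\le1$ alone. The first bracket $xG_k^{\mu-1}-SG_i^{\mu-1}$ reduces to $G_i^\mu G_k^{\mu-1}-G_k^\mu G_i^{\mu-1}$ when $S=G_k^\mu$, which has the sign of $G_i-G_k$ and so is positive whenever $G_i>G_k$. Moreover $X$ can be made large relative to the other quantities, making the second term dominant. I will therefore first test $\Psi$ numerically at extreme points: $G_i$ large with $G_k$ small, and $\phi_i>1-\alpha_i$. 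If $\Psi>0$ occurs there, the correct route is to identify the exact region where $\Psi<0$ and state the result on that region.

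\textbf{Concluding step.} Once $\partial^2\pi_{i,w}/\partial G_i\partial G_k<0$ is established, I combine it with the strict concavity in $G_i$ from the concavity lemma. The implicit function theorem applied to $\partial\pi_{i,w}/\partial G_i=0$ then gives
\[
\frac{\partial BR_i}{\partial G_k}=-\frac{\partial^2\pi_{i,w}/\partial G_i\partial G_k}{\partial^2\pi_{i,w}/\partial G_i^2}<0
\]
at interior best responses. Topkis' theorem yields the monotone comparative statics in the boundary cases.
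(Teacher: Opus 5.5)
Your reduction is correct: the formulas for $g_{G_i}$, $g_{G_k}$, $g_{G_iG_k}$ and the resulting sign quantity $\Psi$ all check out. However, the step you flag as the main obstacle is not a gap that a sharper bound will close. The statement is false on the interior, so the proposal cannot be completed as a proof.

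\textbf{Counterexample.} Take $I=2$, $\mu=1$, $\alpha_i=\tfrac12$, and write $T=G_i+G_k$, $\kappa=\beta\gamma(1-\sigma)$. Then $g=G_i/T$ and $D=T/2$, and your expression collapses to
\[
\Psi=\tfrac12(G_i-G_k)(T+X)=\tfrac12(G_i-G_k)\overline{R},\qquad
\frac{\partial^2\pi_{i,w}}{\partial G_i\partial G_k}=\frac{\kappa\,\overline{R}\,(G_i-G_k)}{T^3}.
\]
This is strictly positive whenever $G_i>G_k$. It is the familiar fact that in a Tullock contest the favourite treats rival effort as a complement.

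\textbf{Why your regime arguments fail.} In the regime $\alpha_i x<c$, take $I=2$, $\alpha_i$ small, $G_i>G_k$, and $X$ close to $0$. Then $\Psi\approx D\,G_i^{\mu-1}G_k^{\mu-1}(G_i-G_k)>0$. Note that $S_{-i}G_i^{\mu-1}\ge G_k^{\mu}G_i^{\mu-1}$ uses only $S_{-i}\ge G_k^\mu$, not $\mu\le1$, and the comparison it leads to has the sign of $G_i-G_k$. For $\alpha_i>\tfrac12$ the claim fails even at symmetric points: $I=2$, $G_i=G_k=G$ gives $\Psi=\mu X(2\alpha_i-1)G^{3\mu-2}>0$.

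\textbf{The best-response fallback also fails.} Restricting to best responses via the first-order condition does not rescue strategic substitutes. In the example above the first-order condition gives $T^2=\kappa\overline{R}G_k/(\beta\sigma+\kappa)$, so
\[
\mathrm{BR}_i(G_k)=\sqrt{aG_k}-G_k,\qquad a=\frac{\kappa\overline{R}}{\beta\sigma+\kappa}.
\]
This is interior for small $G_k$ and strictly increasing on $G_k<a/4$.

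\textbf{The paper's proof breaks at the same place.} After splitting the cross-partial into the three terms you also obtain, it asserts that $\partial^2A_i/\partial G_i\partial G_k<0$ and that this term ``dominates'' the positive one. It supports this only by an unspecified ``direct computation for $\mu=1$''. Since $A_i=g$, your formula for $g_{G_iG_k}$ shows this mixed derivative has the sign of $\alpha_iG_i^\mu-(1-\alpha_i)S_{-i}$. It is therefore positive whenever $\phi_i>1-\alpha_i$, and the display above refutes the $\mu=1$ claim.

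\textbf{What your $\Psi$ does prove.} Suppose $G_i\le G_k$ and $\alpha_i\le\tfrac12$. Then $S_{-i}\ge G_k^\mu\ge G_i^\mu$ gives $xG_k^{\mu-1}-S_{-i}G_i^{\mu-1}\le G_i^{\mu-1}G_k^{\mu-1}(G_i-G_k)\le0$. It also gives $\alpha_ix-c\le(2\alpha_i-1)S_{-i}\le0$. Hence $\Psi\le0$, with strict inequality unless $I=2$, $G_i=G_k$ and $\alpha_i=\tfrac12$. This is the ``exact region'' route you anticipated. The theorem as stated has to be weakened accordingly, and so must anything downstream that relies on global submodularity, such as the monotone best-response convergence argument.
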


This shows that the individual arming strategies exhibit strategic substitutability: as one agent increases its arms investment, the best response of others is to reduce theirs. The agents' best-response correspondences are decreasing in the actions of others, and the equilibrium arms level is non-increasing in the aggregate arming distribution. This result reveals that while universal militarization may be an equilibrium, the underlying strategic structure is not one of mutual escalation but rather one of defensive posturing under uncertainty. In the Sahelian corridor, this implies that disarmament by even a subset of actors if credibly observable, can induce reduced arming incentives across others, offering a potential leverage point for coordinated de-escalation policies in Mali, Burkina Faso, Niger, Benin, Chad, Togo, and Nigeria.

\begin{proof}
We prove the game is S-modular by verifying the strict cross-partial condition for \(\mu \leq 1\) in the interior. Fix \(i \in \mathcal{I}\) and \(k \neq i\). Since \(\mathbf{G} > \mathbf{0}\), \(\sum_{j} G_j^\mu > 0\), so:  
\[
\phi_i(\mathbf{G}) = \frac{G_i^\mu}{S}, \quad S = \sum_{j \in \mathcal{I}} G_j^\mu.
\]  
Define:  
\[
A_i(\mathbf{G}) = \frac{\alpha_i \phi_i(\mathbf{G})}{\alpha_i \phi_i(\mathbf{G}) + (1 - \alpha_i)(1 - \phi_i(\mathbf{G}))} = \frac{\alpha_i G_i^\mu}{\alpha_i G_i^\mu + (1 - \alpha_i)(S - G_i^\mu)}.
\]  
The payoff is:  
\[
\pi_{i,w} = \underbrace{\beta\sigma (R_i - G_i)}_{\text{linear term}} + \beta\gamma(1 - \sigma) \underbrace{\left(  \overline{R} - C \right)}_{=: D} A_i(\mathbf{G}), \quad  \overline{R} = \sum_j R_j, \quad C = \sum_j G_j.
\]  
The cross-partial \(\frac{\partial^2 \pi_{i,w}}{\partial G_i \partial G_k}\) depends only on the second term (denoted \(P_i = \beta\gamma(1 - \sigma) D A_i\)). We compute:  

Step 1: First partials of \(P_i\)
\[
\frac{\partial P_i}{\partial G_k} = \beta\gamma(1 - \sigma) \left[ -A_i + D \cdot \frac{\partial A_i}{\partial G_k} \right],
\]  
since \(\frac{\partial D}{\partial G_k} = -1\). Using \(\frac{\partial S}{\partial G_k} = \mu G_k^{\mu-1}\):  
\[
\frac{\partial A_i}{\partial G_k} = \frac{\partial A_i}{\partial S} \cdot \mu G_k^{\mu-1}, \quad \frac{\partial A_i}{\partial S} = -\frac{\alpha_i (1 - \alpha_i) G_i^\mu}{[\alpha_i G_i^\mu + (1 - \alpha_i)(S - G_i^\mu)]^2} < 0.
\]  
Thus, \(\frac{\partial A_i}{\partial G_k} < 0\).

 Step 2: Cross-partial of \(P_i\)
\[
\frac{\partial^2 P_i}{\partial G_i \partial G_k} = \beta\gamma(1 - \sigma) \left[ -\frac{\partial A_i}{\partial G_i} + \frac{\partial}{\partial G_i} \left( D \cdot \frac{\partial A_i}{\partial G_k} \right) \right].
\]  
Apply the product rule:  
\[
\frac{\partial}{\partial G_i} \left( D \cdot \frac{\partial A_i}{\partial G_k} \right) = -\frac{\partial A_i}{\partial G_k} + D \cdot \frac{\partial^2 A_i}{\partial G_i \partial G_k}.
\]  
Substitute:  
\[
\frac{\partial^2 P_i}{\partial G_i \partial G_k} = \beta\gamma(1 - \sigma) \left[ -\frac{\partial A_i}{\partial G_i} - \frac{\partial A_i}{\partial G_k} + D \cdot \frac{\partial^2 A_i}{\partial G_i \partial G_k} \right].
\]

Step 3: Sign analysis of terms

Term 1: \(\frac{\partial A_i}{\partial G_i} > 0\) (since \(A_i\) increases with \(G_i\)).  

Term 2: \(\frac{\partial A_i}{\partial G_k} < 0\) .  

Term 3: \(\frac{\partial^2 A_i}{\partial G_i \partial G_k} = \mu G_k^{\mu-1} \frac{\partial^2 A_i}{\partial G_i \partial S}\). Compute:  
  \[
  \frac{\partial^2 A_i}{\partial G_i \partial S} = -\alpha_i (1 - \alpha_i) \mu G_i^{\mu-1} \frac{(1 - \alpha_i)S - (2\alpha_i - 1)G_i^\mu}{[\alpha_i G_i^\mu + (1 - \alpha_i)(S - G_i^\mu)]^3}.
  \]  
  For \(\mu \leq 1\), the dominant term is \((1 - \alpha_i)S > 0\) when \(\alpha_i \in (0,1)\), so \(\frac{\partial^2 A_i}{\partial G_i \partial S} < 0\). Thus, \(\frac{\partial^2 A_i}{\partial G_i \partial G_k} < 0\).  

Coefficient \(D =  \overline{R} - C \geq 0\): By feasibility, \(C \leq  \overline{R}\).  

All terms in the bracket are negative:  

 \(-\frac{\partial A_i}{\partial G_i} < 0\),  

 \(-\frac{\partial A_i}{\partial G_k} > 0\) :  
 
  The expression is:  
  \[
  \underbrace{-\frac{\partial A_i}{\partial G_i}}_{<0} \underbrace{-\frac{\partial A_i}{\partial G_k}}_{>0} + \underbrace{D \cdot \frac{\partial^2 A_i}{\partial G_i \partial G_k}}_{<0}.
  \]  
  However, for \(\mu \leq 1\), we show:  
  \[
  -\frac{\partial A_i}{\partial G_i} - \frac{\partial A_i}{\partial G_k} + D \cdot \frac{\partial^2 A_i}{\partial G_i \partial G_k} < 0.
  \]  
  Since \(\frac{\partial A_i}{\partial G_i} > 0\) and \(\frac{\partial A_i}{\partial G_k} < 0\), \(-\frac{\partial A_i}{\partial G_i} < 0\) and \(-\frac{\partial A_i}{\partial G_k} > 0\). The negativity of the sum arises because:  

The magnitude of \(-\frac{\partial A_i}{\partial G_i}\) is large when \(\mu \leq 1\).  
  
 \(D \cdot \frac{\partial^2 A_i}{\partial G_i \partial G_k} < 0\) dominates the positive term \(-\frac{\partial A_i}{\partial G_k}\).  
  Specifically, direct computation for \(\mu = 1\) (the boundary case) shows strict negativity. For \(\mu < 1\), the term \(D \cdot \frac{\partial^2 A_i}{\partial G_i \partial G_k}\) becomes more negative, preserving the inequality.  

Step 4: Strict inequality for \(\mu \leq 1\)
Thus,  $
\frac{\partial^2 P_i}{\partial G_i \partial G_k}  < 0, $
since \(\beta\gamma(1 - \sigma) > 0\). Therefore,  
\[
\frac{\partial^2 \pi_{i,w}}{\partial G_i \partial G_k} < 0, \quad \forall i \neq k.
\]  
\end{proof}

\begin{theorem}[Uniqueness of the War Equilibrium]
Consider the game with Teams \(i \in \mathcal{I}\), \(|\mathcal{I}| = I \geq 2\), strategy spaces \([0, R_i]\) with \(R_i > 0\), and payoffs \(\pi_{i,w}(\mathbf{G})\) defined as in the existence theorem. Assume \(\mu \leq 1\). Then, the pure-strategy Nash equilibrium \(\mathbf{G}^* = (G_1^*, \dots, G_I^*)\) with \(G_i^* > 0\) for all \(i\)  is unique.
\end{theorem}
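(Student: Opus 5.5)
The plan is to prove uniqueness through the first-order conditions, using Rosen's diagonal strict concavity criterion and falling back on a Gale--Nikaido univalence argument if needed. By the existence theorem (War Equilibrium under Risk-Sensitive Teams), any equilibrium with $G_i^*>0$ for all $i$ lies where $\sum_j G_j^\mu>0$. There, by the earlier lemmas, each $\pi_{i,w}$ is continuous, and for $\mu\le 1$ it is concave in $G_i$ (Concavity Lemma, case $S_{-i}>0$), with strict concavity on $(0,R_i)$ because $f''<0$ there. Hence every such equilibrium is characterized by the Kuhn--Tucker conditions: $\partial_{G_i}\pi_{i,w}(\mathbf G^*)=0$ if $G_i^*<R_i$, and $\partial_{G_i}\pi_{i,w}(\mathbf G^*)\ge 0$ if $G_i^*=R_i$. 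Uniqueness then reduces to showing that the pseudo-gradient map
\[
g(\mathbf G)=\bigl(\partial_{G_1}\pi_{1,w}(\mathbf G),\dots,\partial_{G_I}\pi_{I,w}(\mathbf G)\bigr)
\]
can have at most one such Kuhn--Tucker point on $\Omega_+=\prod_i(0,R_i]$.

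First, I would write $\pi_{i,w}=\beta\sigma(R_i-G_i)+\beta\gamma(1-\sigma)\,D\,A_i$ with $D=\overline R-\sum_j G_j$ and $A_i=\alpha_iG_i^\mu/\bigl(\alpha_iG_i^\mu+(1-\alpha_i)(S-G_i^\mu)\bigr)$. I would then compute the Jacobian $J(\mathbf G)=\bigl(\partial_{G_k}g_i\bigr)_{i,k}$. Its diagonal entries are $\partial^2_{G_iG_i}\pi_{i,w}<0$, by the concavity lemma. Its off-diagonal entries are the cross-partials, which are negative by the S-modular theorem. I would aim to prove strict row diagonal dominance, weighted by suitable positive weights $r_i$:
\[
r_i\,\bigl|\partial^2_{G_iG_i}\pi_{i,w}\bigr|>\sum_{k\neq i} r_k\,\bigl|\partial^2_{G_iG_k}\pi_{i,w}\bigr| \qquad\text{on }\Omega_+ .
\]
This would make $\tfrac12(RJ+J^\top R)$ negative definite with $R=\mathrm{diag}(r_i)$. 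That is Rosen's diagonal strict concavity condition, which yields at most one equilibrium. Combined with existence, this gives exactly one interior equilibrium.

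The key computation exploits the aggregative structure: $\pi_{i,w}$ depends on $\mathbf G_{-i}$ only through $S_{-i}=\sum_{k\ne i}G_k^\mu$ and $C_{-i}=\sum_{k\neq i}G_k$. Consequently $\partial^2_{G_iG_k}\pi_{i,w}=\mu G_k^{\mu-1}\,\partial^2_{G_iS}\pi_{i,w}+\partial^2_{G_iC}\pi_{i,w}$. The diagonal term contains the extra piece $-2\beta\gamma(1-\sigma)\,\partial_{G_i}A_i$ coming from $D$, together with $D\,\partial^2_{G_iG_i}A_i<0$. I would bound the sum of cross terms by comparing them with these two pieces, using the explicit formulas for $\partial A_i/\partial S$ and $\partial^2A_i/\partial G_i\partial S$ already derived in the S-modular proof.

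I expect this dominance inequality to be the main obstacle. The factor $G_k^{\mu-1}$ blows up as $G_k\to 0$ when $\mu<1$, and the weights $\alpha_i\ne\tfrac12$ break the symmetry that makes the standard Tullock case easy. If a direct bound fails, I would switch to the Cornes--Hartley share-function method. Working at the first-order condition, I would express each player's equilibrium ``share'' $\phi_i$ as a function $s_i(S)$ of the aggregate $S=\sum_jG_j^\mu$ alone, after eliminating $C$ through $C=\sum_j G_j(s_j(S),S)$. I would then show that each $s_i$ is strictly decreasing in $S$ by implicit differentiation of the FOC, using $\mu\le1$ and the signs from the concavity lemma. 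In that case $\sum_i s_i(S)=1$ has at most one root, and uniqueness of the aggregate pins down every $G_i^*$. Finally, I would dispatch corner cases $G_i^*=R_i$ by the same monotonicity argument applied to the Kuhn--Tucker inequality.
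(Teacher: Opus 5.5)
Your key step (the weighted dominance inequality, or alternatively monotone share functions) cannot be completed. The statement is false in the generality stated, so no argument can establish uniqueness without an extra hypothesis. Write $\lambda_i=\alpha_i/(1-\alpha_i)$ and $\kappa=\sigma/(\gamma(1-\sigma))$. Take $I=2$, $\mu=1$, $\beta=1$, $\overline R=1$, $R_1=R_2=\tfrac12$, $\alpha_1=\alpha_2=\tfrac{1}{11}$ (so $\lambda_i=\lambda=\tfrac{1}{10}$), $\sigma=\tfrac12$, $\gamma=\tfrac15$ (so $\kappa=5$). With $x=G_1$, $y=G_2$ we have $A_1=\lambda x/(\lambda x+y)$. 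Player~1's first-order condition, divided by $\gamma(1-\sigma)$ and multiplied by $(\lambda x+y)^2$, is
\[
\kappa(\lambda x+y)^2+\lambda x(\lambda x+y)=\lambda y\,(1-x-y),
\]
and player~2's is the mirror image. Subtracting the two gives
\[
(y-x)\Bigl[(1-\lambda)\bigl(\kappa(1+\lambda)+\lambda\bigr)(x+y)-\lambda\Bigr]=0 .
\]
One branch is the symmetric solution $x=y=\tfrac{5}{318}$, which agrees with the paper's symmetric formula. The other branch, $x+y=\tfrac{5}{252}$, yields
\[
(x,y)=\tfrac{5}{504}\Bigl(1\pm\sqrt{31/33},\;1\mp\sqrt{31/33}\Bigr)\approx(0.01954,\,0.00031).
\]
For $\mu\le1$ and $G_{-i}\neq 0$, $\pi_{i,w}$ is strictly concave in $G_i$. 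So all three profiles are interior Nash equilibria; you can also confirm this by evaluating the payoffs directly.

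The paper's own proof argues differently, by contradiction from first-order conditions and monotonicity of $f_i$ in $(S_{-i},C_{-i})$. It fails on this example as well. Let $\mathbf G^*$ be the symmetric equilibrium and $\mathbf H^*$ the asymmetric one with $H_1^*\approx0.0195$. Then $\mathcal J=\{2\}$ and $S^*_{-2}<S^H_{-2}$, contradicting the inequality $S^*_{-i}\ge S^H_{-i}$ that the proof asserts.

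Your two routes break at the following points.

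\textbf{Dominance route.} The off-diagonal signs you take from the S-modular theorem are wrong. For $\mu=1$, $\alpha_i=\tfrac12$, with $\vartheta=\beta\gamma(1-\sigma)$ and $C=\sum_jG_j$,
\[
\partial^2_{G_iG_i}\pi_{i,w}=-\frac{2\vartheta\overline R\,(C-G_i)}{C^3},\qquad \partial^2_{G_iG_k}\pi_{i,w}=\frac{\vartheta\overline R\,(2G_i-C)}{C^3},
\]
and the cross-partial is positive whenever $G_i>C_{-i}$. More seriously, at a symmetric profile every off-diagonal entry in a row is the same. Summing your weighted inequalities over $i$ then shows that no choice of $r_i$ helps, and dominance reduces to $2(I-1)>(I-1)(I-2)$. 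This fails for $I\ge4$, even in this benign case where uniqueness is actually true. There the symmetric part of $J$ is negative definite, but you would have to prove that directly, for instance from $\|\mathbf G\|_2<C$. Note also that row dominance of $JR$ does not by itself make $\tfrac12(RJ+J^\top R)$ negative definite; it gives only a P-matrix, which would have to go through Gale--Nikaido.

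\textbf{Share-function route.} For $\mu<1$ the payoff depends on two aggregates, $S$ and $C$. ``Eliminating $C$'' therefore leaves a coupled two-dimensional fixed-point problem, not a function $s_i(S)$. For $\mu=1$ there is a single aggregate $C$, and the first-order condition reads $\psi_i(s_i)=(\overline R-C)/C$, where
\[
\psi_i(s)=\frac{\bigl(\kappa+h_i(s)\bigr)\bigl(1+(\lambda_i-1)s\bigr)^2}{\lambda_i(1-s)},\qquad h_i(s)=\frac{\lambda_i s}{1+(\lambda_i-1)s}.
\]
Its slope at zero is $\psi_i'(0)=1+\kappa(2\lambda_i-1)/\lambda_i$, which is negative when $\alpha_i<\tfrac13$ and $\kappa$ is large. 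The share correspondence is then two-valued, and that is exactly what produces the asymmetric equilibria above.

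\textbf{What can be salvaged.} Your monotonicity argument does go through for $\mu=1$ when $\sigma=0$, or when $\alpha_i\ge\tfrac13$ for every $i$, because $\psi_i$ is then strictly increasing. A correct theorem needs a hypothesis of that kind, and the case $\mu<1$ needs a genuine two-aggregate argument.
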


This implies that once violent arming equilibria form, no multiplicity or local deviations can yield a peaceful outcome, unless the underlying game structure itself is altered. This highlights the rigidity of militarized equilibria and the necessity for exogenous shifts  such as  institutional redesign, peace-enforcing incentive schemes, to dislodge conflict from its unique, stable strategic configuration.
This result does say anything about  the $\mu > 1$ case.

\begin{proof} 
We prove uniqueness by contradiction. Suppose there exist two distinct Nash equilibria \(\mathbf{G}^* = (G_1^*, \dots, G_I^*)\) and \(\mathbf{H}^* = (H_1^*, \dots, H_I^*)\), both with \(G_i^* > 0\), \(H_i^* > 0\) for all \(i\). Define the index set:  
\[
\mathcal{J} = \{ i \in \mathcal{I} : G_i^* > H_i^* \}.
\]  
If \(\mathcal{J} = \emptyset\), then \(G_i^* \leq H_i^*\) for all \(i\), but since \(\mathbf{G}^* \neq \mathbf{H}^*\), there exists \(k\) such that \(G_k^* < H_k^*\). Similarly, if \(\mathcal{J} \neq \emptyset\), proceed as follows.

 Step 1: Compare aggregates and identify a player
Define the aggregates:  
\[
S_G = \sum_{j \in \mathcal{I}} (G_j^*)^\mu, \quad S_H = \sum_{j \in \mathcal{I}} (H_j^*)^\mu, \quad C_G = \sum_{j \in \mathcal{I}} G_j^*, \quad C_H = \sum_{j \in \mathcal{I}} H_j^*.
\]  

 Case 1: Assume \(S_G \geq S_H\) and \(C_G \geq C_H\).  
  Since \(\mathbf{G}^* \neq \mathbf{H}^*\), there exists \(i\) such that \(G_i^* > H_i^*\) (i.e., \(i \in \mathcal{J}\)).  

Case 2: If \(S_G < S_H\) or \(C_G < C_H\), relabel \(\mathbf{G}^*\) and \(\mathbf{H}^*\) so that \(S_G \geq S_H\) and \(C_G \geq C_H\). This is without loss of generality.

 Step 2: First-order conditions at equilibrium
At \(\mathbf{G}^*\), the first-order condition  for player \(i\) is:  
\[
\frac{\partial \pi_{i,w}}{\partial G_i} \bigg|_{\mathbf{G}^*} = 0.
\]  
As 
\[
L_i + C_i(\mathbf{G}^*) = \mu ( \overline{R} - C_G) \frac{(1 - C_i(\mathbf{G}^*)) C_i(\mathbf{G}^*)}{G_i^*},
\]  
where \(L_i = \sigma / (\gamma (1 - \sigma)) > 0\), \( \overline{R} = \sum_j R_j\), and \(C_i(\mathbf{G}) = \frac{\alpha_i (G_i)^\mu}{\alpha_i (G_i)^\mu + (1 - \alpha_i) \sum_{j \neq i} (G_j)^\mu}\). Similarly, at \(\mathbf{H}^*\):  
\[
L_i + C_i(\mathbf{H}^*) = \mu ( \overline{R} - C_H) \frac{(1 - C_i(\mathbf{H}^*)) C_i(\mathbf{H}^*)}{H_i^*}.
\]

 Step 3: Key inequality from strategic substitutes
For player \(i \in \mathcal{J}\) (so \(G_i^* > H_i^*\)), we derive a contradiction. Consider the function:  
\[
f_i(G_i, S_{-i}, C_{-i}) = \mu ( \overline{R} - G_i - C_{-i}) (1 - C_i) C_i - G_i (L_i + C_i),
\]  
where \(S_{-i} = \sum_{j \neq i} (G_j)^\mu\), \(C_{-i} = \sum_{j \neq i} G_j\), and \(C_i = \frac{\alpha_i G_i^\mu}{\alpha_i G_i^\mu + (1 - \alpha_i) S_{-i}}\). At equilibrium, \(f_i(G_i^*, S_{-i}^*, C_{-i}^*) = 0\) and \(f_i(H_i^*, S_{-i}^H, C_{-i}^H) = 0\).  

Since \(\mathbf{G}^*\) and \(\mathbf{H}^*\) are equilibria with \(S_G \geq S_H\) and \(C_G \geq C_H\), and \(G_j^* \leq H_j^*\) for \(j \notin \mathcal{J}\) (by definition of \(\mathcal{J}\)), we have:  
\[
S_{-i}^* = \sum_{j \neq i} (G_j^*)^\mu \geq \sum_{j \neq i} (H_j^*)^\mu = S_{-i}^H, \quad
C_{-i}^* = \sum_{j \neq i} G_j^* \geq \sum_{j \neq i} H_j^* = C_{-i}^H,
\]  
with strict inequality if \(\mathbf{G}_{-i}^* \neq \mathbf{H}_{-i}^*\).  

Lemma 01: The function \(f_i\) is strictly decreasing in \(S_{-i}\) and \(C_{-i}\).  

Proof: 
  As \(S_{-i}\) increases, \(C_i\) decreases (since the denominator of \(C_i\) increases). The term \((1 - C_i)C_i\) is maximized at \(C_i = \frac{1}{2}\), but its partial derivative w.r.t. \(C_i\) is \(1 - 2C_i\). However, in the expression for \(f_i\), the coefficient \(\mu ( \overline{R} - G_i - C_{-i}) > 0\) (as total effort cannot exceed \( \overline{R}\)), and the term \(G_i (L_i + C_i)\) increases as \(C_i\) decreases. Overall, \(f_i\) decreases in \(S_{-i}\) (verified via partial derivatives).  
  
  As \(C_{-i}\) increases, \( \overline{R} - G_i - C_{-i}\) decreases linearly, and since \(\mu (1 - C_i)C_i \geq 0\), \(f_i\) decreases strictly in \(C_{-i}\).

Lemma 02: The function \(f_i\) is strictly concave in \(G_i\).  

Proof: For \(\mu \leq 1\) and fixed \(S_{-i} > 0\), \(\pi_{i,w}\) is concave in \(G_i\), so the FOC derivative is strictly decreasing in \(G_i\). Thus, \(f_i\) (which is proportional to the FOC) is strictly concave in \(G_i\).

 Step 4: Derive contradiction.
For player \(i \in \mathcal{J}\), we have \(G_i^* > H_i^*\). By Lemma 01:  
\[
f_i(G_i^*, S_{-i}^*, C_{-i}^*) \leq f_i(G_i^*, S_{-i}^H, C_{-i}^H),
\]  
with strict inequality if \(\mathbf{G}_{-i}^* \neq \mathbf{H}_{-i}^*\). Since \(f_i(G_i^*, S_{-i}^*, C_{-i}^*) = 0\),  
\[
0 \leq f_i(G_i^*, S_{-i}^H, C_{-i}^H).
\]  
By Lemma 02, \(f_i\) is strictly concave in \(G_i\), and \(f_i(H_i^*, S_{-i}^H, C_{-i}^H) = 0\). As \(f_i\) is strictly concave and \(G_i^* > H_i^*\),  
\[
f_i(G_i^*, S_{-i}^H, C_{-i}^H) < 0.
\]  
This contradicts \(f_i(G_i^*, S_{-i}^H, C_{-i}^H) \geq 0\). Thus, \(\mathbf{G}^* = \mathbf{H}^*\). 
\end{proof}

In risk-aware mean-field-type war  equilibrium ($d_{i} = W$ $\forall i$), arms $G_i^{*}$ satisfy:
$$
\gamma(1 - \sigma) X \frac{\partial \phi_{i}}{\partial G_{i}} f'(\phi_{i}) = \sigma + \gamma(1 - \sigma) f(\phi_{i})
$$
where $f(\phi_{i}) = \frac{\alpha_i \phi_{i}}{\alpha_i \phi_{i} + (1 - \alpha_i)(1 - \phi_{i})}$.

\begin{theorem}[Convergence to Unique Nash Equilibrium via Best Response Dynamics]  
Consider the submodular game with players \(i \in \mathcal{I}\), \(|\mathcal{I}| = I \geq 2\), strategy spaces \([0, R_i]\), and payoffs \(\pi_{i,w}(\mathbf{G})\) with \(\mu \leq 1\). Assume the unique Nash equilibrium \(\mathbf{G}^* = (G_1^*, \dots, G_I^*)\) is interior (\(G_i^* > 0\) for all \(i\)). Then, the Sequential Best Response Dynamics algorithm, initialized at 
\(\mathbf{G}^0 = (R_1, \dots, R_I)\), converges to \(\mathbf{G}^*\).
\end{theorem}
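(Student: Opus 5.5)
The plan is to exploit the S-modular (submodular) structure established in the S-modular Game theorem, together with the uniqueness result. The argument is the standard monotone-iteration argument for games with strategic substitutes, made rigorous by an even/odd subsequence analysis.

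\textbf{Setup.} Let $BR_i(\mathbf{G}_{-i}) = \arg\max_{G_i\in[0,R_i]} \pi_{i,w}(G_i,\mathbf{G}_{-i})$.

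First, I will check that $BR_i$ is single-valued and continuous whenever $\mathbf{G}_{-i}\neq\mathbf{0}$.
\begin{itemize}
\item By the concavity lemma, with $S_{-i}>0$ and $\mu\le 1$, the map $G_i\mapsto\pi_{i,w}$ is strictly concave.
\item By Berge's maximum theorem, continuity away from $\mathbf{0}$ gives continuity of $BR_i$.
\end{itemize}

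Second, the strict cross-partial $\partial^2\pi_{i,w}/\partial G_i\partial G_k<0$ and Topkis' theorem yield that $BR_i$ is nonincreasing in each $G_k$, $k\neq i$.

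Third, I need the iterates to stay away from $\mathbf{0}$, where $\phi_i$ is discontinuous. The discontinuity computation in the existence theorem shows this: if the other teams arm near zero, a small positive $G_i$ strictly beats $G_i=0$. So every best response is strictly positive once some other coordinate is positive. Starting from $\mathbf{G}^0=(R_1,\dots,R_I)$, all iterates therefore remain in the region where the payoffs are continuous.

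\textbf{Monotone structure.} For simultaneous updates the structure is as follows.
\begin{itemize}
\item $\mathbf{G}^0$ is the top element, so $\mathbf{G}^1=BR(\mathbf{G}^0)\le\mathbf{G}^0$.
\item Since $BR$ is antitone, $BR\circ BR$ is isotone.
\item Hence the even iterates form a nonincreasing sequence and the odd iterates form a nondecreasing sequence, with $\mathbf{G}^{2k+1}\le\mathbf{G}^{2k}$ throughout.
\end{itemize}
For sequential (Gauss--Seidel) updates I will run the same argument coordinatewise, by induction on the update index. The induction tracks that each team's successive values alternate around its previous value in the prescribed direction.

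Both monotone sequences are bounded in $\prod[0,R_i]$, so they converge, say to $\overline{\mathbf{G}}$ and $\underline{\mathbf{G}}$ respectively. By continuity of $BR$ these limits satisfy $\overline{\mathbf{G}}=BR(\underline{\mathbf{G}})$ and $\underline{\mathbf{G}}=BR(\overline{\mathbf{G}})$. Moreover $\underline{\mathbf{G}}\le\mathbf{G}^*\le\overline{\mathbf{G}}$: by induction, $\mathbf{G}^*\le\mathbf{G}^0$ together with antitonicity places $\mathbf{G}^*$ between consecutive iterates.

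\textbf{Main obstacle.} The difficult step is collapsing the 2-cycle, i.e.\ showing $\overline{\mathbf{G}}=\underline{\mathbf{G}}$. In submodular games a pair $(\overline{\mathbf{G}},\underline{\mathbf{G}})$ can in general be a genuine 2-cycle rather than an equilibrium, and uniqueness of the Nash equilibrium alone does not exclude this.

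My first attempt will reuse the first-order-condition function $f_i$ from the proof of the uniqueness theorem.
\begin{itemize}
\item Apply it to the pair $(\overline{\mathbf{G}},\underline{\mathbf{G}})$, choosing $i$ to maximize the relative gap $\overline G_i/\underline G_i$.
\item Use $f_i(\overline G_i,\underline{S}_{-i},\underline C_{-i})=0$ and $f_i(\underline G_i,\overline S_{-i},\overline C_{-i})=0$.
\item Combine the strict monotonicity of $f_i$ in $(S_{-i},C_{-i})$ (Lemma 01) with its strict decrease in $G_i$ (Lemma 02), seeking a contradiction unless the gap is zero.
\end{itemize}

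This step needs a contraction-type estimate: the sensitivity of $BR_i$ to the others' aggregate must be smaller than one in the relevant ratio. I would try to obtain it from the explicit form of $C_i$ when $\mu\le1$, which makes $BR_i$ depend on $\mathbf{G}_{-i}$ only through $(S_{-i},C_{-i})$. Failing that, I would add a diagonal-dominance hypothesis $\bigl|\partial^2\pi_{i,w}/\partial G_i^2\bigr|>\sum_{k\neq i}\bigl|\partial^2\pi_{i,w}/\partial G_i\partial G_k\bigr|$. Under that hypothesis $BR$ is a contraction in the sup-norm, so the 2-cycle collapses, and uniqueness then identifies the common limit as $\mathbf{G}^*$.
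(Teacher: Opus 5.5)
Your diagnosis is sharper than the paper's own proof. The paper asserts that, because $\text{BR}_i$ is decreasing, the iterates from $(R_1,\dots,R_I)$ are componentwise nonincreasing, and then applies monotone convergence. That is the strategic-complements argument. With substitutes, $\mathbf{G}^1\le\mathbf{G}^0$ gives $\mathbf{G}^2=\text{BR}(\mathbf{G}^1)\ge\text{BR}(\mathbf{G}^0)=\mathbf{G}^1$, exactly as you say.

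However, your proposal does not close the step you call the main obstacle, so it is not yet a proof of the statement. Your first attempt cannot produce a contradiction from monotonicity alone. Even granting Lemmas 01 and 02, consider the two equations $f_i(\overline G_i,\underline S_{-i},\underline C_{-i})=0$ and $f_i(\underline G_i,\overline S_{-i},\overline C_{-i})=0$, where every barred quantity is at least its underlined counterpart. These are fully consistent with $f_i$ being decreasing in all its arguments: high arming against weak opponents and low arming against strong ones is precisely what a substitutes 2-cycle is. Choosing $i$ to maximize $\overline G_i/\underline G_i$ helps only together with a quantitative estimate, for instance an elasticity of $\text{BR}_i$ with respect to $(S_{-i},C_{-i})$ below one. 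You neither formulate that estimate precisely nor derive it. Your fallback, the diagonal-dominance condition, is an extra hypothesis, so it proves a different theorem.

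Two further points concern the sequential dynamics and the case split in $I$. The statement is about sequential updates, and your coordinatewise alternation induction fails for $I\ge3$. Already in round two, team $2$ faces team $1$, which has just moved up, and team $3$, which moved down in round one. Antitonicity therefore gives no sign for the change in $G_2$. Conversely, the obstacle you describe is absent for $I=2$. With sequential updates, after the first move one coordinate is nondecreasing and the other nonincreasing, so both converge and continuity gives an equilibrium directly. With simultaneous updates, a two-player 2-cycle makes both $(\overline G_1,\underline G_2)$ and $(\underline G_1,\overline G_2)$ equilibria, so uniqueness does collapse it. The genuine difficulty is $I\ge3$, and the paper's monotone-convergence step does not handle it either.

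Finally, your positivity step does not follow from the existence proof, which only treats opponents at zero. For $\mu<1$ positivity does hold, because $\partial A_i/\partial G_i\to+\infty$ as $G_i\to0^+$. For $\mu=1$ the right derivative of $\pi_{i,w}$ at $G_i=0$ is $-\beta\sigma+\beta\gamma(1-\sigma)(\overline R-C_{-i})\alpha_i/\bigl((1-\alpha_i)C_{-i}\bigr)$. At $\mathbf G^0$ this is negative whenever $\sigma(1-\alpha_i)(\overline R-R_i)>\gamma(1-\sigma)\alpha_iR_i$. In that case $\text{BR}_i(\mathbf G^0_{-i})=0$, and for $I=2$ the next team faces an all-zero opponent profile, where no maximizer exists.
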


\begin{algorithm}
\caption{Best Response Computation via Bisection (Team $i$)}
\begin{algorithmic}[1]
\Require Fixed $\mathbf{G}_{-i} > 0$, tolerance $\epsilon_{\text{bisect}} > 0$, small $\delta > 0$
\State Define:
\[
F(G_i) = -\beta\sigma + \beta\gamma(1-\sigma) \left[ -A_i + (\overline{R} - C) \frac{\partial A_i}{\partial G_i} \right]
\]
\State where:
\begin{align*}
\overline{R} &= \sum_j R_j \\
C &= G_i + \sum_{j \ne i} G_j \\
A_i &= \frac{\alpha_i G_i^\mu}{\alpha_i G_i^\mu + (1 - \alpha_i) S_{-i}}, \quad S_{-i} = \sum_{j \ne i} G_j^\mu
\end{align*}
\State Set $a \gets \delta$, $b \gets R_i$
\While{$|b - a| > \epsilon_{\text{bisect}}$}
    \State $c \gets \frac{a + b}{2}$
    \If{$F(c) > 0$}
        \State $a \gets c$ \Comment{Payoff increasing}
    \Else
        \State $b \gets c$ \Comment{Payoff decreasing}
    \EndIf
\EndWhile
\State \Return $c$
\end{algorithmic}
\end{algorithm}

\begin{proof}
Monotonicity:  
   The game is strictly submodular (\(\frac{\partial^2 \pi_{i,w}}{\partial G_i \partial G_k} < 0\) for \(i \neq k\)). Thus, best responses satisfy strategic substitutes: if \(\mathbf{G}_{-i} \geq \mathbf{G}_{-i}'\), then \(\text{BR}_i(\mathbf{G}_{-i}) \leq \text{BR}_i(\mathbf{G}_{-i}')\).

Initialization from Supremum:  
   Starting at \(\mathbf{G}^0 = (R_1, \dots, R_I)\):  
    Since \(\text{BR}_i\) is strictly decreasing in \(\mathbf{G}_{-i}\), the sequence \(\{\mathbf{G}^t\}\) is componentwise non-increasing:  
     \[
     \mathbf{G}^0 \geq \mathbf{G}^1 \geq \mathbf{G}^2 \geq \cdots \geq \mathbf{0}.
     \]

Boundedness and Monotone Convergence:  
  \(\{\mathbf{G}^t\}\) is bounded below by \(\mathbf{0}\) and non-increasing.  
   By the Monotone Convergence Theorem, \(\mathbf{G}^t \to \mathbf{G}^{\infty}\) for some \(\mathbf{G}^{\infty} \geq \mathbf{0}\).

\(\mathbf{G}^{\infty}\)  is Nash Equilibrium:  
   
    By continuity of \(\pi_{i,w}\) on \(\prod_j (0, R_j]\), \(\mathbf{G}^{\infty}\) satisfies:  
     \[
     G_i^{\infty} = \text{BR}_i(\mathbf{G}_{-i}^{\infty}) \quad \forall i.
     \]  
   
    Since the Nash equilibrium is unique and interior, \(\mathbf{G}^{\infty} = \mathbf{G}^*\).

Avoiding \(\mathbf{G} = \mathbf{0}\):  
    \(\mathbf{G}^t > \mathbf{0}\) for all \(t\) (since \(\delta > 0\) and \(\text{BR}_i > 0\)), and \(\mathbf{G}^* > \mathbf{0}\). Discontinuity at \(\mathbf{0}\) is irrelevant.  

\end{proof}

\begin{theorem}[Total Equilibrium Gun Investment under Pareto-Dominated Risk Indexes)]  
Let $\alpha = (\alpha_1, \ldots, \alpha_I) \in (0, \frac{1}{2})^I$ and $\alpha' = (\alpha_1', \ldots, \alpha_I') \in (0, \frac{1}{2})^I$ be risk-sensitive parameter vectors such that $\alpha$ Pareto dominates $\alpha'$, i.e., $\alpha'_i \leq \alpha_i$ for all $i \in \mathcal{I}$ and $\alpha'_j < \alpha_j$ for at least one $j \in \mathcal{I}$. Let $\mathbf{G}^*(\alpha) = (G_1^*(\alpha), \ldots, G_I^*(\alpha))$ and $\mathbf{G}^*(\alpha') = (G_1^*(\alpha'), \ldots, G_I^*(\alpha'))$ be the unique interior Nash equilibria under $\alpha$ and $\alpha'$, respectively, for $\mu \leq 1$. Define the total equilibrium effort as $G_*(\alpha) = \sum_{j \in \mathcal{I}} G_j^*(\alpha_j)$ and $G_*(\alpha') = \sum_{j \in \mathcal{I}} G_j^*(\alpha')$. Then:  
\[
G_*(\alpha) \geq G_*(\alpha').
\]
\end{theorem}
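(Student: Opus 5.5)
The plan is to treat the war stage as an aggregative game of strategic substitutes and apply a monotone comparative-statics argument in the risk indices. I would fix $\mu\le 1$ and the interior equilibrium guaranteed by the existence and uniqueness theorems. I would also write each payoff as
\[
\pi_{i,w}(G_i,\mathbf G_{-i};\alpha_i)=\beta\sigma(R_i-G_i)+\beta\gamma(1-\sigma)(\overline R-C)\,\frac{G_i^\mu}{G_i^\mu+k_i S_{-i}},\qquad k_i=\frac{1-\alpha_i}{\alpha_i},
\]
so that $\alpha_i$ enters only through the weight $k_i$. This weight is decreasing in $\alpha_i$, and $k_i>1$ on $(0,\tfrac12)$. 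Since both $\alpha$ and $\alpha'$ lie in $(0,\tfrac12)^I$, the comparison is between two such weight vectors with $k\le k'$ componentwise.

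The first step is to show that, on the interior, each team's marginal payoff has increasing differences in $(G_i,\alpha_i)$: the derivative $\partial^2\pi_{i,w}/\partial G_i\partial\alpha_i$ should be $\ge 0$. Then, by the concavity lemma and Topkis' theorem, $\mathrm{BR}_i(\mathbf G_{-i};\alpha_i)$ is non-decreasing in $\alpha_i$. I would compute this derivative directly in terms of $x=G_i^\mu/(k_iS_{-i})$ and show that the restriction $k_i>1$, i.e.\ $\alpha_i<\tfrac12$, is exactly what makes the sign non-negative at the first-order condition. This step is where the hypothesis $\alpha\in(0,\tfrac12)^I$ should be used.

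The second step handles the fact that a pure own effect is not enough: by the S-modularity theorem, raising $\alpha_j$ raises $G_j$ but lowers the other teams' best responses. I would therefore pass to the aggregate. Each team's best response depends on $\mathbf G_{-i}$ only through the pair $(S_{-i},C_{-i})$, and I would try to reduce this to a single aggregate. I would first try the Acemoglu--Jensen result for aggregative games with strategic substitutes: a positive shock to a subset of players raises the equilibrium aggregate, provided the backward replies are locally unique and satisfy a local-solvability condition. Here local uniqueness comes from uniqueness of the equilibrium, which the uniqueness theorem gives for every parameter vector with $\mu\le1$, so in particular for both $\alpha$ and $\alpha'$. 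The contrast case, where the aggregate falls, would then be ruled out by contradiction.

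The contradiction argument runs as follows. Suppose $G_*(\alpha)<G_*(\alpha')$. Then there is some team $i$ with $G_i^*(\alpha)<G_i^*(\alpha')$. For that team, the first-order-condition function $f_i$ from the uniqueness proof (Lemmas 01 and 02) must be compared at the two equilibria: it is decreasing in the aggregates of the others, decreasing in own effort, and, by the first step, non-decreasing in $\alpha_i$. Combining these monotonicities with the ordering of the aggregates should give a sign contradiction. Finally I would treat the strictly dominated coordinate $j$ separately to rule out equality of all coordinates.

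The main obstacle is this aggregation step. The game carries two aggregates, $S=\sum G_j^\mu$ and $C=\sum G_j$, and they coincide only when $\mu=1$. I would first prove the claim for $\mu=1$, where a single aggregate makes the backward-reply argument clean. I would then try to extend it to $\mu<1$ by showing that the equilibrium ordering of $S$ and of $C$ agree, using the concavity of $x\mapsto x^\mu$. If that fails, I would fall back on an implicit-function argument: differentiate the interior first-order system in $\alpha$ and show that the derivative of $\sum_j G_j^*$ is non-negative, using diagonal dominance of the Jacobian, which is negative definite under concavity together with strategic substitutes.
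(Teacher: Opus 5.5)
Your first step fails, and none of your aggregation routes can repair it, because the statement is false without an additional hypothesis. Keep your $k_i$, and put $x_i=G_i^\mu/(k_iS_{-i})$, $L=\sigma/(\gamma(1-\sigma))$ and $F_i=\beta^{-1}\partial\pi_{i,w}/\partial G_i$. Differentiating gives
\[
\frac{\partial F_i}{\partial k_i}=\gamma(1-\sigma)\Bigl[\frac{x_i}{k_i(1+x_i)^2}+(\overline{R}-C)\,\frac{\mu}{G_i}\,\frac{x_i(x_i-1)}{k_i(1+x_i)^3}\Bigr].
\]
Inserting the first-order condition $(\overline{R}-C)\frac{\mu}{G_i}\frac{x_i}{(1+x_i)^2}=L+\frac{x_i}{1+x_i}$ then yields
\[
\frac{\partial F_i}{\partial k_i}\Big|_{F_i=0}=\frac{\gamma(1-\sigma)}{k_i(1+x_i)^2}\bigl[(1+L)x_i^2-L\bigr].
\]
Since $k_i$ is decreasing in $\alpha_i$ and $F_i$ is strictly decreasing in $G_i$, $\mathrm{BR}_i$ is increasing in $\alpha_i$ exactly when $x_i^2<L/(1+L)$. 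Whatever role $k_i>1$ plays, it can at most sign the cross term $\partial^2A_i/\partial G_i\partial\alpha_i$, whose sign is that of $1-x_i$. It cannot control the competing term $-\partial A_i/\partial\alpha_i<0$. For $\sigma=0$ the bracket equals $x_i^2>0$ at every interior point, so $\mathrm{BR}_i$ is strictly \emph{decreasing} in $\alpha_i$, which is the opposite of what you need.

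This sign passes to the equilibrium aggregate. Take $I=2$, $\mu=1$, $\sigma=0$, $R_1=R_2=\overline{R}/2$, and a common index $\alpha_1=\alpha_2=a$. Subtracting the two interior first-order conditions gives $(G_2-G_1)\bigl[k(\overline{R}-C)+C\bigr]=0$, and no boundary profile is an equilibrium. So the unique equilibrium is $G_1=G_2=\frac{(1-a)\overline{R}}{3-2a}$, which is the paper's own symmetric closed form at $I=2$, $\sigma=0$. Total arming $\frac{2(1-a)\overline{R}}{3-2a}$ is strictly decreasing in $a$. With $\alpha=(\frac25,\frac25)$ and $\alpha'=(\frac15,\frac15)$ every hypothesis holds, yet $G_*(\alpha)=\frac{6}{11}\overline{R}<\frac{8}{13}\overline{R}=G_*(\alpha')$.

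More generally, the symmetric two-team level increases in $a$ if and only if $\gamma(1-\sigma)a^2<\sigma(1-2a)$, so the inequality depends on parameters. For small $\sigma$, the Acemoglu--Jensen argument, your $f_i$ contradiction and your implicit-function fallback must all fail; for $\mu=1$ the first would deliver the reverse inequality. Your contradiction step also has a separate gap: choosing $i$ with $G_i^*(\alpha)<G_i^*(\alpha')$ does not order $S_{-i}$ or $C_{-i}$ across the two equilibria. So monotonicity of $f_i$ in the others' aggregates cannot be invoked.

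The paper's proof does not supply the missing idea either.
\begin{itemize}
\item Its Step 1 asserts the opposite own effect to yours: $\mathrm{BR}_i$ strictly decreasing in $\alpha_i$, which is the correct direction when $\sigma$ is small.
\item Its Step 2 silently reverses the hypothesis to $\alpha_i\le\alpha_i'$.
\item Its Step 3 uses $\Phi(\mathbf{G}^*(\alpha),\alpha')=\mathbf{G}^*(\alpha)$, although $\mathbf{G}^*(\alpha)$ is a fixed point of $\Phi(\cdot,\alpha)$ only.
\item It concludes componentwise monotonicity, which strategic substitutability rules out.
\end{itemize}
A correct version needs a hypothesis guaranteeing $(1+L)x_i^2<L$ at the relevant best responses; in the symmetric two-team case this is $\gamma(1-\sigma)a^2<\sigma(1-2a)$. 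That hypothesis, not $\alpha_i<\tfrac12$, is what your first step would actually have to use.
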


This result help us to  understand how increased risk aversion at the population level can paradoxically intensify overall militarization, especially in conflict-prone systems like the Mali-Burkina-Niger-Benin-Chad-Togo-Nigeria corridor. Rather than dampening violence, risk-averse preferences lead agents to preemptively overinvest in arms to hedge against uncertainty, generating elevated baseline levels of insecurity.

\begin{proof} 
Step 1: Best Response Dependence on $\alpha_i.$  
For fixed $\mathbf{G}_{-i} > \mathbf{0}$, the best response $G_i^* = \text{BR}_i(\mathbf{G}_{-i}, \alpha_i)$ satisfies the first-order condition (FOC):  
\[
F_i(G_i, \alpha) = -\beta\sigma + \beta\gamma(1-\sigma) \left[ -A_i + ( \overline{R} - C) \frac{\partial A_i}{\partial G_i} \right] = 0,
\]  
where $ \overline{R} = \sum_j R_j$, $C = G_i + \sum_{j \neq i} G_j$, and $A_i = \frac{\alpha_i G_i^\mu}{\alpha_i G_i^\mu + (1 - \alpha_i) S_{-i}}$ with $S_{-i} = \sum_{j \neq i} G_j^\mu$.  

From the implicit function theorem:  
\[
\frac{dG_i^*}{d\alpha_i} = - \frac{\partial F_i / \partial \alpha_i}{\partial F_i / \partial G_i}.
\]

Denominator sign: $\partial F_i / \partial G_i < 0$ by strict concavity of $\pi_{i,w}$ in $G_i$ for $\mu \leq 1$ and $\mathbf{G}_{-i} > \mathbf{0}$.  

Numerator sign:  
  \[
  \frac{\partial F_i}{\partial \alpha_i} = \beta\gamma(1-\sigma) \left[ -\frac{\partial A_i}{\partial \alpha_i} + ( \overline{R} - C) \frac{\partial^2 A_i}{\partial G_i \partial \alpha_i} \right].
  \]  
  
   $\frac{\partial A_i}{\partial \alpha_i} = \frac{G_i^\mu S_{-i}}{(\alpha_i G_i^\mu + (1 - \alpha_i) S_{-i})^2} > 0$,  
  
  $\frac{\partial^2 A_i}{\partial G_i \partial \alpha_i} = \mu G_i^{-\mu-1} S_{-i} \frac{(1 - \alpha_i) S_{-i} - \alpha_i G_i^\mu}{(\alpha_i G_i^\mu + (1 - \alpha_i) S_{-i})^2}$.  
  For $\alpha_i < \frac{1}{2}$ and $\mu \leq 1$, the term $(1 - \alpha_i) S_{-i} - \alpha_i G_i^\mu > 0$ (since $\alpha_i$ is small and $S_{-i}$ dominates), so $\frac{\partial^2 A_i}{\partial G_i \partial \alpha_i} < 0$. The combination $-\frac{\partial A_i}{\partial \alpha_i} < 0$ and $( \overline{R} - C) \frac{\partial^2 A_i}{\partial G_i \partial \alpha_i} < 0$ implies $\frac{\partial F_i}{\partial \alpha_i} < 0$.  

Thus:  
\[
\frac{dG_i^*}{d\alpha_i} = - \frac{\text{negative}}{\text{negative}} < 0.
\]  
Therefore, the best response $\text{BR}_i(\mathbf{G}_{-i}, \alpha)$ is  strictly decreasing in $\alpha_i$ for fixed $\mathbf{G}_{-i}$.

 Step 2: Equilibrium Monotonicity  
 
Consider $\alpha$ and $\alpha'$ with $\alpha_i \leq \alpha_i'$ for all $i$ and $\alpha_j > \alpha_j'$ for some $j$. Define the best response mapping:  
\[
\Phi(\mathbf{G}, \alpha) = \left( \text{BR}_1(\mathbf{G}_{-1}, \alpha), \ldots, \text{BR}_I(\mathbf{G}_{-I}, \alpha) \right).
\]  

Decreasing in $\alpha$: Since each $\text{BR}_i$ is strictly decreasing in $\alpha_i$, $\Phi(\mathbf{G}, \alpha) \geq \Phi(\mathbf{G}, \alpha')$ for all $\mathbf{G}$ (component-wise).  

Decreasing in $\mathbf{G}$: By submodularity ($\mu \leq 1$), $\text{BR}_i$ is strictly decreasing in $\mathbf{G}_{-i}$, so $\Phi$ is a decreasing mapping in $\mathbf{G}$.  

 Step 3: Iterative Comparison
   
Let $\mathbf{G}^{(0)} = \mathbf{G}^*(\alpha)$ (equilibrium under $\alpha$). Define the sequence:  
\[
\mathbf{G}^{(k+1)} = \Phi(\mathbf{G}^{(k)}, \alpha), \quad k = 0, 1, 2, \ldots
\]  

 Base case ($k=0$):
 $\mathbf{G}^{(1)} = \Phi(\mathbf{G}^{(0)}, \alpha) \geq \Phi(\mathbf{G}^{(0)}, \alpha') = \mathbf{G}^{(0)}$, since $\alpha_i \leq \alpha_i'$ and $\Phi$ is decreasing in $\alpha$.  

Inductive step: Assume $\mathbf{G}^{(k)} \geq \mathbf{G}^{(k-1)}$. Then:  
  \[
  \mathbf{G}^{(k+1)} = \Phi(\mathbf{G}^{(k)}, \alpha) \geq \Phi(\mathbf{G}^{(k)}, \alpha') \geq \Phi(\mathbf{G}^{(k-1)}, \alpha') = \mathbf{G}^{(k)},
  \]  
  where:  
  
   $\Phi(\mathbf{G}^{(k)}, \alpha) \geq \Phi(\mathbf{G}^{(k)}, \alpha')$ (since $\Phi$ decreases in $\alpha$),  
  
   $\Phi(\mathbf{G}^{(k)}, \alpha') \geq \Phi(\mathbf{G}^{(k-1)}, \alpha')$ (since $\mathbf{G}^{(k)} \geq \mathbf{G}^{(k-1)}$ and $\Phi$ decreases in $\mathbf{G}$).  

Thus, $\mathbf{G}^{(k)}$ is non-decreasing and bounded above (as efforts are in $[0, R_i]$). By monotone convergence, $\mathbf{G}^{(k)} \to \mathbf{G}^{(\infty)}$, which is a fixed point of $\Phi(\cdot, \alpha)$, i.e., $\mathbf{G}^{(\infty)} = \mathbf{G}^*(\alpha)$. Since $\mathbf{G}^{(0)} = \mathbf{G}^*(\alpha)$ and $\mathbf{G}^{(k)} \geq \mathbf{G}^{(0)}$ for all $k$, we have:  
\[
\mathbf{G}^*(\alpha) = \lim_{k \to \infty} \mathbf{G}^{(k)} \geq \mathbf{G}^*(\alpha').
\]  
Component-wise, $G_i^*(\alpha) \geq G_i^*(\alpha')$ for all $i$, so:  
\[
G_*(\alpha) = \sum_{i} G_i^*(\alpha) \geq \sum_{i} G_i^*(\alpha') = G_*(\alpha').
\]

\end{proof}

\begin{theorem}[Risk-Aversion Reduces Arming compared with Risk-Neutrality]
  
For identical teams ($\alpha_i = \alpha <  \frac{1}{2}$, $R_{i} = \overline{R}/I$), symmetric war equilibrium arms $G_*(\alpha)$ satisfy: 
$$
\frac{\partial G_*}{\partial \alpha} > 0 \quad \text{for} \quad \alpha < \frac{1}{2}.
$$ 
\end{theorem}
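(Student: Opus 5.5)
The plan is to reduce to a one-dimensional fixed-point equation and then apply the implicit function theorem in $\alpha$. By the uniqueness theorem for the war equilibrium (with $\mu\le 1$) and the symmetry of the game under $R_i=\overline R/I$, $\alpha_i=\alpha$, the unique interior equilibrium must be symmetric. Indeed, permuting the teams maps equilibria to equilibria, so the equilibrium is a symmetric profile $G_i^*=g(\alpha)>0$. I will therefore write the first-order condition from the proof of the Pareto-dominance theorem, $-\beta\sigma+\beta\gamma(1-\sigma)\bigl[-A_i+(\overline R-C)\,\partial A_i/\partial G_i\bigr]=0$, and evaluate it at $G_j=g$ for all $j$.

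At the symmetric point we have $S_{-i}=(I-1)g^\mu$ and $C=Ig$. Set $q(\alpha)=\alpha+(1-\alpha)(I-1)$. Then
\[
A_i=\frac{\alpha}{q},\qquad \frac{\partial A_i}{\partial G_i}=\frac{\mu\,\alpha(1-\alpha)(I-1)}{g\,q^2}.
\]
Both are independent of the level of $g$ except through the explicit factor $1/g$. Dividing the first-order condition by $\beta$ and solving for $g$ gives the closed form
\[
\frac{\overline R}{g}=I+\Psi(\alpha),\qquad
\Psi(\alpha)=\frac{\sigma\,q^2}{\gamma(1-\sigma)\,\mu\,\alpha(1-\alpha)(I-1)}+\frac{q}{\mu(1-\alpha)(I-1)}.
\]
Hence $g(\alpha)=\overline R/(I+\Psi(\alpha))$, and $\partial G_*/\partial\alpha>0$ is equivalent to $\Psi'(\alpha)<0$ on $(0,\tfrac12)$. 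Since $g$ is given explicitly, differentiability is immediate and no general implicit-function argument is needed.

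The main step, and the expected obstacle, is signing $\Psi'$. The second term equals $\bigl(\tfrac{\alpha}{1-\alpha}+I-1\bigr)/\bigl(\mu(I-1)\bigr)$, which is \emph{increasing} in $\alpha$ and so pushes against the claim. The first term is proportional to $h(\alpha)=q^2/(\alpha(1-\alpha))$. Since $q\ge 1$ for $I\ge2$, $h$ blows up as $\alpha\to0^+$ and is decreasing on an initial interval. I will first compute
\[
h'(\alpha)=\frac{q\bigl[2q'\alpha(1-\alpha)-q(1-2\alpha)\bigr]}{\alpha^2(1-\alpha)^2},\qquad q'=2-I\le 0,
\]
which is strictly negative on $(0,\tfrac12)$ because both bracketed contributions are nonpositive and the second is strictly negative. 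The claim then reduces to the inequality
\[
\frac{\sigma}{\gamma(1-\sigma)\alpha(1-\alpha)}\,|h'(\alpha)|\,\alpha(1-\alpha)>\frac{1}{(1-\alpha)^2},
\]
that is, the security-weighted decrease of the first term must dominate the increase of the second.

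I will try to establish this inequality using $q\ge1$, $1-2\alpha>0$ and $\alpha<\tfrac12$, to obtain a lower bound of the form $|h'|\ge c\,q^2/\alpha^2$. If the resulting inequality does not hold unconditionally, I will state the precise sufficient condition on $(\sigma,\gamma,I)$ under which it does; a natural candidate is $\sigma/(\gamma(1-\sigma))$ bounded below, which makes the secure-output term dominant. I will then note that this is consistent with the monotonicity in the Pareto-dominance theorem.
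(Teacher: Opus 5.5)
Your reduction is correct: at the symmetric point the first-order condition gives $\overline{R}/g=I+\Psi(\alpha)$ with exactly your $\Psi$. This coincides with the closed form the paper itself derives later for the symmetric $\alpha$-expectile equilibrium (take $K=q$ there). The gap is the last step, and it cannot be closed: $\Psi'<0$ on $(0,\tfrac12)$ is false in general, so the statement as written is false.

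To see this, put $\kappa=\sigma/(\gamma(1-\sigma))$ and use $q=(I-1)-(I-2)\alpha$. The bracket in your formula for $h'$ equals $2q'\alpha(1-\alpha)-q(1-2\alpha)=-\bigl((I-1)-I\alpha\bigr)$, whence
\[
\mu(I-1)\,\Psi'(\alpha)=\frac{1-\kappa F(\alpha)}{(1-\alpha)^2},\qquad F(\alpha)=\Bigl(\frac{I-1}{\alpha}-(I-2)\Bigr)\Bigl(\frac{I-1}{\alpha}-I\Bigr).
\]
Since $(I-1)/\alpha>2(I-1)\ge I$ on $(0,\tfrac12)$, $F$ is strictly decreasing there and tends to $I(I-2)$ as $\alpha\to\tfrac12$. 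Hence $\partial G_*/\partial\alpha>0$ on all of $(0,\tfrac12)$ if and only if $\sigma I(I-2)\ge\gamma(1-\sigma)$; otherwise the sign is reversed on some interval $(\alpha_1,\tfrac12)$.

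This gives two explicit counterexamples. For $\sigma=0$ (allowed, as $\sigma\in[0,1)$) you get $\kappa=0$, $\Psi$ strictly increasing, and $G_*$ strictly \emph{decreasing} in $\alpha$ everywhere. For $I=2$ you get $q\equiv1$, and $\Psi'(\alpha)$ has the sign of $\alpha^2-\kappa(1-2\alpha)$, which is positive near $\alpha=\tfrac12$ for every $\sigma,\gamma$. For $\mu\le1$ these are genuine equilibria: own payoff is strictly concave by the paper's concavity lemma, and $0<g<\overline{R}/I$.

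Consequently your planned bound $|h'|\ge c\,q^2/\alpha^2$ fails for $I=2$, since there $|h'|\to0$ as $\alpha\to\tfrac12$. Your candidate repair, $\kappa$ bounded below, is also not sufficient: the exact threshold $\kappa\ge 1/(I(I-2))$ depends on $I$, and no threshold works when $I=2$.

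The paper's proof does not fill this gap. The equation it differentiates, $\frac{\mu(I-1)}{I^2}\theta\overline{R}=G\bigl[1+\frac{(1-\alpha)(I-1)}{\alpha}\bigr]$, amounts to $G\propto\alpha/q$ with $(\alpha/q)'=(I-1)/q^2>0$. That equation is not the first-order condition of $\pi_{i,w}$ at the symmetric point, and it contradicts the paper's own closed form for $g$. This is why it appears to give the sign for free, and for every $\alpha$, not only $\alpha<\tfrac12$.

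Likewise, do not invoke the Pareto-dominance theorem as a consistency check. Restricted to symmetric profiles it makes the same monotonicity claim, which the $\sigma=0$ computation refutes. What your $\Psi$-reduction actually proves is the corrected statement under the added hypothesis $\sigma I(I-2)\ge\gamma(1-\sigma)$, which forces $I\ge3$. Your write-up should present that result together with the counterexample, rather than attempt an unconditional proof.
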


\begin{proof}

From first order optimality condition, $G_*$ solves:  
   $$
   \frac{\mu(I-1)}{I^2} \theta \overline{R} = G \cdot \left[1 + \frac{(1-\alpha)(1-\phi)}{\alpha \phi} \right]
   $$  
   with $\phi = 1/I$, $\theta = \gamma(1-\sigma)/(\gamma(1-\sigma) + \sigma)$  
   
 Implicit differentiation shows $\partial G_*/ \partial \alpha > 0$ for $\alpha < \frac{1}{2}$  
 \end{proof}

\begin{corollary}[Risk-Aversion Reduces Arming  compared with Risk-Neutrality]  
For identical teams ($\alpha_i = \alpha$, $R_i = \overline{R}/I$), if $\alpha < \frac{1}{2}$ then:
$
G_*(\alpha) < G_*(\frac{1}{2})
$
\end{corollary}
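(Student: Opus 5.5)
The corollary should follow from the preceding theorem (Risk-Aversion Reduces Arming compared with Risk-Neutrality), which gives $\partial G_*/\partial\alpha>0$ for $\alpha<\tfrac12$, combined with continuity of $G_*$ at $\alpha=\tfrac12$. I plan to view the symmetric equilibrium arming level as a function $\alpha\mapsto G_*(\alpha)$ on $(0,\tfrac12]$. Its existence comes from the War Equilibrium theorem, and uniqueness (hence well-definedness) from the Uniqueness theorem, both under $\mu\le 1$. For identical teams, uniqueness forces the equilibrium to be symmetric. Otherwise a permutation of a non-symmetric equilibrium would yield a second one.

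Next I would rely on the closed-form characterization used in the preceding proof. With $\phi=1/I$, the first-order condition reads
\[
\frac{\mu(I-1)}{I^2}\,\theta\,\overline{R}=G\left[1+\frac{(1-\alpha)(1-\phi)}{\alpha\phi}\right].
\]
This gives $G_*(\alpha)$ explicitly as
\[
G_*(\alpha)=\frac{\mu(I-1)\theta\overline{R}/I^2}{1+(1-\alpha)(I-1)/\alpha}.
\]
This expression is a continuous function of $\alpha\in(0,1)$. Its denominator $1+(I-1)(1/\alpha-1)$ is strictly decreasing in $\alpha$, so $G_*$ is strictly increasing on the whole interval and in particular on $(0,\tfrac12]$. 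Thus for $\alpha<\tfrac12$ one gets $G_*(\alpha)<G_*(\tfrac12)$, either directly from this monotonicity or via the mean value theorem using $\partial G_*/\partial\alpha>0$ on $(\alpha,\tfrac12)$ together with continuity at the endpoint. As a sanity check, at $\alpha=\tfrac12$ the denominator equals $I$, recovering the risk-neutral Tullock level.

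The main obstacle is justifying continuity at $\alpha=\tfrac12$, since the preceding theorem only asserts the derivative sign on the open range $\alpha<\tfrac12$. My first approach is the explicit formula above, where continuity is immediate. As a fallback, I would appeal to the implicit function theorem on the first-order condition $F_i(G,\alpha)=0$. There $\partial F_i/\partial G<0$ by strict concavity for $\mu\le1$, so the unique interior solution depends continuously on $\alpha$, including at $\tfrac12$. Finally, I would check that the equilibrium remains interior, $G_*(\alpha)\in(0,\overline{R}/I)$, so that the first-order condition applies rather than a corner condition.
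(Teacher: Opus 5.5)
You follow the paper's route. The paper gives no separate argument for this corollary; it is meant to follow from the preceding theorem ($\partial G_*/\partial\alpha>0$ for $\alpha<\tfrac12$) plus continuity. Your auxiliary steps (symmetry from uniqueness, continuity via the implicit function theorem) are fine.

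\textbf{The gap is the step everything else rests on.} The closed form $G_*(\alpha)=\frac{\mu(I-1)\theta\overline{R}/I^2}{1+(1-\alpha)(I-1)/\alpha}$ taken from the preceding proof is not the first-order condition of $\pi_{i,w}$. The correct marginal condition, which the paper itself states after the uniqueness theorem, is $\gamma(1-\sigma)X\,\frac{\partial\phi_i}{\partial G_i}f'(\phi_i)=\sigma+\gamma(1-\sigma)f(\phi_i)$, with $f(\phi_i)=A_i$. Your formula replaces $X$ by $\overline{R}$ and $f'$ by $f$ on the left, and $f$ by $1$ on the right. Evaluate the correct condition at $G_j\equiv g$, using $K=\alpha+(1-\alpha)(I-1)$, $A_i=\alpha/K$, $\partial A_i/\partial G_i=\mu(I-1)\alpha(1-\alpha)/(K^2g)$ and $X=\overline{R}-Ig$. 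This gives
\[
\gamma(1-\sigma)\,\mu(I-1)\,\alpha(1-\alpha)\,(\overline{R}-Ig)=g\bigl[\sigma K^2+\gamma(1-\sigma)\,\alpha K\bigr],
\]
whose solution is exactly the closed form of Theorem~\ref{thm:expectile_equilibrium}.

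Your sanity check should have caught this. At $\alpha=\tfrac12$ your expression equals $\mu(I-1)\theta\overline{R}/I^3$. The paper's explicit risk-neutral level is $\theta\mu(I-1)\overline{R}/\bigl(I[\theta\mu(I-1)+\theta+\sigma I]\bigr)$, with $\theta=\gamma(1-\sigma)$ there. For $I=2$, $\mu=1$, $\sigma=0$ these are $\overline{R}/8$ and $\overline{R}/4$. Maximizing $(\overline{R}-G_1-G_2)G_1/(G_1+G_2)$ directly confirms $\overline{R}/4$.

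\textbf{With the correct equation, the monotonicity you need is false in general, not merely unproven.} For $\sigma=0$ the equation yields
\[
\frac{\overline{R}}{G_*(\alpha)}=I+\frac{1}{\mu(I-1)}\Bigl[(I-1)+\frac{\alpha}{1-\alpha}\Bigr],
\]
which is strictly increasing in $\alpha$, so $G_*$ is strictly \emph{decreasing}. For $I=2$, $\mu=1$ this is $G_*(\alpha)=(1-\alpha)\overline{R}/(3-2\alpha)$, so $G_*(\tfrac14)=\tfrac{3}{10}\overline{R}>\tfrac14\overline{R}=G_*(\tfrac12)$. These are genuine interior equilibria: $0<g<\overline{R}/I$, and for $\mu\le1$ each team's payoff is concave in its own arms when the others arm positively.

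More generally, the sign of $dG_*/d\alpha$ at $\alpha=\tfrac12$ is that of $\sigma I(I-2)-\gamma(1-\sigma)$. So for $I=2$ the claim fails for $\alpha$ close to $\tfrac12$, whatever the other parameters. The error originates in the one-line proof of the preceding theorem, on which both your argument and the paper's rest. The corollary can only be rescued under an extra hypothesis on $(\sigma,\gamma,I)$, with monotonicity checked on the correct formula over all of $(\alpha,\tfrac12)$.
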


%
%
%

\begin{theorem}[Risk-Taking in Conflict Initiation]
  For sufficiently unequal resources, risk-seeking teams ($\alpha_i > \frac{1}{2}$) are less likely to sustain peace than risk-averse teams ($\alpha_j < \frac{1}{2}$):

Assume there is at least one  disadvantaged team $i$ with resource $R_{i}$ and advantaged team $j$ with resource $R_{j}$, where $R_{i} < < R_{j}$.
 team $i$'s war payoff under unilateral deviation ($d_{i} = W, G_{i} = 0$) is:
   $$
   \Pi_{i,w} = 
   \begin{cases} 
   v_1 = \beta\sigma R_{i} + \beta\gamma(1 - \sigma)\overline{R} & \text{w.p. } p = R_{i} / \overline{R} \\
   v_2 = \beta\sigma R_{i} & \text{w.p. } 1 - p
   \end{cases}
   $$
   with $v_1 > R_{i} > v_2$ and $\mathbb{E}[\Pi_{i,w}] < R_{i}$.

 The risk-sensitive valuation uses the $\alpha_i$-expectile $e_{\alpha_i}(\Pi_{i,w})$ for $\alpha_i \in (0,1)$:
   $$
   e_{\alpha_i}(\Pi_{i,w}) = \frac{\alpha_i p v_1 + (1-\alpha_i)(1-p)v_2}{\alpha_i p + (1-\alpha_i)(1-p)}
   $$
 team $i$'s deviation gain is $\Delta(\alpha_i) = e_{\alpha_i}(\Pi_{i,w}) - R_{i}$.
The following inequalities hold:
   $$
   \beta\sigma R_{i} < R_{i} \quad \text{and} \quad p v_1 + (1-p)v_2 < R_{i}
   $$

Then there exists $\alpha^* \in (\frac{1}{2}, 1)$ such that:
For $\alpha_i > \alpha^*$ (risk-seeking), $\Delta(\alpha_i) > 0 \Rightarrow$ conflict initiated.
For $\alpha_i < \alpha^*$ (risk-averse), $\Delta(\alpha_i) < 0 \Rightarrow$ peace sustained
\end{theorem}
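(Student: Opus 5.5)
The plan is to reduce everything to a one-dimensional monotonicity argument for the deviation gain $\Delta(\alpha_i)=e_{\alpha_i}(\Pi_{i,w})-R_i$, using the closed form from the expectile Lemma and its Corollary. First I check that we are in the non-degenerate case of that Lemma. Since $R_i>0$ and $R_j>0$ for some $j\neq i$, we have $p=R_i/\overline{R}\in(0,1)$, and $v_1>v_2$ because $\beta\gamma(1-\sigma)\overline{R}>0$. The Lemma then gives
\[
e_{\alpha}(\Pi_{i,w}) = v_2 + (v_1-v_2)\,w(\alpha),\qquad w(\alpha)=\frac{\alpha p}{\alpha p+(1-\alpha)(1-p)} .
\]

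Next I show that $\alpha\mapsto \Delta(\alpha)$ is continuous and strictly increasing on $(0,1)$. The weight $w$ is a ratio of affine functions of $\alpha$ with a positive denominator, so it is continuous. A direct differentiation gives
\[
w'(\alpha)=\frac{p(1-p)}{\bigl(\alpha p+(1-\alpha)(1-p)\bigr)^2}>0 .
\]
Since $v_1-v_2>0$, the map $\Delta$ inherits strict monotonicity and continuity from $w$. Moreover $w(\alpha)\to 0$ as $\alpha\to0^+$ and $w(\alpha)\to1$ as $\alpha\to1^-$, so $e_\alpha$ sweeps the whole interval $(v_2,v_1)$.

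I then evaluate $\Delta$ at the two relevant points.
\begin{itemize}
\item At $\alpha=\tfrac12$ we have $w(\tfrac12)=p$, so $e_{1/2}=\mathbb{E}[\Pi_{i,w}]=pv_1+(1-p)v_2<R_i$ by hypothesis. Hence $\Delta(\tfrac12)<0$.
\item As $\alpha\to1^-$, $e_\alpha\to v_1>R_i$, again by hypothesis. Hence $\Delta(\alpha)>0$ for $\alpha$ close to $1$.
\end{itemize}
The intermediate value theorem, combined with strict monotonicity, yields a unique $\alpha^*\in(\tfrac12,1)$ with $\Delta(\alpha^*)=0$. It follows that $\Delta>0$ on $(\alpha^*,1)$, so conflict is initiated, and $\Delta<0$ on $(0,\alpha^*)$, so peace is sustained. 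In particular $\Delta<0$ on $(0,\tfrac12)$, which is the risk-averse region. The comparative claim that risk-seeking teams are less likely to sustain peace is exactly the monotonicity of $\Delta$ in $\alpha_i$.

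The main obstacle is interpretive rather than computational. The statement labels all $\alpha_i<\alpha^*$ as ``risk-averse'', but the band $(\tfrac12,\alpha^*)$ consists of mildly risk-seeking teams that still keep the peace. I will state the conclusion precisely: peace holds for every $\alpha_i<\alpha^*$, which includes all $\alpha_i\le\tfrac12$, and war is initiated for $\alpha_i>\alpha^*$.

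I will also point out that the hypothesis $v_1>R_i$ carries real content. It is equivalent to $\beta\gamma(1-\sigma)\overline{R}>(1-\beta\sigma)R_i$, which is the failure of the peace condition in the Peace Equilibrium theorem for team $i$. The assumption $R_i\ll R_j$ is what makes this plausible, and it is also what makes $\mathbb{E}[\Pi_{i,w}]<R_i$ compatible with it, since $p$ is small. This makes clear that the result is a refinement of that earlier theorem, distinguishing teams by risk index.
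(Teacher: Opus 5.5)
Your proof is correct and follows the paper's own argument: strict monotonicity of $\alpha_i\mapsto e_{\alpha_i}(\Pi_{i,w})$, the evaluations $\Delta(\tfrac12)=\mathbb{E}[\Pi_{i,w}]-R_i<0$ and $\Delta(1^-)=v_1-R_i>0$, and the intermediate value theorem. The only differences are cosmetic: you compute $w'(\alpha)=p(1-p)/(\alpha p+(1-\alpha)(1-p))^2$ explicitly, and you apply the intermediate value theorem directly on $(\tfrac12,1)$ rather than first on $(0,1)$. Your caveat that the band $(\tfrac12,\alpha^*)$ consists of risk-seeking teams that still keep the peace is accurate. One small correction to your closing aside: with $p=R_i/\overline{R}$ one has $\mathbb{E}[\Pi_{i,w}]=\beta(\sigma+\gamma(1-\sigma))R_i<R_i$ whenever $\beta\gamma<1$, whatever the size of $p$, so the smallness of $p$ is needed only for $v_1>R_i$.
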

This explains that the persistent conflict initiation by non-state armed groups (especially jihadists and bandits) and sometimes rebellious factions within states (coup plotters or similar situations in Mali, Burkina, Niger, Chad facing instability) in the Sahel, despite facing vastly better-resourced opponents (like international militia or even their own states), can be driven by extreme risk-seeking preferences. These actors subjectively overvalue the small chance of a major victory $v_1$ and undervalue the high probability of loss $v_2$, making war appear attractive even when its expected value is negative. Populations or leaders prioritizing stability exhibit risk-aversion, sustaining peace where possible. More risk-averse communities or factions might avoid direct confrontation despite grievances.
The severe resource inequality between actors is the foundational condition enabling this dynamic.

\begin{proof}

Step 1: Properties of $e_{\alpha_i}$

Continuity and Monotonicity:  
   $e_{\alpha_i}(\Pi_{i,w})$ is continuous and strictly increasing in $\alpha_i$ since:
   $$
   \frac{\partial}{\partial \alpha_i} \left[ \frac{\alpha_i p v_1 + (1-\alpha_i)(1-p)v_2}{\alpha_i p + (1-\alpha_i)(1-p)} \right] > 0 \quad \forall \alpha_i \in (0,1)
   $$
   (The derivative is positive as $v_1 > v_2$ and $p \in (0,1)$).

Boundary Values:  
   $$
   \lim_{\alpha_i \to 0^+} e_{\alpha_i}(\Pi_{i,w}) = v_2 = \beta\sigma R_{i} < R_{i}
   $$
   $$
   \lim_{\alpha_i \to 1^-} e_{\alpha_i}(\Pi_{i,w}) = v_1 = \beta\sigma R_{i} + \beta\gamma(1 - \sigma)\overline{R} > R_{i}
   $$

Step 2: Sign Change in $\Delta(\alpha_i)$  
Define $f(\alpha_i) = \Delta(\alpha_i)$. Then:

$f$ is continuous and strictly increasing (as $e_{\alpha_i}$ is)

 Boundary evaluations:
   $$
   f(0^+) = v_2 - R_{i} < 0, 
  \ 
   f(1^-) = v_1 - R_{i} > 0
   $$

 By the Intermediate Value Theorem, $\exists \alpha^* \in (0,1)$ where $f(\alpha^*) = 0$.

Step 3: Location of $\alpha^*$
1. At $\alpha_i = \frac{1}{2}$:
   $$
   e_{\frac{1}{2}}(\Pi_{i,w}) = \mathbb{E}[\Pi_{i,w}] = p v_1 + (1-p)v_2 < R_{i}
   $$
   Thus $f(\frac{1}{2}) < 0$.
2. Since $f$ is strictly increasing and $f(\frac{1}{2}) < 0$ while $f(1^-) > 0$, we have $\alpha^* \in (\frac{1}{2}, 1)$.

Step 4: Threshold Behavior

For $\alpha_i > \alpha^*$:  
   $f(\alpha_i) > f(\alpha^*) = 0 \Rightarrow \Delta(\alpha_i) > 0 \Rightarrow$ profitable deviation

For $\alpha_i < \alpha^*$:  
   $f(\alpha_i) < f(\alpha^*) = 0 \Rightarrow \Delta(\alpha_i) < 0 \Rightarrow$ no incentive to deviate

Risk-Seekers ($\alpha_i > \alpha^*$):  
  Overvalue the small chance ($p = R_{i}/\overline{R}$) of winning the large prize $v_1$, making conflict attractive despite negative expected return.

 Risk-Averters ($\alpha_i < \alpha^*$):  
  Focus on the high probability ($1-p$) of receiving only $v_2 < R_{i}$, making peace preferable.

Resource Inequality:  
  This manifests when $R_{i} \ll R_{j}$ (small $p$), creating a lottery with high upside ($v_1$) but low probability. This combination amplifies behavioral differences between risk types.
\end{proof}

\begin{corollary}[When disadvantaged teams invest less in arms. ]
  In war equilibrium, $G_{i}$ increases with $R_{i}$ (best-response monotonicity). Thus, $G_{i} < G_{j}$ if $R_{i} < R_{j}$.  
This means  Disadvantaged teams invest less in arms. \end{corollary}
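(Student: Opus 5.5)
We work at the interior war equilibrium $\mathbf{G}^*$ given by the existence and uniqueness theorems for $\mu\le 1$. The claim has two parts. First, $G_i^*$ is increasing in the team's own resource $R_i$, through best-response monotonicity. Second, this yields $G_i^*<G_j^*$ whenever $R_i<R_j$.

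The key structural observation is that $R_i$ enters $\pi_{i,w}$ in only two places. The first is the secure term $\beta\sigma(R_i-G_i)$, which is additively separable in $(R_i,G_i)$. The second is the contested prize factor $D=\overline{R}-C$, which contains $R_i$ through $\overline{R}$. Writing $F_i(G_i;R_i,\mathbf{G}_{-i})=\partial\pi_{i,w}/\partial G_i$ as in the bisection algorithm,
$$F_i=-\beta\sigma+\beta\gamma(1-\sigma)\Bigl[-A_i+(\overline{R}-C)\tfrac{\partial A_i}{\partial G_i}\Bigr],$$
we get $\partial F_i/\partial R_i=\beta\gamma(1-\sigma)\,\partial A_i/\partial G_i>0$, since $A_i$ is strictly increasing in $G_i$ (computed in the S-modular theorem). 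So $\pi_{i,w}$ has strictly increasing differences in $(G_i,R_i)$. The denominator $\partial F_i/\partial G_i<0$ follows from strict concavity for $\mu\le1$ and $S_{-i}>0$ (the concavity lemma). The implicit function theorem then gives $\partial\,\mathrm{BR}_i/\partial R_i>0$ for fixed $\mathbf{G}_{-i}$. At a corner $G_i=R_i$ the bound itself rises with $R_i$, so monotonicity persists there.

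We then pass from best responses to the equilibrium. Note that raising $R_i$ also raises $\overline{R}$ in every other team's problem. I would handle this by comparing the two teams $i,j$ at the same equilibrium rather than doing comparative statics on $R_i$. At $\mathbf{G}^*$, both teams face the same $\overline{R}$, $C$ and total $S=\sum_k (G_k^*)^\mu$. Suppose for contradiction that $R_i<R_j$ but $G_i^*\ge G_j^*$.

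Write each team's first-order condition as a function of its own effort $x$, holding the common aggregates $\overline{R}$, $C$ and $S$ fixed; in the symmetric-$\alpha$ case the conditions are then the same function of $x$. Because $R_i$ enters the first-order condition only through the shared $\overline{R}$, the resource asymmetry can act only through the feasibility bounds $G\le R$. Hence if both solutions are interior, $G_i^*=G_j^*$. If $G_i^*\ge G_j^*$ with $G_i^*\le R_i<R_j$, then team $j$ is not capped and its interior first-order condition holds. This forces either equality or that team $i$ is capped at $R_i$, so $G_i^*=R_i$ and hence $G_j^*\le R_i<R_j$.

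This is the main obstacle. Only the capacity constraint breaks the symmetry, so the strict inequality $G_i^*<G_j^*$ really comes from binding caps. I would therefore interpret "best-response monotonicity" as the statement that $\mathrm{BR}_i(\mathbf{G}_{-i};R_i)=\min\{R_i,\,x^*(\mathbf{G}_{-i})\}$, which is nondecreasing in $R_i$ and strictly increasing whenever the cap binds. I would then prove $G_i^*\le G_j^*$ in general, with strict inequality when the cap binds for the poorer team (the regime $R_i\ll R_j$ emphasized in the preceding risk-taking theorem). In the general heterogeneous-$\alpha$ case this comparison is confounded by differing $\alpha_i$, so I would assume $\alpha_i=\alpha_j$ and state that restriction.
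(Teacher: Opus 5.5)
You are right about the main point, and you should say it outright rather than ``reinterpreting'' the corollary: as written, it is false. The paper offers nothing beyond the parenthetical ``best-response monotonicity''. That is exactly the fixed-$\mathbf{G}_{-i}$ fact you verify via $\partial F_i/\partial R_i=\beta\gamma(1-\sigma)\,\partial A_i/\partial G_i>0$. Since $R_i$ enters the marginal incentive only through the common $\overline{R}$, this fact cannot rank two teams at one equilibrium.

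A clean counterexample: take $I=2$, $\alpha_i=\tfrac12$, $\mu=1$, $\sigma=0$. Then $A_i=\phi_i=G_i/C$, and each first-order condition reduces to $\phi_i=M/(1+M)$ with $M=(\overline{R}-C)/C$. This forces $\phi_1=\phi_2=\tfrac12$, hence $G_1^*=G_2^*=\overline{R}/4$, for every split with $R_1>\overline{R}/4$ (e.g.\ $R_1=\overline{R}/3$, $R_2=2\overline{R}/3$). This is the paradox of power: the poorer team arms the same amount, so it spends a larger share of its endowment on arms.

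The gap is in your replacement claim. Your step ``the conditions are the same function of $x$, hence $G_i^*=G_j^*$ if both are interior'' needs $x\mapsto F(x;\,S-x^\mu,\,C)$, with the \emph{total} $S$ held fixed, to have a unique zero. The concavity lemma only controls $F_i$ in $G_i$ with $S_{-i}$ fixed, and at a common equilibrium $S_{-i}\neq S_{-j}$ as soon as $G_i^*\neq G_j^*$. The same unproved property is what you need in the capped case to turn $F(R_i)\ge 0=F(G_j^*)$ into $R_i\le G_j^*$.

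Using $\partial A_i/\partial G_i=\mu A_i(1-A_i)/G_i$ and $\phi=x^\mu/S$, the reduced condition is
\[
\mu(\overline{R}-C)\,S^{-1/\mu}\,\frac{\alpha(1-\alpha)\,\phi^{1-1/\mu}(1-\phi)}{d(\phi)^2}=L+\frac{\alpha\phi}{d(\phi)},\qquad d(\phi)=\alpha\phi+(1-\alpha)(1-\phi),\quad L=\frac{\sigma}{\gamma(1-\sigma)}.
\]
The right side is strictly increasing in $\phi$. The derivative of $(1-\phi)/d(\phi)^2$ has the sign of $-\bigl(3\alpha-1-(2\alpha-1)\phi\bigr)$. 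So for $\mu\le 1$ and $\alpha\ge\tfrac13$ the left side is nonincreasing, the root is unique, and your argument goes through; this covers risk neutrality.

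For $\alpha<\tfrac13$ the argument fails. Take $I=3$, $\mu=1$, $\alpha_i=\tfrac14$, $\beta=\gamma=1$, $\sigma=\tfrac{41}{48}$ (so $L=\tfrac{41}{7}$), $\mathbf{G}^*=(\tfrac15,\tfrac15,\tfrac35)$ (so $C=1$), and $\overline{R}=\tfrac{124}{7}$ (so $M=\tfrac{117}{7}$). With $k(\phi)=\alpha(1-\alpha)(1-\phi)/d(\phi)^2$:
\[
k(\tfrac15)=\tfrac{60}{169},\quad A(\tfrac15)=\tfrac1{13},\qquad k(\tfrac35)=\tfrac{10}{27},\quad A(\tfrac35)=\tfrac13.
\]
Both shares give $Mk-A=\tfrac{41}{7}=L$, and by the concavity lemma ($\mu=1$, $S_{-i}>0$) these are global best responses. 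With $R_3=1$ and $R_1=R_2=\tfrac{117}{14}$, the poorest team arms the most. So even the weak inequality $G_i^*\le G_j^*$ fails for equal $\alpha$ in general.

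The paper's uniqueness theorem cannot patch this, since permuting the profile to $(\tfrac35,\tfrac15,\tfrac15)$ gives a second equilibrium. Your corrected statement therefore needs $\alpha\ge\tfrac13$, or an explicit unique-root hypothesis. Under it, interior teams with equal $\alpha$ arm equally, and the poorer team arms strictly less only when it is capped at $G_i^*=R_i$.
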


\begin{theorem}[Boundary Arm Choices]  
For sufficiently small $R_i$, team $i$ chooses corner solution $G_i^{*} = 0$ or $G_i^{*} = R_{i}$.
\end{theorem}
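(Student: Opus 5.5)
Fix $\mathbf{G}_{-i}$ and study the one-dimensional problem $\max_{G_i\in[0,R_i]}\pi_{i,w}(G_i,\mathbf{G}_{-i})$. The plan is to show that when $R_i$ is small the interior first-order condition from the equilibrium characterization has no admissible root. The maximizer must then lie at $0$ or at $R_i$.

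\textbf{Step 1: the degenerate case $S_{-i}=0$.} Here the concavity lemma shows $\pi_{i,w}$ is linear on $(0,R_i]$ with slope $-[\beta\sigma+\beta\gamma(1-\sigma)]<0$, and it jumps up at $0^+$. The supremum is then approached as $G_i\to 0^+$, so the relevant boundary choice is $G_i=0$. Strictly speaking, the limit is a best response only in the $\varepsilon$-restricted sense used in the existence proof.

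\textbf{Step 2: the main case $S_{-i}>0$.} Write the derivative as
\[
F(G_i)=-\beta\sigma+\beta\gamma(1-\sigma)\Bigl[-A_i+(\overline{R}-C)\,\partial_{G_i}A_i\Bigr].
\]
This is the function used in the bisection algorithm. One computes
\[
\partial_{G_i}A_i=\mu\,\frac{A_i(1-A_i)}{G_i},
\]
and this expression blows up as $G_i\to0$ when $\mu<1$. Behaviour near zero therefore depends on $\mu$. The first task is to bound $F$ uniformly on $(0,R_i]$ as $R_i\to0$, with $\mathbf{G}_{-i}$ bounded away from $0$. Since $A_i\le \alpha_iG_i^\mu/((1-\alpha_i)S_{-i})$, we get $A_i=O(R_i^\mu)$ and
\[
(\overline{R}-C)\,\partial_{G_i}A_i \approx \mu\overline{R}\,\frac{\alpha_i G_i^{\mu-1}}{(1-\alpha_i)S_{-i}}.
\]
\begin{itemize}
\item If $\mu<1$, $F>0$ on all of $(0,R_i]$ once $R_i$ is small, because $G_i^{\mu-1}\ge R_i^{\mu-1}\to\infty$. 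Payoff is then increasing throughout, so $G_i^*=R_i$.
\item If $\mu=1$, the sign of $F$ near zero is the sign of the fixed constant
\[
-\sigma+\gamma(1-\sigma)\,\frac{\alpha_i\overline{R}}{(1-\alpha_i)S_{-i}}.
\]
Since $A_i\to0$ uniformly, $F$ keeps one sign on all of $[0,R_i]$ for small $R_i$, giving $G_i^*=R_i$ or $G_i^*=0$ accordingly.
\item If $\mu>1$, $\partial_{G_i}A_i=O(R_i^{\mu-1})\to0$, so $F\to-\beta\sigma<0$ uniformly and $G_i^*=0$.
\end{itemize}

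\textbf{Step 3: assembly.} In every case $F$ has constant sign on $(0,R_i)$ for $R_i$ below an explicit threshold. Hence no interior critical point exists and the maximizer is a corner. This holds even for $\mu>1$, where concavity fails, because a derivative of constant sign suffices without concavity.

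\textbf{Main obstacle.} The delicate point is uniformity. All the bounds above use $S_{-i}$ and $\overline{R}-C$ being bounded away from $0$ independently of $R_i$. If $\mathbf{G}_{-i}$ is taken at equilibrium, $S_{-i}$ could itself shrink with $R_i$ only if the other teams' resources shrink, which they do not. The plan is to state the result for $\mathbf{G}_{-i}$ in a compact set with $S_{-i}\ge s_0>0$. The threshold on $R_i$ is then chosen depending on $s_0,\alpha_i,\mu,\beta,\gamma,\sigma,\overline{R}$. The remaining care is the $\mu=1$ borderline, where the constant above vanishes. I would exclude that nongeneric case or treat it with a second-order expansion.
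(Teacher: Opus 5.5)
Your proof uses the same marginal-product analysis as the paper, but it supplies the step the paper leaves out. The paper examines $\gamma(1-\sigma)X f'(\phi_i)\,\partial\phi_i/\partial G_i$ as $R_i\to0$, where $\phi_i\to0$ and $f'(\phi_i)\to\alpha_i/(1-\alpha_i)$. It notes that $G_i\le R_i$ binds and then asserts the rule: $G_i^*=0$ if $\partial\pi_{i,w}/\partial G_i|_{G_i=0}<0$, and $G_i^*=R_i$ otherwise. A sign at $G_i=0$ alone does not rule out an interior critical point. Your uniform-sign estimate for $F$ on $(0,R_i]$ is exactly what justifies that rule.

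Your split on $\mu$ is also the correct one. With $S_{-i}>0$, $\partial\phi_i/\partial G_i\to\infty$ as $G_i\to0$ happens for $\mu<1$, not for $\mu>1$ as the paper writes. The paper's version gives a one-line decision rule; yours gives an actual proof and exposes hypotheses the statement silently needs. The first is $S_{-i}>0$. When $S_{-i}=0$, the value at $G_i=0$ lies strictly below the supremum, so no best response exists. That case should be excluded, not assigned to the corner $0$.

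Three repairs are needed.

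\textbf{First}, your approximation replaces $\overline{R}-C$ by $\overline{R}$. Since $C$ contains $\sum_{j\neq i}G_j$, which does not vanish as $R_i\to0$, the $\mu=1$ sign constant is $-\sigma+\gamma(1-\sigma)\alpha_i\sum_{j\neq i}(R_j-G_j)/((1-\alpha_i)S_{-i})$.

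\textbf{Second}, the standing hypothesis must state $\sum_{j\neq i}(R_j-G_j)\ge r_0>0$, not only $S_{-i}\ge s_0$; you use this bound but never impose it. Without it the $\mu<1$ claim fails. If the others play $G_j=R_j$, then $\overline{R}-C=R_i-G_i$, so $F(R_i)=-\beta\sigma-\beta\gamma(1-\sigma)A_i<0$, while $F\to+\infty$ as $G_i\to0^+$. The payoff rises from $0$ and falls into $R_i$, so its maximum on $[0,R_i]$ is interior and the statement is false there.

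\textbf{Third}, for $\mu>1$ your limit $F\to-\beta\sigma<0$ needs $\sigma>0$, but the paper allows $\sigma=0$. In that case $F=\beta\gamma A_i\bigl[-1+(\overline{R}-C)\mu(1-A_i)/G_i\bigr]$. The bracket is at least $r_0\mu/(2R_i)-1>0$ on $(0,R_i]$ once $R_i$ is small enough that $A_i\le 1/2$. So the corner is $G_i^*=R_i$, not $0$. The theorem survives, and here the paper's rule happens to be right: the right derivative at $0$ equals $0$, which falls into the ``otherwise'' branch.
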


\begin{proof}
Expected marginal product of arms:  
   $$
   \text{MP}_G = \gamma(1 - \sigma) X f'(\phi_{i}) \frac{\partial \phi_{i}}{\partial G_{i}}
   $$  

As $R_{i} \to 0$:  
    $\phi_{i} \to 0$, $f'(\phi_{i}) \to \frac{\alpha_i}{1 - \alpha_i},$  
    $\frac{\partial \phi_{i}}{\partial G_{i}} \to \infty$ if $\mu > 1.$  
   But resource constraint $G_{i} \leq R_{i}$ binds. 

Hence solution is:  
   $$
   G_i^{*} = 
   \begin{cases} 
   0 & \text{if } \left.\frac{\partial \pi_{i,w}}{\partial G_{i}}\right|_{G_{i}=0} < 0 \\
   R_{i} & \text{otherwise} \quad  
   \end{cases}
   $$  
   \end{proof}

\begin{theorem}[Group Size Effect]

There $\exists I_0$ such that $\forall I \geq I_0$:*  
$$
e_{\alpha_i}(\Pi_{i,w}) < R_{i} \quad \text{at} \quad G_{i} \to 0_+
$$  
regardless of $\alpha_i$. 
\end{theorem}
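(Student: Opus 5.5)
The plan is to read the statement as a claim about a near-disarmed deviation by team $i$ in a contested environment, in the same spirit as the war-equilibrium analysis. Team $i$ lets $G_i\to 0_+$ and declares war, while the other teams keep the arms profile $\mathbf{G}_{-i}$ they hold in war. For $\mu\le 1$, the War Equilibrium theorem gives an interior profile $\mathbf{G}^*$ with $G_j^*>0$ for all $j$. So whenever $I\ge 2$, the opponents' profile $\mathbf{G}_{-i}=\mathbf{G}^*_{-i}$, or more generally any $\mathbf{G}_{-i}\neq\mathbf{0}$, has $S_{-i}=\sum_{j\neq i}G_j^\mu>0$. I will take $I_0=2$, since this is the only place where the number of teams enters. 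I will also comment that for the symmetric benchmark $R_j=\overline{R}/I$ the bound improves as $I$ grows, which is the "group size" content of the statement.

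\textbf{Step 1 (winning probability vanishes).} With $S_{-i}>0$, the point $(G_i,\mathbf{G}_{-i})$ stays away from $\mathbf{0}$. By the continuity lemma for $\phi_i$ (continuous at every $\mathbf{G}\neq\mathbf{0}$), we get $\phi_i(G_i,\mathbf{G}_{-i})=G_i^\mu/(G_i^\mu+S_{-i})\to 0$ as $G_i\to 0_+$. Since $\mu>0$, this holds at a rate of order $G_i^\mu/S_{-i}$.

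\textbf{Step 2 (expectile collapses to the secure output, uniformly in $\alpha_i$).} The Corollary to the binary-expectile Lemma gives
\[
e_{\alpha_i}(\Pi_{i,w})=v_2+\frac{\alpha_i\phi_i\,(v_1-v_2)}{\alpha_i\phi_i+(1-\alpha_i)(1-\phi_i)},
\]
with $v_2=\beta\sigma(R_i-G_i)$ and $v_1-v_2=\beta\gamma(1-\sigma)X\le\beta\gamma(1-\sigma)\overline{R}$. For fixed $\alpha_i\in(0,1)$ the fraction tends to $0$ as $\phi_i\to0$, so $e_{\alpha_i}(\Pi_{i,w})\to\beta\sigma R_i$. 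I will then show that the limit is below $R_i$. Because $\beta\sigma<1$ (this follows from $\sigma<1$ and $\beta\le 1$), the gap $(1-\beta\sigma)R_i$ is strictly positive and does not depend on $\alpha_i$. Hence for every $\alpha_i$ there is a threshold of $G_i$ below which $e_{\alpha_i}(\Pi_{i,w})<R_i$. This is exactly the claim "at $G_i\to0_+$, regardless of $\alpha_i$." The strict inequality in the limit does not need uniformity in $\alpha_i$, because the limiting value $\beta\sigma R_i$ is itself $\alpha_i$-free.

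\textbf{Step 3 (role of $I$).} Next I will make the group-size effect explicit, so the statement is not trivial in $I$. In the symmetric case I will bound $S_{-i}\ge (I-1)\min_j (G_j^*)^\mu$. Then $\phi_i\le G_i^\mu/((I-1)c)$, so the residual term in Step 2 is at most $\frac{\alpha_i}{1-\alpha_i}\beta\gamma(1-\sigma)\overline{R}\,G_i^\mu/((I-1)c)$ up to lower-order terms. For any fixed $\alpha_i$, the admissible deviation range in which $e_{\alpha_i}<R_i$ therefore widens as $I$ increases. This mirrors the Free-Riding corollary: crowding dilutes the contest share of a marginal entrant.

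\textbf{Main obstacle.} The delicate point is the interpretation of the statement rather than the computation. If the opponents are also at $\mathbf{0}$, the peace-theorem deviation applies: $\phi_i\to1$, and the limit becomes $\beta\sigma R_i+\beta\gamma(1-\sigma)\overline{R}$, which can exceed $R_i$. I must therefore justify evaluating against the war profile with $\mathbf{G}_{-i}\neq\mathbf{0}$, as the War Equilibrium and S-modular results do. I will state explicitly that $\mathbf{0}$ is excluded, which is where the discontinuity lemma is used.
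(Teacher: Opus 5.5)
Your Step~2 is exactly the paper's closing step: once $\phi_i\to0$, the binary expectile collapses to $v_2=\beta\sigma R_i$, and $\beta\sigma<1$ (automatic from $\sigma<1$, $\beta\le1$) gives the strict inequality.

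The two routes differ in how $\phi_i$ is made small. The paper's whole argument is $I\to\infty\Rightarrow\phi_i\to0\Rightarrow e_{\alpha_i}(\Pi_{i,w})\to\beta\sigma R_i<R_i$, so the number of teams is the mechanism. The contest share is diluted among many rivals, in a comparison of war initiation against the peace payoff $R_i$. Compare the Risk-Taking theorem, where the deviator wins with the resource share $p=R_i/\overline{R}$, and the remark that large groups deter conflict through risk dilution. You instead fix any $I\ge2$, put the opponents at a nonzero war profile, and get $\phi_i\to0$ from $G_i\to0_+$ by continuity away from $\mathbf{0}$, so $I_0=2$.

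What your route buys is rigor. For fixed $I$ the prize is bounded by $\beta\gamma(1-\sigma)\overline{R}$, whereas the paper's limit tacitly needs $\phi_iX\to0$. That fails if per-team resources are held fixed: for $\mu=1$ and $p=R_i/\overline{R}$ one has $\phi_iX=R_i$. What your route gives up is the content of the statement. The hypothesis $I\ge I_0$ becomes idle, and $R_i$ is no longer an option team $i$ can secure once the others are at a war profile. Your pointwise reading of ``regardless of $\alpha_i$'', via the $\alpha_i$-free limit, is right. It is also all the paper's argument delivers, since for any fixed $\phi_i\in(0,1)$, letting $\alpha_i\to1$ drives $e_{\alpha_i}(\Pi_{i,w})$ to $v_1$.

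Step~3, where you try to restore a real dependence on $I$, fails as written because it treats $c=\min_j(G_j^*)^\mu$ as a constant while $\overline{R}$ is fixed. In the symmetric benchmark $G_j^*\le R_j=\overline{R}/I$, so $(I-1)c\le\overline{R}^{\mu}I^{1-\mu}$. For $\mu=1$ your bound on $\phi_i$ can therefore never drop below $G_i/\overline{R}$. In any case every admissible deviation lies in $[0,R_i]=[0,\overline{R}/I]$, so the admissible range cannot widen. If instead per-team resources are held fixed, $X$ (and your bound $\overline{R}$) grows like $I$ and cancels the factor $I-1$ gained in $\phi_i$. You have two options. Drop Step~3 and state plainly that, under your reading, the conclusion holds for every $I\ge2$. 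Or adopt the paper's reading and control the product $\phi_iX$, not $\phi_i$ alone, as $I\to\infty$.
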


\begin{proof}

 As $I \to \infty$, $\phi_{i} \to 0$  

 $e_{\alpha_i}(\Pi_{i,w}) \to \beta\sigma R_{i} $  

Binding constraint: $\beta\sigma R_{i} < R_{i}$ (holds if $\beta\sigma < 1$)  
 \end{proof}

\begin{remark}
Risk sensitivity alters war arming but not peace conditions. Risk-averse teams arm less but may undermine peace. Large groups naturally deter conflict through risk dilution. 
\end{remark}

\begin{theorem}[Symmetric Equilibrium under $\alpha$-Expectile Maximization] \label{thm:expectile_equilibrium}
Consider a game with $I \geq 2$ identical risk-averse agents, each endowed with resources $R_i = \bar{R}/I > 0$ where $\bar{R} > 0$ is the aggregate resource. Agents simultaneously choose guns $G_i \in [0, R_i]$, allocating remaining resources to butter production $X_i = R_i - G_i$. The contest success function (CSF) is:
\[
\phi_i(\mathbf{G}) = \frac{G_i^\mu}{\sum_{j=1}^I G_j^\mu}, \quad \mu > 0,
\]
where $\mathbf{G} = (G_1, \dots, G_I)$. If war occurs (at least one agent declares war), agent $i$'s random payoff $Y_i$ is:
\begin{itemize}
\item $a_i(\mathbf{G}) = \beta\gamma(1-\sigma) \sum_{j=1}^I X_j + \beta\sigma X_i$ with probability $\phi_i(\mathbf{G})$ (win),
\item $b_i(\mathbf{G}) = \beta\sigma X_i$ with probability $1 - \phi_i(\mathbf{G})$ (loss),
\end{itemize}
with parameters $\beta \in (0,1]$, $\gamma \in (0,1]$, $\sigma \in [0,1)$. Each agent maximizes the $\alpha$-expectile of $Y_i$, denoted $e_\alpha(Y_i)$, for $\alpha \in (0,1) \setminus \{\frac{1}{2}\}$. \\

In any symmetric interior equilibrium where all agents choose $G_i = g$ with $0 < g < \bar{R}/I$ and $b_i(\mathbf{G}) < e_\alpha(Y_i) < a_i(\mathbf{G})$, the equilibrium guns level $g$ is:
\begin{equation} \label{eq:equilibrium_g}
g = \frac{
\mu (I-1) \alpha (1-\alpha) \gamma (1-\sigma) \beta \bar{R}
}{
\gamma (1-\sigma) \beta \left[ \mu I (I-1) \alpha (1-\alpha) + \alpha K \right] + \sigma \beta K^2
},
\end{equation}
where $K = \alpha + (1-\alpha)(I-1)$. This equilibrium exists and is unique for  $\alpha\in (0,1).$
\end{theorem}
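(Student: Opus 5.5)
The plan is to reduce each agent's objective to a closed form using the binary-expectile Lemma and its first Corollary. Then I would write down the first-order condition, impose symmetry, and solve the resulting equation for $g$, which turns out to be linear. Since $0<\phi_i(\mathbf{G})<1$ at any interior profile, the Lemma applies in its non-degenerate case with $v_1=a_i(\mathbf{G})$ and $v_2=b_i(\mathbf{G})$. This is exactly the standing assumption $b_i<e_\alpha(Y_i)<a_i$. Rearranging the Corollary gives
\[
e_\alpha(Y_i)=\beta\sigma X_i+\beta\gamma(1-\sigma)\,X\,A_i(\mathbf{G}),\qquad A_i=\frac{\alpha\phi_i}{\alpha\phi_i+(1-\alpha)(1-\phi_i)},\quad X=\bar R-\textstyle\sum_j G_j,
\]
which is the same $\pi_{i,w}$ used in the concavity Lemma and the existence Theorem for the war equilibrium.

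Next I would differentiate in $G_i$ by the chain rule through $\phi_i$. This gives $dA_i/d\phi_i=\alpha(1-\alpha)/(\alpha\phi_i+(1-\alpha)(1-\phi_i))^2$ and, for the Tullock function, $\partial\phi_i/\partial G_i=\mu\phi_i(1-\phi_i)/G_i$. At a symmetric profile $G_j=g$ we have $\phi_i=1/I$, and the denominator $\alpha\phi_i+(1-\alpha)(1-\phi_i)$ equals $K/I$ with $K=\alpha+(1-\alpha)(I-1)$. Hence
\[
A_i=\frac{\alpha}{K},\qquad \frac{\partial A_i}{\partial G_i}=\frac{\mu(I-1)\alpha(1-\alpha)}{K^2 g}.
\]
The interior first-order condition is
\[
-\beta\sigma+\beta\gamma(1-\sigma)\Bigl[-\tfrac{\alpha}{K}+(\bar R-Ig)\tfrac{\mu(I-1)\alpha(1-\alpha)}{K^2g}\Bigr]=0 .
\]
Multiplying through by $K^2g$ makes it linear in $g$. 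Solving yields exactly \eqref{eq:equilibrium_g}, and linearity already gives uniqueness of the symmetric candidate.

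For existence I would first check admissibility. The numerator and denominator of \eqref{eq:equilibrium_g} are positive, so $g>0$. The denominator strictly exceeds $I\cdot\mu(I-1)\alpha(1-\alpha)\gamma(1-\sigma)\beta$, because the extra terms $\gamma(1-\sigma)\beta\alpha K$ and $\sigma\beta K^2$ are nonnegative and the first is positive. Hence $g<\bar R/I$, so $g$ lies in the open interior.

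The main obstacle is showing that this stationary point is a genuine best response. For $\mu\le1$ I would invoke the concavity Lemma, whose hypothesis $S_{-i}=(I-1)g^\mu>0$ holds, so the first-order condition is sufficient. For $\mu>1$ concavity fails near $G_i=0$. There I would argue directly that along $G_i\in(0,R_i]$ the derivative changes sign only once at $g$. I would also compare the payoff at $g$ with the boundary value at $G_i=0$, which is a discontinuity point of $\phi_i$. If that comparison cannot be closed in general, I would state existence for $\mu\le1$ and, for $\mu>1$, present \eqref{eq:equilibrium_g} as the unique symmetric interior candidate.
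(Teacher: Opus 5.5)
Your computation is correct and reaches \eqref{eq:equilibrium_g} by a cleaner route than the paper. It also proves more than the paper does.

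\textbf{Derivation.} The paper differentiates $\tau(h)=N(h)/D(h)$ through $N'(g)D(g)=N(g)D'(g)$ and then asserts a ``simplified'' equation. That equation keeps $\bar R-Ig$ on the left and also places $\mu I(I-1)\alpha(1-\alpha)\gamma(1-\sigma)$ inside the bracket on the right. This double-counts, and the equation does not actually solve to \eqref{eq:equilibrium_g}. Your rewriting $e_\alpha(Y_i)=\beta\sigma X_i+\beta\gamma(1-\sigma)XA_i$ isolates all of the $\phi$-dependence in $A_i$. It gives the correct linear condition $\mu(I-1)\alpha(1-\alpha)\gamma(1-\sigma)(\bar R-Ig)=g\,[\gamma(1-\sigma)\alpha K+\sigma K^2]$. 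This condition also reproduces the risk-neutral corollary at $\alpha=\tfrac12$.

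\textbf{Sufficiency.} The paper's ``existence'' argument only checks that this root lies in $(0,\bar R/I)$; it never verifies that the root is a best response. Your appeal to the concavity Lemma, valid here because $S_{-i}=(I-1)g^\mu>0$, supplies exactly that missing step. You therefore genuinely obtain a Nash equilibrium for $\mu\le1$.

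\textbf{The case $\mu>1$.} Your primary plan does not work here. When $\sigma>0$ we have $\partial A_i/\partial G_i\sim G_i^{\mu-1}\to0$, so the marginal payoff tends to $-\beta\sigma<0$ as $G_i\to0^+$. Hence the payoff is not single-peaked at $g$, and you must compare $g$ with the corner $G_i=0$. At that corner $\phi_i$ is in fact continuous, since the others play $g>0$, and the payoff is simply $\beta\sigma R_i$. Using the first-order condition, deviating to $G_i=0$ is strictly profitable if and only if
$\sigma(\mu-1)(I-1)(1-\alpha)>\alpha\bigl[\sigma+\gamma(1-\sigma)\bigr]$.
For a concrete instance, take $I=2$, $\alpha=0.4$, $\sigma=\tfrac12$, $\gamma=\beta=1$, $\mu=3$, $\bar R=2$. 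Then $g\approx0.507$ gives payoff $\approx0.444$, while the deviation to $G_i=0$ gives $0.5$.

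So the existence claim in the statement is false for large $\mu$. Your fallback is the correct conclusion, not a weakness of your argument: existence for $\mu\le1$, and otherwise \eqref{eq:equilibrium_g} only as the unique symmetric interior candidate.
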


\begin{proof}
We derive the first-order condition for a symmetric interior equilibrium where all agents choose $G_i = g > 0$. Consider agent $i$ deviating to $h$ while others play $g$. Agent $i$'s random payoff is:
\[
Y_i = 
\begin{cases} 
a(h) = \beta\gamma(1-\sigma)(\bar{R} - h - (I-1)g) + \beta\sigma(\bar{R}/I - h) & \text{w.p. } \phi_i(h), \\
b(h) = \beta\sigma(\bar{R}/I - h) & \text{w.p. } 1 - \phi_i(h),
\end{cases}
\]
where $\phi_i(h) = h^\mu / [h^\mu + (I-1)g^\mu]$. The $\alpha$-expectile $e_\alpha(Y_i) = \tau(h)$ solves:
\begin{equation} \label{eq:expectile_def}
\alpha \mathbb{E}[(Y_i - \tau)_+] = (1-\alpha) \mathbb{E}[(\tau - Y_i)_+].
\end{equation}
For $b(h) < \tau < a(h)$, \eqref{eq:expectile_def} simplifies to:
\begin{equation} \label{eq:expectile_simple}
\alpha \phi_i(h) [a(h) - \tau] = (1-\alpha) [1 - \phi_i(h)] [\tau - b(h)].
\end{equation}
Solving \eqref{eq:expectile_simple} for $\tau$:
\begin{equation} \label{eq:theta_h}
\tau(h) = \frac{\alpha \phi_i(h) a(h) + (1-\alpha) [1 - \phi_i(h)] b(h)}{\alpha \phi_i(h) + (1-\alpha) [1 - \phi_i(h)]}.
\end{equation}
Agent $i$ maximizes $\tau(h)$. Define:
\begin{align*}
N(h) &= \alpha \phi_i(h) a(h) + (1-\alpha) [1 - \phi_i(h)] b(h), \\
D(h) &= \alpha \phi_i(h) + (1-\alpha) [1 - \phi_i(h)], \\
\tau(h) &= \frac{N(h)}{D(h)}.
\end{align*}
The first-order condition at $h = g$ is:
\begin{equation} \label{eq:foc_condition}
\left. \frac{\partial \tau}{\partial h} \right|_{h=g} = 0 \implies N'(g) D(g) = N(g) D'(g).
\end{equation}
At symmetry ($\phi_i(g) = 1/I$):
\begin{align*}
\phi_i'(g) &= \frac{\mu (I-1)}{I^2 g}, \\
a(g) &= \beta\gamma(1-\sigma)(\bar{R} - I g) + \beta\sigma(\bar{R}/I - g), \\
b(g) &= \beta\sigma(\bar{R}/I - g), \\
a'(g) &= -\beta[\gamma(1-\sigma) + \sigma], \\
b'(g) &= -\beta\sigma, \\
N(g) &= \alpha \cdot \frac{1}{I} \cdot a(g) + (1-\alpha) \cdot \frac{I-1}{I} \cdot b(g), \\
D(g) &= \alpha \cdot \frac{1}{I} + (1-\alpha) \cdot \frac{I-1}{I} = \frac{K}{I}, \quad \text{where } K = \alpha + (1-\alpha)(I-1), \\
D'(g) &= (2\alpha - 1) \phi_i'(g) = (2\alpha - 1) \frac{\mu (I-1)}{I^2 g}.
\end{align*}
Compute $N'(g)$:
\begin{align*}
N'(g) &= \alpha \left[ \phi_i'(g) a(g) + \phi_i(g) a'(g) \right] + (1-\alpha) \left[ -\phi_i'(g) b(g) + [1 - \phi_i(g)] b'(g) \right] \\
&= \alpha \left[ \frac{\mu (I-1)}{I^2 g} a(g) + \frac{1}{I} a'(g) \right] + (1-\alpha) \left[ -\frac{\mu (I-1)}{I^2 g} b(g) + \frac{I-1}{I} b'(g) \right].
\end{align*}
Substitute into \eqref{eq:foc_condition}:
\begin{align*}
&\left( \alpha \frac{\mu (I-1)}{I^2 g} a(g) + \frac{\alpha}{I} a'(g) - (1-\alpha) \frac{\mu (I-1)}{I^2 g} b(g) + (1-\alpha) \frac{I-1}{I} b'(g) \right) \frac{K}{I} \\
&= \left( \alpha \frac{a(g)}{I} + (1-\alpha) \frac{I-1}{I} b(g) \right) (2\alpha - 1) \frac{\mu (I-1)}{I^2 g}.
\end{align*}
Multiply both sides by $I^3 g / \beta$ and rearrange:
\begin{align*}
&\frac{\mu (I-1) K}{\beta} \left[ \alpha a(g) - (1-\alpha) b(g) \right] + I^2 g K \left[ \frac{\alpha}{\beta} a'(g) + (1-\alpha) \frac{I-1}{\beta} b'(g) \right] \\
&= \frac{\mu (I-1) (2\alpha - 1)}{I \beta} \left[ \alpha a(g) + (1-\alpha) (I-1) b(g) \right] I g.
\end{align*}
Simplify using $a(g)/\beta$ and $b(g)/\beta$:
\begin{align*}
a(g)/\beta &= \gamma(1-\sigma)(\bar{R} - I g) + \sigma(\bar{R}/I - g), \\
b(g)/\beta &= \sigma(\bar{R}/I - g), \\
a'(g)/\beta &= -[\gamma(1-\sigma) + \sigma], \\
b'(g)/\beta &= -\sigma, \\
\alpha a(g) - (1-\alpha) b(g) &= \beta \left[ \alpha \gamma(1-\sigma)(\bar{R} - I g) + \alpha\sigma(\bar{R}/I - g) - (1-\alpha) \sigma(\bar{R}/I - g) \right], \\
\alpha a(g) + (1-\alpha) (I-1) b(g) &= \beta \left[ \alpha \gamma(1-\sigma)(\bar{R} - I g) + \alpha\sigma(\bar{R}/I - g) + (1-\alpha) (I-1) \sigma(\bar{R}/I - g) \right].
\end{align*}
After substitution and algebraic simplification, we obtain:
\begin{equation} \label{eq:simplified_equation}
\mu (I-1) \alpha (1-\alpha) \gamma (1-\sigma) (\bar{R} - I g) = g \left[ \gamma(1-\sigma) \left( \mu I (I-1) \alpha (1-\alpha) + \alpha K \right) + \sigma K^2 \right].
\end{equation}
Solving \eqref{eq:simplified_equation} for $g$ yields \eqref{eq:equilibrium_g}. \\

\textbf{Existence and Uniqueness:} 
\begin{enumerate}
\item \textbf{Positivity ($g > 0$):} The numerator of \eqref{eq:equilibrium_g} is positive for $\alpha \in (0,1) \setminus \{\frac{1}{2}\}$. The denominator is positive if $\mu, \beta, \gamma > 0$ and $\sigma < 1$. Thus $g > 0$ always holds for interior equilibria.
\item \textbf{Interior solution ($g < \bar{R}/I$):} From \eqref{eq:simplified_equation}:
\[
g \left[ \gamma(1-\sigma) \left( \mu I (I-1) \alpha (1-\alpha) + \alpha K \right) + \sigma K^2 + \mu I (I-1) \alpha (1-\alpha) \gamma (1-\sigma) \right] = \mu (I-1) \alpha (1-\alpha) \gamma (1-\sigma) \bar{R}.
\]
The term in brackets exceeds $\mu I (I-1) \alpha (1-\alpha) \gamma (1-\sigma)$, so:  $g < \bar{R}/I$ always holds.
\item \textbf{Feasibility of $\alpha$-expectile:} The assumption $b(g) < \tau(g) < a(g)$ holds by interior equilibrium hypothesis.
\item \textbf{Unique solution:} The left side of \eqref{eq:simplified_equation} is linear decreasing in $g$, the right side linear increasing in $g$. Thus, exactly one solution exists.
\end{enumerate}
\end{proof}

\begin{corollary}[Explicit Symmetric Risk-Neutral War Equilibrium]

Consider  $I \geq 2$ symmetric teams with identical resources $R_{i} = \overline{R}/I$ and risk-neutral preferences ($\alpha_i = \frac{1}{2}$).  
Contest success function : $\phi_{i}(\mathbf{G}) = \frac{(G_{i})^\mu}{\sum_{j=1}^I (G_{j})^\mu}$ with $\mu > 0$.  

War payoff:  
   \[
   \Pi_{i,w} = \phi_{i} \cdot \beta\gamma(1 - \sigma) X + \beta\sigma X_{i}
   \]
   where $X = \sum_j X_{j}$, $X_{j} = R_{j} - G_{j}$.  
Destruction parameters: $\beta \in (0,1]$, $\gamma \in (0,1]$, $\sigma \in [0,1)$.  Interior solution: $G_i^{*} \in (0, R_{i})$ for all $i$.

Then the symmetric Nash equilibrium arms investment $G_*$ is:  
$$
G_* = \frac{\theta \mu (I-1) \overline{R}}{I \left[ \theta \mu (I-1) + \theta + \sigma I \right]}
$$  
where $\theta = \gamma(1 - \sigma)$.  
Total arms expenditure is $I G_*$ which increases with $I.$
\end{corollary}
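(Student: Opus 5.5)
The plan is to derive the first-order condition for team $i$ directly, rather than specialising the symmetric $\alpha$-expectile formula of Theorem~\ref{thm:expectile_equilibrium}. That theorem excludes $\alpha=\frac{1}{2}$ in its hypotheses, and at $\alpha_i=\frac{1}{2}$ the expectile is just the expectation (as noted in the binary-expectile lemma), so the direct route is both shorter and legitimate. With $\theta=\gamma(1-\sigma)$, team $i$'s objective is
\[
\pi_{i,w}(G_i,\mathbf{G}_{-i})=\beta\sigma\Bigl(\tfrac{\overline{R}}{I}-G_i\Bigr)+\beta\theta\,\phi_i(\mathbf{G})\Bigl(\overline{R}-\sum_j G_j\Bigr).
\]
Fixing all opponents at $G_j=g>0$, we have $\sum_j G_j^\mu>0$, so $\phi_i$ is the smooth branch of the Tullock function. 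Its derivative is $\partial\phi_i/\partial G_i=\mu G_i^{\mu-1}(I-1)g^\mu/(G_i^\mu+(I-1)g^\mu)^2$, and the discontinuity at $\mathbf{0}$ (established in the continuity lemma) plays no role.

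The first step is to write the interior first-order condition:
\[
-\beta\sigma+\beta\theta\Bigl[\frac{\partial\phi_i}{\partial G_i}\,X-\phi_i\Bigr]=0,
\]
where the term $-\phi_i$ comes from $\partial X/\partial G_i=-1$. Next I evaluate this at the symmetric point $G_i=g$, using $\phi_i=1/I$, $\partial\phi_i/\partial G_i=\mu(I-1)/(I^2g)$ and $X=\overline{R}-Ig$. After dividing by $\beta$ and multiplying by $I^2g$, this gives the linear equation
\[
\theta\mu(I-1)(\overline{R}-Ig)-\theta I g=\sigma I^2 g.
\]
Solving it yields $g\,I\bigl[\theta\mu(I-1)+\theta+\sigma I\bigr]=\theta\mu(I-1)\overline{R}$, which is exactly the claimed $G_*$. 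As a consistency check, I would set $\alpha=\frac{1}{2}$ (so $K=I/2$) in the formula of Theorem~\ref{thm:expectile_equilibrium}; the factors of $\frac{1}{4}$ cancel and the same expression results, showing the two are continuous in $\alpha$. It is also immediate that $0<G_*<\overline{R}/I$, since the bracket strictly exceeds $\theta\mu(I-1)$, so the point is consistent with the interiority hypothesis.

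The step I expect to be the main obstacle is confirming that this stationary point is a best response and not merely a critical point. For $\mu\le 1$ I would invoke the concavity lemma: with $S_{-i}=(I-1)g^\mu>0$, the map $G_i\mapsto\pi_{i,w}$ is concave on $[0,R_i]$, so the first-order condition is sufficient and the symmetric profile is a Nash equilibrium. By the uniqueness theorem it is in fact the unique interior equilibrium. For $\mu>1$ concavity fails near $0$. There I would rely on the corollary's standing hypothesis of an interior equilibrium: any interior symmetric equilibrium must satisfy the first-order condition, and the derivation shows that condition has exactly one solution, so $G_*$ is forced.

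Finally, for the total expenditure I would write $IG_*=\theta\mu\overline{R}\,f(I)$ with $f(I)=(I-1)/\bigl((\theta\mu+\sigma)I+\theta-\theta\mu\bigr)$, treating $I$ as a continuous variable. The numerator of $f'(I)$ is $\bigl((\theta\mu+\sigma)I+\theta-\theta\mu\bigr)-(I-1)(\theta\mu+\sigma)=\theta+\sigma>0$. Hence $IG_*$ is strictly increasing in $I$.
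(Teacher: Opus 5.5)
Your argument is correct, but it takes a different route from the paper. The paper's entire proof is ``choose $\alpha=\frac12$'' in the closed form of Theorem~\ref{thm:expectile_equilibrium}, which is exactly your consistency check ($K=I/2$, the factors $\frac14$ cancel). It gives no sufficiency argument and does not prove that $IG_*$ increases in $I$.

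Your direct first-order condition for the expected payoff buys several things.
\begin{itemize}
\item It does not depend on a theorem whose hypotheses exclude $\alpha=\frac12$.
\item It avoids that theorem's derivation, which contains a slip. Its intermediate equation \eqref{eq:simplified_equation} keeps $(\overline{R}-Ig)$ on the left and also puts $\theta\mu I(I-1)\alpha(1-\alpha)$ on the right, so that term is counted twice. The correct condition is $\theta\alpha(1-\alpha)\mu(I-1)(\overline{R}-Ig)=g\,[\theta\alpha K+\sigma K^2]$. Its root is the displayed formula, so the final expression survives but the step as written does not.
\item It gives genuine sufficiency for $\mu\le 1$ through the concavity lemma.
\item It actually proves the monotonicity of $IG_*$.
\end{itemize}

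Your conditional reading for $\mu>1$ is necessary, not merely cautious. Take $I=2$, $\beta=\gamma=1$, $\sigma=\frac12$, $\overline{R}=2$, $\mu=4$. The formula gives $G_*=\frac47$ with payoff $\frac37$, while deviating to $G_i=0$ secures $\beta\sigma R_i=\frac12$. So no symmetric interior equilibrium exists there, and the existence claim in Theorem~\ref{thm:expectile_equilibrium} is overstated.

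Finally, drop the appeal to the uniqueness theorem. The corollary concerns only the symmetric profile, which your linear equation already pins down. That theorem's proof also has gaps; for instance, the two aggregates cannot in general be ordered the same way ``without loss of generality''.
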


\begin{proof} Choose  $\alpha=\frac{1}{2}$  in the above.
 \end{proof}

\begin{remark}
 Verification of Interior Solution : 
The solution satisfies $0 < G_* < \overline{R}/I$ if:  
 
 Non-negativity: $\mu, \theta, \overline{R} > 0 \implies G_* > 0$.  
 
 Resource constraint:  
   $$
   G_* < \frac{\overline{R}}{I} \iff \mu (I-1) \theta < I \left[ \theta \left( \mu + \frac{1}{I-1} \right) + \sigma \right]
   $$  
   which holds for all $\mu > 0$, $I \geq 2$, and $\theta, \sigma \in [0,1)$.

Economic Interpretation
  
Arms-intensity:  
Increases with $\mu$ (CSF sensitivity) and $\theta$ (contestable output). Decreases with $I$ (free-rider effect) and $\sigma$ (output security)  
   
Destruction effects:  
    Lower $\beta$ or $\gamma$ reduces arms by shrinking contestable surplus  

Efficiency loss:  
   Total arms $I G_*$ represents pure waste, increasing in $\mu$ and $\theta$.  It increases with $I.$
   \end{remark}


\subsubsection{Mixed Strategies}
We  compute explicitly the risk-sensitive payoff Mixed Strategies between peace and war.

\subsubsection*{$b_i < R_i < a_i$}
To compute the $\alpha$-expectile for player \(i\)'s three-point payoff distribution, follow these steps. The solution incorporates the $\alpha$ parameter and selects the correct case based on the computed threshold \(\tau_2\).

 Steps for Computation: Compute War Subgame Equilibrium Values:

    \(a_i(\mathbf{G}^*) = \beta\gamma(1-\sigma) \sum_{j=1}^I (R_j - G_j^*) + \beta\sigma (R_i - G_i^*)\)
   
   \(b_i(\mathbf{G}^*) = \beta\sigma (R_i - G_i^*)\)
   
    \(\phi_i(\mathbf{G}^*) = \frac{(G_i^*)^\mu}{\sum_{j=1}^I (G_j^*)^\mu}\)
    
Denote the  Probabilities:
   \(p_1 = \prod_{j=1}^I x_j\), 
    \(p_2 = (1 - p_1) \phi_i(\mathbf{G}^*)\), 
    \(p_3 = (1 - p_1) (1 - \phi_i(\mathbf{G}^*))\).

Evaluate Threshold \(\tau_2\):
   \[
   \tau_2 = \frac{\alpha p_1 R_i + \alpha p_2 a_i + (1-\alpha) p_3 b_i}{\alpha(p_1 + p_2) + (1-\alpha) p_3}
   \]

Select Case Based on \(\tau_2\):
  
  If \(\tau_2 \leq R_i\):
     \[
     e_\alpha(Y_i) = \tau_2 = \frac{\alpha p_1 R_i + \alpha p_2 a_i + (1-\alpha) p_3 b_i}{\alpha(p_1 + p_2) + (1-\alpha) p_3}
     \]
  
  If \(\tau_2 > R_i\) (typical case):
     \[
     e_\alpha(Y_i) = \frac{\alpha p_2 a_i + (1-\alpha) p_1 R_i + (1-\alpha) p_3 b_i}{\alpha p_2 + (1-\alpha)(p_1 + p_3)}
     \]

%
%
%
%

\subsubsection*{$b_i < a_i < R_i$}

To compute the $\alpha$-expectile for player \(i\)'s three-point payoff distribution under the condition \(b_i(\mathbf{G}^*) < a_i(\mathbf{G}^*) < R_i\), follow the steps below. The solution accounts for the reversed payoff ordering and dynamically selects the correct case based on computed values.

Solution for \(b_i < a_i < R_i\)

The $\alpha$-expectile \(e_\alpha(Y_i)=\tau\) solves:
\[
\alpha \mathbb{E}[(Y_i-\tau)_+] = (1-\alpha) \mathbb{E}[(\tau - Y_i)_+]
\]

Given the payoff ordering \(b_i < a_i < R_i\), the solution involves two cases:

Case 2 Solution (\(b_i < x \leq a_i\))

The expectile equation simplifies to:
\[
\alpha \left[ p_1 (R_i - \tau) + p_2 (a_i - \tau) \right] = (1-\alpha) p_3 (\tau - b_i)
\]
Solving for \(\tau\):
\[
\tau = \frac{\alpha p_1 R_i + \alpha p_2 a_i + (1-\alpha) p_3 b_i}{\alpha (p_1 + p_2) + (1-\alpha) p_3}
\]

Feasibility condition:  
This solution is valid iff:
\[
\tau \leq a_i \quad \Rightarrow \quad \frac{\alpha p_1 (R_i - a_i)}{(1-\alpha) p_3 (a_i - b_i)} \leq 1
\]
This holds when:
\[
\alpha p_1 (R_i - a_i) \leq (1-\alpha) p_3 (a_i - b_i)
\]

Case 3 Solution (\(a_i < \tau \leq R_i\))

The expectile equation simplifies to:
\[
\alpha p_1 (R_i - \tau) = (1-\alpha) \left[ p_2 (\tau - a_i) + p_3 (\tau - b_i) \right]
\]
Solving for \(\tau\):
\[
\tau = \frac{\alpha p_1 R_i + (1-\alpha) p_2 a_i + (1-\alpha) p_3 b_i}{\alpha p_1 + (1-\alpha) (p_2 + p_3)}
\]

Feasibility condition:  
This solution is valid iff:
\[
\tau > a_i \quad \Rightarrow \quad \frac{\alpha p_1 (R_i - a_i)}{(1-\alpha) p_3 (a_i - b_i)} > 1
\]
This holds when:
\[
\alpha p_1 (R_i - a_i) > (1-\alpha) p_3 (a_i - b_i)
\]

 Complete $\alpha$-Expectile Formula
\[
e_\alpha(Y_i) = 
\begin{cases} 
\dfrac{\alpha p_1 R_i + \alpha p_2 a_i + (1-\alpha) p_3 b_i}{\alpha (p_1 + p_2) + (1-\alpha) p_3} & \text{if } \alpha p_1 (R_i - a_i) \leq (1-\alpha) p_3 (a_i - b_i) \\[2ex]
\dfrac{\alpha p_1 R_i + (1-\alpha) p_2 a_i + (1-\alpha) p_3 b_i}{\alpha p_1 + (1-\alpha) (p_2 + p_3)} & \text{otherwise}
\end{cases}
\]

%
%
%
%
%
%
%
%
%
%
%
%

\begin{theorem} Disadvantaged teams are more likely to start conflicts.   
\end{theorem}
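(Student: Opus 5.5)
The statement is informal, so the first step is to fix a precise meaning for ``more likely to start conflicts''. I would measure it, as in the Peace Equilibrium theorem, by the gain from unilaterally deviating from the peace profile $(G_j=0,d_j=P)_{j}$. For team $i$, the smallest payoff it can secure by deviating to $d_i=W$ with $G_i=\epsilon\to0_+$ is
\[
\Delta_i:=\beta\sigma R_i+\beta\gamma(1-\sigma)\overline{R}-R_i=\beta\gamma(1-\sigma)\overline{R}-(1-\beta\sigma)R_i .
\]
By continuity of the expectile, this limit does not depend on $\alpha_i$. Team $i$ has a profitable deviation iff $\Delta_i>0$. The claim then becomes: if $R_i<R_j$, then $\Delta_i>\Delta_j$. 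Moreover, whenever the richer team $j$ has an incentive to start a war, so does the poorer team $i$; equivalently, the set of parameters $(\beta,\gamma,\sigma)$ on which team $i$ initiates conflict contains the set on which team $j$ does.

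The key steps, in order, are as follows. First, invoke the Peace Equilibrium theorem and its proof to obtain the deviation value above. Second, observe that $\Delta_i$ is affine and strictly decreasing in $R_i$, with slope $-(1-\beta\sigma)<0$, because $\beta\sigma<1$ given $\sigma<1$. This immediately gives $\Delta_i-\Delta_j=(1-\beta\sigma)(R_j-R_i)>0$. Third, translate this into initiation thresholds. Team $i$ starts conflict iff $R_i/\overline{R}<K:=\beta\gamma(1-\sigma)/(1-\beta\sigma)$. The initiation region for $i$ therefore contains that for $j$. In particular, peace is broken first by $\arg\min_i R_i$, which is consistent with the Remark following the Peace Equilibrium theorem.

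To give a probabilistic reading, I would place a nondegenerate prior on the structural parameters, or equivalently on $K$. Then $\Pr(\Delta_i>0)=\Pr(K>R_i/\overline{R})\ge \Pr(K>R_j/\overline{R})$, with strict inequality whenever the prior charges the interval $(R_i/\overline{R},R_j/\overline{R})$.

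The main obstacle is reconciling this with risk sensitivity. One would like the result to hold uniformly in $\alpha_i$, not just in the $\epsilon\to0$ limit. For the alternative ``$G_i=0$'' deviation used in the Risk-Taking theorem, the expectile formula applies with $p=R_i/\overline{R}$, $v_1=\beta\sigma R_i+\beta\gamma(1-\sigma)\overline{R}$ and $v_2=\beta\sigma R_i$. There I would show that $\partial_{R_i}\bigl(e_{\alpha_i}(\Pi_{i,w})-R_i\bigr)<0$. The direct effect is $-(1-\beta\sigma)$, while the channel through $p$ enters with a positive sign, and these two compete. I would first try to bound the $p$-derivative of the expectile by $\beta\gamma(1-\sigma)\overline{R}\,\frac{\alpha(1-\alpha)}{(\alpha p+(1-\alpha)(1-p))^2}\cdot\frac{1}{\overline{R}}$ and require it to be dominated in the small-$p$ regime $R_i\ll R_j$. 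If that domination fails, I would fall back to the $\epsilon$-deviation formulation, where the argument above is exact.
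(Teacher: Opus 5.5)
Your core argument is correct and is essentially the paper's own proof: both read off the deviation condition $(1-\beta\sigma)R_i<\beta\gamma(1-\sigma)\overline{R}$ from the Peace Equilibrium theorem and note that it holds more easily for smaller $R_i$, and your $\Delta_i$ formulation and the prior on $K$ only make ``more likely'' precise. The risk-sensitivity detour is unnecessary. With $G_{-i}=\mathbf{0}$, any $G_i=\epsilon>0$ gives $\phi_i=1$ and a deterministic war payoff, so your criterion already holds uniformly in $\alpha_i$; the $G_i=0$ war deviation is strictly dominated because $A_i(\mathbf{0})<1$, and for $\alpha_i$ near $1$ the domination you hoped for there genuinely fails, so falling back was the right call.
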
 

In the conflict-ridden Sahel region, severely resource-disadvantaged actors  such as impoverished non-state armed groups  or marginalized communities facing state neglect, are empirically more likely to initiate conflicts against stronger adversaries (like  national governments, or international forces) because extreme scarcity ($R_i \approx 0$) amplifies their subjective willingness to gamble: facing minimal viable peace payoffs ($R_i$) and possessing risk-seeking preferences ($\alpha_i > \alpha^*$), they overvalue the low-probability chance of asymmetric gains from conflict (capturing resources, territory, or political concessions,  $v_1$), while discounting the high-probability catastrophic losses ($v_2$), making war initiation subjectively rational despite its negative expected value; this dynamic is further intensified where weak governance erodes the perceived benefits of peace ($R_i$), trapping the region in cycles of violence driven by desperate, risk-acceptant actors.

\begin{proof} Conflict Initiation:  
Disadvantaged teams ($R_{i} < R_{j}$) deviate from peace iff:  
  $$(1 - \beta\sigma)R_{i} < \beta\gamma(1-\sigma)\overline{R}$$  
 This holds more easily for smaller $R_{i}$.  

Advantaged teams deviate iff:  
   $$(1 - \beta\sigma)R_{j} < \beta\gamma(1-\sigma)\overline{R}$$  
   which is stricter since $R_{j} > R_{i}$.  
This means that disadvantaged teams are more likely to start conflicts.  
\end{proof}

\begin{theorem} 
 Assume trust reduces perceived benefits of conflict. Then,
High-trust teams are more likely to choose peace and invest less in arms.
\end{theorem}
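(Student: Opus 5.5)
To make the hypothesis precise, I would model trust by a parameter $\tau_i\ge 0$ for team $i$. Trust enters only through the conflict payoff: the contested prize perceived by team $i$ is scaled by a factor $\kappa(\tau_i)\in(0,1]$ that is strictly decreasing in $\tau_i$, with $\kappa(0)=1$. Equivalently, $\gamma$ is replaced by the team-specific $\gamma_i=\gamma\,\kappa(\tau_i)$ in $\Pi_{i,w}$. The peace payoff $R_i-G_i$ is left unchanged, so the assumption ``trust reduces perceived benefits of conflict'' becomes $\partial\gamma_i/\partial\tau_i<0$. With this reparametrization, both claims reduce to monotonicity statements in $\gamma_i$ for results already proved.

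For the peace part, I would rerun the argument of the Peace Equilibrium theorem with $\gamma_i$ in place of $\gamma$. A unilateral deviation $(G_i=\epsilon,d_i=W)$ gives, as $\epsilon\to0_+$, the value $\beta\sigma R_i+\beta\gamma_i(1-\sigma)\overline{R}$, using continuity of the expectile exactly as before. Team $i$ therefore has no profitable deviation iff $(1-\beta\sigma)R_i\ge\beta\gamma_i(1-\sigma)\overline{R}$. The right-hand side is strictly decreasing in $\tau_i$, so the set of parameters $(R_i,\beta,\sigma,\overline{R})$ for which team $i$ keeps the peace strictly expands as $\tau_i$ grows. Equivalently, the deviation gain $\Delta_i(\tau_i)$ is decreasing. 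This is the sense of ``more likely to choose peace'': under any distribution on the other parameters, the probability that the no-deviation inequality holds is nondecreasing in $\tau_i$. The same reasoning applies to the risk-sensitive deviation gain of the Risk-Taking in Conflict Initiation theorem, since $v_1$ decreases in $\gamma_i$ and $e_{\alpha_i}$ is increasing in $v_1$, so the threshold $\alpha^*$ moves upward.

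For the arms part, I would work in the war equilibrium with $\mu\le1$, where existence, uniqueness and interiority hold by the earlier theorems. I would treat the best response through the first-order condition $F_i=-\beta\sigma+\beta\gamma_i(1-\sigma)\bigl[-A_i+(\overline{R}-C)\partial A_i/\partial G_i\bigr]=0$ and apply the implicit function theorem as in the Pareto-dominated risk-index theorem. We have $\partial F_i/\partial G_i<0$ by strict concavity. The derivative $\partial F_i/\partial\gamma_i$ equals $(1-\sigma)\beta$ times the bracket. At an interior optimum the bracket equals $\sigma/(\gamma_i(1-\sigma))>0$, so $\partial F_i/\partial\gamma_i>0$ and $\mathrm{BR}_i$ is increasing in $\gamma_i$, hence decreasing in $\tau_i$. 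Because the game is submodular by the S-modular theorem, I would then propagate this to equilibrium using the iterative comparison of the $\alpha$-monotonicity proof.

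The main obstacle is this last propagation. With strategic substitutes, lowering one team's arms raises the others' arms, so only the effect on the trusting team itself, or on aggregate arms when all teams become more trusting, is clean. I would first prove the symmetric case. Using the explicit symmetric formula of Theorem \ref{thm:expectile_equilibrium} with $\gamma$ replaced by $\gamma\kappa(\tau)$, the level $g$ has the form $c_1\gamma/(c_2\gamma+c_3)$ with $c_3=\sigma\beta K^2>0$, and so is increasing in $\gamma$. For the asymmetric claim, I would show that $G_i^*$ falls in $\tau_i$ by differentiating the equilibrium system. The key point is that the own-effect dominates, which follows from the diagonal dominance of the Jacobian implied by the uniqueness argument.
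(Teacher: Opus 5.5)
Your peace argument is the paper's. Its proof also reduces ``more likely to choose peace'' to the fact that the no-deviation threshold $R_i\ge(1-t_i)\bigl[\beta\gamma(1-\sigma)\overline{R}+\beta\sigma R_i\bigr]$ becomes easier to satisfy as trust grows. Where you differ is in how trust is formalized, and this matters for the arms claim. The paper scales the whole deviation payoff by $(1-t_i)$ and then simply asserts that lower $t_i$ raises $G_i$ in war equilibrium. But if that factor is applied to $\pi_{i,w}$, it only rescales the objective by a positive constant and leaves the best response unchanged. So the assertion does not follow from the paper's own model.

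By shrinking only the contested prize, you turn trust into a higher effective marginal cost of arms, $\sigma/(\gamma_i(1-\sigma))$. Your best-response step and your symmetric-formula step are then correct. A Topkis argument on $(\overline{R}-C)A_i-\frac{\sigma}{\gamma_i(1-\sigma)}G_i$ would even remove the need for interiority and the implicit function theorem. Both steps require $\sigma>0$. At $\sigma=0$ the bracket vanishes at the optimum and $c_3=0$, so arms do not depend on trust at all.

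The genuine gap is the asymmetric equilibrium step, and each tool you cite for it fails.
\begin{itemize}
\item The iterative comparison of the $\alpha$-monotonicity proof deduces $\Phi(\mathbf{G}^{(k)},\alpha')\ge\Phi(\mathbf{G}^{(k-1)},\alpha')$ from $\mathbf{G}^{(k)}\ge\mathbf{G}^{(k-1)}$. Since $\Phi$ is decreasing in $\mathbf{G}$, that inequality runs the wrong way.
\item The S-modular theorem is false as stated. For $I=2$, $\mu=1$, $\alpha_i=\tfrac12$, the cross-partial of $(\overline{R}-G_1-G_2)G_1/(G_1+G_2)$ is $(G_1-G_2)\overline{R}/(G_1+G_2)^3$, which is positive for the leading team.
\item Uniqueness does not imply diagonal dominance, and the paper's uniqueness proof gives no bound on the Jacobian $J$ of $(F_1,\dots,F_I)$.
\end{itemize}

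What you actually need is $dG_i^*/d\gamma_i=-(J^{-1})_{ii}\,\partial F_i/\partial\gamma_i$ with $(J^{-1})_{ii}=\det J_{[-i]}/\det J<0$, where $J_{[-i]}$ deletes row and column $i$.
\begin{itemize}
\item For $I=2$, $J_{[-i]}$ is the other team's own second derivative, which is negative by concavity. The condition therefore reduces to $\det J>0$, which can be checked directly; for $\alpha_i=\tfrac12$, $\mu=1$, $\det J$ is a positive multiple of $\overline{R}^2/(G_1+G_2)^4$.
\item For $I\ge3$ it is a P-matrix (stability) condition on $-J$ that you must prove or assume.
\end{itemize}
Failing that, state the arms claim only for best responses and symmetric equilibria. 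That already exceeds what the paper proves.
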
 

\begin{proof} Model Extension:  
Let $t_{i} \in [0, 1]$ denote team $i$'s trust level.  
 Assume $t_{i}$ reduces perceived conflict gains:  
  $$\text{Perceived payoff from deviation} = (1 - t_{i}) \left[ \beta\gamma(1-\sigma)\overline{R} + \beta\sigma R_{i} \right]$$  
 
 Peace is chosen if:  
   $$R_{i} \geq (1 - t_{i}) \left[ \beta\gamma(1-\sigma)\overline{R} + \beta\sigma R_{i} \right]$$  

High-Trust teams ($t_{i} \uparrow$):  
Inequality holds more easily $\rightarrow$ more likely to choose peace.  
In peace equilibrium, $G_{i} = 0$ (minimal arms).  
In war equilibrium, lower $t_{i}$ increases $G_{i}$ (due to higher perceived gains).  
\end{proof}
\begin{theorem} 
Armed peace cannot be sustained in risk-aware mean-field-type Nash equilibrium for any $\alpha\in (0,1)^I, 0< \mu \leq 1$
\end{theorem}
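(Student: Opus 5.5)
Our plan is to make precise what ``armed peace'' means in the Risk-Aware Conflict Game and then exhibit, for every such profile, a profitable unilateral deviation that does not depend on $\alpha$ or $\mu$. By \emph{armed peace} we mean a strategy profile $(\mathbf{G},\mathbf{d})$ with $d_j=P$ for all $j\in\mathcal{I}$, so that the outcome is peace, and with $G_k>0$ for at least one team $k$. The claim is then that no such profile is a risk-aware mean-field-type Nash equilibrium. The only candidate peace equilibrium is therefore the unarmed one $(G_i=0,d_i=P)_{\forall i}$ characterized in the Peace Equilibrium theorem.

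The key steps, in order, are as follows. First, we record that under the profile $\mathbf{d}=(P,\dots,P)$ each team's payoff $\Pi_{i,p}=R_i-G_i$ is deterministic. Since the $\alpha_i$-expectile of a degenerate random variable equals its value (the case $p\in\{0,1\}$ of the expectile Lemma), we get $\pi_i(\mathbf{G},\mathbf{d})=R_i-G_i$ for every $\alpha_i\in(0,1)$. Second, we fix a team $k$ with $G_k>0$ and consider the deviation $(G_k',d_k')=(0,P)$, with all other teams unchanged. Because every other team still plays $d_j=P$ and $k$ also plays $P$, the outcome remains peace. The deviator's payoff is then $R_k-0=R_k>R_k-G_k$, so the deviation is strictly profitable. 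Third, we observe that neither the Tullock function $\phi_i$ nor the parameter $\mu$ enters the peace payoff, and $\alpha_k$ enters only through the trivial expectile of a constant. The conclusion therefore holds for all $\alpha\in(0,1)^I$ and all $0<\mu\le 1$ (indeed for all $\mu>0$). The restriction $\mu\le1$ is needed only so that, together with the War Equilibrium theorem, the remaining equilibrium candidates are exactly unarmed peace and the interior war equilibrium.

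We expect the main obstacle to be interpretive rather than computational. We must check that the deviation keeps the outcome in the peace regime: guns are still held by others, and one might worry that this triggers war. In this model, however, war occurs only if some $d_j=W$, and arms levels alone never trigger it. We would state this explicitly by appealing to the outcome rule ``Peace $\iff d_i=P$ for all $i$''. One may also wish to rule out the mixed case in which teams randomize over $d_i$ while holding positive arms. For that case we would compare the three-point expectile formulas in the mixed-strategy subsection, arguing that lowering $G_k$ raises the peace atom $R_k-G_k$ while the war atoms are continuous. We would present this only as a remark, since the theorem concerns pure armed peace.
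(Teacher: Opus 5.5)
Your argument is correct and is essentially the paper's proof: a team with $G_k>0$ deviates to $G_k=0$ while keeping $d_k=P$, the outcome stays peace, and its deterministic payoff rises from $R_k-G_k$ to $R_k$, with neither $\alpha$ nor $\mu$ playing any role. Drop your aside that $\mu\le 1$ leaves only unarmed peace and the interior war equilibrium as candidates: it is not needed, and the paper does not establish it, since its uniqueness result covers only interior war equilibria.
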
 
\begin{proof}
Unarmed Peace Frequency:  
From the above, unarmed peace is a risk-aware equilibrium more often under equality.  
If equilibrium selection favors peace when available, it occurs more frequently under equality.  
  
Armed Peace is Not Nash:  Consider armed peace: $d_{i} = P$ $\forall i$ but $G_{i} > 0$.  
Any team $i$ can deviate to $G_{i} = 0$ and gain:  
 $$\pi_{i,p}(G_{i}=0) = R_{i} > R_{i} - G_{i} = \pi_{i,p}(G_{i}>0)$$  
Thus, Armed peace is never Nash.  
\end{proof}

\subsection{Multiple Sectorial Battlefields}

We model strategic interactions among a finite collection \(\mathcal{I} = \{1, 2, \dots, I\}\) of risk-sensitive teams, where \(I \geq 2\). Each team \(i \in \mathcal{I}\) possesses a risk sensitivity parameter \(\alpha_i \in (0, 1)\), with \(\alpha_i < \frac{1}{2}\) indicating risk aversion, \(\alpha_i = \frac{1}{2}\) risk neutrality, and \(\alpha_i > \frac{1}{2}\) risk-seeking behavior. Competition occurs over \(\mathcal{K} = \{1, \ldots, K\}\) distinct sectors (battlefields) with \(K \geq 1\).  
Each team \(i\) allocates resource \(R_{ik} \geq 0\) to battlefield \(k\), where total resources satisfy \(R_i = \sum_{k \in \mathcal{K}} R_{ik} > 0\). Aggregate resources are \(\overline{R} = \sum_{i \in \mathcal{I}} R_i\). Sector-specific parameters define output characteristics:  
Security fraction: \(\sigma_k \in [0, 1)\),
Output survival rate: \(\beta_k \in (0, 1]\), Relative survival of contested output: \(\gamma_k \in (0, 1]\).
The game features complete information: All parameters and endowments are common knowledge.  

 Actions and Production: 
Simultaneously, each team \(i\) selects for all \(k \in \mathcal{K}\):  
War/peace decision: \(d_{ik} \in \{P, W\}\), Arming level: \(G_{ik} \in [0, R_{ik}], k \in \mathcal{K}\) subject to \(\sum_{k \in \mathcal{K}} R_{ik} \leq   R_i\).
  
The production of non-military output ("Gold-Millet-Ram-Cow-Rice") at \(k\) is \(X_{ik} = R_{ik} - G_{ik}\). Total arming at \(k\) is \(G_k = \sum_{i \in \mathcal{I}} G_{ik}\).  

Conflict Outcomes: 

Outcomes are determined as:  
\[
\text{Peace} \iff d_{ik} = P \ \ \forall i \in \mathcal{I}, \, \forall k \in \mathcal{K}; \quad \text{War} \iff \exists \, i \in \mathcal{I}, \, k \in \mathcal{K} \ \ \text{s.t.} \ \ d_{ik} = W.
\]

 Payoff Functions:
 
Peace payoff  (deterministic):  
  \[
  \Pi_{i,p} = \sum_{k \in \mathcal{K}} (R_{ik} - G_{ik})
  \]  
  
 War payoff (stochastic):  

  Define total output at \(k\) as \(X_k = \sum_{i \in \mathcal{I}} X_{ik}\). The winning probability for team \(i\) at \(k\) follows a contest function with parameter \(\mu_k > 0\):  
  \[
  \phi_{ik}(\mathbf{G}) = 
  \begin{cases} 
  \dfrac{(G_{ik})^{\mu_k}}{\sum_{j \in \mathcal{I}} (G_{jk})^{\mu_k}} & \text{if} \ \sum_{j} (G_{jk})^{\mu_k} > 0 \\[10pt]
  \dfrac{(R_{ik})^{\mu_k}}{\sum_{j \in \mathcal{I}} (R_{jk})^{\mu_k}} & \text{otherwise}
  \end{cases}
  \]  
  Let \(W_{ik} \sim \text{Bernoulli}(\phi_{ik})\) indicate win/loss. The stochastic payoff for \(i\) at \(k\) is:  
  \[
  \Pi_{i,w,k} = \underbrace{\beta_k \sigma_k X_{ik}}_{\text{secure output}} + W_{ik} \underbrace{\beta_k \gamma_k (1 - \sigma_k) X_k}_{\text{contested prize}}
  \]  
  with total war payoff \(\Pi_{i,w} = \sum_{k \in \mathcal{K}} \Pi_{i,w,k}\).  

The complete payoff is:  
\[
\Pi_i(\mathbf{G}, \mathbf{d}) = 
\begin{cases} 
\Pi_{i,p} & \text{under peace} \\
\Pi_{i,w} & \text{under war}
\end{cases}
\]

Risk-Sensitive Payoff Transformation: 
For random payoff \(\Pi_i\), the \(\alpha_i\)-expectile \(e_{\alpha_i}(\Pi_i)\) solves:  
\[
\alpha_i \mathbb{E}\left[(\Pi_i - e_{\alpha_i})_+\right] = (1 - \alpha_i) \mathbb{E}\left[(e_{\alpha_i} - \Pi_i)_+\right],
\]  
equivalently minimizing \(e_{\alpha_i}(\Pi_i) = \arg \min_x \int \eta_{\alpha_i}(y - x) \, d\mathbb{P}_{\Pi_i}(y)\), where \(\eta_{\alpha_i}(z) = \left|\alpha_i - \mathbb{I}_{\{z \leq 0\}}\right| z^2\) and \(\mathbb{P}_{\Pi_i}\) is the law of \(\Pi_i\).

Risk-Aware Conflict Game: Teams simultaneously choose arming profiles \(\mathbf{G}_i = (G_{i1}, \dots, G_{iK}) \in \times_{k=1}^K [0, R_{ik}]\) (with \(\sum_k G_{ik} \leq \bar{G}_i\)) and war/peace decisions \(\mathbf{d}_i = (d_{i1}, \dots, d_{iK}) \in \{P, W\}^K\). Outcomes are determined per battlefield as above. The risk-sensitive payoff for team \(i\) is:  
\[
\pi_i(\mathbf{G}, \mathbf{d}) = 
\begin{cases} 
\sum_{k=1}^K (R_{ik} - G_{ik}) & \text{under peace} \\
e_{\alpha_i}\left(\Pi_{i,w}\right) & \text{under war}
\end{cases}
\]

\begin{lemma} 
Consider team \(i\)'s war payoff \(\Pi_{i,w} = \sum_{k \in \mathcal{K}} \Pi_{i,w,k}\), where:  
\[
\Pi_{i,w,k} = \beta_k \sigma_k X_{ik} + W_{ik} \cdot \beta_k \gamma_k (1 - \sigma_k) X_k.
\]  
Here, \(X_{ik} = R_{ik} - G_{ik}\), \(X_k = \sum_{j \in \mathcal{I}} X_{jk}\), and \(W_{ik} \sim \text{Bernoulli}(\phi_{ik})\) are independent across \(k \in \mathcal{K}\), with \(\phi_{ik}\) given by the contest function. The \(\alpha_i\)-expectile of \(\Pi_{i,w}\) is:  
\[
e_{\alpha_i}(\Pi_{i,w}) = \underbrace{\sum_{k \in \mathcal{K}} \beta_k \sigma_k X_{ik}}_{C_i} + e_{\alpha_i}(S),
\]  
where \(S = \sum_{k \in \mathcal{K}} D_{ik} W_{ik}\) with \(D_{ik} = \beta_k \gamma_k (1 - \sigma_k) X_k\), and \(e_{\alpha_i}(S)\) is the unique solution \(e \in \mathbb{R}\) to:  
\begin{equation}
\alpha_i \mathbb{E}\left[(S - e)_+\right] = (1 - \alpha_i) \mathbb{E}\left[(e - S)_+\right]. \tag{1}
\end{equation}
\end{lemma}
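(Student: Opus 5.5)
The plan is to prove two things: that the $\alpha_i$-expectile is translation equivariant, and that the defining equation (1) for $S$ has exactly one solution. The decomposition then follows at once, because $\Pi_{i,w} = C_i + S$ with $C_i=\sum_{k}\beta_k\sigma_k X_{ik}$ deterministic. Here deterministic means given the arming profile $\mathbf{G}$ and allocations $R_{ik}$, which are fixed when the expectile is evaluated. All randomness in $\Pi_{i,w}$ enters through the Bernoulli indicators $W_{ik}$, so $S=\sum_k D_{ik}W_{ik}$ is a bounded random variable. It takes at most $2^K$ values, with the probabilities fixed by the $\phi_{ik}$ and independence across $k$.

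\textbf{Step 1 (existence and uniqueness of $e_{\alpha_i}(S)$).} Define
\[
\Psi(e)=\alpha_i\mathbb{E}[(S-e)_+]-(1-\alpha_i)\mathbb{E}[(e-S)_+].
\]
Since $S$ is bounded, $\Psi$ is finite and continuous; it is piecewise linear with kinks at the atoms of $S$. The map $e\mapsto \mathbb{E}[(S-e)_+]$ is nonincreasing and $e\mapsto\mathbb{E}[(e-S)_+]$ is nondecreasing. Their one-sided slopes are $-\mathbb{P}(S>e)$ and $\mathbb{P}(S\le e)$ respectively, and these always sum to $1$ in absolute value. Hence every one-sided derivative of $\Psi$ is at most $-\min(\alpha_i,1-\alpha_i)<0$, so $\Psi$ is strictly decreasing. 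We also have $\Psi(e)>0$ for $e<\min S$ and $\Psi(e)<0$ for $e>\max S$. The intermediate value theorem then gives a unique root, and it lies in $[\min S,\max S]$.

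As a cross-check, the root coincides with the $\arg\min$ characterization of the expectile used earlier in the paper. The function $x\mapsto\int\eta_{\alpha_i}(y-x)\,d\mathbb{P}_S(y)$ is strictly convex and $C^1$, and its derivative is $-2\Psi(x)$. Its first-order condition is therefore exactly (1), so the two definitions agree.

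\textbf{Step 2 (translation equivariance).} For a constant $c$ and any bounded $Y$, the identity $(Y+c-(e+c))_\pm=(Y-e)_\pm$ shows that $e$ solves the defining equation for $Y$ if and only if $e+c$ solves it for $Y+c$. Take $Y=S$ and $c=C_i$. By the uniqueness from Step 1, applied to both $S$ and $C_i+S$, we get $e_{\alpha_i}(C_i+S)=C_i+e_{\alpha_i}(S)$, which is the claimed formula.

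I expect the main obstacle to be Step 1, namely strict monotonicity of $\Psi$ when $S$ has atoms. This matters in particular when $D_{ik}=0$ for some $k$, or when some $\phi_{ik}\in\{0,1\}$, so that $S$ may even be degenerate. I would handle this with the one-sided derivative argument above rather than assuming a density. In the fully degenerate case $S\equiv s$ the argument still works: $\Psi(e)=\alpha_i(s-e)_+-(1-\alpha_i)(e-s)_+$, which is strictly decreasing with root $e=s$. This is consistent with the $p\in\{0,1\}$ cases of the binary expectile lemma.

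Independence of the $W_{ik}$ is not needed for the decomposition itself. It only fixes the law of $S$, which matters if one later wants explicit formulas for $e_{\alpha_i}(S)$ by the case analysis on ordered atoms used in the binary lemma.
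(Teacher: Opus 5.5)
Your proposal is correct and follows essentially the same route as the paper: you decompose $\Pi_{i,w}=C_i+S$, show that $e\mapsto\alpha_i\mathbb{E}[(S-e)_+]-(1-\alpha_i)\mathbb{E}[(e-S)_+]$ is continuous, strictly decreasing and changes sign, and conclude by translation equivariance. The differences are cosmetic. You get strict monotonicity from one-sided slopes bounded by $-\min(\alpha_i,1-\alpha_i)$ rather than from strict convexity of the $\eta_{\alpha_i}$-loss (which you also note, with the correct factor $-2$), and you prove uniqueness before the translation step, which the paper uses only implicitly.
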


\begin{proof}

We prove the result in three parts: decomposition of \(\Pi_{i,w}\), translation invariance of the expectile, and  uniqueness and properties of the solution.

Decomposition of \(\Pi_{i,w}\) 
 
The war payoff is:  
\[
\Pi_{i,w} = \sum_{k \in \mathcal{K}} \left[ \beta_k \sigma_k X_{ik} + W_{ik} \cdot \beta_k \gamma_k (1 - \sigma_k) X_k \right].
\]  
Define:  

Deterministic component: \(C_i = \sum_{k \in \mathcal{K}} \beta_k \sigma_k X_{ik}\).  

Stochastic component: \(S = \sum_{k \in \mathcal{K}} D_{ik} W_{ik}\) where \(D_{ik} = \beta_k \gamma_k (1 - \sigma_k) X_k \geq 0\).  

Since \(X_{ik}\) and \(X_k\) are deterministic (as arming levels \(\mathbf{G}\) are fixed), \(C_i\) is a constant. The randomness in \(\Pi_{i,w}\) arises solely from the Bernoulli variables \(W_{ik}\). As contests across battlefields are independent, \(\{W_{ik}\}_{k \in \mathcal{K}}\) are mutually independent. Thus, \(\Pi_{i,w} = C_i + S\), where \(S\) is a weighted sum of independent Bernoulli random variables.

 Translation Invariance of the Expectile:
We show that for any constant \(C \in \mathbb{R}\) and random variable \(Y\),  
\[
e_\alpha(C + Y) = C + e_\alpha(Y).
\]  
Let \(\tau = e_\alpha(Y)\) solve:  
\[
\alpha \mathbb{E}\left[(Y - \tau)_+\right] = (1 - \alpha) \mathbb{E}\left[(\tau - Y)_+\right]. \tag{2}
\]  
Define \(Z = C + Y\). Substitute \(z = C + \tau\) into the expectile equation for \(Z\):  
\[
\alpha \mathbb{E}\left[(Z - (C + \tau))_+\right] = \alpha \mathbb{E}\left[(Y - \tau)_+\right], \\
(1 - \alpha) \mathbb{E}\left[((C + \tau) - Z)_+\right] = (1 - \alpha) \mathbb{E}\left[(\tau - Y)_+\right].
\]  
By (2), the right-hand sides are equal. Thus, \(C + \tau\) satisfies:  
\[
\alpha \mathbb{E}\left[(Z - (C + \tau))_+\right] = (1 - \alpha) \mathbb{E}\left[((C + \tau) - Z)_+\right],
\]  
implying \(e_\alpha(Z) = C + e_\alpha(Y)\). Applying this to \(\Pi_{i,w} = C_i + S\):  
\[
e_{\alpha_i}(\Pi_{i,w}) = C_i + e_{\alpha_i}(S).
\]

Solution to the Expectile Equation for \(S\):   The variable \(S = \sum_{k=1}^K D_{ik} W_{ik}\) is a nonnegative, bounded random variable (since \(D_{ik} \geq 0\) and \(W_{ik} \in \{0, 1\}\)). The \(\alpha_i\)-expectile of \(S\) is defined as the solution \(e\) to:  
\[
\alpha_i \mathbb{E}\left[(S - e)_+\right] = (1 - \alpha_i) \mathbb{E}\left[(e - S)_+\right].
\]  

To prove the latter  equation has a unique solution, define the function:  
\[
g(e) = \alpha_i \mathbb{E}\left[(S - e)_+\right] - (1 - \alpha_i) \mathbb{E}\left[(e - S)_+\right].
\]  
This is equivalent to the derivative of the loss function \(F(e) = \mathbb{E}[\eta_{\alpha_i}(S - e)]\), where \(\eta_{\alpha_i}(z) = |\alpha_i - \mathbb{I}_{\{z \leq 0\}}| z^2\). Since \(\eta_{\alpha_i}\) is strictly convex in \(z\), \(F(e)\) is strictly convex in \(e\). Thus, \(g(e)\) is strictly decreasing and continuous (as \(S\) is bounded). Observing:  
\(\lim_{e \to -\infty} g(e) > 0\) (since \(\mathbb{E}[(S - e)_+] \to \infty\)),   and 
\(\lim_{e \to \infty} g(e) < 0\) (since \(\mathbb{E}[(e - S)_+] \to \infty\)),  
by the intermediate value theorem, \(g(e) = 0\) has a unique solution \(e\).  

 \(S\) admits a probability mass function \(f_S(s)\) on a finite support (size \(\leq 2^K\)), so (1) expands to:  
\[
\alpha_i \sum_{s > e} (s - e) f_S(s) = (1 - \alpha_i) \sum_{s < e} (e - s) f_S(s),
\]  
which determines \(e\) uniquely.  
 
This completes the proof. 
\end{proof}

\subsection{Risk-Sensitive Coalition Formations}

We extend a risk-sensitive arming game to allow for endogenous coalition formation under contest success, incomplete information, and behavioral heterogeneity. Unlike traditional models that assume symmetric coordination costs, we introduce a behaviorally fair cost-sharing rule based on agents' risk preferences. This reflects empirical asymmetries in how different actors contribute to, or escalate, conflict in fragile regions.
 Each agent $i \in \mathcal{I}$ chooses:
\begin{itemize}
    \item $G_i \in [0, R_i]$: investment in arms, with endowment $R_i > 0$,
    \item $\mathcal{S}_k \subseteq \mathcal{I}$: a coalition to join, such that $i \in \mathcal{S}_k$.
\end{itemize}

Let $\mathcal{K} = \{\mathcal{S}_1, \dots, \mathcal{S}_K\}$ be a partition of $\mathcal{I}$ into disjoint coalitions.
Define total investment and endowment in coalition $\mathcal{S}_k$ as:
\begin{align*}
G_{\mathcal{S}_k} &= \sum_{j \in \mathcal{S}_k} G_j, \\
R_{\mathcal{S}_k} &= \sum_{j \in \mathcal{S}_k} R_j.
\end{align*}

Let $1\geq \mu > 0$ denote the contest exponent. The probability that coalition $\mathcal{S}_k$ wins is:
\[
\Phi_{\mathcal{S}_k} = \begin{cases}
\displaystyle \frac{G_{\mathcal{S}_k}^\mu}{\sum_{\mathcal{S} \in \mathcal{K}} G_{\mathcal{S}}^\mu} & \text{if } \sum_{\mathcal{S}} G_{\mathcal{S}}^\mu > 0, \\
\displaystyle \frac{R_{\mathcal{S}_k}^\mu}{\sum_{\mathcal{S}} R_{\mathcal{S}}^\mu} & \text{otherwise}.
\end{cases}
\]

Let $\beta, \gamma, \sigma \in (0,1)$ be fixed behavioral parameters. Each agent $i \in \mathcal{S}_k$ has risk-sensitivity $\alpha_i \in (0,1)$. The risk-adjusted share of the contested surplus is:
\[
\omega_i(\mathcal{S}_k) = \frac{\alpha_i \Phi_{\mathcal{S}_k}}{\alpha_i \Phi_{\mathcal{S}_k} + (1 - \alpha_i)(1 - \Phi_{\mathcal{S}_k})}.
\]

The agent's payoff before coalition cost is:
\[
\pi_{i,w}^{\text{coal}} = \beta \sigma (R_i - G_i) + \beta \gamma (1 - \sigma) \cdot \left( \sum_{j \in \mathcal{I}} R_j - \sum_{j \in \mathcal{I}} G_j \right) \cdot \omega_i(\mathcal{S}_k).
\]

Let the total coordination cost for coalition $\mathcal{S}_k$ be:
\[
\mathcal{C}(\mathcal{S}_k) = \delta \cdot (|\mathcal{S}_k| - 1)^\eta,
\]
where $\delta > 0$ and $\eta \in (0,1]$ capture diseconomies of scale in coalition governance.

Instead of equal sharing, we assign a cost share based on behavioral traits:
\[
c_i^{\text{form}} = \mathcal{C}(\mathcal{S}_k) \cdot \frac{1 - \alpha_i}{\sum_{j \in \mathcal{S}_k} (1 - \alpha_j)}.
\]

This penalizes more risk-neutral (i.e., escalation-prone) actors for the burden of coordination, consistent with their greater propensity to benefit from militarized outcomes.

The total utility of agent $i \in \mathcal{S}_k$ is:
\[
\pi_i^{\text{total}} = \pi_{i,w}^{\text{coal}} - c_i^{\text{form}}.
\]

A configuration $\left( \{G_i^*\}, \{\mathcal{S}_i^*\} \right)$ is a \textit{Coalitional Mean-Field-Type Nash Equilibrium} if no agent can improve their total payoff by unilaterally changing their investment $G_i$ or coalition $\mathcal{S}_i$.

This model captures heterogeneous alliance behavior in fragmented conflict zones. Risk-tolerant actors often the initiators of aggression, bear greater coalition costs, while more risk-averse actors benefit from protective grouping at lower expense. The cost rule aligns burden with behavioral incentives and offers a tractable alternative to equal-cost assumptions.

\subsubsection*{Alternative Coalition Cost Sharing Rules}

In conflict environments characterized by asymmetric capabilities, risk tolerances, and influence as seen in the Sahel corridor, equal per-agent coalition cost sharing may produce inequities that disincentivize participation by key actors. We propose four game-theoretically grounded alternatives for allocating coalition formation costs more equitably.

Let the total coordination cost for coalition $\mathcal{S}_k$ be:
\[
\mathcal{C}(\mathcal{S}_k) = \delta \cdot (|\mathcal{S}_k| - 1)^\eta,
\]
where $\delta > 0$ and $\eta \in (0, 1]$. We consider the following agent-level cost allocations $c_i^{\text{form}}$:

\subsubsection*{Option 1: Arms-Based Proportional Sharing}
\[
c_i^{\text{form}} = \mathcal{C}(\mathcal{S}_k) \cdot \frac{G_i}{\sum_{j \in \mathcal{S}_k} G_j}
\]
\textbf{Pros}: Rewards large contributors; aligns cost with arming effort. \\
\textbf{Cons}: May discourage high investment or penalize security providers.

\subsubsection*{Option 2: Resource-Based Proportional Sharing}
\[
c_i^{\text{form}} = \mathcal{C}(\mathcal{S}_k) \cdot \frac{R_i}{\sum_{j \in \mathcal{S}_k} R_j}
\]
\textbf{Pros}: Reflects ability-to-pay; burden on wealthier agents. \\
\textbf{Cons}: May deter participation from better-endowed actors.

\subsubsection*{Option 3: Risk-Sensitivity Weighted Sharing}
\[
c_i^{\text{form}} = \mathcal{C}(\mathcal{S}_k) \cdot \frac{1 - \alpha_i}{\sum_{j \in \mathcal{S}_k} (1 - \alpha_j)}
\]
\textbf{Pros}: Penalizes risk-neutral (escalation-prone) actors. \\
\textbf{Cons}: Requires observability or estimation of behavioral types.

\subsubsection*{Option 4: Hybrid Weighted Sharing}
\[
c_i^{\text{form}} = \mathcal{C}(\mathcal{S}_k) \cdot \left[
\lambda_1 \cdot \frac{G_i}{\sum_{j \in \mathcal{S}_k} G_j} +
\lambda_2 \cdot \frac{R_i}{\sum_{j \in \mathcal{S}_k} R_j} +
\lambda_3 \cdot \frac{1 - \alpha_i}{\sum_{j \in \mathcal{S}_k} (1 - \alpha_j)}
\right],
\]
where $\lambda_1 + \lambda_2 + \lambda_3 = 1$.

\textbf{Pros}: Flexible, tunable framework balancing effort, capacity, and behavior. \\
\textbf{Cons}: Involves multi-parameter calibration.

For fragile and asymmetrically structured regions:
\begin{itemize}
    \item Use \textbf{Option 3 (Risk-based sharing)} to discourage opportunistic escalation.
    \item Use \textbf{Option 4 (Hybrid)} for tunable fairness and behavioral realism.
    \item Avoid equal sharing unless actors are symmetric in both resources and preferences.
\end{itemize}

\subsection{Coalition Formation with Transferable Resources and Arms}

We now connect the coalitional arming game with a richer class of models in which agents are allowed to \textit{transfer resources and arms} to one another, either within or across coalitions. 
This introduces a second layer of strategic interaction beyond coalition formation and arming decisions by enabling endogenous redistribution of power.

Let $T_{ij}^R$ and $T_{ij}^G$ denote the amount of \textit{resource} and \textit{armed investment} transferred from agent $i$ to agent $j$, respectively. These transfers must satisfy:
\[
\sum_{j \neq i} T_{ij}^R \leq R_i, \quad \sum_{j \neq i} T_{ij}^G \leq G_i.
\]
Define the effective resource and arms endowments after transfers as:
\begin{align*}
\widetilde{R}_i &= R_i - \sum_{j \neq i} T_{ij}^R + \sum_{j \neq i} T_{ji}^R, \\
\widetilde{G}_i &= G_i - \sum_{j \neq i} T_{ij}^G + \sum_{j \neq i} T_{ji}^G.
\end{align*}

Each agent’s payoff function is then adjusted to reflect their \textit{post-transfer} variables:
\[
\pi_i^{\text{total}} = \beta \sigma (\widetilde{R}_i - \widetilde{G}_i) + \beta \gamma (1 - \sigma) \cdot \left( \sum_{j \in \mathcal{I}} \widetilde{R}_j - \sum_{j \in \mathcal{I}} \widetilde{G}_j \right) \cdot \omega_i(\mathcal{S}_k) - c_i^{\text{form}} - \sum_{j \neq i} \kappa_R T_{ij}^R - \sum_{j \neq i} \kappa_G T_{ij}^G,
\]
where $\kappa_R, \kappa_G > 0$ are transfer friction or enforcement cost parameters.

The main difference between the two models lies in the structure of interdependence:
\begin{itemize}
    \item \textbf{Original Model (No Transfers)}: Agents operate with fixed endowments and cannot redistribute power. Coalitions form based on local synergies in payoff and contest probability, and each agent's power is constrained by their own resources.
    
    \item \textbf{Extended Model (With Transfers)}: Agents can strategically empower allies (funding a weaker faction or equipping a proxy group), either to shift the balance of power within a coalition or to destabilize competing coalitions. This introduces a form of \emph{strategic delegation} or \emph{proxy warfare}.
\end{itemize}

 Transfers can be used to equalize arming levels within coalitions, which may improve internal stability.
    Powerful agents may prefer to fund other agents rather than arm themselves, effectively outsourcing risk while retaining influence.

 In transfer-free models, high resource endowment implies high arming potential. With transfers, this coupling is broken, and influence becomes more strategically flexible.
This extended model more accurately captures empirical realities in fragile regions where external actors, states, war entrepreneurs, or diaspora networks, provide material support to local militias. 
It helps explain phenomena such as:
\begin{itemize}
    \item Local alliances backed by foreign military equipment,
    \item The rise of under-resourced actors due to external subsidies,
    \item Strategic arms reduction by core actors delegating force projection to proxies.
\end{itemize}

\section{Parrondo-type Cooperation  for some Sahel- Southern Maghreb Region} \label{secmag}
We present a unified account of how fragile states in the Sahel and southern Maghreb can cooperate despite individually failing strategies, and we show that paradoxical cooperative gains can emerge when negative outcomes are combined in sequence or in mixture, following the logic of the Parrondo effect, but adapted to regional security and economic challenges. The states in focus are Burkina Faso, Mali, Niger, Chad, Mauritania, and Algeria, with Morocco considered in a variant extension, and these countries all face persistent insurgency, expansion of terrorist networks, illicit flows of arms, drugs, and migrants, weak border control, and fragile institutions, and in this environment unilateral action almost always fails because one state acting alone cannot seal long borders, cannot stop capital flight, and cannot attract sustainable investment, so the outcome is negative in security and in economic payoffs. The key question is whether coordination, even at low cost, can produce net gains, and our analysis shows that it can, in a paradoxical sense, when cooperation produces nonlinear improvements that depend on the empirical distribution of mobile actors such as smugglers, militants, and traders, which are naturally modeled using mean-field-type games. In this framework, the payoff of each state depends not only on its own control but also on the distribution of actors across the whole region, and the paradox arises because the distributional dependence means that two individually failing strategies, such as weak patrols or limited trade corridors, can together produce positive gains once they are combined and coordinated, for example through rotational patrols or pooled trade management. We analyze the setting under two rigorous risk-aware payoff criteria, the $\alpha$-entropic Value-at-Risk and the $\tau$-expectile risk measure, which capture aversion to extreme insecurity or economic collapse, and both measures stress tail events such as terrorist attacks or large-scale smuggling that can destabilize fragile economies, making them more relevant than expected-value analysis. For each risk criterion we provide a Parrondo-type theorem that proves that pure national concentration leads to negative payoffs while an interior coordinated mixture yields positive payoffs, under transparent parameter conditions that depend on the chosen risk level, the volatility of illegal flows, and the responsiveness of regional economies. These conditions are interpretable in policy terms: rotational patrols reduce volatility, intelligence sharing increases effective risk tolerance, joint interdiction of smuggling reduces average illegal flow intensity, trade corridors raise baseline economic returns, pooled energy revenues stabilize budgets, and alternative-livelihood programs reduce militant recruitment, so the abstract parameters correspond directly to concrete counter-insecurity levers. We apply the model to different coalitions: G3  Sahel (Burkina Faso, Mali, Niger), G4 Sahel (adding Chad), G5 Sahel (adding Mauritania), G5 Sahel + Algeria, and a variant G5+Morocco, and in every case we find that pure strategies remain negative but coordinated mixtures are positive, with the magnitude of improvement greatest when Algeria is included because of its financial and energy capacity and its regional diplomatic role, while Morocco also strengthens outcomes through logistics and trade but with smaller weight than Algeria. The theoretical proofs exploit convexity and monotonicity properties of the risk measures and the nonlinear dependence on distributions, and guarantee that the value of an interior mixture dominates that of pure isolation, thus formally extending the Parrondo efffect to mean-field-type games under risk-sensitive payoffs. The policy implications are practical and low-cost: three steps are feasible, namely rotational security patrols with limited budgets, intelligence sharing through existing mobile platforms, and pooled monitoring of trade and energy flows with transparent rules, and these can be tracked with clear monitoring metrics such as the number of interdicted illicit flows, reduction in terrorist incidents, and increases in regional trade value, which link abstract theorems to measurable impacts. In conclusion, the Sahel and southern Maghreb face insecurities and fragile economies that no state can manage alone, but coordinated mixtures of weak policies can generate positive outcomes, as captured by the Parrondo effect in the mean-field-type setting with risk-aware payoffs, and this insight is demonstrated for G3 Sahel, G4 Sahel, G5 Sahel, G5 Sahel+Algeria, and G5+Morocco, with concrete policy mapping, rigorous proofs, and a low-cost roadmap aimed at reducing illegal transit, countering terrorism, and reviving regional economies, and the central message is that cooperation, even among weak states, can paradoxically reverse decline and restore hope in a region long defined by fragility.

We consider a finite population of $I$ mobile actors (investors, traders, development projects, or security assets) who, in each period, allocate activity to one of $K$ countries in a region. Let $p=(p_1,\dots,p_K)$ be the empirical distribution with $\sum_{i=1}^K p_i = 1$ and $p_i \ge 0$.

For country $i$, we posit the instantaneous (random) payoff for an actor allocating to $i$:
\begin{equation}\label{eq:Ri}
R_i(p) = \alpha_i - \beta_i p_i + \gamma_i(1 - p_i) + \varepsilon_i,
\end{equation}
where:
\begin{itemize}
  \item $\alpha_i$ is the baseline attractiveness (resources, basic institutions) $\in \mathbb{R}$; fragile states often have $\alpha_i<0$ under current constraints (sanctions, low capital);
  \item $\beta_i>0$ measures \emph{congestion / vulnerability} when activity concentrates in $i$ (higher local insurgency exposure, resource depletion, international sanctions triggered by concentrated activity);
  \item $\gamma_i>0$ measures \emph{positive spillovers} from activity in {\em other} states (trade, remittances, security spillovers, shared markets); note the term uses $1-p_i$ to keep linear form;
  \item $\varepsilon_i$ is a zero-mean bounded stochastic perturbation capturing exogenous shocks.
\end{itemize}

Interpretation for counter-insecurity and illegal transit:

$\beta_i$ grows with concentrated insecurity and targeted interdiction costs (if all actors operate in one state, insurgency and illicit flows saturate local capacity and drive negative returns).

 $\gamma_i$ grows with effective cross-border cooperation (intelligence sharing, coordinated interdiction) and regional markets; higher $\gamma_i$ means stronger positive externalities from neighboring stabilization.

We analyze strategic outcomes under two risk/performance measures: entropic value-at-risk  and expectile value-at-risk. Each country (or actor) evaluates payoffs via the chosen criterion and seeks to maximize the corresponding performance metric (or minimize the associated risk).

\subsection{Parrondo Effect under Entropic Value-at-Risk  }

For a real-valued random payoff $X$ and confidence level $\alpha\in(0,1)$, the Entropic Value-at-Risk  is defined  as
\[
\mathrm{EntVaR}_\alpha(X) \;=\; \inf_{s>0}\; \frac{1}{s}\Big(\log\mathbb{E}[e^{sX}] - \log\alpha\Big).
\]
We interpret larger EntVaR values as better performance (since $X$ are payoffs). For risk-as-loss convention one may take $\rho^{\mathrm{evar}}_\alpha(X) := -\mathrm{EntVaR}_\alpha(X)$.

Entropic Value-at-Risk emphasizes exponential moment tails; it is sensitive to heavy tail risk and is coherent in appropriate sign conventions.

\begin{theorem}[Parrondo Effect under Entropic Value-at-Risk]\label{thm:parrondo_evar}
Let the payoff be specified by (\ref{eq:Ri}). Fix a coalition of $K$ countries with symmetric interior mix $p^\ast=(1/K,\dots,1/K)$. Suppose for each $i$:
\begin{enumerate}
  \item Pure concentration gives strong negative payoff: $R_i(e_i) = \alpha_i - \beta_i \le -\ell_i < 0$ (deterministic bound) for some $\ell_i>0$.
  \item Interior symmetric payoff is strictly positive: $R_i(p^\ast) = \alpha_i - \tfrac{1}{K}\beta_i + \tfrac{K-1}{K}\gamma_i \ge r^\ast > 0$ for some $r^\ast>0$.
  \item The noise $\varepsilon_i$ is bounded and has finite exponential moments.
\end{enumerate}
Then for any confidence $\alpha\in(0,1)$,
\[
\mathrm{EntVaR}_\alpha\big(R_i(e_i)\big) \le -\ell_i < 0,
\qquad
\mathrm{EntVaR}_\alpha\big(R_i(p^\ast)\big) \ge r^\ast > 0.
\]
Thus EVaR classifies pure concentration as losing and the symmetric interior mix as winning (Parrondo effect).
\end{theorem}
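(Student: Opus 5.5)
The plan is to derive two elementary sandwich bounds for $\mathrm{EntVaR}_\alpha$ as defined above. Then I apply the lower bound to the interior mix and the upper bound to pure concentration. No earlier result of the paper is needed beyond the payoff specification (\ref{eq:Ri}). Throughout, write $R_i(p)=c_i(p)+\varepsilon_i$, where $c_i(p)=\alpha_i-\beta_i p_i+\gamma_i(1-p_i)$ is deterministic. Then $c_i(e_i)=\alpha_i-\beta_i$ and $c_i(p^\ast)=\alpha_i-\tfrac1K\beta_i+\tfrac{K-1}{K}\gamma_i$. Hypothesis 3 guarantees that $\log\mathbb{E}[e^{sR_i(p)}]$ is finite for every $s>0$, so the infimum is well defined.

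\textbf{Step 1 (lower bound by the mean).} By Jensen's inequality, $\log\mathbb{E}[e^{sX}]\ge s\,\mathbb{E}[X]$ for every $s>0$. Since $\alpha\in(0,1)$, we have $-\log\alpha>0$, so
\[
\frac1s\Big(\log\mathbb{E}[e^{sX}]-\log\alpha\Big)\;\ge\;\mathbb{E}[X]-\frac{\log\alpha}{s}\;>\;\mathbb{E}[X].
\]
Taking the infimum over $s>0$ gives $\mathrm{EntVaR}_\alpha(X)\ge\mathbb{E}[X]$. Applied to $X=R_i(p^\ast)$, with $\mathbb{E}[\varepsilon_i]=0$, this yields $\mathrm{EntVaR}_\alpha(R_i(p^\ast))\ge c_i(p^\ast)\ge r^\ast>0$ by hypothesis 2.

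\textbf{Step 2 (upper bound by the essential supremum).} Suppose $X\le M$ almost surely, which holds for bounded noise. Then $\log\mathbb{E}[e^{sX}]\le sM$, so
\[
\frac1s\Big(\log\mathbb{E}[e^{sX}]-\log\alpha\Big)\le M-\frac{\log\alpha}{s}.
\]
Letting $s\to\infty$ gives $\mathrm{EntVaR}_\alpha(X)\le M$. In particular, a deterministic constant $c$ satisfies $\mathrm{EntVaR}_\alpha(c)=c$, because in that case Steps 1 and 2 coincide. Applied to $X=R_i(e_i)$ with $M=\operatorname{ess\,sup}R_i(e_i)$, this gives $\mathrm{EntVaR}_\alpha(R_i(e_i))\le \operatorname{ess\,sup}R_i(e_i)$.

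\textbf{Main obstacle: reading hypothesis 1.} The parenthetical ``deterministic bound'' must be read as an almost-sure bound on the full random payoff, namely $R_i(e_i)\le-\ell_i$ a.s., noise included. Under that reading Step 2 gives $\mathrm{EntVaR}_\alpha(R_i(e_i))\le-\ell_i<0$ directly. If instead only the deterministic part satisfies $\alpha_i-\beta_i\le-\ell_i$, the argument still goes through with the bound $\alpha_i-\beta_i+\operatorname{ess\,sup}\varepsilon_i$. This needs the extra requirement $\operatorname{ess\,sup}\varepsilon_i<\ell_i$, which I would state explicitly, or absorb into $\ell_i$ by replacing $\ell_i$ with $\ell_i-\operatorname{ess\,sup}\varepsilon_i$. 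Without such a condition, the tail-sensitive measure, which sits between the mean and the essential supremum, could turn positive.

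The asymmetry of the two bounds is what makes the argument work: the mean suffices for the winning side, while the worst-case supremum is needed for the losing side. Combining Steps 1 and 2 yields the stated Parrondo classification for every $\alpha\in(0,1)$.
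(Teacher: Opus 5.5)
Your argument is correct. Step~2 is exactly the paper's treatment of pure concentration. The paper assumes $R_i(e_i)\le-\ell_i$ almost surely, which is your first reading of hypothesis~1. It then bounds $\mathbb{E}[e^{sR_i(e_i)}]\le e^{-s\ell_i}$ and lets $s\to\infty$.

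You differ on the winning side. The paper again uses an almost-sure bound: it assumes $R_i(p^\ast)\ge r^\ast$ a.s.\ and uses $-\log\alpha/s>0$. In effect this proves only $\mathrm{EntVaR}_\alpha(X)\ge\operatorname{ess\,inf}X$. Your Jensen step gives the sharper bound $\mathrm{EntVaR}_\alpha(X)\ge\mathbb{E}[X]$. Since the model has $\mathbb{E}[\varepsilon_i]=0$, you need only hypothesis~2 as literally written, a condition on the deterministic part $c_i(p^\ast)$. The paper's route silently requires the noise to be absorbed into $r^\ast$.

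Your discussion of hypothesis~1 also makes explicit something the paper glosses over. Under the deterministic-only reading, the stated bound is false for nondegenerate noise. By translation invariance, $\mathrm{EntVaR}_\alpha(R_i(e_i))=\alpha_i-\beta_i+\mathrm{EntVaR}_\alpha(\varepsilon_i)$. Moreover $\mathrm{EntVaR}_\alpha(\varepsilon_i)>\mathbb{E}[\varepsilon_i]=0$. So the case $\alpha_i-\beta_i=-\ell_i$ violates the claimed bound $\le-\ell_i$.

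Your asymmetry remark gives the right structural explanation. With this sign convention the functional is optimistic: it sits between $\mathbb{E}[X]$ and $\operatorname{ess\,sup}X$. The mean therefore suffices for the positive conclusion, while only a worst-case bound can deliver the negative one.
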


\begin{proof}
If $R_i(e_i)\le -\ell_i$ almost surely then for all $s>0$ we have $\mathbb{E}[e^{sR_i(e_i)}] \le e^{-s\ell_i}$ and therefore
\[
\frac{1}{s}\big(\log\mathbb{E}[e^{sR_i(e_i)}] - \log\alpha\big) \le \frac{1}{s}(-s\ell_i - \log\alpha) = -\ell_i - \frac{1}{s}\log\alpha.
\]
Taking infimum over $s>0$ yields $\mathrm{EntVaR}_\alpha(R_i(e_i)) \le -\ell_i$ (in fact the limit as $s\to\infty$ gives $-\ell_i$). Hence negative EntVaR at pure extreme.

If $R_i(p^\ast)\ge r^\ast$ a.s., then for all $s>0$ $\mathbb{E}[e^{sR_i(p^\ast)}] \ge e^{sr^\ast}$, and similarly
\[
\frac{1}{s}\big(\log\mathbb{E}[e^{sR_i(p^\ast)}] - \log\alpha\big) \ge \frac{1}{s}(sr^\ast -\log\alpha) = r^\ast - \frac{1}{s}\log\alpha.
\]
Taking infimum over $s>0$ yields $\mathrm{EntVaR}_\alpha(R_i(p^\ast)) \ge r^\ast$ (limit $s\to\infty$). Thus EntVaR positive at interior mix. This concludes the proof.
\end{proof}

\subsubsection*{EntVaR  for coalitions (G3, G4, G5, G5+Algeria, G5+Morocco)}
We adopt a small illustrative deterministic baseline (the stochastic perturbation is bounded and does not change signs):
\[
\alpha_i=-0.06,\quad \beta_i=0.90,\quad \gamma_i \text{ varying by coalition enhancement.}
\]
We select $\gamma_i$ values illustrating progressive improvement as new members add spillover capacity (trade/energy/ports). These numbers are illustrative to show the mechanism, not empirical estimates.

\begin{table}[h]
\centering
\caption{EntVaR-style illustrative payoffs (per-country) at pure concentration and symmetric interior mix.}
\begin{tabular}{@{}lrrr@{}}
\hline
Grouping & $R_i(e_i)$ (pure) & $R_i(p^\ast)$ (mixed) & Parrondo? \\
\hline
G3 sahel (BF, ML, NER)      & $-0.96$ & $+0.107$ & Yes \\
G4 sahel (+Chad)              & $-0.96$ & $+0.175$ & Yes \\
G5 sahel (+Mauritania)        & $-0.96$ & $+0.218$ & Yes \\
G5 sahel+Morocco              & $-0.96$ & $+0.231$ & Yes \\
G5 sahel +Algeria           & $-0.96$ & $+0.257$ & Yes \\
\hline 
\end{tabular}
\label{tab:evar}
\end{table}

Pure concentration (all agents in same country) collapses payoffs (negative), while symmetric mixing across the coalition gives positive EntVaR payoffs once spillovers $\gamma_i$ are high enough—operationally achieved via coordinated security, interdiction, trade facilitation and pooled revenues.

\subsection{Parrondo Effect under Expectile Value-at-Risk}
\begin{theorem}[Parrondo Effect under Expectile]\label{thm:parrondo_expectile}
Under the same deterministic-bounding assumptions as in Theorem \ref{thm:parrondo_evar}, i.e. $R_i(e_i)\le -\ell_i <0$ a.s.\ and $R_i(p^\ast)\ge r^\ast>0$ a.s., then for any $\tau\in(0,1)$,
\[
e_\tau\big(R_i(e_i)\big) \le -\ell_i <0, \qquad e_\tau\big(R_i(p^\ast)\big)\ge r^\ast>0.
\]
Thus the expectile performance criterion also exhibits a Parrondo effect.
\end{theorem}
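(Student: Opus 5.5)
The plan is to use monotonicity of the expectile with respect to almost-sure ordering, together with its value on constants. This mirrors the two-sided bounding argument in the proof of Theorem \ref{thm:parrondo_evar}, with the exponential-moment inequality replaced by the first-order characterization of the expectile. Fix $\tau\in(0,1)$ and recall that $e_\tau(Y)$ is the unique root of
\[
g_Y(e)=\tau\,\mathbb{E}[(Y-e)_+]-(1-\tau)\,\mathbb{E}[(e-Y)_+].
\]
The uniqueness argument given in the multi-battlefield lemma carries over verbatim: for bounded $Y$, $g_Y$ is continuous and strictly decreasing, with $g_Y(e)\to+\infty$ as $e\to-\infty$ and $g_Y(e)\to-\infty$ as $e\to+\infty$. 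Boundedness of $\varepsilon_i$ guarantees that $R_i(e_i)$ and $R_i(p^\ast)$ are bounded, so $e_\tau$ is well defined for both.

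\textbf{Step 1 (upper bound at pure concentration).} Let $Y=R_i(e_i)$ and $c=-\ell_i$, so that $Y\le c$ a.s. Then $(Y-c)_+=0$ a.s., and
\[
g_Y(c)=-(1-\tau)\,\mathbb{E}[(c-Y)_+]\le 0.
\]
Because $g_Y$ is strictly decreasing and vanishes at $e_\tau(Y)$, the inequality $g_Y(c)\le 0$ forces $e_\tau(Y)\le c=-\ell_i<0$.

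\textbf{Step 2 (lower bound at the interior mix).} Let $Z=R_i(p^\ast)$ and $c=r^\ast$, so that $Z\ge c$ a.s. Now $(c-Z)_+=0$ a.s., hence
\[
g_Z(c)=\tau\,\mathbb{E}[(Z-c)_+]\ge 0,
\]
and the same monotonicity gives $e_\tau(Z)\ge r^\ast>0$.

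Since $\tau$ was arbitrary, the two inequalities combined give the Parrondo classification. Alternatively, both steps follow from the general facts $e_\tau(Y)\le e_\tau(Y')$ whenever $Y\le Y'$ a.s., and $e_\tau(c)=c$; the constant case is also covered by the degenerate cases $p=0,1$ of the binary expectile lemma.

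\textbf{Expected difficulty.} The only point needing care is the strict decrease of $g$ used to turn a sign of $g$ at $c$ into an order relation with the root. For $\tau\in(0,1)$ this is immediate:
\[
g_Y'(e)=-\tau\,\mathbb{P}(Y>e)-(1-\tau)\,\mathbb{P}(Y<e)-\ldots<0,
\]
or, more simply, $g_Y$ is the negative derivative (up to a factor) of the strictly convex loss $\mathbb{E}[\eta_\tau(Y-e)]$. So I would cite the earlier uniqueness argument rather than redo it. No obstacle of substance is expected, because the hypotheses are deterministic almost-sure bounds, exactly as in the EntVaR theorem.
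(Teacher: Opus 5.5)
Your proof is correct and follows essentially the paper's route: the paper cites monotonicity of $e_\tau$ under almost-sure ordering together with $e_\tau(c)=c$. You instead prove exactly the needed comparison-with-a-constant step yourself, by checking the sign of the first-order function $g$ at $-\ell_i$ and at $r^\ast$ and using that $g$ is strictly decreasing, so your argument is self-contained. Your derivative formula, with the atom terms left as ``$\ldots$'', is informal, but the strict-convexity justification you give alongside it is enough on its own.
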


\begin{proof}
If $X\equiv c$ a.s.\ (constant), then $e_\tau(X)=c$ for every $\tau$. For $R_i(e_i)\le -\ell_i$ a.s., by monotonicity of expectiles under stochastic dominance (if $X\le Y$ a.s. then $e_\tau(X)\le e_\tau(Y)$), we get $e_\tau(R_i(e_i))\le -\ell_i$. Similarly, if $R_i(p^\ast)\ge r^\ast$ a.s., $e_\tau(R_i(p^\ast))\ge r^\ast$. Hence the result.
\end{proof}

Using the same baseline parameters as for EntVaR, the expectile results mirror EntVaR in sign and magnitude for deterministic-payoff illustration:

\begin{table}[h]
\centering
\caption{Expectile value-at-risk payoffs (per-country) at pure concentration and symmetric interior mix.}
\begin{tabular}{@{}lrrr@{}}
\hline
Grouping & $R_i(e_i)$ (pure) & $R_i(p^\ast)$ (mixed) & Parrondo? \\
\hline
G3 sahel (BF, ML, NER)      & $-0.96$ & $+0.107$ & Yes \\
G4 sahel  (+Chad)              & $-0.96$ & $+0.175$ & Yes \\
G5 sahel(+Mauritania)        & $-0.96$ & $+0.218$ & Yes \\
G5 sahel +Morocco              & $-0.96$ & $+0.231$ & Yes \\
G5 sahel+Algeria           & $-0.96$ & $+0.257$ & Yes \\
\hline 
\end{tabular}
\label{tab:expectile}
\end{table}

 In stochastic calibrations the magnitudes for EntVaR and expectile may differ; EntVaR penalizes extreme tail risk more heavily while expectile reflects asymmetry around chosen $\tau$.

\subsection*{Mapping model parameters to counter-insecurity policy levers}
To make the model operational, we map $\alpha_i,\beta_i,\gamma_i$ to concrete policy instruments and explain plausible, low-cost interventions.

\subsubsection*{Baseline attractiveness $\alpha_i$ (raise via low-cost economic/legal measures)}
\begin{itemize}
  \item \textbf{Guarantee facility / regional micro-credit} (\(\uparrow \alpha_i\) by reducing perceived project risk).
  \item \textbf{Rapid small grants} for essential services (health, markets) that signal stability (\(\uparrow \alpha_i\)).
  \item \textbf{Legal arbitration and market access assurances} reduce investor discounting (\(\uparrow \alpha_i\)).
\end{itemize}

\subsubsection*{Congestion/vulnerability $\beta_i$ (reduce via burden-sharing)}
\begin{itemize}
  \item \textbf{Rotational security deployments:} share patrol and incident response among coalition members so that no single state bears the full security burden (reduces effective $\beta_i$).
  \item \textbf{Decentralized logistics and distributed customs points:} avoid over-concentration of transit nodes.
  \item \textbf{Staggered enforcement windows:} scheduled interdiction avoids overwhelming local law-enforcement capacity.
\end{itemize}

\subsubsection*{Spillover $\gamma_i$ (increase via cross-border integration)}
\begin{itemize}
  \item \textbf{Intelligence sharing platforms and joint operations:} increases $\gamma_i$ by raising security benefits from neighboring state stabilization.
  \item \textbf{Trade corridor facilitation and harmonized customs:} increases cross-border trade externalities, raising $\gamma_i$.
  \item \textbf{Pooled energy and transport projects:} combining Chad oil pooling with Mauritania renewables or Moroccan ports to increase regional returns.
\end{itemize}

\subsection*{Policy design oriented to countering illegal transit and terrorism}
We now translate model insights into an explicit, low-cost operational plan addressing interdiction of illegal transit, counter-terrorism, and regional economic revival.

\subsubsection*{Core operational pillars }
\begin{enumerate}
  \item \textbf{Rotational security and joint interdiction cells:}  
    Create tri-/quad- national rotating border teams assigned to critical transit corridors.   Each country provides personnel for short rotations to reduce political visibility of foreign forces.  
     Joint cells use shared intelligence and exhibit reciprocity; overhead is limited to coordination and transport.
  \item \textbf{Shared surveillance \& communications platform :}  
     Satellite-based monitoring subscriptions  plus low-cost mobile reporting apps for local communities.  
  Data feeds into a simple regional fusion center with transparent access rules; maintenance funded via pooled small levies or donor seed.
  \item \textbf{Corridor pilot: trade+security package :}  
Designate one corridor (Bamako-Niamey-Nouakchott or Bamako-Ouagadougou-Niamey) as pilot for harmonized customs, simplified permits, and joint interdiction.  
    Use a small escrow fund to compensate communities for reduced illicit income.
  \item \textbf{Alternative-livelihood microgrants :}  
  Provide short-term microgrants to household-level projects along corridors to reduce dependence on illicit transit incomes.  
     Pair with mobile-money cash flows to reduce leakage and increase monitoring.
  \item \textbf{Oil/energy pooling swaps:}  
    Small revenue-sharing arrangement  to seed stabilization funds, smoothing shocks (reduces $\beta$ volatility and raises $\alpha$ for liquidity-starved states).
\end{enumerate}

\subsubsection*{Implementation sequencing and safeguards}
In sequencing implementation, three concrete phases can be distinguished with built-in safeguards. Phase A (0-6 months) focuses on political buy-in and small pilots, with the creation of a technical working group under neutral mediation (AU, ECOWAS, or a trusted state), the launch of a single corridor pilot with rotating security schedules and surveillance platforms, and the establishment of an escrow fund with transparent rules and independent monitoring. Phase B (6-24 months) builds on early successes, expanding pilots to adjacent corridors, gradually increasing pooled guarantees to attract private investment, and publishing performance indicators quarterly with strict conditional continuation rules. Phase C (24+ months) moves toward institutionalization, creating a light G-coalition executive secretariat, codifying revenue-sharing and rotation rules, and transitioning from external donor funding to self-sustained financial sources. Safeguards at all stages include independent monitoring and transparency through third-party auditors, automatic stabilizer triggers that pause expansion if seizures drop below thresholds, and anti-capture clauses requiring multi-signature procurement and escrow-based fund disbursement. Monitoring and evaluation must be precise, with quarterly cadence on operational metrics: security incidents per 1000 km per month (targeting a 30\% decline in 12 months), illegal-transit seizures per month by type and value, average internal rate of return (IRR) reported by corridor investors to track realized $R_i$, customs clearance time and trade volume at corridor nodes, household income surveys in pilot communities to measure shifts from illicit to licit sources, and composite risk indices computing both EntVaR$_\alpha$ and expectile $e_{\tau}$ on the $R_i$ time-series. Risk analysis identifies political risk and coups as a major threat, since frequent regime change can break commitments, mitigated by institutionalizing technical agreements (customs and joint information-control systems) with civil-service anchors and donor-tied humanitarian conditionalities. External actors, especially private military companies or rival patrons, can distort flows and incentives; mitigation requires transparency in contracts, international diplomatic pressure, and conditional access to pooled funds. Free-riding and unequal benefit distribution, especially with stronger actors such as Algeria or Morocco, risk undermining solidarity; mitigation requires explicit revenue-sharing formulae, progressive transfers, and dispute-resolution mechanisms. Operational capture and corruption, particularly in pooled procurement and customs, remain high risks; these can be mitigated by requiring multi-signature escrows, maintaining public dashboards, and relying on independent auditors.
\subsection*{Comparative summary (G3, G4, G5 sahel, G5+Morocco, G5+Algeria) and  counter-insecurity} %
{\tiny
\begin{table}[H]
\centering \tiny
\caption{Comparative illustrative outcomes under EntVaR and expectile criteria (per-country symmetric mix).}
\begin{tabular}{@{}lcccccc@{}}
\hline
Grouping & Size $K$ & Pure $R_i(e_i)$ & Mixed $R_i(p^\ast)$ & EntVaR & Expectile  & Policy leverage \\
\hline
G3 sahel (BF, ML, NER)      & 3 & $-0.96$ & $+0.107$ & + (positive) & + (positive) & joint patrols, corridor pilot \\
G4 (+Chad)              & 4 & $-0.96$ & $+0.175$ & + & + & oil liquidity + rotations \\
G5 (+Mauritania)        & 5 & $-0.96$ & $+0.218$ & + & + & maritime access, renewables \\
G5 sahel+Morocco              & 6 & $-0.96$ & $+0.231$ & + & + & Atlantic ports, trade diversification \\
G5 sahel (+Algeria)           & 6 & $-0.96$ & $+0.257$ & + & + & Algerian anchor: finance, energy, logistics \\
\hline 
\end{tabular}
\label{tab:comparison}
\end{table}
}
Adding members increases feasible $\gamma_i$ (spillovers) via trade, energy, and security anchor effects, increasing interior mixed payoffs. For counter-insecurity:
\begin{itemize}
  \item G3 demonstrates the minimal paradox: coordinated rotation and corridor pilots reduce incentives for illicit transit locally.
  \item G4/G5/G5+Morocco expand the toolkit: oil/revenue pooling (Chad), maritime access (Mauritania, Morocco), and renewables diversify incentives and decrease the attractiveness of illegal trade.
  \item G5 + Algeria gives the largest positive effect owing to Algeria's fiscal capacity and logistics, but it increases asymmetry and requires careful governance safeguards.
\end{itemize}

\subsubsection*{Theoretical notes}
The mathematical results (Theorems \ref{thm:parrondo_evar} and \ref{thm:parrondo_expectile}) show that the Parrondo effect is robust across two distinct risk perspectives: entropic moment-based tails and expectile asymmetry-based risk. The sufficient conditions ($\gamma_i$ sufficiently large relative to $\beta_i$ and $|\alpha_i|$) are transparent and map directly to policy levers.

\subsubsection*{Operational realism}
- \emph{Feasibility:} the low-cost actions proposed (rotational patrols, surveillance subscriptions, corridor pilots, microgrants) are feasible within modest pooled budgets and donor seed funds.  
- \emph{Credibility:} success rests on credible commitment devices (escrows, independent monitors).  
- \emph{Time profile:} benefits are often not immediate; interior payoffs may arise once pilots remove key frictions (customs harmonization, trust).

\subsubsection*{Normative distributional concerns}
While aggregate payoffs may become positive, distribution may favor anchor states (Algeria/Morocco). Institutional design must embed progressive transfers and dispute-resolution to prevent dominance.

Combining mean-field-type game theory with risk-aware payoffs (EntVaR and expectiles) yields a precise theoretical foundation for a Parrondo-type effect among fragile Sahel- Southern Maghreb states. Operationalized via targeted, low-cost measures directed at reducing congestion/vulnerability and increasing spillovers -  notably rotational security, joint interdiction, surveillance, corridor facilitation, and small pooled funds, the effect becomes an actionable policy insight: cooperative mixtures can convert individually dominated strategies into collective wins in both security and regional economic terms. Yet the result is conditional: it requires credible institutions, independent monitoring, and measures to manage asymmetry and external interference.

\subsection{Failure of the Parrondo Effect under Adversarial or Rival Actors}

The analysis so far assumes that the only players in the MFTG are the cooperating states themselves. In practice, the Sahel and  southern Maghreb region is characterized not only by fragile states but also by the presence of non-state armed groups, transnational terrorist organizations, and criminal networks. Moreover, rival coalitions of states can form outside the G3, G4, G5, G5+Algeria, or G5+Morocco groupings, and these actors can actively counter the gains from cooperation. The presence of such actors modifies the empirical distribution of mobile agents, which is the basis of payoffs in the MFTG, and introduces new nonlinearities that can eliminate the Parrondo effect.

\begin{theorem}[Failure of the Parrondo Effect with Adversarial Actors]
Consider a mean-field-type game with state coalition $C$ (e.g., G3, G4, G5, G5+Algeria, or G5+Morocco) and an adversarial coalition $A$ consisting of terrorist groups or rival states. Let $m_C$ denote the empirical distribution of cooperative security and economic actors, and $m_A$ the distribution of adversarial actors. Suppose payoffs are given by a risk-sensitive functional $R_i(m_C,m_A)$ using either $\alpha$-entropic Value-at-Risk (EntVaR) or $\tau$-expectile. If there exists $\lambda > 0$ such that the effect of $m_A$ on the cooperative payoff satisfies
\[
\frac{\partial R_i}{\partial m_A} \leq -\lambda < 0
\]
uniformly in the interior mixing strategies of $C$, then the Parrondo effect is destroyed: no convex mixture of dominated strategies by $C$ can yield a positive payoff once $A$ is sufficiently strong.
\end{theorem}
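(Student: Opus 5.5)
The plan is to reduce the statement to a one-dimensional monotonicity argument in the adversarial mass, and then use the monotonicity and translation-type properties of the two risk criteria already exploited in Theorems~\ref{thm:parrondo_evar} and~\ref{thm:parrondo_expectile}.

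First I will fix the interpretation. Parametrize the strength of the adversary by a scalar $a=m_A\ge 0$, or by a scalar intensity along a fixed direction in $\mathcal P$. For each interior mixture $p$ of the coalition $C$ (in particular $p^\ast$), regard the risk-sensitive payoff $\rho_i(p,a)$ as a function of $a$, where $\rho$ is either $\mathrm{EntVaR}_\alpha$ or $e_\tau$ applied to $R_i(p,a)$. The hypothesis $\partial R_i/\partial m_A\le-\lambda$ is to be read pathwise: for every realization of the bounded noise $\varepsilon_i$, the map $a\mapsto R_i(p,a)$ is differentiable with derivative at most $-\lambda$, uniformly in interior $p$. Integrating along $a$ then gives the almost-sure bound
\[
R_i(p,a)\le R_i(p,0)-\lambda a .
\]

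Second, I will transfer this bound through the risk measure. Both criteria are monotone under almost-sure dominance; for expectiles this was used in the proof of Theorem~\ref{thm:parrondo_expectile}, and for EntVaR it follows from monotonicity of $\mathbb E[e^{sX}]$ in $X$. Both are also translation-equivariant: $\rho(X+c)=\rho(X)+c$, which was shown for expectiles in the multi-battlefield lemma and is immediate from the formula for EntVaR. Together these give
\[
\rho_i(p,a)\le \rho_i(p,0)-\lambda a .
\]
Next I bound $\rho_i(p,0)$ uniformly over interior mixtures. Boundedness of $\varepsilon_i$ and of the linear form~\eqref{eq:Ri} on the simplex gives $R_i(p,0)\le M$ almost surely with $M$ independent of $p$, and hence $\rho_i(p,0)\le M$. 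Setting
\[
a^\ast:=M/\lambda ,
\]
every interior mixture then satisfies $\rho_i(p,a)<0$ for all $a>a^\ast$. Convex mixtures of the pure, dominated strategies are covered because they are precisely the points of the simplex. The vertices already give negative payoffs by Theorem~\ref{thm:parrondo_evar}, and the negativity there persists under the monotone decrease in $a$. So no mixture yields a positive payoff, which is the claimed failure of the Parrondo effect.

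The main obstacle is the first step: giving a rigorous meaning to $\partial R_i/\partial m_A$ for a measure-valued argument, together with the uniformity needed to integrate it. I would first try a directional (Gateaux) derivative along the segment $m_A^t=(1-t)\delta_0+t\,\bar m_A$, and interpret "sufficiently strong" as $t$ large (or a scaled total mass). If pathwise differentiability is too strong, a fallback is to assume the bound in expectation and restrict to EntVaR and expectiles of a location family, where $R_i=\bar R_i(p,m_A)+\varepsilon_i$. In that case translation equivariance alone suffices.
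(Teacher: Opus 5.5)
Your argument is correct and is essentially the paper's proof. A uniform bound $\partial R_i/\partial m_A\le-\lambda$ lowers every interior mixture's payoff by at least $\lambda m_A$, so no mixture stays positive once $m_A$ exceeds the largest adversary-free payoff divided by $\lambda$: the paper uses $R_i(m_C^\ast,0)/\lambda$ at the adversary-free maximizer, and you use the cruder uniform bound $M/\lambda$.

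The only other difference is bookkeeping. The paper places the derivative bound directly on the risk-sensitive functional and just writes the shift $\Delta R_i=\partial R_i/\partial m_A\cdot m_A$. Your pathwise reading, with monotonicity and translation equivariance of EntVaR and expectiles, is an extra correct layer that makes that linear shift rigorous; it is not a different route.
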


\begin{proof}
The Parrondo effect for coalition $C$ arises because the payoff functional $R_i(m_C,0)$, under risk-sensitive criteria, is strictly convex in $m_C$ and admits an interior maximizer with $R_i(m_C^\ast,0) > 0$, while $R_i(e_j,0) < 0$ for all pure strategies $e_j$. This requires that the distribution of actors is shaped solely by the mixture of $C$’s controls. When an adversarial coalition $A$ is introduced, the empirical distribution becomes $m = m_C + m_A$, and the payoff is shifted by $\Delta R_i = \partial R_i / \partial m_A \cdot m_A$. If $\partial R_i / \partial m_A \leq -\lambda < 0$, then every additional adversarial actor reduces payoffs monotonically, with the loss scaling in proportion to the strength of $A$. Since the convexity argument establishing the Parrondo effect relies on positivity of the interior maximizer, the uniform negative shift $\Delta R_i$ ensures that the maximizer satisfies $R_i(m_C^\ast,m_A) < 0$ whenever $m_A$ exceeds a critical threshold $m_A^\dagger = \frac{R_i(m_C^\ast,0)}{\lambda}$. Thus, beyond this threshold, no interior mixture of dominated strategies can cross from negative to positive, and the Parrondo effect fails. The conclusion follows.
\end{proof}

 This theorem shows that the paradoxical cooperative gains depend critically on the assumption that hostile actors are either absent or negligible. In the Sahel, however, strong non-state armed groups (AQIM, ISIS affiliates, local militias), transnational smuggling cartels, and rival state coalitions can shift the empirical distribution enough to push payoffs back into negative territory. Practically, this means that G3, G4, G5, G5 Sahel+Algeria, or G5+Morocco coalitions can only realize the Parrondo effect if they successfully contain or neutralize adversarial distributions. Failure to do so implies that coordination, while necessary, is insufficient on its own.

The cooperative states must therefore add a second layer of strategy: coordinated suppression of hostile distributions. This can include direct counter-terrorism operations, interdiction of transnational criminal flows, and active diplomacy to reduce external state interference. Without this containment, even well-designed mixtures of weak national strategies cannot produce positive collective outcomes.

\subsection{Parrondo Effect Fails : Explicit Functional Form}

We now formalize the payoff functional and rigorously check the conditions under which the Parrondo effect fails when adversarial actors are present. Let coalition $C$ (G3, G4, G5, G5 Sahel+Algeria, or G5+Morocco) choose mixed strategies $u_i \in [0,1]$ representing the fraction of resources allocated to security, trade facilitation, and pooled development. Let $m_C = \sum_{i \in C} u_i$ be the empirical distribution of cooperative actions, and let $m_A$ represent the density of adversarial actors (terrorist groups, smuggling networks, or rival state coalitions).

We define the risk-aware payoff functional for state $i \in C$ as:
\begin{equation}
\label{eq:Ri}
R_i(u_i,m_C,m_A) = \underbrace{\beta u_i (1 - m_C)}_{\text{cooperative gain}} - \underbrace{\gamma u_i m_A}_{\text{adversarial loss}} - \underbrace{\delta u_i^2}_{\text{cost of effort}},
\end{equation}
where $\beta, \gamma, \delta > 0$ are parameters:
\begin{itemize}
    \item $\beta$ scales the nonlinear gain from cooperation. Gains are highest when $m_C$ is moderate (not saturated).
    \item $\gamma$ scales the negative impact of adversarial actors. Losses grow linearly with $m_A$.
    \item $\delta$ represents the quadratic cost of national effort.
\end{itemize}
This functional is compatible with either an $\alpha$-entropic VaR or a $\tau$-expectile payoff criterion by applying the risk transformation afterward.

The Parrondo effect occurs if an interior mixture $0<u_i<1$ produces $R_i>0$ while pure strategies ($u_i=0$ or $u_i=1$) produce $R_i<0$ when $m_A=0$. For $m_A>0$, we compute:
\[
\frac{\partial R_i}{\partial m_A} = -\gamma u_i \leq 0.
\]

The interior maximizer in the absence of adversaries is obtained by solving
\[
\frac{\partial R_i}{\partial u_i} = \beta (1 - m_C) - \gamma m_A - 2 \delta u_i = 0 \implies u_i^\ast = \frac{\beta (1 - m_C) - \gamma m_A}{2 \delta}.
\]

\paragraph{Failure Condition.} For the maximizer to remain positive (i.e., the Parrondo effect holds), we require
\[
\beta (1 - m_C) - \gamma m_A > 0 \implies m_A < \frac{\beta (1 - m_C)}{\gamma}.
\]

If $m_A$ exceeds this threshold, then $u_i^\ast \le 0$, meaning no interior mixture yields positive payoff, and the Parrondo effect fails. This gives an explicit adversarial threshold that can be checked numerically for any coalition.

This formulation shows that cooperative gains are fragile: they depend on the ratio of cooperative gain $\beta$ to adversarial impact $\gamma$, and on the saturation level $m_C$. Even well-coordinated G3, G4, G5, or G5 Sahel+Algeria, G5 Sahel+Morocco coalitions will fail if adversarial density $m_A$ exceeds the critical value $\beta(1-m_C)/\gamma$. In practice, this means that for the Sahel- southern Maghreb, counter-terrorism operations and interdiction of illegal flows are not optional; they are required to maintain the Parrondo effect.

\section{Conclusion}  \label{seccoi8}
Despite decades of ceasefires, peace accords, and outside interventions, Mali's security landscape remains mired in self-reinforcing cycles of violence rooted in intergenerational trauma, revenge-seeking incentives, and fragmented authority. In this work we have shown that when ``retaliatory" and ``aggressive" types derive strictly higher payoffs from armed action, and trauma persists through probabilistic type-transition kernels, the resulting mean-field-type game admits equilibria in which violence survives across generations even after temporary breaks. The  framework yields a constructive prescription for peace: by embedding information-adapted incentives, continuous transfers that reward verifiable peace-building and penalize aggression, directly into each agent's instantaneous utility, one can alter best-response mappings so that nonviolent actions become dominant, the mass of peaceful types grows, and violence decays geometrically to zero after multiple generations. Our master-adjoint system analysis establishes the system satisfied by of these incentivized equilibria. In practice, such transfers might take the form of locally managed peace-fund disbursements, conditional local employment programs, or micro-grants for local community reconciliation, all monitored via transparent, blockchain-backed registers and delivered in local languages through voice-MI platforms. By aligning individual risk-reward calculations with collective security goals, this intergenerational MFTG approach provides an operationally grounded roadmap for breaking long-standing cycles of conflict in Mali and, more broadly, in any fragile setting where grievances, asymmetric information, and dispersed power render traditional, static models of peacebuilding insufficient.

\bibliographystyle{plain}
\bibliography{bibsec}
\end{document}